\declaretheorem{theorem}
\declaretheoremstyle[%
  spaceabove=-6pt,%
  spacebelow=6pt,%
  headfont=\normalfont\itshape,%
  postheadspace=1em,%
  qed=\qedsymbol%
]{mystyle}
\def\ve#1{\mathchoice{\mbox{\boldmath$\displaystyle\bf#1$}}
	{\mbox{\boldmath$\textstyle\bf#1$}}
	{\mbox{\boldmath$\scriptstyle\bf#1$}}
	{\mbox{\boldmath$\scriptscriptstyle\bf#1$}}}
\newcommand{\Z}{\ensuremath{\mathbb{Z}}}
\let\bfseries=\undefined
\DeclareRobustCommand\bfseries
\def\Orthant_j{{\mathcal O}_{j}}
\newcommand\vea{{\ve a}}
\newcommand\veb{{\ve b}}
\newcommand\vecc{{\ve c}}
\newcommand\ved{{\ve d}}
\newcommand\vep{{\ve p}}
\newcommand\vet{{\ve t}}
\newcommand\vex{{\ve x}}
\newcommand\vey{{\ve y}}
\newcommand\vez{{\ve z}}
\newcommand{\OO}{{\mathcal{O}}}
\newtheorem{claim}{Claim}
\newtheorem{observation}{Observation}
\newtheorem{lemma}{Lemma}
\newtheorem{corollary}{Corollary}
\newtheorem*{T1}{Lemma~\ref{lemma:al2}}
\newtheorem*{T2}{Lemma~\ref{lemma:uniquesum-1}}
\newtheorem*{T3}{Lemma~\ref{lemma:maxsat-eth2}}
\newtheorem*{T4}{Lemma~\ref{lemma:linked-sum}}
\newtheorem*{T5}{Claim~\ref{claim:1}}
\newtheorem*{T6}{Lemma~\ref{lemma:shuffle-1}}
\newcommand{\head}[1]
 {\markright{\hbox to 0pt{\vtop to 0pt{\hbox{}\vskip 3mm \hrule
 width  \textwidth \vss} \hss}{\sc #1}}}
\begin{document}

%%\begin{titlepage}

\title{\bf Tight running times for minimum $\ell_q$-norm load balancing: beyond exponential dependencies on $1/\epsilon$ }

\author{
Lin Chen$^1$\footnote{Research of Lin Chen was partly supported by NSF Grant 1756014.}\ \ \  Liangde Tao$^2$\ \ Jos\'e Verschae$^3$ \\ 
\\{\small $^1$Department of Computer Science, Texas Tech University, US}
\\{\small chenlin198662@gmail.com}
\\{\small $^2$Department of Computer Science, Zhejiang University, China}
\\{\small vast.tld@gmail.com}
\\{\small $^3$Institute for Mathematical and Computational Engineering},\\ {\small Faculty of Mathematics and School of Engineering, Pontificia Universidad Católica de Chile, Chile}
\\{\small jverschae@uc.cl}
}

\date{}
\maketitle

\begin{abstract}
We consider a classical scheduling problem on $m$ identical machines. For an arbitrary constant $q>1$, the aim is to assign jobs to machines such that $\sum_{i=1}^m C_i^q$ is minimized, where $C_i$ is the total processing time of jobs assigned to machine $i$. It is well known that this problem is strongly NP-hard.

Under mild assumptions, the running time of an $(1+\epsilon)$-approximation algorithm for a strongly NP-hard problem cannot be polynomial on $1/\epsilon$, unless $\text{P}=\text{NP}$. For most problems in the literature, this translates into algorithms with running time at least as large as $2^{\Omega(1/\varepsilon)}+n^{O(1)}$. For the natural scheduling problem above, we establish the existence of an algorithm which violates this threshold. More precisely, we design a PTAS that runs in $2^{\tilde{O}(\sqrt{1/\epsilon})}+n^{O(1)}$ time. This result is in sharp contrast to the closely related minimum makespan variant, where an exponential lower bound is known under the exponential time hypothesis (ETH). We complement our result with an essentially matching lower bound on the running time, showing that our algorithm is best-possible under ETH. The lower bound proof exploits new number-theoretical constructions for variants of progression-free sets, which might be of independent interest. 

Furthermore, we provide a fine-grained characterization on the running time of a PTAS for this problem depending on the relation between $\epsilon$ and the number of machines $m$. More precisely, our lower bound only holds when $m=\Theta(\sqrt{1/\epsilon})$. Better algorithms, that go beyond the lower bound, exist for other values of $m$. In particular, there even exists an algorithm with running time polynomial in $1/\epsilon$ if we restrict ourselves to instances with $m=\Omega(1/\epsilon\log^21/\epsilon)$.

\bigskip
\smallskip\noindent{\bf Keywords:} {Polynomial Time Approximation Scheme, Scheduling, Exponential Time Hypothesis.}
\end{abstract}
\thispagestyle{empty}
\newpage
\setcounter{page}{1}

\section{Introduction}

We consider a classical scheduling problem on identical parallel machines. Suppose we are given $m$ identical machines and $n$ jobs, each having a processing time $p_j$. A feasible solution corresponds to an assignment of jobs to machines. For a given assignment, let $C_i$ be the total processing time of jobs assigned to machine $i$, that is, $C_i=\sum_{j\rightarrow i}p_j$. Our objective is to minimize $\sum_{i=1}^m C_i^q$, where $q>1$ is an arbitrary constant. For either exact algorithms or approximation schemes, minimizing $\sum_{i=1}^m C_i^q$ is equivalent to minimizing the $\ell_q$-norm of machine loads, i.e., $(\sum_{i=1}^m C_i^q)^{1/q}$. In the standard 3-field scheduling notation by Graham et al.~\cite{graham1979optimization}, this problem is denoted as $P||\sum_iC_i^q$.

Our problem is well-known to be strongly NP-hard by a simple reduction from 3-partition. On the other hand, a classic result by Alon et al.~\cite{alon1997approximation} shows that it admits a polynomial time approximation scheme (PTAS) with running time $f(1/\epsilon)+n^{O(1)}$, where $f(1/\epsilon)$ is doubly exponential in $1/\epsilon$. %Using the techniques from~\cite{jansen2020closing}, this running time can be reduced to $2^{\tilde{O}(1/\epsilon^2)}+n^{O(1)}$ (where the notation $\tilde{O}$ ignores the logarithmic factor $\log 1/\epsilon$), and this is so far the best known algorithmic result. 
Very recently, improved running times have been obtained for $P||\sum_iC_i^q$ and other closely related load balancing problems. Particularly, for a variety of objective functions, which include both $\sum_iC_i^q$ and the makespan objective $C_{\max}=\max_i C_i$, Jansen et al.~\cite{jansen2020closing} show that the problem admits a PTAS with a running time of $2^{\tilde{O}(1/\epsilon)}+\tilde{O}(n)$. On the negative side, for the makespan objective, Chen et al.~\cite{chen2014optimality} show that such a running time is essentially best possible under the exponential time hypothesis (ETH). However, the lower bound does not hold for other objectives, including $P||\sum_iC_i^q$, leaving open the possibility for improved running times. In this paper, we study this question and explore the surprisingly rich complexity landscape of $P||\sum_iC_i^q$ in the context of approximation schemes.

\smallskip
\noindent\textbf{Contribution Overview.}
We study the complexity landscape of approximation schemes for $P||\sum_iC_i^q$.  
Consider some strongly NP-hard optimization problem whose optimal value $\text{OPT}(I)$ is integral and upper bounded by $\text{poly}(|I|_u)$ for any instance $I$, where $|I|_u$ is the input size written in unary. This implies that the problem does not admit a fully polynomial-time approximation scheme (FPTAS) unless P=NP~\cite{garey2002computers}. In the majority of cases, for such problems the literature presents PTASs with running time at least as large as $2^{\Omega(1/\epsilon)}+n^{O(1)}$, that is, the dependency on $1/\epsilon$ is exponential. We show that $P||\sum_{i}C_i^q$ does not fall into this case, and a running time subexponential on $1/\epsilon$ is achievable. More precisely, we give a PTAS with a running time of $2^{\tilde{O}(\sqrt{1/\epsilon})}+n^{O(1)}$. On the other hand, we show that this running time is essentially tight, by providing an almost matching lower bound under ETH. That is, we show that ETH rules out a PTAS of running time $2^{O(({1/\epsilon})^{1/2-\delta})}+n^{O(1)}$ for any $\delta>0$. We are not aware of any other PTAS for a strongly NP-hard problem with such a tight subexponential behavior on $1/\varepsilon$.

Besides the results above, we give a fine-grained study on the upper and lower bounds of the running time of a PTAS for $P||\sum_{i}C_i^q$. First of all, we notice that our lower bound only holds for a small range of values of $m$, depending on $\epsilon$. Moreover, for some other values, we can circumvent the lower bound and obtain improved running times. More precisely, the lower bound only holds when $m=\Theta(\sqrt{1/\epsilon})$.
%^{\theta}$ for $\theta\in [1/2,1]$. 
 Quite surprisingly, when $m$ is larger, namely $m=\Omega(1/\epsilon\log^{2}(1/\epsilon))$, an algorithm that runs polynomially in $1/\epsilon$ exists, despite the problem being strongly NP-hard in general and our stronger lower bound. If $m = O(\sqrt{1/\epsilon})$ we can use a PTAS with running time $(1/\epsilon)
^{O(m)}$, which also breaks the lower bound for $m=o(\sqrt{1/\epsilon})$. See Figure~\ref{fig:alg_complex} for a depiction of our results. It remains an open problem to obtain tight running times when  $m=\Theta({(1/\epsilon)^\theta})$ for $\theta\in (1/2,1]$.

\begin{figure}
	\centering
	\includegraphics[height=52mm,width=99.08mm]{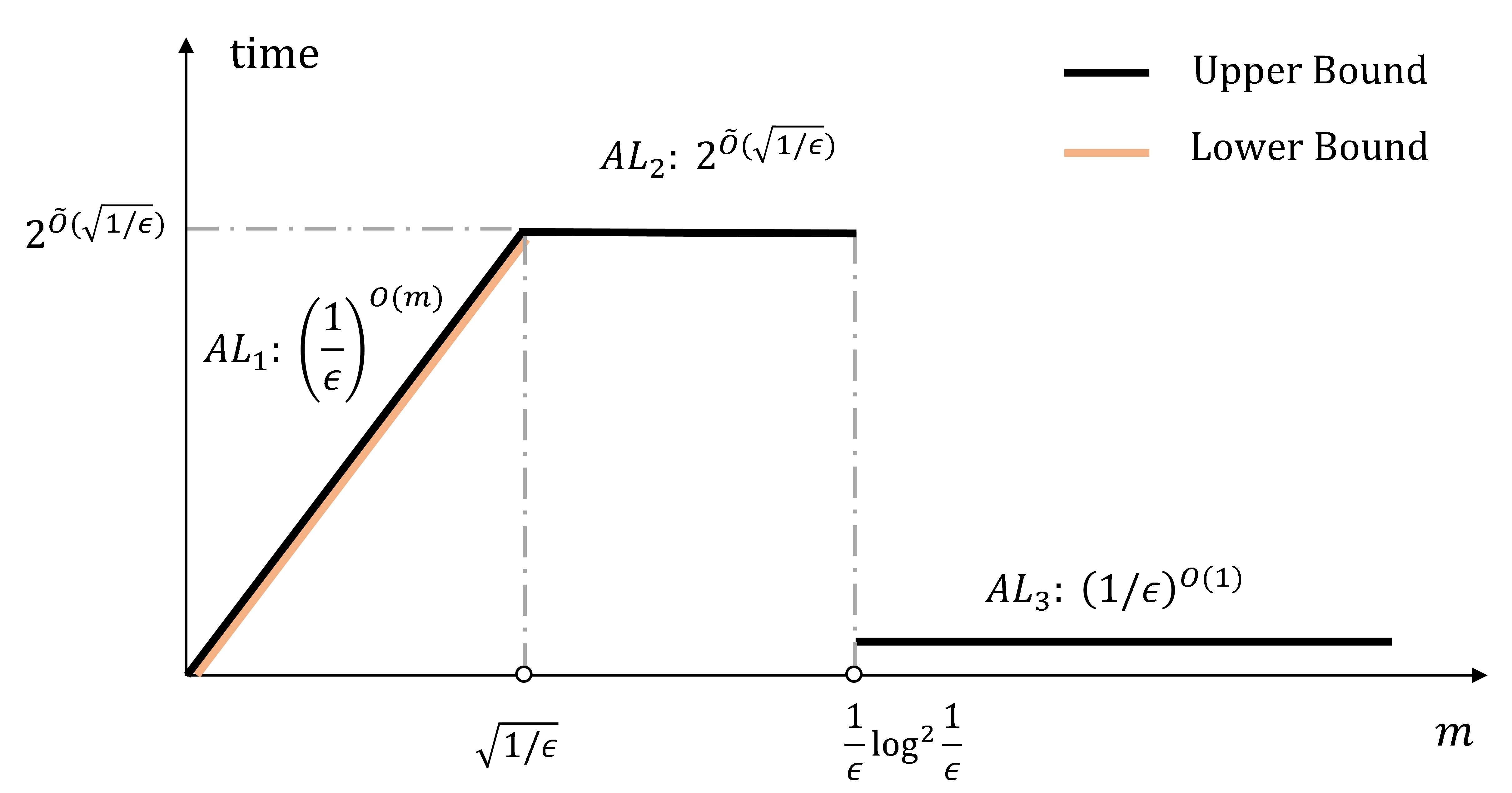}
	\caption{Complexity landscape of $P||\sum_iC_i^q$. The time axis specifies the dependency of the running time with respect to $1/\epsilon$. A term of $n^{O(1)}$ needs to be added in the running of each algorithm.}
	\label{fig:alg_complex}
\end{figure}

\smallskip
\noindent \textbf{Technical Contribution.}
Our main technical contribution lies in the lower bound proof. For this, we give a fine-grained reduction from a variant of Max3SAT to $P||\sum_i C_i^q$. To do so, we convert a set of clauses to a set of jobs. We enforce that two jobs which represent variables in the same clause are scheduled together in some carefully constructed gap (i.e., slot) of a given size. For such a construction, it is imperative to use pairs of numbers with unique sums, to guarantee that only these two jobs fit this gap. Hence, our construction is tightly related to {\it Sidon sets} and {\it Salem–Spencer sets} (also called progression-free sets), both of which have been studied extensively in number theory (see, e.g., \cite{erdos1941problem,o2004complete,moser1953non,gasarch2008finding}). A Sidon set $S=\{s_1,s_2,\ldots,s_n\}$ is a subset of natural numbers where all pairwise sums $s_i+s_j$, for $i\le j$, are distinct. That is,  $s_{i}+s_j=s_{i'}+s_{j'}$ implies $\{i,j\}=\{i',j'\}$. A weaker notion is that of a Salem-Spencer set, that is, a set $S=\{s_1,s_2,\cdots,s_n\}$ with no cardinality 3 progression, i.e., no triplet $(i,j,k)\in\Z_n:=\{1,2,\cdots,n\}$ of pairwise different numbers satisfies $s_i-s_j = s_k-s_i$. In other words, if $s_j+s_k=2s_i$ then $i=j=k$.
Our lower bound could be proved by adapting known techniques if a Sidon set $S\subseteq \Z_N$ (where $\Z_N:=\{1,2,\cdots ,N\}$) with cardinality $n$ exists for $N=n^{1+o(1)}$. Unfortunately, this is impossible, as Erdös and Turán~\cite{erdos1941problem} show that the cardinality of a Sidon set with $n$ elements requires $N=\Omega(n^2)$. %Indeed, it is not difficult to see as a Sidon set $S$ gives rise to $\Omega(|S|^2)$ pairs. When each pair sums up to a distinct value, the largest value, and hence the largest number in $S$, is at least $\Omega(|S|^2)\le n$. 
We can circumvent this negative result by requiring only some pairs of numbers to have a unique sum, where these pairs correspond to the clauses in the given Max3SAT instance. Towards this, we first transform the given Max3SAT instance, with variables $z_{j}$ for $j\in \mathbb{Z}_n$, into a special structure such that all clauses can be divided into two disjoint subsets $C_1$ and $C_2$: $C_1$ consists of clauses $cl_2,cl_5,\cdots,cl_{n-1}$ such that $cl_\ell=(w_{\ell-1}\vee w_{\ell}\vee w_{\ell+1})$, where $w_{j}\in\{z_j,\neg z_{j}\}$ for all $j$; and $C_2$ consists of clauses $cl_1',cl_2',\cdots,cl_n'$ such that $cl_\ell'=(z_\ell\oplus \neg z_{\tau(\ell)})$, where $\tau$ is a permutation of $\Z_n$ and $\oplus$ is the XOR operation (see Section~\ref{subsec:maxsat} for details). For $C_1$, we construct a set of numbers $\{\sigma(1),\sigma(2),\cdots,\sigma(n)\}$ such that every adjacent sum $\sigma(i)+\sigma(i+1)$ is unique, and this will be achieved through extending a known construction of Salem-Spencer sets (Lemma~\ref{lemma:uniquesum-2}). For $C_2$, we extend the construction to additionally require that the $\sigma(i)$'s we construct admit a \textit{linked unique sum}. That is, there exists a %unique
subset of numbers $E_i=\{e_{i,1},e_{i,2},\ldots,e_{i,\omega}\}\subseteq \Z_{n^{1+o(1)}}$ for every $i\in\Z_n$ such that $E_i\cap E_{i'}=\emptyset$ for any $i'\neq i$, and the sum of each pair $\sigma(i)+e_{i,1},e_{i,1}+e_{i,2},\cdots,e_{i,\omega-1}+e_{i,\omega},e_{i,\omega}+\sigma(\tau(i))$ is unique in the sense that no other pairs in $S\cup E$ sum up to the same value, where $E=\bigcup_{i=1}^n E_i$. Note that a linked unique sum is a weaker notion than Sidon or Salem-Spencer, as for these there is no auxiliary set $E$. Nevertheless, the property of linked unique sum is strong enough for our reduction. The construction of the auxiliary set $E$ relies on further extending our technique for constructing unique adjacent sums, together with a group theoretic lemma that allows an \lq\lq orthogonal\rq\rq\, decomposition of the permutation $\tau$ (Lemma~\ref{lemma:shuffle-1}). Our results may be of separate interest for constructing fine-grained lower bounds on approximation or parameterized algorithms for other problems.

Another crucial observation, which may also be of independent interest, is a structural result needed for our PTAS with running time $2^{\tilde{O}(\sqrt{1/\epsilon})}+n^{O(1)}$ (see Section~\ref{subsec:alg2}). For many objective functions (like $C_{\max}$) we can round the processing times to powers of $1+\epsilon$ in order to bound the overall loss by a factor of $1+O(\epsilon)$. We observe that for minimizing $\sum_i C_i^q$ it is possible to consider a coarser grouping of jobs into sizes within a $(1+\sqrt{\epsilon})$ factor. Broadly speaking, by imposing extra structure to a near-optimal solution, we can use a Taylor expansion to bound the error, and notice that the linear term of the polynomial expansion cancels out. This leaves us only with the quadratic (and lower order) terms. This observation might translate to other problems with $\ell_q$-norm objective, and even other min-sum cost functions.

\smallskip
\noindent\textbf{Related Work.}
Load balancing problems are fundamental in computer science and have been studied extensively in the literature. In particular, the first PTAS for $P||C_{\max}$ dates back to the 80's~\cite{hochbaum1987using} and there is a long history of improvements on the running time for various identical machine scheduling problems, including  $P||C_{\max}$, $P||\sum_i C_i^q$, $P||\sum_j w_jC_j$, etc.; see, e.g., \cite{leung1989bin,alon1998approximation,hochbaum1997various, skutella1999ptas,jansen2010eptas,jansen2020closing}. Recently, more general objective functions based on arbitrary norms have been considered~\cite{ibrahimpur_minimum_norm_2021}. Parameterized algorithms for scheduling problems have also been studied extensively (see, e.g.~\cite{jansen2020structural,mnich2018parameterized,mnich2015scheduling,knop2018scheduling,chen2017parameterized}). 

The exponential time hypothesis (ETH) is a widely accepted complexity assumption introduced by Impagliazzo et al.~\cite{impagliazzo2001problems,impagliazzo2001complexity}, which can be used to obtain lower bounds on the running time of algorithms for various problems (see, e.g., \cite{lokshtanov2013lower} for a survey). In 2014, Chen et al. \cite{chen2014optimality} provide a concrete lower bound on the running time of a PTAS for $P||C_{\max}$ under ETH. Later, Jansen et al.~\cite{jansen2020closing} give a PTAS with running time $2^{\tilde{O}(1/\epsilon)}+n^{O(1)}$ for $P||C_{\max}$, which almost matches the lower bound. 

Despite PTASs having been established for a variety of optimization problems, much less is known regarding lower bounds on their running time.
In addition to $P||C_{\max}$, mentioned above, other well-known examples include multiple knapsack \cite{jansen2016bounding}, planar vertex cover, planar dominating set, and planar traveling salesperson~\cite{marx2007optimality}. Interestingly, all of these lower bounds have an almost linear dependency on $1/\epsilon$ in the exponent, which essentially matches the best-known PTAS. Generally, Chen et al.~\cite{chen2004linear} proved that if the problem, parameterized by $1/\epsilon$, is W[1]-hard under a linear FPT reduction, then there is no PTAS with $f(1/\epsilon)|I|^{o(1/\epsilon)}$ running time for an arbitrary computable function $f$, assuming all problems in SNP cannot be solved in sub-exponential time. We are not aware of a PTAS whose running time is subexponential in $1/\epsilon$, either for scheduling or other strongly NP-hard problems.

Unlike approximation algorithms, subexponential running times on a parameter have been observed in the field of parameterized algorithms and have received significant attention. In particular, a variety of optimization problems in planar graphs admit a fixed parameter tractable (FPT) algorithm that is subexponential in the parameter, including, e.g., independent set~\cite{demaine2005subexponential}, dominating set~\cite{demaine2005subexponential}, and multiway cut~\cite{klein2012solving,marx2012tight,pilipczuk2018network}. Note that, on the other hand, a subexponential PTAS was ruled out for the planar dominating set problem~\cite{marx2007optimality}.

\section{Approximation schemes}\label{sec:upper}
The goal of this section is to prove the following theorem.
\begin{theorem}\label{thm:algorithm}
	For any sufficiently small $\epsilon>0$, there exists an algorithm that outputs a $(1+\epsilon)$-approximate solution for the scheduling problem $P||\sum_iC_i^q$ within $2^{\tilde{O}(\sqrt{1/\epsilon})}+n^{O(1)}$ time. More specifically, there exists a:
	\begin{compactitem}
	\item $(1+\epsilon)$-approximation algorithm AL$_1$ that runs in time $(1/\epsilon)^{O(m)}+n^{O(1)}$ for $m=O(\sqrt{1/\epsilon})$;
	\item $(1+\epsilon)$-approximation algorithm AL$_2$ that runs in time $2^{\tilde{O}(\sqrt{1/\epsilon})}+n^{O(1)}$ for $m=(1/\epsilon)^{O(1)}$;
	\item $(1+\epsilon)$-approximation algorithm AL$_3$ that runs in time $(1/\epsilon)^{O(1)}+n^{O(1)}$ for $m=\Omega(1/\epsilon\log^2(1/\epsilon))$.
	\end{compactitem}
\end{theorem}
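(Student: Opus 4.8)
The plan is to establish the three algorithms AL$_1$, AL$_2$, AL$_3$ on top of a common preprocessing, and then note that the headline bound follows by running whichever applies: AL$_2$ handles every $m=(1/\epsilon)^{O(1)}$ in time $2^{\tilde O(\sqrt{1/\epsilon})}$, and any $m$ larger than a sufficiently big polynomial in $1/\epsilon$ automatically satisfies $m=\Omega(\tfrac1\epsilon\log^2\tfrac1\epsilon)$, so AL$_3$ takes over; moreover $(1/\epsilon)^{O(m)}=2^{O(m\log(1/\epsilon))}=2^{\tilde O(\sqrt{1/\epsilon})}$ whenever $m=O(\sqrt{1/\epsilon})$, so AL$_1$ is consistent with and refines AL$_2$ in its range. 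The common preprocessing is standard: compute a constant-factor estimate of $\text{OPT}$ by list scheduling, then, losing a factor $1+\epsilon$, guess $\Lambda$ with $\Lambda^q\in[\text{OPT},(1+\epsilon)\text{OPT}]$ among $O(\log(mn)/\epsilon)$ candidates and rescale so that $\Lambda=1$; then in some optimum every $C_i\le 1$ and $\sum_i C_i^q=1$. Call a job \emph{big} if $p_j\ge\beta$ for a threshold $\beta=\Theta(\sqrt\epsilon)$ and \emph{small} otherwise; each machine then carries at most $1/\beta=O(\sqrt{1/\epsilon})$ big jobs. Round each big job size up to the nearest power of $1+\sqrt\epsilon$, leaving $O(\log(1/\beta)/\sqrt\epsilon)=\tilde O(\sqrt{1/\epsilon})$ distinct big sizes, and greedily bundle the small jobs into ``sand blocks'' of size in $[\beta/2,\beta)$ (plus one leftover), so that henceforth small mass can be relocated in units of size at most $\beta$.

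The heart of the proof, and the main obstacle, is a structural lemma: there is a solution of cost at most $(1+O(\epsilon))\text{OPT}$ in which every machine's big load is a sum of rounded big sizes and the small blocks are placed by ``waterfilling'' on top — i.e. up to an additive $\beta$ per machine, machine $i$ is filled to a common level $\lambda$ whenever its big load is below $\lambda$, and is untouched otherwise. Proving this is delicate because rounding big jobs by $1+\sqrt\epsilon$ and redistributing the small mass would naively cost a factor $(1+\sqrt\epsilon)^q=1+\Theta(\sqrt\epsilon)$, which is far too much. The point is that the perturbation $\delta_i$ of each load is supported on the machines that receive small mass, where the load equals the \emph{constant} level $\lambda$, and that $\sum_i\delta_i=0$ because the total (big$+$small) mass is conserved. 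Writing the change via the generalized binomial expansion
\[
\textstyle\sum_i\big[(C_i+\delta_i)^q-C_i^q\big]=q\sum_i C_i^{q-1}\delta_i+O(q^2)\sum_i C_i^{q-2}\delta_i^2+\cdots,
\]
the linear term equals $q\lambda^{q-1}\sum_i\delta_i=0$, and the surviving quadratic term is $O(q^2)\lambda^{q-2}\sum_i\delta_i^2=O\!\big(q^2(\beta/\lambda)^2\big)\cdot\sum_i\lambda^q=O(q^2(\beta/\lambda)^2)$, which is $O(\epsilon)$ precisely because $\beta=\Theta(\sqrt\epsilon)\cdot\lambda$ (and $\lambda=\Theta(1)$ after scaling). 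The work is to set up the perturbation from the true optimum so that \emph{every} first-order contribution cancels — this is why even big jobs may only be coarsened to $(1+\sqrt\epsilon)$-classes rather than a finer grid — and to show the fractional-to-integral rounding of the sand blocks preserves the constant-level property except on $O(1)$ machines per level.

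Given the lemma, AL$_2$ proceeds by a configuration approach: since the machines are identical, a structured solution is a multiset of $m$ \emph{configurations}, each being a multiset of at most $O(\sqrt{1/\epsilon})$ rounded big sizes together with a target load chosen on the $(1+\sqrt\epsilon)$-grid (encoding the small mass the machine will receive). The number of configurations is at most $\big(\tilde O(\sqrt{1/\epsilon})\big)^{O(\sqrt{1/\epsilon})}=2^{O(\sqrt{1/\epsilon}\log(1/\epsilon))}=2^{\tilde O(\sqrt{1/\epsilon})}$. We then choose $m$ configurations (with repetition) covering all big jobs by class and using exactly the available small mass, minimizing the sum of $q$-th powers of the target loads; this is solved by a dynamic program over the number of machines used and the residual vector of yet-uncovered big-job counts (a vector with $\tilde O(\sqrt{1/\epsilon})$ coordinates, each bounded by a polynomial in $1/\epsilon$, hence $2^{\tilde O(\sqrt{1/\epsilon})}$ states) or by a bounded-dimension ILP, in time $2^{\tilde O(\sqrt{1/\epsilon})}\cdot\mathrm{poly}(m)=2^{\tilde O(\sqrt{1/\epsilon})}$. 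Finally one converts the configuration solution into an assignment of the actual $n$ jobs — fill big jobs into their classes arbitrarily, round the fractional small assignment greedily — in $n^{O(1)}$ time, with only $O(\epsilon)$ extra loss by the structural lemma. Algorithm AL$_1$ is the same idea but, since $m=O(\sqrt{1/\epsilon})$ is tiny, it is simplest to run a direct dynamic program processing the $\tilde O(\sqrt{1/\epsilon})$ distinct rounded item sizes from large to small while the state records each of the $m$ machine loads rounded into a grid of $(1/\epsilon)^{O(1)}$ values; this has $(1/\epsilon)^{O(m)}$ states and hence runs in $(1/\epsilon)^{O(m)}+n^{O(1)}$, which also beats $2^{\tilde O(\sqrt{1/\epsilon})}$ when $m=o(\sqrt{1/\epsilon})$.

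For AL$_3$ the cancellation is even more favorable. Once the few genuinely large items are placed, the rest of the mass can be spread so that all remaining loads lie within an additive $O(\sqrt\epsilon)$ of their common average $a$; since the deviations sum to zero, $\sum_i C_i^q=a^q\cdot\#\{\text{such machines}\}\cdot(1+O(\epsilon))$, so a near-balanced completion is already $(1+\epsilon)$-optimal. It then remains only to place the large items, and the regime $m=\Omega(\tfrac1\epsilon\log^2\tfrac1\epsilon)$ is exactly what is needed to bound the number of relevant configurations by $\mathrm{poly}(1/\epsilon)$, turning the configuration selection into a polynomial-size (essentially transportation-type) program; the whole algorithm runs in $(1/\epsilon)^{O(1)}+n^{O(1)}$, which is the (at first sight surprising) polynomial dependence on $1/\epsilon$ claimed for this range of $m$.
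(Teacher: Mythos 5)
Your overall architecture matches the paper's: a $\tilde O(\sqrt{1/\epsilon})$-class coarsening of job sizes, a dynamic program (or configuration search) for AL$_2$, a load-vector DP with trimming for AL$_1$, and a Karmarkar--Karp--style scheme for AL$_3$. The place where your proposal diverges, and where it currently has a gap, is the structural lemma that justifies the $(1+\sqrt\epsilon)$ coarsening instead of the usual $(1+\epsilon)$ one.

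The crux you identify is correct --- the linear term in the Taylor/binomial expansion must cancel, since a $(1+\sqrt\epsilon)$ perturbation is a factor $1+\Theta(\sqrt\epsilon)$, not $1+O(\epsilon)$. But your mechanism for getting the cancellation does not close. You round big job sizes \emph{up} to powers of $1+\sqrt\epsilon$ and then assert that (a) the per-machine perturbation $\delta_i$ is supported only on machines receiving small mass, all at a common level $\lambda$, and (b) $\sum_i\delta_i=0$ because total mass is conserved. Both (a) and (b) fail as stated: rounding sizes up strictly increases total mass, so $\sum_i\delta_i>0$, and any machine whose big load already exceeds $\lambda$ still has its load pushed up by the rounding even though it receives no small mass, giving a first-order contribution with the wrong sign and no compensating term. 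You acknowledge that ``the work is to set up the perturbation from the true optimum so that every first-order contribution cancels,'' but that is precisely the step that is missing, and it does not follow from the waterfilling picture alone. The waterfilling structure itself is also unproved: the near-optimality of a solution in which every small-receiving machine sits at a single common level is plausible but is exactly the kind of structural claim that needs an argument, and it is not obviously true once you insist that sand blocks are indivisible pieces of size $\Theta(\sqrt\epsilon)$.

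The paper's Lemma~\ref{thm:structure} sidesteps both issues by \emph{not rounding sizes at all}. It partitions jobs into classes $\mathcal{G}_h$ of sizes within a $(1+\sqrt\epsilon)$ factor, keeps the \emph{count} of each class on each machine equal to that of the optimum, and only rearranges \emph{which} job of a class goes to which machine: machine $m$ gets the smallest members of each class, machine $m-1$ the next smallest, and so on. Mass is exactly conserved, so $\sum_i\Delta_i=0$, and --- crucially --- every suffix sum satisfies $\sum_{k\ge i}\Delta_k\le 0$ because the tail machines always received the smaller representatives. Combined with the sort $L_1^*\le\cdots\le L_m^*$, an Abel/summation-by-parts rearrangement shows the entire linear term $q\sum_i\Delta_iL_i^{*q-1}$ is nonpositive, and the quadratic term is $O(\epsilon)\,\text{OPT}$ exactly as in your sketch. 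Your derivation of the quadratic bound and your $2^{\tilde O(\sqrt{1/\epsilon})}$ configuration counting are fine; to repair the argument you should replace the ``round up then waterfill'' perturbation by a mass-conserving rearrangement with the monotone suffix-sum property (or equivalent), since a constant-level $\lambda$ assumption is neither available nor needed. A similar caution applies to your AL$_3$ sketch: the claim that after placing ``the few genuinely large items'' all remaining loads can be brought within $O(\sqrt\epsilon)$ of a common average is again a structural assertion in need of proof; the paper instead bounds the \emph{additive} loss by $O(\log^2 m)$ via harmonic grouping and an iterated configuration LP, and then uses $\text{OPT}\ge m=\Omega(\tfrac1\epsilon\log^2\tfrac1\epsilon)$ to turn the additive bound into a $(1+\epsilon)$ factor.
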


In particular, for a sufficiently small $\epsilon$, we may run AL$_1$ for $m\le \sqrt{1/\epsilon}$, run AL$_2$ for $\sqrt{1/\epsilon}<m\le 1/\epsilon^2$, and run AL$_3$ for $m\ge 1/\epsilon^2$. This guarantees a $2^{\tilde{O}(\sqrt{1/\epsilon})}+n^{O(1)}$ time algorithm for all values of $m$ and $1/\epsilon$.

We remark that standard techniques round the processing time of a job to some multiple of $1+\epsilon$, yielding an instance with $\tilde{O}(1/\epsilon)$ different types of jobs. However, such rounded instance cannot be solved to optimality in time $2^{(1/\epsilon)^{1-\delta}}+n^{O(1)}$ for any constant $\delta>0$~\cite{chen2014optimality}. %Meanwhile, if we reduce the precision and round the processing time of a job to some multiple of $1+\epsilon^{1-\delta}$, a simple calculation shows that the overall loss can be $1+O(\epsilon^{1-\delta})$ instead of $1+O(\epsilon)$. 
Hence, we need a new approach for Theorem~\ref{thm:algorithm}.

We now give a brief overview of the proof of Theorem~\ref{thm:algorithm}. Algorithm~AL$_1$ is based on a standard dynamic programming, given in Appendix~\ref{appsubsec:al1}. Algorithm~AL$_2$ is based on an new observation (Lemma~\ref{thm:structure}) which shows that we can classify processing times on intervals of the form $[\epsilon(1+\sqrt{\epsilon})^{h-1},\epsilon(1+\sqrt{\epsilon})^{h})$ for an integer $h$. After preprocessing the instance (Lemma~\ref{lemma:prepocessing}), we can focus on only $\tilde{O}(1/\sqrt{\epsilon})$ such intervals. We show that there exists a near-optimal solution where jobs are scheduled in an ordered way following the mentioned classification. This algorithm is described in Section~\ref{subsec:alg2}. Algorithm~AL$_3$ (see Appendix~\ref{appsubsec:al3}) is based on modifying the famous algorithm for the bin packing problem by Karmarkar and Karp~\cite{karmarkar1982efficient}.

All the three algorithms will operate on a scheduling instance that is well-structured, as implied by the following lemma. The structure can be achieved through standard techniques, namely scaling and grouping of small jobs, see, e.g., \cite{alon1998approximation}. For an instance $I$, we denote by $\text{size}(I)$ the total processing time of jobs in $I$, and by $m(I)$ the number of machines.

\begin{lemma}[Alon et al.~\cite{alon1998approximation}]  \label{lemma:prepocessing}
For any sufficiently small $\epsilon>0$, given an arbitrary instance $I_0$ of $P||\sum_i C_i^q$, we can transform in linear time $I_0$ into a well-structured rounded instance $I$ with less or equal number of jobs and less or equal number of machines, that satisfies:
\begin{compactitem}
	\item $\text{size}(I)=m(I)$;
	\item the processing time of each job in $I$ belongs to $[\epsilon,1]$;
	\item there exists an optimal solution for $I$ such that the load of each machine belongs to $[1/2,2]$.
\end{compactitem}
Furthermore, any $(1+\epsilon)$-approximation solution for $I$ can be transformed into an $(1+O(\epsilon))$-approximation solution for $I_0$ in linear time.
\end{lemma}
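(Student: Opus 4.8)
The plan is to follow the standard scaling-and-grouping pipeline for identical-machine load balancing, adapting each step to the $\sum_i C_i^q$ objective. First I would normalize the instance. Let $T$ be a guess of (a constant-factor approximation to) the optimal makespan of $I_0$, which we can obtain e.g.\ via the list scheduling $2$-approximation; after a scaling of all processing times by $1/T$ we may assume that in an essentially optimal solution every machine has load in, say, $[1/2, 2]$, and that $\mathrm{size}(I_0)$ and $m(I_0)$ differ by at most a constant factor. A further uniform scaling of all $p_j$ by the factor $m(I_0)/\mathrm{size}(I_0)$ enforces $\mathrm{size}(I)=m(I)$ exactly; since this is a single multiplicative rescaling of every job, it changes the objective of every feasible solution by the same factor $(m(I_0)/\mathrm{size}(I_0))^q=\Theta(1)^q$, so $(1+\epsilon)$-approximate solutions are preserved and the transformation back only rescales the objective.

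Next I would handle the small jobs. Call a job \emph{small} if $p_j<\epsilon$ and \emph{big} otherwise. Big jobs already lie in $[\epsilon,1]$ after the rescaling (here $1$ is, up to the constant from the makespan guess, an upper bound on any single load, and jobs larger than that can be discarded as an immediate infeasibility check or absorbed into the constant). For the small jobs, the classical trick is to \emph{glue} them greedily into composite jobs of size in $[\epsilon, 2\epsilon)$: repeatedly take small jobs and merge them until the accumulated size first enters $[\epsilon,2\epsilon)$, leaving at most one leftover chunk of size $<\epsilon$ which we discard or attach arbitrarily. This only decreases the number of jobs, and since any solution for the original small jobs can be mimicked by the glued jobs with an additive error of $O(\epsilon)$ per machine in load (and symmetrically), and every machine load is $\Theta(1)$, a Taylor/convexity estimate $(C_i\pm O(\epsilon))^q = C_i^q \pm O(\epsilon)\cdot C_i^{q-1} = C_i^q(1\pm O(\epsilon))$ shows the objective changes by a factor $1\pm O(\epsilon)$; summing over the $m$ machines keeps this multiplicative $1\pm O(\epsilon)$ bound because the optimum is $\Theta(m)$. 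After gluing, all jobs have size in $[\epsilon,1]$ (possibly $[\epsilon, 2]$; one more trivial rescaling by $1/2$ if we need the strict upper bound $1$, again $\Theta(1)$ and harmless), and re-applying the $\mathrm{size}(I)=m(I)$ rescaling at the end restores that invariant while staying within $[\Theta(\epsilon),\Theta(1)]$; absorbing constants into $\epsilon$ gives exactly $[\epsilon,1]$.

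For the third bullet, I would argue that once all jobs are in $[\epsilon,1]$ and $\mathrm{size}(I)=m(I)$, there is an optimal solution with all loads in $[1/2,2]$. The upper bound: if some machine has load $>2$, then since every job is $\le 1$, removing one job drops its load by at most $1$ but keeps it $\ge 1$; meanwhile, because the average load is exactly $1$, some machine has load $\le 1$, and an exchange/move argument using strict convexity of $t\mapsto t^q$ shows that moving a job from the heavily loaded machine to a lightly loaded one strictly decreases $\sum_i C_i^q$ (the marginal cost $q C^{q-1}$ is larger on the heavier machine), contradicting optimality; iterating bounds every load by $2$. The lower bound $1/2$ is then automatic: if total load is $m$, all loads are $\le 2$, then no load can be below $m - 2(m-1) = 2-m$\,---\,that bound is too weak, so instead use the same convexity exchange in the other direction (a machine with load $<1/2$ has a free "room" of $\ge 1/2 > $ the size of some job on an overloaded machine, and moving it decreases the cost), giving all loads $\ge 1/2$. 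The main obstacle is making this convexity-based exchange argument fully rigorous while simultaneously keeping the job-gluing error multiplicative rather than additive; the key quantitative fact that makes both work is that every relevant machine load is $\Theta(1)$ and $\mathrm{OPT}=\Theta(m)$, so additive $O(\epsilon)$ load perturbations translate to multiplicative $1+O(\epsilon)$ objective perturbations. Finally, reversing the transformations (un-gluing small jobs onto the machines of their composite job, undoing the constant rescalings) turns a $(1+\epsilon)$-approximation for $I$ into a $(1+O(\epsilon))$-approximation for $I_0$ in linear time; rescaling $\epsilon$ by the hidden constant at the outset yields the clean statement. Since this is exactly the preprocessing of Alon et al.~\cite{alon1998approximation}, I would cite it for the routine computations and only highlight the convexity estimate specific to $\sum_i C_i^q$.
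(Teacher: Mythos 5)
Your proposal follows the right high-level pipeline (scale to size$\,=m$, glue small jobs, Taylor-expand to convert additive $O(\epsilon)$ load perturbations into a multiplicative $1+O(\epsilon)$ objective error, exchange argument for the load window), which is indeed the route of the cited reference, but there is a genuine gap in the handling of large jobs and in the two normalization steps. The first rescaling (by a makespan guess $T$) and the second (by $m(I_0)/\mathrm{size}(I_0)$) are not compatible: after dividing by $T$ the ratio $m/\mathrm{size}$ is not bounded by a constant (consider $m$ machines and a single unit-size job, where $m/\mathrm{size}=m$), so the second rescaling can push jobs far above $1$, violating both the second and third bullets. Moreover, for $\sum_i C_i^q$ there is no ``infeasibility'' that lets you discard a big job: any instance is feasible. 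The missing ingredient, which is precisely the content of the Alon et al.\ preprocessing, is the structural lemma that any job whose size is at least the current average load $\mathrm{size}/m$ can be assumed alone on a machine in some optimal solution; one removes that job together with one machine, which does not increase the average, and iterates until every remaining job is strictly below the average. This is what justifies ``fewer jobs and fewer machines'' in the statement, and it is what makes $p_j\le 1$ actually hold after the $\mathrm{size}(I)=m(I)$ normalization. A symmetric step removing surplus machines when there are fewer jobs than machines is also needed, otherwise empty machines make the lower bound $C_i\ge 1/2$ impossible. Without these removals your exchange argument in the last paragraph has no base to stand on, since the existence of a job larger than the average immediately forces a load outside $[1/2,2]$.
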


In the following, we focus exclusively on the instance after the preprocessing. %We only present the key idea for Algorithm 2. The other algorithms, together with the omitted details for Algorithm 2, can be found in Appendix~\ref{appsec:alg}.
It is worth mentioning that for non-integral values of $q$, the objective function can be irrational even for rational processing times. For obtaining a 
PTAS this is however not a problem, as computing the objective function up to an additive error of $\epsilon/\text{poly}(n)$ suffices for our results. In what follows we omit this technicality, and assume that we can compute the objective function without error.

\subsection{Algorithm 2}\label{subsec:alg2}

In this subsection, we describe and analyze algorithm AL$_2$.

\begin{lemma}\label{lemma:al2}
	Consider an instance after the preprocessing of Lemma~\ref{lemma:prepocessing}. For any $\epsilon>0$, there exists an algorithm  \textrm{AL}$_{2}$ that outputs a $(1+O(\epsilon))$-approximation solution for $P||\sum_i C_i^q$ with $m^{\tilde{O}(1/\sqrt{\epsilon})}$ running time. 
\end{lemma}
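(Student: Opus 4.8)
The plan is to shrink the search space to a small family of ``class-ordered'' schedules and to optimize over this family by a dynamic program whose cost is governed by the number $K$ of size classes. After applying Lemma~\ref{lemma:prepocessing} we may assume $\mathrm{size}(I)=m$, all processing times lie in $[\epsilon,1]$, and some optimal schedule has every load in $[1/2,2]$. First I would split the jobs into the classes $G_h=\{\,j:\ p_j\in[\epsilon(1+\sqrt\epsilon)^{h-1},\epsilon(1+\sqrt\epsilon)^{h})\,\}$; since the sizes lie in $[\epsilon,1]$ only $K=O(\log(1/\epsilon)/\sqrt\epsilon)=\tilde O(1/\sqrt\epsilon)$ classes are nonempty. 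Fix once and for all a canonical deal-out rule: fix an order $1,\dots,m$ on the machines, and for each class $h$, going through the machines in this order, give machine $i$ the $n_{ih}$ currently-largest unassigned jobs of $G_h$. Call a schedule \emph{class-ordered} if for every class it places the class-$h$ jobs according to this rule applied to its own count vector $(n_{1h},\dots,n_{mh})$. Such a schedule is described entirely by a \emph{configuration} $\ve c_i\in\Z_{\ge0}^K$ per machine (its per-class counts), and, crucially, the load of machine $i$ depends only on $\ve c_1,\dots,\ve c_i$, so it is computable incrementally.

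The heart of the argument is the structural claim (this is Lemma~\ref{thm:structure}, which I would state and prove before invoking): there is a class-ordered schedule of objective at most $(1+O(\epsilon))\,\mathrm{OPT}$. To prove it, take an optimal schedule $\sigma^*$ with loads $C_i^*\in[1/2,2]$, order the machines by increasing $C_i^*$, keep the count vectors $(n_{ih})$, and re-deal the jobs inside each class by the canonical rule, obtaining loads $C_i=C_i^*+\delta_i$. Since any two jobs of $G_h$ agree within a factor $1+\sqrt\epsilon$, the class-$h$ load of machine $i$ moves by at most $\sqrt\epsilon\,n_{ih}\,\epsilon(1+\sqrt\epsilon)^{h-1}\le\sqrt\epsilon\,L^*_{ih}$, so $|\delta_i|\le\sqrt\epsilon\,C_i^*$; and $\sum_i\delta_i=0$, since both schedules partition the same job set. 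Taylor-expanding $x\mapsto x^q$ at $C_i^*$,
\[
\sum_i C_i^{\,q}-\sum_i (C_i^*)^q = q\sum_i (C_i^*)^{q-1}\delta_i + O\!\Big(\textstyle\sum_i (C_i^*)^{q-2}\delta_i^{2}\Big),
\]
and the second-order-and-higher remainder is $O(\epsilon\sum_i (C_i^*)^q)=O(\epsilon)\,\mathrm{OPT}$ because $|\delta_i|\le\sqrt\epsilon\,C_i^*$ and $q$ is a fixed constant. The real work is to show the linear term $q\sum_i (C_i^*)^{q-1}\delta_i$ is also $O(\epsilon)\,\mathrm{OPT}$ and not merely $O(\sqrt\epsilon)\,\mathrm{OPT}$. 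For this I would exploit that $\sigma^*$ is optimal: a job-swap exchange argument shows that in $\sigma^*$ the within-class assignment is already essentially the sorted one (whenever a lighter machine carries a strictly smaller class-$h$ job than a heavier machine, their loads differ by only $O(\sqrt\epsilon)$), so the deviations $\delta_i$ are small and ``local'' in the load order; combining this with $\sum_i\delta_i=0$ — grouping machines into blocks of nearly equal load, on which $(C_i^*)^{q-1}$ is constant up to a $1+O(\epsilon)$ factor, and summing the per-block contributions carefully, since the slack is of order $\sqrt\epsilon$ while we need $\epsilon$-level cancellation — one shows the linear term collapses to $O(\epsilon)\,\mathrm{OPT}$. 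Arranging this cancellation so that the coarse width-$(1+\sqrt\epsilon)$ grouping costs only $1+O(\epsilon)$ rather than $1+O(\sqrt\epsilon)$ is the main obstacle, and it is exactly what lets us use $K=\tilde O(1/\sqrt\epsilon)$ classes instead of the usual $\tilde O(1/\epsilon)$.

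Given the structural claim, the algorithm enumerates class-ordered schedules. A configuration that can occur on a machine of load $\le2$ satisfies $c_h\le 2/(\epsilon(1+\sqrt\epsilon)^{h-1})$, so there are at most $\prod_{h=1}^K(1+2/(\epsilon(1+\sqrt\epsilon)^{h-1}))\le (3/\epsilon)^K=2^{\tilde O(1/\sqrt\epsilon)}$ admissible configurations, which is $m^{\tilde O(1/\sqrt\epsilon)}$ for $m=(1/\epsilon)^{O(1)}$. I would then run a dynamic program over machines $i=1,\dots,m$ with state $\big(i,\text{vector of class-wise counts of still-unassigned jobs}\big)$: at machine $i$ one tries every admissible configuration compatible with the remaining jobs, uses the canonical rule to determine the jobs actually placed (this depends only on the prefix, as noted) and hence $C_i$, and adds $C_i^{\,q}$ to the running objective, minimizing at the end. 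The number of states is at most $m\cdot\prod_h(|G_h|+1)\le m\cdot(n/K+1)^K=m^{\tilde O(1/\sqrt\epsilon)}$ and each has $2^{\tilde O(1/\sqrt\epsilon)}$ transitions, so the total time is $m^{\tilde O(1/\sqrt\epsilon)}$. The structural claim's schedule arises from some assignment of configurations to the machine slots and is therefore considered by the DP, so the DP's output has objective $(1+O(\epsilon))\,\mathrm{OPT}$; by Lemma~\ref{lemma:prepocessing} this yields a $(1+O(\epsilon))$-approximation for the original instance, as required. (Irrationality of $x^q$ for non-integer $q$ is harmless: as the text remarks, it suffices to evaluate objectives to additive error $\epsilon/\mathrm{poly}(n)$.)
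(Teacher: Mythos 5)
Your decomposition into $K=\tilde O(1/\sqrt\epsilon)$ classes, your canonical re-dealing of jobs within each class after ordering machines by increasing load, your bound $|\delta_i|\le\sqrt\epsilon\,C_i^*$, your handling of the quadratic Taylor remainder, and your dynamic program over per-class counts all match the paper's argument. The gap is precisely at the step you yourself flag as ``the main obstacle'': you do not actually prove that the linear term $q\sum_i(C_i^*)^{q-1}\delta_i$ is $O(\epsilon)\,\mathrm{OPT}$. You propose an exchange argument on $\sigma^*$ plus a blocking-into-near-equal-load-groups scheme, but you never carry it out, and in fact no properties of the optimum beyond the load ordering are needed.

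The paper's argument at this point is much more direct and you should use it: with machines ordered so that $L_1^*\le\cdots\le L_m^*$, the canonical re-deal gives the \emph{heaviest} remaining machines the \emph{smallest} class-$h$ jobs, so for every suffix $\{i,\dots,m\}$ the total class-$h$ load can only go down, hence $\sum_{k\ge i}\Delta_k\le 0$ for every $i$, while $\sum_{k\ge 1}\Delta_k=0$. Abel's transformation (summation by parts) then writes
\[
q\sum_{i=1}^m \Delta_i L_i^{*\,q-1}
= qL_1^{*\,q-1}\sum_{k=1}^m\Delta_k
\;+\; q\sum_{i=2}^m\Bigl(L_i^{*\,q-1}-L_{i-1}^{*\,q-1}\Bigr)\sum_{k=i}^m\Delta_k,
\]
and each summand in the second sum is a nonnegative factor times a nonpositive factor, so the whole linear term is $\le 0$, not merely $O(\epsilon)\,\mathrm{OPT}$. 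This is what makes the width-$(1+\sqrt\epsilon)$ grouping lossless at the linear order; without it your proof is incomplete. The rest of your write-up, including the bound $(3/\epsilon)^K=m^{\tilde O(\sqrt{1/\epsilon})}$ on configurations and the state count $m\cdot\prod_h(|G_h|+1)=m^{\tilde O(\sqrt{1/\epsilon})}$, is sound and essentially the paper's Algorithm~AL$_2$.
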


We know there exists an optimal solution $\vex^*$ where the load of each machine belongs to $[1/2,2]$. Let $L_i^*$ be the load of machine $i$ in $\vex^*$ where $1/2\le L_i^*\le 2$. Without loss of generality we further assume that $ L_1^*\le L_2^*\le \cdots\le L_m^*$. For some integer $h\ge1$, let $\mathcal{G}_h$ be the set of jobs whose processing time lies in $[\epsilon(1+\sqrt{\epsilon})^{h-1},\epsilon(1+\sqrt{\epsilon})^h)$. Given that $\epsilon \le p_j\le 1$, every job belongs to some set $\mathcal{G}_h$ for $h\in\{1,\ldots,\tau\}$, where $\tau = \tilde{O}(1/\sqrt{\epsilon})$. For simplicity, we call a job in $\mathcal{G}_h$ a $\mathcal{G}_h$-job. % and denote $p_{h,j}$ the processing time of the $j$-th largest job in $\mathcal{G}_h$. 
The following structural result contains the key observation for the existence of a PTAS with subexponential time.
 
 \begin{lemma}\label{thm:structure}
 There exists a feasible solution $\hat{x}$ satisfying: i) its objective value is at most $(1+O(\epsilon)) OPT$, and ii) the machines can be ordered from $1$ to $m$ such that for any $1\le i\le m-1$ and $h$, the processing time of every $\mathcal{G}_h$-job on machine $i+1$ is at most the processing time of any $\mathcal{G}_h$-job on machine $i$.  
 \end{lemma}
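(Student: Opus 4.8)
The plan is to start from the optimal solution $\vex^*$ with its machines already ordered so that $L_1^*\le L_2^*\le\cdots\le L_m^*$, and to obtain $\hat x$ by reassigning jobs \emph{within each group $\mathcal{G}_h$ separately}, keeping the number $k_i^h$ of $\mathcal{G}_h$-jobs on machine $i$ equal to the number used by $\vex^*$: fixing an arbitrary tie-breaking total order on jobs, we hand the $k_1^h$ largest $\mathcal{G}_h$-jobs to machine $1$, the next $k_2^h$ largest to machine $2$, and so on. Property (ii) then holds by construction with this machine ordering, since consecutive machines receive nonincreasing blocks of sorted $\mathcal{G}_h$-jobs. The entire difficulty is bounding the increase of the objective, and this is exactly where the coarseness factor $1+\sqrt\epsilon$ pays off through a first-order cancellation.

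To analyze the objective I would work with the per-machine load perturbations $\Delta_i\DEF\hat L_i-L_i^*$ and use three elementary facts. (a) $\sum_{i=1}^m\Delta_i=0$: each group's reassignment preserves its total processing mass and we never change per-machine counts. (b) For every $k$, $P_k\DEF\sum_{i=1}^k\Delta_i\ge 0$: after reassignment machines $1,\dots,k$ hold the $\big(\sum_{i\le k}k_i^h\big)$ largest $\mathcal{G}_h$-jobs, hence at least as much $\mathcal{G}_h$-mass as before, and since different groups contribute disjointly this inequality survives summation over $h$. (c) $|\Delta_i|=O(\sqrt\epsilon)$: the $k_i^h$ jobs of $\mathcal{G}_h$ on machine $i$ all have size in $[a_h,a_h(1+\sqrt\epsilon))$ with $a_h=\epsilon(1+\sqrt\epsilon)^{h-1}$, so reshuffling them moves the group-$h$ load of machine $i$ by less than $k_i^h a_h\sqrt\epsilon$; summing over $h$ and using $\sum_h k_i^h a_h\le L_i^*\le 2$ gives $|\Delta_i|<2\sqrt\epsilon$. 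In particular, for small $\epsilon$ every $\hat L_i$ lies in a fixed compact interval such as $[1/4,3]$.

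Now I would Taylor-expand $t\mapsto t^q$ about $L_i^*$ with Lagrange remainder; since $q(q-1)t^{q-2}$ is bounded by a constant $C_q$ on $[1/4,3]$, we get $\hat L_i^{\,q}=(L_i^*)^q+q(L_i^*)^{q-1}\Delta_i+\theta_i$ with $|\theta_i|\le C_q\Delta_i^2$. Summing over $i$,
\[
\sum_{i=1}^m\hat L_i^{\,q}-\mathrm{OPT}\;=\;q\sum_{i=1}^m (L_i^*)^{q-1}\Delta_i\;+\;O\!\Big(\sum_{i=1}^m\Delta_i^2\Big).
\]
For the first sum, $(L_i^*)^{q-1}$ is nondecreasing in $i$ (as $q>1$), while by (a)--(b) the $P_k$ are nonnegative with $P_m=0$; Abel summation gives $\sum_i(L_i^*)^{q-1}\Delta_i=\sum_{i=1}^{m-1}\big((L_i^*)^{q-1}-(L_{i+1}^*)^{q-1}\big)P_i\le 0$, so this term is harmless. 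For the second sum, by (c) each $\Delta_i^2=O(\epsilon)$, and since $L_i^*\ge 1/2$ we have $\mathrm{OPT}\ge m/2^q=\Omega(m)$, hence $\sum_i\Delta_i^2=O(\epsilon m)=O(\epsilon\,\mathrm{OPT})$. Combining the two bounds yields $\sum_i\hat L_i^{\,q}\le(1+O(\epsilon))\,\mathrm{OPT}$, which is (i).

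The only genuine idea, and the place I expect the argument to be delicate, is the interaction of the coarse grouping with the cancellation: simply rounding every $\mathcal{G}_h$-job down to $a_h$ would cost a factor $(1+\sqrt\epsilon)^q=1+\Theta(\sqrt\epsilon)$, which is far too much for a $(1+\epsilon)$-approximation, whereas the mass-preserving exchange above makes the first-order term of the expansion vanish (indeed nonpositive, once the machine order and the ``largest jobs to lightest machines'' rule are chosen consistently), leaving only the quadratic term $O(\epsilon\,\mathrm{OPT})$. A related point worth care is that the group-by-group perturbations must be accounted for \emph{simultaneously} against the single fixed machine ordering, rather than iteratively, so that both the drift bound (c) and the prefix-sum inequality (b) hold as stated; this causes no trouble since both are additive over $h$.
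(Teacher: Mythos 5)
Your proposal is correct and follows essentially the same route as the paper: it constructs the same rebalanced assignment (giving the largest $\mathcal{G}_h$-jobs to the lightest machine, which is exactly the paper's ``smallest to machine $m$ first'' construction read from the other end), bounds $|\Delta_i|=O(\sqrt{\epsilon})$ by the group width, second-order Taylor-expands $t\mapsto t^q$, and kills the first-order term by Abel summation using the monotonicity of $(L_i^*)^{q-1}$ and the one-sided partial sums of the $\Delta_i$'s. The only cosmetic difference is that you argue via nonnegative prefix sums $P_k\ge 0$ while the paper equivalently uses nonpositive suffix sums $\sum_{k\ge i}\Delta_k\le 0$; these are the same fact since $P_m=0$.
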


\begin{proof}

Given an optimal solution $x^*$, we construct $\hat{x}$ as follows. For machine $m$, we replace all $\mathcal{G}_h$-jobs with the same number of the smallest $\mathcal{G}_h$-jobs. For machine $m-1$, we replace all $\mathcal{G}_h$-jobs with the same number of the remaining smallest $\mathcal{G}_h$-jobs, etc. Eventually, every $\mathcal{G}_h$-job on machine $i+1$ is no greater than any $\mathcal{G}_h$-job on machine $i$. Let $\hat{L}_i=L_i^*+\Delta_i$ be the new load of machine $i$. %where $\Delta_i$. 

By the definition of $\mathcal{G}_h$, we know that the largest $\mathcal{G}_h$-job has a processing time at most $1+\sqrt{\epsilon}$ times the smallest one. This implies that $ L_i^*/(1+\sqrt{\epsilon})\le \hat{L}_i\le (1+\sqrt{\epsilon})L_i^*$, and hence $|\Delta_i|\le \sqrt{\epsilon}L_i^* \le 2\sqrt{\epsilon}$. In order to bound the objective function, first write $\sum_{i=1}^m (L_i^*+\Delta_i)^q = \sum_{i=1}^m {L_i^*}^q(1+\Delta_i/L_i^*)^q$. Using a Taylor expansion of order 1 on the function $(1+x)^q$ around $x=0$, we obtain that for some $0\le \xi_i\le \Delta_i/L_i^*\le 1$,
\begin{eqnarray*}
	\sum_{i=1}^m (L_i^*+\Delta_i)^q &&= \sum_{i=1}^m {L_i^*}^q\left(1+q\frac{\Delta_i}{L_i^*}+\frac{q(q-1)}{2}(1+\xi_i)^{q-2}\left(\frac{\Delta_i}{L_i^*}\right)^2\right)\\
	&&\le (1+O(\epsilon))\sum_{i=1}^m {L_i^*}^q+ q\sum_{i=1}^m \Delta_i {L_i^*}^{q-1}\\
	%&&= (1+O(\epsilon))OPT+ q{L_1^*}^{q-1}\sum_{k=1}^m \Delta_k+\cdots+q({L_i^*}^{q-1}-{L_{i-1}^*}^{q-1})\sum_{k=i}^m \Delta_k+\cdots+q({L_m^*}^{q-1}-{L_{m-1}^*}^{q-1}) \Delta_m\\
	&&= (1+O(\epsilon))OPT+ q{L_1^*}^{q-1}\sum_{k=1}^m \Delta_k+q\sum_{i=2}^m\left[({L_i^*}^{q-1}-{L_{i-1}^*}^{q-1})\sum_{k=i}^m \Delta_k\right]\\
	&&\le (1+O(\epsilon))OPT.
\end{eqnarray*}

The last equality uses Abel's transformation (summation by parts). The last inequality follows since, for each $i$, it holds that 
$\sum_{k=i}^m\Delta_{k}\le 0$, as the last $m-i$ machines received the smallest $\mathcal{G}_h$-jobs for each $h$. %Thus, Lemma~\ref{thm:structure} is true.
\end{proof}

Exploiting the ordering of the jobs and machines given by Lemma~\ref{thm:structure}, we are able to develop a dynamic programming based algorithm to prove Lemma~\ref{lemma:al2}, see Appendix~\ref{appsubsec:al2}.

\section{Lower Bound}
%{\color{red} 3SAT' replaced by Max3SAT$'$? }
In this section, we will prove the following theorem.
\begin{theorem}\label{thm:lower-bound}
Let $q>1$ be an arbitrary constant. Assuming ETH, there is no PTAS for $P||\sum_i C_i^q$ that runs in $2^{O((1/\varepsilon)^{1/2-\delta})}+n^{O(1)}$ time for any constant $\delta>0$.
\end{theorem}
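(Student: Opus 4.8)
\textbf{Proof proposal for Theorem~\ref{thm:lower-bound}.}

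The plan is to give a fine-grained reduction from a suitably structured variant of \textsc{Max3SAT}, and then invoke ETH's consequence that \textsc{Max3SAT} has no $2^{o(n+m)}$-time algorithm (by the sparsification lemma). The target is: from a \textsc{Max3SAT} instance $\varphi$ with $n$ variables and $O(n)$ clauses, build in polynomial time a scheduling instance of $P||\sum_i C_i^q$ together with a target value $\epsilon = \epsilon(n)$, such that deciding whether the optimum is below a certain threshold solves $\varphi$, and such that $1/\epsilon = n^{2+o(1)}$. If such a reduction exists, a PTAS running in $2^{O((1/\epsilon)^{1/2-\delta})}+N^{O(1)}$ time (where $N$ is the number of jobs, which will be $n^{O(1)}$) would run in $2^{O(n^{(2+o(1))(1/2-\delta)})} = 2^{o(n)}$ time, contradicting ETH. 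The crux of getting the right exponent is therefore making $1/\epsilon$ essentially \emph{quadratic} in $n$ rather than the polynomially-worse bound one would get from a naive construction.

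The construction, as foreshadowed in the Technical Contribution section, proceeds as follows. First I would preprocess $\varphi$ into the special form with clause sets $C_1$ (a ``path''-like chain of 3-clauses $cl_\ell = (w_{\ell-1}\vee w_\ell\vee w_{\ell+1})$) and $C_2$ (a permutation matching of XOR-clauses $cl_\ell' = (z_\ell \oplus \neg z_{\tau(\ell)})$), using the machinery of Section~\ref{subsec:maxsat}; this normalization is standard-style but must be done carefully so that it preserves hardness up to constant factors and keeps the number of variables/clauses linear. Next, for each variable $z_j$ I create two jobs representing the literals $z_j$ and $\neg z_j$, with processing times built from the numbers $\sigma(j)$ coming from the extended Salem–Spencer construction of Lemma~\ref{lemma:uniquesum-2} (and its refinement to adjacent unique sums), scaled into the window $[\epsilon,1]$ as required by the preprocessing of Lemma~\ref{lemma:prepocessing}. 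The key design principle is that a clause gadget is a ``slot'' (a machine, or a bounded-load region of a machine) of a carefully chosen capacity so that \emph{only} the pair of literal-jobs corresponding to that clause can sum to exactly fill it --- this is precisely where the unique-sum property of $\sigma$ (for $C_1$) and the \emph{linked unique sum} property with the auxiliary sets $E_i$ (for $C_2$, routing $\sigma(i)$ to $\sigma(\tau(i))$ through a chain of bridging jobs) are used. Because these weaker notions only require $n^{1+o(1)}$ range instead of the $\Omega(n^2)$ forced for full Sidon sets, the individual processing times have denominators of size $n^{1+o(1)}$; squaring comes in through the $\ell_q$ objective, which forces $1/\epsilon$ to be $n^{2+o(1)}$.

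The heart of the proof is the gap analysis for the objective $\sum_i C_i^q$. Using the fact (Lemma~\ref{lemma:prepocessing}) that we may assume all machine loads lie in $[1/2,2]$, and the strict convexity of $t\mapsto t^q$ for $q>1$, I would show that a ``balanced'' schedule --- one in which every slot is exactly filled, which happens iff the corresponding literal assignment satisfies all the encoded clauses --- achieves objective exactly some value $V_0$, whereas every time a clause is left unsatisfied the schedule must deviate from perfect balance by at least some fixed amount $\delta_0 = \mathrm{poly}(\epsilon)$ on the relevant machines, and by convexity this costs at least an additive $\Omega(\delta_0^2)$ in the objective. Combining over the (linearly many) unsatisfied clauses, the optimum separates into $\le V_0 + O(\epsilon)V_0$ in the yes-case versus $\ge V_0 + \Omega(1)\cdot\text{(\# unsatisfied clauses)}\cdot\delta_0^2$ in the no-case; calibrating $\epsilon$ against the \textsc{Max3SAT} gap of the $C_1$/$C_2$-structured instance makes a $(1+\epsilon)$-approximation distinguish the two. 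The main obstacle I anticipate is exactly this quantitative calibration together with verifying that no ``cheating'' schedule exists: one must rule out the possibility that jobs from unrelated clause gadgets, or the auxiliary $E$-jobs, recombine to fill a slot in an unintended way (this is what the uniqueness and disjointness $E_i\cap E_{i'}=\emptyset$ clauses of the linked-unique-sum property are for), and one must also check that the slots/machines can be laid out so that total size equals the number of machines as Lemma~\ref{lemma:prepocessing} demands. Handling the $C_2$ XOR-clauses is the subtler half, since the routing chains through $E_i$ must be long enough to be ``orthogonal'' (via the group-theoretic decomposition of $\tau$ in Lemma~\ref{lemma:shuffle-1}) yet short enough to keep the number of jobs and the range polynomially bounded, so that $1/\epsilon$ stays at $n^{2+o(1)}$ and the ETH contradiction goes through for every $\delta>0$.
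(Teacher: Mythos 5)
Your proposal follows essentially the same route as the paper: a reduction from the structured 3SAT$'$ variant where $C_1$-clauses are handled via unique adjacent sums (Lemma~\ref{lemma:uniquesum-2}) and $C_2$-clauses via linked unique sums with disjoint chains $E_i$ built from the shuffling decomposition of $\tau$ (Lemmas~\ref{lemma:linked-sum}, \ref{lemma:shuffle-1}), so that both processing times and the target load $Q$ stay at $n^{1+o(1)}$, and the Taylor/convexity argument (with $\sum_i\Delta_i=0$ killing the linear term) converts the number of bad machines into a $\Theta(\cdot)\cdot Q^{q-2}$ additive penalty, forcing $1/\epsilon\approx mQ^2/n = n^{2+o(1)}$ and hence the $2^{O(n^{1-\delta'})}$ contradiction with ETH. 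One small misattribution worth fixing: the lower-bound construction does not (and need not) route the instance through the scaling of Lemma~\ref{lemma:prepocessing}, which is a preprocessing step for the algorithmic upper bounds; the reduction builds its half-integral instance directly with loads around $Q=\tilde{O}(n)$.
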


For the proof we give a fine-grained reduction from a variant of Max3SAT, called 3SAT$'$ (which we elaborate in the following subsection), to $P||\sum_i C_i^q$.

\subsection{3SAT\texorpdfstring{$'$}{Lg} - Max3SAT with a Special Structure}\label{subsec:maxsat}
%{\color{red} we need change all $m$ to a new variable, $\Lambda$, as $m$ is used for number of machines. We need change 3SAT' to 3SAT$'$. In some places maybe Max3SAT$'$?}
We study a variant of 3SAT, which we call 3SAT$'$, whose instances have the following structure: There are $n$ variables $z_1,\ldots,z_n$, where $n$ is a multiple of $3$. There are $4n/3$ clauses, such that the set of
clauses can be divided into two disjoint sets $C_1$ and $C_2$ such that:
\begin{compactitem}
\item In $C_1$, every clause is a disjunction (OR operator) of three literals. For each variable $z_i$, exactly one literal in $C_1$ belongs to $\{z_i,\neg z_i\}$. %Each variable {\color{red}(either its positive literal or negative literal)} appears exactly once within all clauses of $C_1$.
\item In $C_2$, every clause is of the form $z_i\oplus \neg z_k$, where $\oplus$ denotes the XOR operator. Also, for every variable $z_i$, each literal $z_i$ and $\neg z_i$ appears exactly once within $C_2$.
\end{compactitem}
%We say a 3SAT$'$ instance is $\alpha$-satisfiable, if there exists a truth-assignment such that $\alpha$ fraction of the clauses of $C_1$ and $C_2$ are satisfied. 

For example, $C_1=\{(z_1\vee \neg z_2\vee z_3)\}$ and $C_2=\{(z_1\oplus \neg z_2),(z_2\oplus \neg z_3),(z_3\oplus \neg z_1)\}$ defines a 3SAT$'$ instance for $n=3$. 
Let $cl_2,cl_3,\cdots,cl_{n-1}$ be the clauses in $C_1$. By re-indexing we can assume that $cl_\ell$ is of the form $(w_{\ell-1}\vee w_{\ell}\vee w_{\ell+1})$, where $w_{j}\in\{z_j,\neg z_{j}\}$ for all $j$. Also notice that $|C_1|=n/3$ and $|C_2|=n$. Since every literal appears exactly once in $C_2$, we define a permutation $\tau:\mathbb{Z}_n\rightarrow \mathbb{Z}_n$ (i.e. a bijection) such that $\tau(i)=k$ for each $(z_i\oplus\neg z_k)\in C_2$. %contains three variables $z_{\ell-1},z_\ell,z_{\ell+1}$.

Similarly to 3SAT, it is also difficult to distinguish instances of 3SAT$'$ where almost all clauses are satisfiable and instances where at most certain fraction of the clauses can be satisfied, as implied by the following lemma. %~\ref{lemma:maxsat-eth2}.
See Appendix~\ref{appsec:sat} for its proof.
%The proof of Lemma~\ref{lemma:maxsat-eth2} is shown in Appendix~\ref{appsec:sat}. %Define a one-to-one mapping $\tau$ such that $\tau(i)=k$ if and only if $(z_i\oplus \neg z_k)\in C_2$.
%The following lemma in shown in Appendix~\ref{appsec:sat}. 

\begin{lemma}\label{lemma:maxsat-eth2}
Assuming ETH, there exists a constant $\beta\in(0,1)$ such that for any sufficiently small $\epsilon',\delta>0$ there is no algorithm with running time $2^{O(n^{1-\delta})}$ that  distinguishes between instances of 3SAT$\,'$ with $4n/3 $ clauses where at least $(1-\epsilon')\cdot 4n/3$ clauses are satisfiable, from instances where at most $(\beta+\epsilon')\cdot 4n/3$ clauses are satisfiable.
\end{lemma}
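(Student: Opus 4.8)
The plan is to prove Lemma~\ref{lemma:maxsat-eth2} by a two-stage reduction: first from standard 3SAT (or rather, from the gap version of Max3SAT guaranteed by the PCP theorem together with sparsification) to 3SAT$'$, tracking both the number of clauses and the completeness/soundness gap. The starting point is the well-known fact that, assuming ETH, there is no $2^{o(n)}$-time algorithm for 3SAT, and after applying the Sparsification Lemma of Impagliazzo, Paz and Zane we may assume the number of clauses is $O(n)$; then Dinur's PCP theorem (or the combination of the original PCP theorem with gap amplification) gives a constant $\beta \in (0,1)$ and a gap version: distinguishing $(1-\epsilon')$-satisfiable instances from $(\beta+\epsilon')$-satisfiable instances of Max3SAT with $O(n)$ clauses requires $2^{\Omega(n)}$ time under ETH. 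Since any polynomial blow-up in the number of clauses can be absorbed into the $2^{O(n^{1-\delta})}$ bound (as long as the blow-up is at most $n^{O(1)}$, and in fact we will only need a linear blow-up), it suffices to produce a 3SAT$'$ instance whose clause count is linear in the original and whose gap is preserved up to rescaling $\epsilon'$ and $\beta$.

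The structural work is to massage an arbitrary Max3SAT instance into the rigid form required by 3SAT$'$. The key constraints are: (i) $C_1$ should consist of $n/3$ ordinary 3-clauses where variable $z_j$ appears (positively or negatively) in exactly one clause, with clauses indexed so that $cl_\ell$ involves $w_{\ell-1}, w_\ell, w_{\ell+1}$; and (ii) $C_2$ should consist of $n$ equality/inequality constraints $z_i \oplus \neg z_k$ forming a permutation $\tau$ on $\Z_n$, with each literal appearing exactly once. The standard device is \emph{variable replication}: for each original variable $x$ that occurs $d$ times, introduce $d$ fresh copies $z^{(1)},\ldots,z^{(d)}$, replace the $t$-th occurrence of $x$ by $z^{(t)}$, and add equality constraints $z^{(1)}=z^{(2)},\ z^{(2)}=z^{(3)},\ldots,z^{(d)}=z^{(1)}$ chaining the copies into a cycle; these equalities are exactly the ``$\oplus \neg$'' XOR clauses (note $z_i \oplus \neg z_k$ is satisfied iff $z_i = z_k$), and chaining cyclically makes each literal appear once. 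I would additionally pad/split clauses so that every clause has exactly three literals on three distinct variables and every variable occurs a bounded number of times (a constant-degree version of Max3SAT, also standard and gap-preserving), which lets the total number of copies stay linear in $n$. After replication, every variable occurs in exactly one ordinary clause in $C_1$, so $|C_1| = n/3$ where $n$ is the new number of variables; and then I would lay the $C_1$-clauses out on a line and relabel variables so that the $\ell$-th clause uses the three consecutive indices $\ell-1,\ell,\ell+1$ — this is just choosing a convenient ordering of the $C_1$-clauses and then ordering the variables accordingly, with a few dummy clauses/variables to handle the cyclic boundary and to force $n$ to be a multiple of $3$. The gap is preserved because: in the completeness case, any satisfying assignment of the original instance lifts to one that satisfies all copy-equalities and all clauses; in the soundness case, an assignment violating few 3SAT$'$ clauses either violates few original clauses (when the copy-constraints are mostly satisfied, so copies agree and the lift is well-defined) or violates many copy-constraints — and by adjusting constants and the $\epsilon'$ slack this still yields at most a $(\beta' + \epsilon')$ fraction satisfiable for a suitable constant $\beta' \in (0,1)$. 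Finally, any $2^{O(n^{1-\delta})}$ algorithm for the resulting 3SAT$'$ gap problem would, via this linear-size reduction, give a $2^{O((n')^{1-\delta})} = 2^{o(n')}$ algorithm for the gap Max3SAT problem on the original $n'$ variables, contradicting ETH.

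I expect the main obstacle to be the bookkeeping that simultaneously enforces \emph{all three} rigidity conditions — the $n/3$ clause count in $C_1$ forcing each variable into exactly one disjunctive clause, the consecutive-index labeling $cl_\ell = (w_{\ell-1}\vee w_\ell \vee w_{\ell+1})$, and the permutation structure of $C_2$ with each literal used once — while keeping the instance size linear and the gap intact. In particular, demanding that each variable appear in exactly one $C_1$-clause is restrictive: it means a variable's ``other'' occurrences must all be in $C_2$, which is possible precisely because the replication trick routes all cross-clause consistency through the XOR cycles, but one has to check that the copy-cycles can be threaded through as a single permutation $\tau$ on all $n$ variables (rather than a union of short cycles) if that is required, or argue it is not required — reading the definition, $\tau$ is just any permutation, so a union of cycles is fine, which simplifies matters. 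The consecutive-labeling condition is the mildest: it is a pure renaming once $|C_1| = n/3$ and each variable sits in one $C_1$-clause, essentially asking that the ``clause--variable incidence'' of $C_1$ be a specific bipartite graph (a path-like $3$-uniform structure), which we can arrange by choosing how we split original clauses and adding a bounded number of dummy always-true-or-ignored clauses and variables to fill in the pattern. Throughout, the delicate point is quantifying the soundness loss from the dummy clauses and from the copy-equalities and folding it into the ``$\epsilon'$'' and ``$\beta$'' — I would handle this by taking the number of dummies to be a small constant fraction and choosing $\beta$ conservatively, deferring the exact constants to the appendix.
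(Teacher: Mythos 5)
Your route — obtain sub-exponential gap hardness of Max3SAT with near-linear clause blowup via a quasilinear PCP (the paper cites Moshkovitz--Raz and Bonnet et al.; Dinur would also serve), regularize to bounded occurrence via expander enforcers, and then apply Tovey-style variable replication with cyclic XOR chains to impose the rigid $C_1/C_2$ structure of 3SAT$'$ — is essentially the paper's own chain of Lemmas~\ref{lemma-cite:1}, \ref{lemma-cite:2}, \ref{lemma:maxsat-eth}, and~\ref{le:tovey}, including the same gap-preservation accounting (few violated XOR links means the projection to a consistent assignment loses only a bounded factor). One caveat worth flagging, though non-load-bearing since you correctly target linear blowup anyway: your parenthetical that an arbitrary polynomial clause blow-up $n^{O(1)}$ can be absorbed into the $2^{O(n^{1-\delta})}$ bound is false — an $n^c$ blowup only rules out running times $2^{O(N^{1-\delta})}$ for $\delta > 1-1/c$, not for all sufficiently small $\delta$, which is precisely why the near-linear PCP is needed.
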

\subsection{Overview of the reduction}\label{subsec:overview}
We now briefly describe the structure of the constructed scheduling instance. The detailed reduction will be presented in Appendix~\ref{appsec:reduction-construction}. We remark that the high-level structure of the scheduling instance resembles the classical reduction and that of~\cite{chen2014optimality}. New technical ingredients are in job processing times, as we will elaborate in Section~\ref{subsec:technical}.

For an instance $I_{sat}$ of 3SAT$'$ with $n$ variables, we construct the following 6 kinds of jobs:

$\bullet$ Variable jobs: For each positive (or negative, resp.) literal, say,
$z_i$ (or $\neg z_i$, resp.), two pairs of variable jobs $V_{i,+,1}^{\rho}$ and
$V_{i,+,2}^{\rho}$ (or $V_{i,-,1}^{\rho}$ and $V_{i,-,2}^{\rho}$, resp.) are
constructed where $\rho\in\{T,F\}$. In total, we construct 4 jobs for each (positive or negative) literal, i.e., 8 jobs for each variable.

$\bullet$ Clause jobs: 
 For each clause $cl_\ell$ of $C_1$, one clause job $\textrm{CL}^T_\ell$ and two copies of clause job $\textrm{CL}_\ell^F$ are
constructed. 
Recall that $|C_1|=n/3$, we construct $n$ clause jobs. 

$\bullet$ Truth-assignment jobs, link jobs and dummy jobs: These three kinds of jobs will be created suitably so that the conditions below (CO1 to CO4) are satisfied.

$\bullet$ Gap jobs: Let $Q$ be a target makespan. We construct $\tilde{O}(n)$ gap jobs and the same number of machines to create gaps. Roughly speaking, every feasible schedule whose objective value is not too large will have one gap job on each machine, leaving a gap that must be filled up such that the load of the machine is exactly $Q$. We will create 4 kinds of gaps (incurred by gap jobs) satisfying the following conditions:
\smallskip
\begin{compactitem}
\item[$\bullet$ CO1.] Variable-Truth gaps. To fill
up these gaps, for any $i$ either $V_{i,+,1}^F$, $V_{i,+,2}^F$, $V_{i,-,1}^T$,
$V_{i,-,2}^T$, or $V_{i,+,1}^T$,
$V_{i,+,2}^T$, $V_{i,-,1}^F$, $V_{i,-,2}^F$ are used. Truth-assignment jobs are created for this purpose.

\item[$\bullet$ CO2.] Variable-Clause-Dummy gaps. For each clause $cl_\ell\in C_1$, there are three variable-clause-dummy gaps. If the
positive (or negative, resp.) literal $z_i$ (or $\neg z_i$, resp.) is in $cl_\ell\in
C_1$, then a variable-clause-dummy gap is created so that it could only be
filled up by $\textrm{CL}_\ell^\rho$ and $V_{i,+,1}^{\rho'}$ (or $\textrm{CL}_\ell^\rho$ and $V_{i,-,1}^{\rho'}$, resp.), where $\rho,\rho'\in\{T,F\}$, together with a dummy job.  Further, the gap ensures that $\textrm{CL}^T_\ell$ has to be scheduled with %a true variable job, i.e, 
either $V_{i,+,1}^T$ or $V_{i,-,1}^T$. %Dummy jobs are created for this purpose. %Furthermore, for the superscripts of $\textrm{CL}_\ell$ and $V_{i,+,1}$ ($\textrm{CL}_\ell$ and $V_{i,-,1}$, resp.). %the gap enforces that only three combinations are valid: (T,T), (F,F) and (F,T).

\item[$\bullet$ CO3.] Variable-Link and Link-Link
gaps. For each clause $(z_i\oplus\neg z_k)\in C_2$ we create a collection of Variable-Link and Link-Link gaps. To fill up these gaps, either $V_{i,+,2}^T$ and $V_{k,-,2}^F$, or $V_{i,+,2}^F$ and $V_{k,-,2}^T$ are used. Link jobs are created for this purpose (see Section~\ref{subsec:technical} for more details on this construction).

\item[$\bullet$ CO4.] Variable-Dummy gaps. Recall that 8 variable jobs
are constructed for a variable and only 7 of them are used for the 4 kinds of gaps above (either
$V_{i,+,1}^\rho$ or $V_{i,-,1}^\rho$ is left, where $\rho\in\{T,F\}$), the remaining one together with a dummy job will be used to
fill these gaps. 
\end{compactitem}
\smallskip

With this construction, it is not difficult to verify that if every gap is filled exactly, $I_{sat}$ is satisfiable. To
see why, if $V_{i,+,1}^F$, $V_{i,+,2}^F$, $V_{i,-,1}^T$,
$V_{i,-,2}^T$ are used in the variable-truth gaps, then we let variable $z_i$ be true, otherwise
we let it be false. For any clause of $C_1$, say, $cl_\ell$, there is one $\textrm{CL}_\ell^T$ and it must
be scheduled with a true variable job, say, $V_{i,+,1}^T$ if $z_i$ is a literal in $cl_{\ell}$ (or $V_{i,-,1}^T$ if $\neg z_i$ is a literal in $cl_{\ell}$). If $V_{i,+,1}^T$ (or $V_{i,-,1}^T$, resp.) is scheduled
with $\textrm{CL}_\ell^T$, then the positive (or negative, resp.) literal $z_i$ (or $\neg z_i$, resp.) is in $cl_\ell$. Meanwhile the variable $z_i$ is true (or false, resp.) since
otherwise $V_{i,+,1}^T$ (or $v_{i,-,1}^T$,resp.) are used to fill variable-truth gaps. Thus clause $cl_\ell$ is satisfied. For any clause of $C_2$, say, $(z_i\oplus\neg z_k)$,
if $V_{i,+,2}^T$ and $V_{k,-,2}^F$ (or $V_{i,+,2}^F$ and $V_{k,-,2}^T$, resp.) are used to fill up the corresponding variable-link and link-link gaps, then
variables $z_i$ and $z_k$ are both true (false, resp.) since otherwise $V_{i,+,2}^T$ and $V_{k,-,2}^F$ ($V_{i,+,2}^F$ and $V_{k,-,2}^T$, resp.) would have been used to fill up the variable-truth gaps. Hence, $(z_i\oplus\neg z_k)$ is satisfied. Similarly, if $I_{sat}$ is satisfiable, then every gap can be filled up.

Chen et al.~\cite{chen2014optimality} provided a reduction that meets the above requirement with job processing times, and hence the target value $Q$, being $O(n^{1+\delta})$ for arbitrarily  small constant $\delta>0$. Unfortunately, using this reduction we can only deduce a weaker lower bound of $2^{O((1/\epsilon)^{1/3-\delta})}$ (see Appendix~\ref{appsec:old} for a detailed discussion). For our purpose, we need to design job processing times to achieve a stronger {\em ratio-preserving} property, as we elaborate below.

%-- a fraction of unsatisfied clauses may lead to a fraction of machines with a load deviating the target makespan $Q$, and vice versa. We give more details below.

Recall that we are given an instance of 3SAT$'$ with $n$ variables and $4n/3$ clauses.  For a given solution, we say a machine is {\em good} if its load is exactly $Q$ (in which case there is exactly one gap job on it and the gap is filled up exactly), and is {\em bad} otherwise (in which case its load is at least $Q+1/2$ or at most $Q-1/2$). Our scheduling instance will additionally satisfy the following properties:
\begin{compactitem}
\item[i.] There are $m=\tilde{O}(n)$ machines and the target makespan is $Q=\tilde{O}(n)$.
\item[ii.] Each processing time is a multiple of $1/2$ and the total job processing time equals $mQ$. %or more precisely, $m=n^{1+O(\frac{1}{\log\log n})}$;
%\item[iii.] There are $m$ gap jobs, each having a huge processing time such that if any machine accepts two gap jobs, then the objective value of the resulted schedule is at least $mQ^q+\Omega(Q^q)$;
\item[iii.] Conditions CO1 to CO4 are satisfied. Additionally, the following ratio-preserving properties are satisfied. For any $\vartheta\in (0,1)$ it holds that: 
\begin{compactitem}
%\item the average load (total job processing time divided by the number of machines) is exactly $T$;
\item If the 3SAT$'$ instance admits a truth assignment where at most $\vartheta n$ clauses are not satisfied, %for some constant $\vartheta\le 1$, 
then the constructed scheduling instance admits a feasible solution with at most $\vartheta_1n$ bad machines, for some $\vartheta_1=\Theta(\vartheta)$. In particular, the load of these bad machines is $Q+1$ or $Q-1$. %furthermore, these bad machines have load $T-1$ or $T+1$;
\item If any truth assignment for the 3SAT$'$ instance has at least $\vartheta n$ clauses that are not satisfied, then in any feasible schedule of the constructed scheduling instance there are at least $\vartheta_2n$ bad machines, for some $\vartheta_2=\Theta(\vartheta)$.
%\item If any truth assignment for the 3SAT$'$ instance has at most $(4/3-\vartheta) n$ clauses that are satisfied, then in any feasible schedule of the constructed scheduling instance there are at least $\vartheta_2n$ bad machines, for some $\vartheta_2=\Theta(\vartheta)$.
\end{compactitem}
%Summarizing, a 3SAT$'$ instance where a significant fraction of clauses cannot be satisfied will lead to a scheduling instance where a significant fraction of machines are bad in every solution. 
%Based on the processing times, gap jobs can be divided into 5 categories: variable-link jobs, link-link jobs, variable-clause-dummy jobs, variable-dummy jobs, variable-truth jobs. Furthermore, if a gap job does not schedule together with the other kinds of jobs specified by its category, e.g., a variable-link job does not schedule with a variable and a link jobs, then the objective value of the resulted schedule is at least $mT^q+\Omega(T^q)$;

\end{compactitem}

Before giving more details of the construction, we briefly argue that an instance satisfying properties (i)-(iii) implies Theorem~\ref{thm:lower-bound}. 
\begin{proof}[Proof Idea (Theorem~\ref{thm:lower-bound})]
Take $q=2$ for simplicity. We assume by contradiction that there exist some sufficiently small $\delta>0$, such that for any $\epsilon>0$ there is an $(1+\epsilon)$-approximation algorithm with running time $2^{O((1/\epsilon)^{1/2-\delta})}+n^{O(1)}$. Let $\beta\in (0,1)$ be a constant, and let $\epsilon',\delta>0$ be sufficiently small numbers, as in Lemma~\ref{lemma:maxsat-eth2}. We show that, for an appropriately chosen $\epsilon$, the PTAS can be used to distinguish, in time $2^{n^{1-\delta}}$, 3SAT$'$ instances where at least $(1-\epsilon')\cdot 4n/3$ clauses are satisfiable, from 3SAT$'$ instances where at most $(\beta+\epsilon')\cdot 4n/3$ clauses are satisfiable, contradicting ETH by Lemma~\ref{lemma:maxsat-eth2}. 
Indeed, we first observe that every bad machine will cause the objective value to increase by at least some fixed constant. A straightforward but crucial observation follows from the fact that, for load balancing problems, the total difference from the average load is~0. That is, if $C_i=Q+\Delta_i$, then the cost is
\begin{eqnarray}
\sum_{i=1}^m C_i^2= \sum_{i=1}^m (Q+\Delta_i)^2 =\sum_{i=1}^m (Q^2+  2T\Delta_i + \Delta_i^2) =  mQ^2 + \sum_{i=1}^m\Delta_i^2, \label{eq:load}
\end{eqnarray}
where the last equality follows as $\sum_{i}\Delta_i=0$ (for general $q>1$, a similar statement follows from a Taylor expansion, as in the proof of Lemma~\ref{thm:structure}). Consequently, if at least $(1-\epsilon')\cdot 4n/3$ clauses of $I_{sat}$ are satisfiable, then at most $\Theta(\epsilon'n)$ machines will have a load of either $Q+1$ or $Q-1$, and hence the optimal objective value of the constructed scheduling instance is at most $mQ^2+\Theta(\epsilon' n)$ by Eq~\eqref{eq:load}. On the other hand, if at most $(\beta+\epsilon')\cdot 4n/3$ clauses of $I_{sat}$ are satisfiable for some constant $\beta<1$, then $\Delta_i\ge 1/2$ for at least $\Theta((1-\beta-\epsilon')n)$ machines. By Eq~\eqref{eq:load} the optimal objective value of the constructed scheduling instance is at least $mQ^2+\Theta((1-\beta-\epsilon')n)=mQ^2+\Theta(n)$ (see Lemma~\ref{lemma:calculate} for the detailed computation). Now we apply the efficient PTAS with $\epsilon=\Theta(\frac{n^{1-\delta}}{mQ^2})\approx \Theta(1/n^{2+\delta})$. Given the fact that $mQ^2\epsilon=\Theta(n^{1-\delta})$, if at least $(1-\epsilon')\cdot 4n/3$ clauses of $I_{sat}$ are satisfiable, then the PTAS should return a schedule with objective value at most $mQ^2+\Theta(\epsilon' n)+\Theta(n^{1-\delta})=mQ^2+\Theta(\epsilon' n)$. Otherwise, the PTAS returns a schedule with objective value at least $mQ^2+\Theta(n)$. Theorem~\ref{thm:lower-bound} follows as our PTAS has a running time of $2^{O((1/\epsilon)^{1/2-\delta})}+n^{O(1)} \le 2^{O(n^{1-\delta})}$. %is proved given a constructed scheduling instance satisfying all the properties above.
\end{proof}

\noindent\textbf{Remark.} One can verify that if $Q$ is larger, e.g., $Q=\Theta(n^2)$, then the above argument only rules out a PTAS of running time $2^{O((1/\epsilon)^{1/4-\delta})}$. Hence, simultaneously enforcing the ratio-preserving property while having $Q=\tilde{O}(n)$ is the main technical challenge, which we overcome with our new number-theoretic constructions, as we elaborate in the following.

The rest of the paper is organized as follows. In Section~\ref{subsec:technical} we give an overview of the main technical ingredients for the construction of the processing times in our reduction. We also motivate our number theoretical constructions, which are specified in Section~\ref{subsec:adjacent-sum}. 
%\item[1.] In Section~\ref{subsec:adjacent-sum} we prove Lemma~\ref{lemma:uniquesum-2}, which handles condition C2, and meanwhile serves as a base for Section~\ref{subsec:linked-sum}. 
%\item[2.] In Section~\ref{subsec:linked-sum} we present Lemma~\ref{lemma:linked-sum}, whose full proof is deferred to Appendix~\ref{appsec:link-sum}.
%\smallskip
In Appendix~\ref{appsec:reduction-construction} we present the complete reduction. In Appendix~\ref{appsec:remaining-proofs} we show its correctness and conclude Theorem~\ref{thm:lower-bound}.

\subsection{Defining Processing times: Main Techniques}\label{subsec:technical}
To illustrate the main technical ingredient, in the following part of this subsection we will focus on conditions CO2 and CO3 while ignoring the other conditions (which can be handled using the techniques for CO2 and CO3). Recall that our goal is to create suitable gap jobs that can only be filled up by specific jobs. 

We can view each job, say, $V_{i,+,1}^{T}$, as a combination of three components --  the type-component $V_{\cdot,+,1}$ (indexed by $i$), the index-component $i$, and the $T/F$-component $T$. Ignoring dummy jobs for simplicity, conditions CO2 and CO3 involve 5 different type-components, including $V_{\cdot,+,1}$, $V_{\cdot,+,2}$ $V_{\cdot,-,1}$ $V_{\cdot,-,2}$ and $\textrm{CL}_{\cdot}$. Denote by $s(\cdot)$ the processing time of a job. We can define the processing time of a job into a summation of three terms corresponding to components, e.g., 
$s(V_{i,+,1}^{T})=\mu(V_{\cdot,+,1})+\sigma(i)+\eta(T)$, where the functions $\mu,\sigma,\eta$ map the type-component, index-component and T/F-component of a job to some positive integers. Now the question becomes: how can we define functions $\mu,\sigma,\eta$ such that from their sum, e.g., $\mu(V_{\cdot,+,1})+\mu(\textrm{CL}_{\cdot})+\sigma(i)+\sigma(\ell)+\eta(T)+\eta(F)$, we can conclude that it can only be added up by $s(V_{i',+,1}^\rho)$ and $s(\textrm{CL}_{\ell'}^{\rho'})$, where $\{i',\ell'\}=\{i,\ell\}$ and $\{\rho,\rho'\}=\{T,F\}$. Notice that there are only a constant number of different type-components and T/F-components, it is thus easy to define $\mu$ and $\eta$. For example, let $\sigma_{max}$ be a sufficiently large value that exceeds the maximal value of $\sigma$ and $\eta$, and define $\mu(V_{\cdot,+,1}),\mu(V_{\cdot,+,2}),\mu(V_{\cdot,-,1}),\mu(V_{\cdot,-,2}),\mu(\textrm{CL}_{\cdot})$ to be $10^5\sigma_{max}, 10^4\sigma_{max}, 10^3\sigma_{max}, 10^2\sigma_{max}, 10\sigma_{max}$, then from the sum $\mu(V_{\cdot,+,1})+\mu(\textrm{CL}_{\cdot})$ it is very easy to identify the type-components of two jobs. 

The main difficulty lies in the function $\sigma$ as we require job processing times to be $\tilde{O}(n)$, whereas $\sigma$ must map $[n]$ to $[\tilde{O}(n)]$ such that  
\begin{eqnarray}
\sigma(i)+\sigma(\ell)=\sigma(i')+\sigma(\ell')\implies \{i,\ell\}=\{i',\ell'\}. \label{eq:tech-1}
\end{eqnarray} 

In other words, we require the sum $\sigma(i)+\sigma(\ell)$ to be unique among the sums of all possible pairs $(i,\ell)$. Recall the special structure of 3SAT$'$, where each clause $cl_\ell\in C_1$ is of the form $(w_{\ell-1}\vee w_{\ell}\vee w_{\ell+1})$ such that $w_{j}\in\{z_j,\neg z_{j}\}$ for all $j$. Hence, for condition CO2, it suffices to guarantee Eq~\eqref{eq:tech-1} for $i\in\{\ell-1,\ell,\ell+1\}$.
Recall that a Salem-Spencer set is a set of numbers where no three of which form an arithmetic progression, hence if we let $\sigma$ map $[n]$ to a Salem-Spencer set of size $n$, Eq~\eqref{eq:tech-1} always holds for $\ell=i$. For our purpose, we need to generalize the construction of a Salem-Spencer set such that in addition to $2\sigma(\ell)$, the sum of any two adjacent numbers $\sigma(\ell)+\sigma(\ell+1)$ is also unique, as we show in Lemma~\ref{lemma:uniquesum-2}.

Condition CO3 is more complicated, as each clause of $C_2$ is of the form $(z_i\oplus\neg z_k)$ with $k=\tau(i)$, where the permutation $\tau$ is arbitrary. If we consider the index-components of the two variable jobs $V_{i,+,2}^\rho$ and $V_{k,-,2}^{\rho'}$, we cannot guarantee that $\sigma(i)+\sigma(k)=\sigma(i')+\sigma(k')$ implies $\{i,k\}=\{i',k'\}$. Consider the following indirect approach. Suppose for each $(z_i\oplus\neg z_k)$ we can construct a pair of jobs LN$_{(i,k)}^{T}$ and LN$_{(i,k)}^{F}$ (called link jobs), and meanwhile create two gaps such that they must be filled up by $V_{i,+,2}^{\rho_1}$ together with LN$_{(i,k)}^{\rho_1'}$, and $V_{k,-,2}^{\rho_2}$ together with LN$_{(i,k)}^{\rho_2'}$ respectively, and furthermore, $\{\rho_1,\rho_1'\}=\{\rho_2,\rho_2'\}=\{T,F\}$, then we know that if both gaps are filled up, then either $V_{i,+,2}^T$ and $V_{k,-,2}^F$, or
$V_{i,+,2}^F$ and $V_{k,-,2}^T$ are used, which is sufficient for condition CO3. Using this idea, instead of designing $\sigma$ such that the sum $\sigma(i)+\sigma(k)$ is unique, we seek to design $\sigma$ such the pair $(i,k)$ is ``uniquely -linked'' in the sense that there exists some number $e_i=\tilde{O}(n)$ such that the sums $\sigma(i)+e_i$ and $\sigma(k)+e_i$ are both unique among the sums of all pairs. Unfortunately, requiring the uniqueness of $\sigma(i)+e_i$ and $\sigma(k)+e_i$ is still too strong. We will show in Lemma~\ref{lemma:linked-sum} that for every $k=\tau(i)$ there exists a sequence of $\omega=O(\frac{\log n}{\log\log n})$ numbers $e_{i,1}$, $e_{i,2}$, $\ldots$, $e_{i,\omega}$ such that the sums $\sigma(i)+e_{i,1}$, $e_{i,1}+e_{i,2}$, $\ldots$, $e_{i,\omega-1}+e_{i,\omega}$, $e_{i,\omega}+\sigma(k)$ are all unique. Consequently, instead of creating one pair of link jobs, we will create $\omega$ pairs of link jobs for each $(z_i\oplus\neg z_k)$, ensuring condition CO3.

\subsection{Set of Integers with Unique Adjacent Sum and Linked Sum}\label{subsec:adjacent-sum}

In this section, we present our main technical contribution regarding the number-theoretic constructions needed in our reduction.

\noindent\textbf{Some notation.} Recall that we let $\Z_n=\{1,2,\cdots,n\}$. All the logarithms are taken with base $e$ unless stated otherwise. We will use $\cdot$ in the subscript to denote an arbitrary index, e.g., $x_{\cdot}$ refers to $x_i$ for some $i$. We write vectors in boldface, e.g. $\vex, \vey$. Vectors start with its $0$-th coordinate. For any $\upsilon$-dimensional vector $\vecc$, $\vecc[h]$ denotes its $h$-th coordinate for $0\le h\le \upsilon-1$, and $\vecc\vex=\sum_{h=0}^{\upsilon-1}\vecc[h]x^h$.

\begin{lemma}\label{lemma:uniquesum-1}
Let $N\in\mathbb{Z}^+$. There exists a subset ${\cal{S}}\subseteq \Z_N$ such that $|{\cal{S}}|\ge N^{1-c_0\sqrt{\frac{1}{{\log N}}}}$ for some sufficiently large $c_0$ (in particular, $c_0\ge7$ suffices), and for any $y\in {\cal{S}}$ and $1\le h\le 5$, the linear equation $h\cdot y=y_1+y_2+\cdots+y_h$ with $y_i\in {\cal{S}}$ for all $i$ has a unique solution $y_1=y_2=\cdots=y_h=y$.
\end{lemma}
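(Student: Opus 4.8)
## Proof Plan for Lemma~\ref{lemma:uniquesum-1}

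\textbf{High-level approach.} The plan is to mimic and strengthen the classical Behrend/Salem--Spencer construction of progression-free sets, which realizes large sets with no $3$-term arithmetic progression by taking integers whose base-$d$ digits are bounded and whose digit-vectors lie on a sphere of fixed radius. The key point is that the Salem--Spencer property comes from a convexity/norm argument: if $\vex,\vey,\vez$ are digit-vectors with $\|\vex\|^2=\|\vey\|^2=\|\vez\|^2=r$ and $\vex+\vez=2\vey$ as integer vectors (which is forced by $x+z=2y$ once the base $d$ is large enough to preclude carries), then by strict convexity of $\|\cdot\|^2$ one gets $\vex=\vey=\vez$. I would generalize this from the single equation $2y=y_1+y_2$ to all equations $h\cdot y = y_1+\dots+y_h$ for $h\le 5$ simultaneously, which the same convexity argument handles with essentially no change, since $h\vey = \sum_i \vey^{(i)}$ with all vectors on the same sphere again forces $\vey^{(1)}=\dots=\vey^{(h)}=\vey$ by the equality case of the (convexity / Cauchy--Schwarz) inequality $\|\sum_i \vev^{(i)}\|^2 \le h\sum_i\|\vev^{(i)}\|^2$.

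\textbf{Key steps in order.} First I would fix a base $d$ and a dimension $k$, and consider integers $y$ whose base-$d$ representation uses $k$ digits, each digit in $\{0,1,\dots,\lfloor d/12\rfloor\}$ say — the bound $d/12$ rather than $d/2$ is chosen so that a sum of up to $h\le 5$ such numbers (in fact a little more room, to be safe) never produces a carry, so that digit-wise addition faithfully represents integer addition. Second, for each such digit-vector $\vev$ the squared norm $\|\vev\|^2$ lies in $\{0,1,\dots, k\lfloor d/12\rfloor^2\}$, so by pigeonhole some sphere of radius$^2$ equal to some value $r$ contains at least a $1/(k\lfloor d/12\rfloor^2 + 1)$ fraction of all $(\lfloor d/12\rfloor+1)^k$ vectors; let $\mathcal S$ be the set of integers corresponding to that sphere. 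Third, I would verify the defining property: if $h\cdot y = y_1+\dots+y_h$ with $y, y_i\in\mathcal S$ and $1\le h\le 5$, then since no carries occur the same identity holds digit-wise, $h\vev = \sum_{i=1}^h \vev^{(i)}$; applying $\|h\vev\|^2 = h^2 r$ and $\|\sum_i \vev^{(i)}\|^2 \le h\sum_i \|\vev^{(i)}\|^2 = h^2 r$ with equality, the equality case forces all $\vev^{(i)}$ equal, hence all $y_i = y$. Fourth, I would do the bookkeeping to translate the parameters: $\mathcal S \subseteq \Z_N$ with $N = d^k$ (up to the usual off-by-one from indices starting at $1$ versus $0$, absorbed by a harmless shift), and $|\mathcal S| \ge (\lfloor d/12\rfloor+1)^k / (k\lfloor d/12\rfloor^2+1)$. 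Finally, I would optimize over $d$ and $k$ with $d^k = N$: choosing $k \approx \sqrt{\log N}$ and $d \approx 2^{\sqrt{\log N}}$ (so that $k\log d \approx \log N$) yields $|\mathcal S| \ge N^{1 - c_0\sqrt{1/\log N}}$, and tracking the constants carefully (the $1/12$ loss in the digit bound, the polynomial loss from pigeonhole) should confirm that $c_0 = 7$ is a comfortable choice.

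\textbf{Main obstacle.} The conceptual content — that convexity of the squared norm upgrades from forbidding $3$-APs to forbidding all the equations $h y = \sum y_i$ for bounded $h$ — is clean and I expect it to go through smoothly. The real work, and the place where one must be careful, is the \emph{quantitative} part: choosing the digit bound small enough (but only by a constant factor) that sums of up to $h=5$ elements are carry-free, then balancing $d$ and $k$ so that the pigeonhole loss and the digit-truncation loss together cost only $N^{o(1)}$, and finally verifying that these losses are small enough to fit under the explicit exponent $c_0\sqrt{1/\log N}$ with $c_0 \ge 7$. I would expect the carry-freeness bound to be the subtle point: one needs that $h \cdot (\text{max digit}) < d$ with enough slack that even the "receiving" side $h\cdot y$, whose digits can be as large as $h\lfloor d/12\rfloor < d$, is itself carry-free, so that comparing the two sides digit-by-digit is legitimate; choosing the digit cap at $\lfloor d/12\rfloor$ (or any $\lfloor d/c\rfloor$ with $c>5$, say $c=6$) comfortably ensures this, and the constant loss it induces is absorbed into $c_0$.
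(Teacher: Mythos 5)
Your plan follows essentially the same route as the paper: a Behrend/Salem--Spencer-type digit-vector construction, a digit cap small enough relative to the base to make addition of up to five elements carry-free, and the equality case of the Cauchy--Schwarz/power-mean inequality on the common sphere $\|\vev\|_2^2 = r$ to force $\vev^{(1)}=\cdots=\vev^{(h)}=\vev$ from $h\vev=\sum_i\vev^{(i)}$. The quantitative bookkeeping (pigeonhole over $O(k d^2)$ spheres, $d\approx e^{\sqrt{\log N}}$, $k\approx\sqrt{\log N}$) also matches, and your slack on $c_0$ is comfortable. The paper's formulation is merely a reparameterization of yours: digits range over $\{0,\ldots,d-1\}$ and the base is taken to be $5d-1$, rather than fixing the base as $d$ and capping digits at $\lfloor d/c\rfloor$.

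There is one genuine structural difference worth flagging, even though it does not affect the lemma as stated. The paper defines its set elements with an additional fixed top-order marker, $y=\vecc\cdot\vex + x^{M+1}$, so that the coefficient of $x^{M+1}$ in $h\cdot y$ is $h$ while in $y_1+\cdots+y_{h'}$ it is $h'$; since these coefficients are below the base, equality forces $h=h'$ before the norm argument is even applied. This lets the paper prove the \emph{stronger} claim that $h\cdot y = y_1+\cdots+y_{h'}$ with $y_i\in\mathcal{S}$ has no solution when $h'\neq h$ (for $h,h'\le 5$). Your construction omits the marker, and your norm argument does not rule out mixed-count equations in general: for $h\ge 2$ it only gives $h\le h'\le h^2$, so e.g.\ $2y=y_1+y_2+y_3$ is not excluded. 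The lemma statement only asks about $h'=h$, so your proof is correct as a proof of the lemma. But the paper's proof of Lemma~\ref{lemma:uniquesum-2}, Property~3, invokes this stronger $h'\neq h$ conclusion when citing Lemma~\ref{lemma:uniquesum-1}; if you substituted your set $\mathcal{S}$ for the paper's $\mathcal{S}_d$ you would need to add such a marker (or a different mechanism) for that downstream use to go through.
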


The proof of Lemma~\ref{lemma:uniquesum-1} mainly utilizes the idea for constructing Salem–Spencer sets~\cite{behrend1946sets} and can be found in Appendix~\ref{appsec:ajacent-sum}. In particular, we can show that $|{\cal{S}}|\ge N^{1-7\sqrt{\frac{1}{{\log N}}}}$.
For any integer $d\in\Z^+$, we denote by ${\cal{S}}_d$ the subset of $\Z_d$ that satisfies Lemma~\ref{lemma:uniquesum-1}. Now we are ready to prove Lemma~\ref{lemma:uniquesum-2}, which is one of our two main number-theoretical results.% that will be used in our reduction.

\begin{lemma}\label{lemma:uniquesum-2}
%Let $c_0$ be a constant positive integer. 
Let $N\in\mathbb{Z}^+$, $d=\lceil e^{(\sqrt{\log\log N}+c)^2} \rceil=e^{\OO(\sqrt{\log\log N})}\cdot \log N$ for some sufficiently large $c$ ($c\ge 7$ suffices) and $x=5d+1$. There exists an injection $\sigma: \Z_N\rightarrow \Z_{N'}$ such that:
\begin{compactitem}
\item[1.] $N'= N^{1+O(\frac{1}{\sqrt{\log \log N}})}$;
\item[2.] For any $i\in\Z_N$, $\sigma(i)=\sum_{j=0}^{\gamma}\vea_i[j]x^j$ for some $\vea_i[j]\in{\cal{S}}_{d}$, $0\le j\le \gamma$,  where $\gamma=\lceil \frac{\log N}{\log\log N} \rceil+O(\frac{\log N}{(\log\log N)^{3/2}})$;
\item[3.] For any $1\le h\le 5$ and  $i\in\Z_N$, the equation $h\cdot\sigma(i)=\sigma(i_1)+\sigma(i_2)+\cdots+\sigma(i_h)$, $i_j\in\Z_N$ has a unique solution $i_1=i_2=\cdots=i_h=i$. Further, the equation $h\cdot \sigma(i)=\sigma(i_1)+\sigma(i_2)+\cdots+\sigma(i_k)$, $i_j\in\Z_N$ has no feasible solution when $1\le k< h$ or $h<k\le 5$;
\item[4.] For any $i\le N-1$, the linear equation $\sigma(i)+\sigma(i+1)=\sigma(i_1)+\sigma(i_2)$, $i_1\le i_2$ has a unique solution $i_1=i$, $i_2=i+1$. Furthermore, the linear equation $\sigma(i)+\sigma(i+1)=\sigma(i_1)+\sigma(i_2)+\cdots+\sigma(i_k)$ has no feasible solution when $k=1$ or $2<k\le 5$.
\end{compactitem}
\end{lemma}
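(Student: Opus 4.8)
\textbf{Proof plan for Lemma~\ref{lemma:uniquesum-2}.}

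The plan is to build $\sigma$ coordinate-wise in a mixed-radix representation with base $x = 5d+1$, using the set ${\cal{S}}_d$ from Lemma~\ref{lemma:uniquesum-1} in every digit position. First I would fix the number of digits $\gamma$ so that $|{\cal{S}}_d|^{\gamma+1} \ge N$; since $|{\cal{S}}_d| \ge d^{1 - c_0/\sqrt{\log d}}$ and $\log d = \Theta(\log\log N)$, a short computation gives $\log|{\cal{S}}_d| = \log d - O(\sqrt{\log d}) = \log\log N + O(\sqrt{\log\log N})$, hence $\gamma = \lceil \log N / \log\log N \rceil + O(\log N/(\log\log N)^{3/2})$ as claimed in item 2. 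Enumerating $\Z_N$ in a fixed order and assigning to $i$ the $i$-th vector $(\vea_i[0],\dots,\vea_i[\gamma]) \in {\cal{S}}_d^{\gamma+1}$ in lexicographic order, and setting $\sigma(i) = \sum_j \vea_i[j] x^j$, makes $\sigma$ an injection; item 1 follows since $\sigma(i) < x^{\gamma+1}$ and $\log(x^{\gamma+1}) = (\gamma+1)\log x = (\gamma+1)(\log d + O(1)) = (1 + O(1/\sqrt{\log\log N}))\log N$, using $\gamma \log\log N = (1+o(1))\log N$ and $\log d = \log\log N + O(\sqrt{\log\log N})$.

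The heart of the argument is items 3 and 4, which I would prove by a \emph{carry-free digit analysis}. The key point is that every digit $\vea_i[j]$ lies in ${\cal{S}}_d \subseteq \Z_d$, and any sum of at most $5$ such digits is at most $5d < x$, so when we add up to five numbers $\sigma(i_1)+\dots+\sigma(i_k)$ with $k\le 5$ there are \emph{no carries} between digit positions: the $j$-th digit of the sum is exactly $\sum_t \vea_{i_t}[j]$, viewed as an integer in $[0,5d]$. Therefore an equation like $h\cdot\sigma(i) = \sigma(i_1)+\dots+\sigma(i_h)$ decomposes, position by position, into $h\cdot\vea_i[j] = \vea_{i_1}[j] + \dots + \vea_{i_h}[j]$ in $\Z_d$; by Lemma~\ref{lemma:uniquesum-1} (applied with the bound $h\le 5$) this forces $\vea_{i_1}[j] = \dots = \vea_{i_h}[j] = \vea_i[j]$ for each $j$, hence $i_1=\dots=i_h=i$ by injectivity. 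The ``no solution when $k\neq h$'' statements follow by comparing the digit sums on both sides: the left side has digit-sum $h\cdot(\text{digit-sum of }\vea_i)$ in each position while the right has $k$ terms, and since each digit is at least... — more carefully, I would argue that fixing the leading nonzero digit position and counting multiplicities rules out $k<h$ and $h<k$ (the case $k<h$ is immediate as the left digit would be a $k$-fold sum equal to an $h$-fold sum of the same value, impossible unless the value is small, which is excluded because ${\cal{S}}_d\subseteq\Z_d$ starts at a positive integer). For item 4, the carry-free principle again reduces $\sigma(i)+\sigma(i+1) = \sigma(i_1)+\sigma(i_2)$ to $\vea_i[j] + \vea_{i+1}[j] = \vea_{i_1}[j] + \vea_{i_2}[j]$ in $\Z_d$ for every $j$; here I need that consecutive indices $i, i+1$ differ in a controlled way in the lexicographic enumeration, and the Salem–Spencer-type property of ${\cal{S}}_d$ (no nontrivial solution to $y + y' = y_1 + y_2$ beyond $\{y,y'\}=\{y_1,y_2\}$, which is the $h=2$ averaging property) gives $\{\vea_{i_1}[j],\vea_{i_2}[j]\} = \{\vea_i[j],\vea_{i+1}[j]\}$ digit-wise; a final combinatorial step (the lexicographic order of $i$ and $i+1$ pins down which digit takes which value consistently across positions) yields $\{i_1,i_2\} = \{i,i+1\}$.

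The main obstacle I expect is item 4's final combinatorial step: carry-free digit analysis only gives the \emph{set} equality $\{\vea_{i_1}[j],\vea_{i_2}[j]\} = \{\vea_i[j],\vea_{i+1}[j]\}$ independently in each coordinate $j$, and one must rule out a ``mixed'' solution where $i_1$ picks up $\vea_i$'s digit in some positions and $\vea_{i+1}$'s digit in others. This is where the specific choice of enumeration order matters: I would enumerate $\Z_N$ so that $i$ and $i+1$ have vectors that differ only in a short low-order block (a Gray-code-like or odometer-like ordering), so that in all high-order positions $\vea_i[j] = \vea_{i+1}[j]$, forcing $\vea_{i_1}[j] = \vea_{i_2}[j] = \vea_i[j]$ there, and then handle the short differing block by a direct case check exploiting that ${\cal{S}}_d$ has distinct elements. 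This ordering must be compatible with the injectivity and size bounds above, which it is since it is still a bijection onto the first $N$ elements of ${\cal{S}}_d^{\gamma+1}$. The remaining ``no feasible solution'' clauses in items 3 and 4 for $k$ outside the valid range are routine once the carry-free reduction and the positivity of elements of ${\cal{S}}_d$ are in hand.
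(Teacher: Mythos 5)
Your treatment of items 1--3 is essentially the paper's: fix a digit alphabet $\mathcal{S}_d$ from Lemma~\ref{lemma:uniquesum-1}, use base $x=5d+1$ so that a sum of at most $5$ digits never carries, and apply Lemma~\ref{lemma:uniquesum-1} coordinate-wise. That part is fine.

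The real gap is in item~4, and you have slightly misdiagnosed where it lies. Using a Gray-code ordering (which the paper also does, via Claim~\ref{claim:1}) so that $\vea_i$ and $\vea_{i+1}$ differ in exactly one coordinate $j^*$ does correctly force $\vea_{i_1}[j]=\vea_{i_2}[j]=\vea_i[j]$ for every $j\neq j^*$, since there the digit equation is $2\vea_i[j]=\vea_{i_1}[j]+\vea_{i_2}[j]$ and the $h=2$ case of Lemma~\ref{lemma:uniquesum-1} applies. The problem is the coordinate $j^*$ itself. There the digit equation is $\vea_i[j^*]+\vea_{i+1}[j^*]=\vea_{i_1}[j^*]+\vea_{i_2}[j^*]$ with $\vea_i[j^*]\neq\vea_{i+1}[j^*]$, and you assert that ``the $h=2$ averaging property'' gives $\{\vea_{i_1}[j^*],\vea_{i_2}[j^*]\}=\{\vea_i[j^*],\vea_{i+1}[j^*]\}$. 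It does not: the property of $\mathcal{S}_d$ guaranteed by Lemma~\ref{lemma:uniquesum-1} is that $2y=y_1+y_2$ has only the trivial solution, which is the Salem--Spencer (progression-free) property. What you need for the $j^*$-digit is the Sidon property (all \emph{pairwise} sums distinct), which $\mathcal{S}_d$ does not have and provably cannot have at density $d^{1-o(1)}$ in $\Z_d$ --- exactly the Erd\H{o}s--Tur\'an obstruction the introduction discusses. So nothing in your carry-free analysis pins down the $j^*$-digit, and ``a direct case check exploiting that $\mathcal{S}_d$ has distinct elements'' has no traction: the two sides could be, e.g., $3+7=4+6$ with all four digits legitimately in $\mathcal{S}_d$.

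The paper closes this gap with a construction, not a case check. It does not set $\sigma(i)=\vea_i\cdot\vex$ but appends eight extra high-order digit positions $x^{\beta+1},\ldots,x^{\beta+8}$, placed in two disjoint blocks used alternately depending on the parity of $i$. Into those positions it writes (i) the \emph{values} $\vea_i[\mathrm{prec}(i)]$ and $\vea_i[\mathrm{succ}(i)]$ at the two coordinates where $\vea_i$ differs from its Gray-code neighbors, and (ii) an injective encoding $g(\mathrm{prec}(i)),g(\mathrm{succ}(i))$ of \emph{which} coordinates those are. Because $i$ and $i+1$ have opposite parity, $\sigma(i)$ and $\sigma(i+1)$ use disjoint extra blocks, so in the sum $\sigma(i)+\sigma(i+1)$ these tags appear additively without interference. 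Matching those tags first determines the parity of $i_1,i_2$, then (via injectivity of $g$) recovers $j^*=\mathrm{succ}(i)=\mathrm{prec}(i+1)$, and then recovers $\vea_i[j^*]$ and $\vea_{i+1}[j^*]$ exactly --- which is the missing information your argument cannot extract from coordinate $j^*$ alone. Combined with the coordinate-wise equalities at $j\neq j^*$, this yields $\vea_{i_1}=\vea_i$ and $\vea_{i_2}=\vea_{i+1}$ and hence $\{i_1,i_2\}=\{i,i+1\}$. In short: Gray code is necessary but not sufficient here; you need an explicit redundant encoding of the single differing digit to substitute for the Sidon property you don't have.
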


\begin{proof}
By Lemma~\ref{lemma:uniquesum-1} we know $|{\cal{S}}_{d}|\ge d^{1-c_0\sqrt{\frac{1}{{\log d}}}}$ for some constant $c_0$ (in particular, we can choose $c_0=7$). Let $\hat{\cal{S}}\subseteq {\cal{S}}_{d}$ be an arbitrary subset such that $|\hat{\cal{S}}|=2^\omega$ for some integer $\omega$ such that $|\hat{\cal{S}}|\ge 1/2\cdot |{\cal{S}}_{d}|$, then it is easy to see that $|\hat{\cal{S}}|=d^{1-\Theta(\sqrt{\frac{1}{{\log d}}})}$.  Consider all the integers that can be written as $\sum_{i=0}^{\beta}\vea[i]x^i=\vea\cdot\vex$ for some integer $\beta$, $\vea=(\vea[0],\vea[1],\cdots,\vea[\beta])\in \hat{\cal{S}}^{\beta+1}$,  and $\vex=(1,x,\cdots,x^\beta)$, for $x=5d+1$. It is easy to see that we obtain $|\hat{\cal{S}}|^{\beta+1}$ different integers constructed this way.

Simple calculations show that $|\hat{\cal{S}}|^{\beta+1}\ge N$ if 
$$\beta\ge \frac{\log N}{\log d} \left(1+\Theta\left(\sqrt{\frac{1}{\log d}}\right)\right).$$ %= \Theta\left( \frac{\log N}{(\log\log N)^{3/2}} + \frac{\log N}{(\log\log N)^{9/2}}\right).$$
Hence, by picking $\beta=\lceil \frac{\log N}{\log\log N} \rceil+O(\frac{\log N}{(\log\log N)^{3/2}}),$ we can guarantee that $|\hat{\cal{S}}|^{\beta+1}\ge N$. For $d\ge e^{(\sqrt{\log\log N}+7)^2}$, we notice that $\sqrt{\log d}\ge \sqrt{\log\log N}+7$, hence 
\[|{\cal{S}}_{d}|\ge d^{1-7\sqrt{\frac{1}{{\log d}}}}=e^{\log d-7\sqrt{\log d}}>e^{\log\log N}=\log N>\beta+1.\] Hence, we can define an arbitrary injection $g$ that maps $j\in \{0,1,\cdots,\beta\}$ to a distinct number in~${\cal{S}}_{d}$.

Consider all the vector $\vea$'s. %Note that there are at least $N$ different kinds of $\vea$'s. 
For any two vectors $\vea_j$ and $\vea_k$, we say they are {\em close} if $\vea_j$ and $\vea_k$ differ by exactly one coordinate, i.e., there exists some $0\le j^*\le \beta$ such that $\vea_i[j]=\vea_k[j]$ for all $j\neq j^*$ and $\vea_i[j^*]\neq\vea_k[j^*]$. We claim the following. %The proof of the following claim can be found in Appendix~\ref{appsec:claim}.
\begin{claim}\label{claim:1}
Vectors in $ \hat{\cal{S}}^{\beta+1}$ can be ordered such that any two consecutive vectors are close.
%The vectors $\{\vea: a_j\in \hat{\cal{S}}\}$ can be ordered such that any two adjacent vectors in the sequence are close.
\end{claim}
\begin{proof}
Recall that $|\hat{{\cal{S}}}|=2^\omega$, hence we can map each $a_j\in \hat{\cal{S}}$ to a distinct $\omega$-bit binary number (or more specifically, a binary string) within $\{0,1\}^{\omega}$. Let $\xi:\hat{\cal{S}}\rightarrow \{0,1\}^{\omega}$ be an arbitrary one-to-one mapping, then we can define an extended mapping $\xi':{\hat{\cal{S}}}^{\beta+1}\rightarrow \{0,1\}^{(\beta+1)\omega}$ such that $\vea$ is mapped to a $(\beta+1)\omega$-bit binary string $\overline{\xi(\vea[0])\xi(\vea[1])\cdots\xi(\vea[{\beta}])}_2$. If we can order all $(\beta+1)\omega$-bit binary string such that every adjacent numbers differ by exactly one bit, then the inverse of these binary strings gives a sequence of $\vea$'s such that adjacent vectors are close. 

Now we prove the following statement: for any $n\in \mathbb{Z}$, $n\ge 2$ and an arbitrary string $b\in \{0,1\}^n$, all binary strings of $\{0,1\}^n$ can be ordered in a sequence starting with $b$ such that any two adjacent strings differ by exactly one bit. We show this by induction. The statement is clearly true for $n=2$. Suppose it is true for all $n\le n'$, we prove it also holds for $n=n'+1$. Consider the first bit of $b$, which can be $0$ or $1$. Assume it is $1$ (the case of $0$ can be proved in a similar way), then $b=1b_1$ for some $b_1\in \{0,1\}^{n'}$. According to the induction hypothesis, all binary strings of $\{0,1\}^{n'}$ can be ordered in a sequence starting with $b_1$ such that any two adjacent binary strings only differ by one bit. Let such a sequence be $b_1,b_2,\cdots,b_{2^{n'}}$, then all binary strings of $\{0,1\}^{n'+1}$ can be ordered as $1b_1,1b_2,\cdots,1b_{2^{n'}},0b_{2^{n'}}, 0b_{2^{n'}-1},\cdots,0b_1$. Hence, the statement is true, and Claim~\ref{claim:1} follows. 
\end{proof}

Now consider an arbitrary ordering of vectors of $\hat{\cal{S}}^{\beta+1}$ that satisfies Claim~\ref{claim:1}. Let the sequence be $\vea_1,\vea_2,...$ where $\vea_i=(\vea_i[0],\vea_i[1],\cdots,\vea_i[\beta])$ denote the $i$-th vector in the sequence. Recall that each $\vea_i$ is a $(\beta+1)$-dimensional vector. Let ${prec}(i)$ be the unique coordinate where $\vea_i$ and $\vea_{i-1}$ differ. Similarly, let ${succ}(i)$ be the coordinate where $\vea_i$ and $\vea_{i+1}$ differ. By definition it holds that ${prec}(i+1)={succ}(i)$. For the first and last vectors in the sequence, we define additionally that $\vea_1[prec(1)]=\vea_{|\hat{\cal{S}}^{\beta+1}|}[succ(|\hat{\cal{S}}^{\beta+1}|)]=0$.
Let $\gamma=\beta+8$ and recall the injection $g:\{0,1,\cdots,\beta\}\rightarrow {\cal{S}}_{d}$. We define $\sigma$ such that 
\begin{align*}
&\sigma(i)=\vea_i\cdot\vex+\vea_i[{prec(i)}]x^{\beta+1}+\vea_i[{succ(i)}]x^{\beta+2}+g\left(prec(i)\right) x^{\beta+5}+g\left(succ(i)\right) x^{\beta+6},  \quad \text{if $i$ is odd}; \\
&\sigma(i)=\vea_i\cdot\vex+\vea_i[{prec(i)}]x^{\beta+3}+\vea_i[{succ(i)}]x^{\beta+4}+g\left(prec(i)\right) x^{\beta+7}+g\left(succ(i)\right) x^{\beta+8},  \quad \text{if $i$ is even}.
\end{align*}
where $\vex=(1,x,x^2,\cdots,x^\beta)$, while noting that $prec(i),succ(i)\le \beta<x$. Also, remark that all coeficients in the polynomial expression belong to $\mathcal{S}_d$. Given that $x=5d+1$ and $\beta=\frac{\log N}{\log d} \left(1+\Theta\left(\sqrt{\frac{1}{\log d}}\right)\right)$, one can verify that \[\sigma(i)\le (5d+1)^{\beta+9}=e^{(\beta+9)\log(5d+1)}\le e^{\left(\frac{\log N}{\log d}+\OO\left(\frac{\log N}{(\log d)^{3/2}}\right)\right)(\log d+\OO(1))}\le N'= N^{1+O(\frac{1}{\sqrt{\log \log N}})},\] for any $i\le N$, hence \textbf{Properties 1} and \textbf{2} of Lemma~\ref{lemma:uniquesum-2} hold.

Consider the equation $h\cdot \sigma(i)=\sigma(i_1)+\sigma(i_2)+\cdots+\sigma(i_k)$ for $k,h\le 5$. The right-hand side of this equation can be expressed as $\sigma(i_1)+\sigma(i_2)+\cdots+\sigma(i_k) = \sum_{j=1}^{\gamma}b_jx^j$, for some coefficients~$b_j$. Notice as each coefficient $\vea_i[j]$ belongs to $\hat{\cal{S}}\subseteq {\cal{S}}_d$, which is at most $\vea_i[j]\le d<x/5$, then each $b_h$ is a sum of at most $k\le 5$ such coefficients, and thus $0\le b_h <x$. We obtain a similar statement for the left-hand side. Hence the coefficients of terms of the same degree must coincide, and we have 
%we know that there is no ``carry-over" when we take the summation $\sigma(i_1)+\sigma(i_2)+\cdots+\sigma(i_k)$. Hence, if $h\sigma(i)=\sigma(i_1)+\sigma(i_2)+\cdots+\sigma(i_k)$, then
$h\cdot \vea_i[j]=\sum_{h=1}^k \vea_{i_h}[j]$ for all $0\le j\le \beta.$
According to Lemma~\ref{lemma:uniquesum-1}, we know the only solution for the above is $k=h$ and $i_1=i_2=\cdots=i_k=i$, hence \textbf{Property 3} is true.

It remains to prove \textbf{Property 4}. We suppose $i$ is odd in the following; the case of $i$ being even can be proved analogously.
Consider $\sigma(i)+\sigma(i+1)$ which is equal to
%$$\sum_{j=0}^{\beta}b_jx^j+a_{prec(i)}(i)x^{\beta+1}+a_{succ(i)}(i)x^{\beta+2}+a_{prec(i+1)}(i+1)x^{\beta+3}+a_{succ(i+1)}(i+1)x^{\beta+4}.$$ 
\begin{eqnarray*}
&&\sum_{j=0}^{\beta}b_jx^j+\vea_i[{prec(i)}]x^{\beta+1}+\vea_i[{succ(i)}]x^{\beta+2}+\vea_{i+1}[{prec(i+1)}]x^{\beta+3}+\vea_{i+1}[{succ(i+1)}]x^{\beta+4}\\
&+&g\left({prec(i)}\right)x^{\beta+5}+g\left({succ(i)}\right)x^{\beta+6}+g\left({prec(i+1)}\right)x^{\beta+7}+g\left({succ(i+1)}\right)x^{\beta+8}.
\end{eqnarray*}

As before, the coefficients of terms of the same degree must coincide.
We know that $\vea_i$ and $\vea_{i+1}$ only differs at coordinate $succ(i)=prec(i+1)$, hence $b_j=2\vea_i[j]$ for $j\neq succ(i)$. If $\sigma(i_1)+\sigma(i_2)=\sigma(i)+\sigma(i+1)$, we know $\vea_{i_1}[j]+\vea_{i_2}[j]=2\vea_i[j]$ for $j\neq succ(i)$, and by the fact that $\vea_k[j]\in{\cal{S}}_d$ we know it must hold that 
\begin{eqnarray}
\vea_{i_1}[j]=\vea_{i_2}[j]=\vea_i[j] \quad \text{for all }j\neq succ(i). \label{eq:1}
\end{eqnarray}

Now consider the $succ(i)$-th coordinate. We know that among $i_1$ and $i_2$ one is even and one is odd, for otherwise in $\sigma(i_1)+\sigma(i_2)$ either the coefficients of $x^{\beta+1}$ and $x^{\beta+2}$ are 0, or the coefficients of $x^{\beta+3}$ and $x^{\beta+4}$ are 0. In either case, this means that $\vea_i$ or $\vea_{i+1}$ has a 0 coefficient, which is a contradiction as $\mathcal{S}_d\subseteq \mathbb{Z}_d$. Let $\{i_o,i_e\}=\{i_1,i_2\}$ where $i_o$ is odd and $i_e$ is even.
%$i_{t+1}$ be odd and $i_{2-t}$ be even for $t\in \{0,1\}$. 
%Then $\sigma(i_1)+\sigma(i_2)$ equals to $$\vea'\cdot \vex+a_{prec(i_{t+1})}(i_{t+1})x^{\beta+1}+a_{succ(i_{t+1})}(i_{t+1})x^{\beta+2}+a_{prec(i_{2-t})}(i_{2-t})x^{\beta+3}+a_{succ(i_{2-t})}(i_{2-t})x^{\beta+4},$$
%for some $\vea'$. Thus $a_{prec(i_{t+1})}(i_{t+1})=a_{prec(i)}(i)$ and  $a_{prec(i_{2-t})}(i_{2-t})=a_{prec(i+1)}(i+1)$. Using Eq~\eqref{eq:1}, we conclude that $\vea(i_{t+1})=\vea_i$ and $\vea(i_{2-t})=\vea(i+1).$
Then from $\sigma(i_1)+\sigma(i_2)=\sigma(i)+\sigma(i+1)$, we have
% \begin{eqnarray*}
% && \vea_{i_{t+1}}[{prec(i_{t+1})}]x^{\beta+1}+\vea_{i_{t+1}}[{succ(i_{t+1})}]x^{\beta+2}+\vea_{i_{2-t}}[{prec(i_{2-t})}]x^{\beta+3}+\vea_{i_{2-t}}[{succ(i_{2-t})}]x^{\beta+4}+\\
% &+&{prec(i_{t+1})}x^{\beta+5}+{succ(i_{t+1})}x^{\beta+6}+{prec(i_{2-t})}x^{\beta+7}+{succ(i_{2-t})}x^{\beta+8}\\
% &=&\vea_i[{prec(i)}]x^{\beta+1}+\vea_i[{succ(i)}]x^{\beta+2}+\vea_{i+1}[{prec(i+1)}]x^{\beta+3}+\vea_{i+1}[{succ(i+1)}]x^{\beta+4}+\\
% &+&{prec(i)}x^{\beta+5}+{succ(i)}x^{\beta+6}+{prec(i+1)}x^{\beta+7}+{succ(i+1)}x^{\beta+8} 
%  \end{eqnarray*}
 \begin{eqnarray*}
&& \vea_{i_o}[{prec(i_o)}]x^{\beta+1}+\vea_{i_o}[{succ(i_o)}]x^{\beta+2}+\vea_{i_e}[{prec(i_e)}]x^{\beta+3}+\vea_{i_e}[{succ(i_e)}]x^{\beta+4}+\\
&+&g\left({prec(i_o)}\right)x^{\beta+5}+g\left({succ(i_o)}\right)x^{\beta+6}+g\left({prec(i_e)}\right)x^{\beta+7}+g\left({succ(i_e)}\right)x^{\beta+8}\\
&=&\vea_i[{prec(i)}]x^{\beta+1}+\vea_i[{succ(i)}]x^{\beta+2}+\vea_{i+1}[{prec(i+1)}]x^{\beta+3}+\vea_{i+1}[{succ(i+1)}]x^{\beta+4}+\\
&+&g\left({prec(i)}\right)x^{\beta+5}+g\left({succ(i)}\right)x^{\beta+6}+g\left({prec(i+1)}\right)x^{\beta+7}+g\left({succ(i+1)}\right)x^{\beta+8}.
 \end{eqnarray*}
Now we can deduce that $\vea_{i_o}[{succ(i_o)}]=\vea_i[{succ(i)}]$ and $succ(i_o)=succ(i)$ (since $g$ is an injection). Using Equation~\eqref{eq:1} we conclude that $\vea_{i_o}=\vea_i$. Similarly, $\vea_{i+1}[prec(i+1)]=\vea_{i_e}[prec(i_e)]$,  $prec(i+1)=prec(i_e)$. As $succ(i)=prec(i+1)$, Equation~\eqref{eq:1} yields that $\vea_{i_e}=\vea_{i+1}$.

Each vector in the sequence is unique, so we know $i_o=i$ and $i_e=i+1$. Using that $i_1\le i_2$ we obtain that $i_1=i$ and $i_2=i+1$. Hence, \textbf{Property 4} is proved. Lemma~\ref{lemma:uniquesum-2} follows.
\end{proof}
%We remark that the specific value of $x$ in the lemma is not important. As long as we pick $x\ge 5d+1$ such that $\vea_i[j]\le d<x/5$, when we take the summation $\sigma({i_1})+\cdots+\sigma(i_h)$ for $h\le 5$, we can simply add  coefficients $\vea_{i_1}[j]+\cdots+\vea_{i_h}[j]$ without worrying about the ``progression''. 

With Lemma~\ref{lemma:uniquesum-2}, we are ready for the main result of this section.

%\subsection{Set of Integers with Unique Linked Sum}\label{subsec:linked-sum}
%The goal of this subsection is to prove the following lemma.
\begin{lemma}\label{lemma:linked-sum}
Let $\tau$ be an arbitrary permutation of $\Z_n$. Then there exists a set ${\cal{S}}=\{s_1,s_2,\cdots,s_n\}$ of positive integers together with an auxiliary set $E=\bigcup_{i=1}^nE_i$ of positive integers such that:
\begin{compactitem}
\item All integers in ${\cal{S}}\cup E$ are bounded by $n^{1+O(\frac{1}{\sqrt{\log\log n}})}$;
\item $E_i=\{e_{i,1},e_{i,2},\cdots,e_{i,\omega}\}$  for all $i$, where $\omega=O(\frac{\log n}{\log\log n})$;
\item $E_i\cap E_{i'}=\emptyset$ for any $i'\neq i$;
\item For every $i\in\Z_n$ and $k=\tau(i)$, each sum $s_i+e_{i,1}$, $e_{i,1}+e_{i,2}$, $\cdots$, $e_{i,\omega-1}+e_{i,\omega}$, $e_{i,\omega}+s_k$ is unique, that is, there is no other pair in ${\cal{S}}\cup E$ that adds up to the same value.
\end{compactitem}
In particular, all these properties are satisfied by setting $s_i=\sigma(i)$ where $\sigma:\Z_N\rightarrow \Z_{N'}$ is the function specified in Lemma~\ref{lemma:uniquesum-2} by taking $N=n$. %and $c_0\ge 5$.
\end{lemma}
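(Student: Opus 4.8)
The plan is to keep $s_i=\sigma(i)$ for the injection $\sigma\colon\Z_n\to\Z_{N'}$ of Lemma~\ref{lemma:uniquesum-2} — so the size bound on $\mathcal{S}$ and Properties~3--4 for the $s_i$ alone are inherited for free — and to realize $E$ as a family of ``lifted'' intermediate numbers that route, for each $i$, from $\sigma(i)$ to $\sigma(\tau(i))$ through a chain whose consecutive elements differ in a single coordinate of the polynomial-in-$x$ encoding used by $\sigma$. The difficulty is that $\tau$ is arbitrary, so $\sigma(i)+\sigma(\tau(i))$ need not be unique; but a step that changes only one coordinate is exactly the ``adjacent sum'' situation whose uniqueness is supplied by the argument behind Property~4 of Lemma~\ref{lemma:uniquesum-2}. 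To obtain such a path while also keeping the intermediate numbers disjoint across distinct $i$, I would apply the orthogonal decomposition of $\tau$ (Lemma~\ref{lemma:shuffle-1}): write $\tau=\tau_r\circ\cdots\circ\tau_1$ with $r=O(\gamma)=O(\log n/\log\log n)$, where each $\tau_t$ is a bijection that rewrites one coordinate of the encoding (its new value allowed to depend on all coordinates). Setting $\pi_t=\tau_t\circ\cdots\circ\tau_1$, $i_t=\pi_t(i)$ and $\omega=r-1$, each $\pi_t$ is a bijection, $i_0=i$, $i_r=\tau(i)$, and consecutive $i_{t-1},i_t$ agree in all but one coordinate.

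I would then define the lifts. Recall that $\sigma$ identifies $j$ with a vector $\vea_j$ whose entries (in degrees $0,\dots,\beta$) lie in $\mathcal{S}_d$, together with a constant-size block of ``marker'' terms at degrees $\beta+1,\dots,\gamma=\beta+8$, placed according to the parity of $j$, recording which coordinate changes between $\vea_j$ and its neighbour in the fixed ordering and the value there. For $1\le t\le\omega$ I would take $e_{i,t}$ to consist of: (a) a copy of $\vea_{i_t}$ in degrees $0,\dots,\beta$; (b) a constant-size marker block in the next $O(1)$ degrees strictly above $\gamma$, recording — again with the degrees chosen by the parity of $t$, exactly as the $prec/succ$ markers of Lemma~\ref{lemma:uniquesum-2} — which coordinate $\tau_t$ rewrote and which coordinate $\tau_{t+1}$ will rewrite, together with the relevant entries of $\vea_{i_t},\vea_{i_{t+1}}$, arranged so that along the chain a ``$succ$'' marker of $e_{i,t}$ matches in type the ``$prec$'' marker of $e_{i,t+1}$; and (c) one fresh top coordinate equal to the level $t$. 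Every coefficient appearing lies in $\mathcal{S}_d$ or is at most $\omega<x$, and the top degree is $\gamma+O(1)$, so $e_{i,t}\le x^{\gamma+O(1)}=N'\cdot x^{O(1)}=n^{1+O(1/\sqrt{\log\log n})}$, while $\omega=O(\log n/\log\log n)$ by the bound on $r$.

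For correctness, \emph{disjointness} is immediate: if $e_{i,t}=e_{i',t'}$ then the top coordinate forces $t=t'$, the low block forces $\vea_{i_t}=\vea_{i'_t}$ hence $i_t=i'_t$ (the vectors $\vea_j$ are pairwise distinct), and then $\pi_t(i)=\pi_t(i')$ with $\pi_t$ a bijection yields $i=i'$. For \emph{sum-uniqueness} I would run the carry-free digit comparison of Lemma~\ref{lemma:uniquesum-2}: since $x$ exceeds every coefficient that can occur, adding two elements of $\mathcal{S}\cup E$ produces no carries, so two pairs with equal sum have identical coefficient sequences. The top coordinate separates pairs by how many summands lie in $E$ (elements of $\mathcal{S}$ are $0$ there) and, within $E$, equals the sum of the two levels; a chain edge $e_{i,t}+e_{i,t+1}$ has top coordinate $2t+1$ and, since $t,t+1$ have opposite parity, occupies \emph{all} of the new marker degrees, whereas a pair of equal-parity levels leaves half of them empty; moreover a pair with a summand in $\mathcal{S}$ is nonzero at degrees $\beta+1,\dots,\gamma$ (carrying $\sigma$'s own markers), where every $E$-element — and hence every chain edge — vanishes. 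Once the case is fixed, the parity-indexed value-markers pin down the changed coordinates and the entries $\vea_{i_t}[\cdot],\vea_{i_{t+1}}[\cdot]$ there, so the two pairs must coincide; this is precisely the argument for Property~4 of Lemma~\ref{lemma:uniquesum-2}, run separately on each chain. The boundary edges $\sigma(i)+e_{i,1}$ and $e_{i,\omega}+\sigma(\tau(i))$ are handled the same way, using that only $E$-elements reach the degrees above $\gamma$ and that $\sigma$'s markers lie in a disjoint degree range.

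The main obstacle is carrying all of this out within total degree $\gamma\,(1+o(1))$. One cannot afford to record the index $i$ itself inside $e_{i,t}$ — that would roughly double the degree and violate the $n^{1+O(1/\sqrt{\log\log n})}$ bound — so disjointness must come purely from the bijectivity of the partial compositions $\pi_t$; this is exactly what forces the use of the orthogonal decomposition of Lemma~\ref{lemma:shuffle-1} rather than a naive coordinate-by-coordinate interpolation between the encodings of $i$ and $\tau(i)$, which does not pass through bijections. The second delicate point — again paralleling Lemma~\ref{lemma:uniquesum-2} — is verifying that the \emph{fixed} coordinate-rewrite pattern of the decomposition, together with only $O(1)$ marker coordinates encoding both \emph{which} coordinate changes and \emph{to what value}, still disambiguates every chain edge from every other pair in $\mathcal{S}\cup E$, in particular from chain edges of other indices at overlapping level multisets; this needs a careful case analysis. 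Finally one should confirm that $r=O(\log n/\log\log n)$ in Lemma~\ref{lemma:shuffle-1}, so that $\omega$ has the stated order.
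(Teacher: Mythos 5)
Your construction is in the same spirit as the paper's — both prove the lemma by running Lemma~\ref{lemma:shuffle-1}'s shuffler decomposition forward (and, in the paper, backward from $\vea_k$), interpolating between $\vea_i$ and $\vea_{\tau(i)}$ one coordinate at a time, and relying on bijectivity of the partial compositions for disjointness. But you diverge in two places, and both are worth flagging.

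First, the \emph{level tag}. The paper applies Lemma~\ref{lemma:uniquesum-2} a \emph{second} time, with $N=4\gamma+4$, to manufacture a tiny injection $\sigma'$ whose adjacent sums are themselves Salem--Spencer; each $\veb^h_i$ carries $\hat{\sigma}'(h)$ in its top two coordinates, so $\sigma'(h)+\sigma'(h+1)=\sigma'(h')+\sigma'(h'')$ immediately forces $\{h',h''\}=\{h,h+1\}$ via Property~4 — independently of anything else. You instead put the raw integer $t$ in the top coordinate and then hope to read off the level from the $prec/succ$ markers. That only works because the Lemma~\ref{lemma:shuffle-1} decomposition $\hat f_0^{-1}\circ\cdots\circ\hat f_{\upsilon-1}^{-1}\circ f_{\upsilon-1}\circ\cdots\circ f_0$ rewrites coordinates in the fixed ``mountain'' order $0,1,\dots,\upsilon-1,\upsilon-1,\dots,0$, so that the pair $(\text{prec}(t),\text{succ}(t))$ of consecutive rewritten coordinates is injective in $t$ (increasing pairs on the way up, the pair $(\upsilon-1,\upsilon-1)$ at the apex, decreasing pairs on the way down). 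You invoke exactly this when you say the markers ``pin down the changed coordinates'' and then conclude ``the two pairs must coincide,'' but the crucial step — that knowing $(\text{prec},\text{succ})$ uniquely recovers the level $t$, hence disambiguates among all $(t',t'')$ with $t'+t''=2t+1$ and opposite parity — is never argued. It is true, and it is what makes your route go through, but as written it is deferred under ``this needs a careful case analysis.'' The paper's $\hat\sigma'$ sidesteps the issue entirely.

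Second, the \emph{chain geometry}. The paper uses a two-step ``swap'' at each coordinate change — an extra auxiliary slot that holds the incoming value while the old one is zeroed — so every digit of an adjacent sum $\veb^h_i+\veb^{h+1}_i$ is either $a$ or $2a$ with $a\in\mathcal{S}_d\cup\{0\}$, and Lemma~\ref{lemma:uniquesum-1} settles each digit on its own. Your one-step chain has a digit equal to $a+b$ with $a\neq b$ at the changed coordinate, which $\mathcal{S}_d$ alone cannot decompose uniquely (it is progression-free, not Sidon); you compensate by having the markers record those two values at parity-separated slots. That is a legitimate alternative — and it halves the chain length (roughly $2\gamma$ instead of $4\gamma$ links) — but it makes the digit-by-digit uniqueness argument depend on the marker design rather than being structural, which is exactly where a ``careful case analysis'' has to be written out. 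In short: no wrong step, but a genuinely different (and slightly more fragile) mechanism for the two hard points, both of which the paper resolves by a second application of Lemma~\ref{lemma:uniquesum-2} and the two-step swap, and both of which you leave to a deferred case analysis that must exploit the mountain shape of the decomposition.
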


We start with a natural proof idea. Recall Lemma~\ref{lemma:uniquesum-2}, where $\sigma(i)=(\vea_i[0],\vea_i[1],\cdots,\vea_i[{\gamma}])\cdot (1,x,\cdots,x^{\gamma})=\vea_i\cdot \vex$ such that $\vea_i[j]<x/5$, whereas when we add two values, say, $\sigma(i)+\sigma(k)$, we can directly add each coordinate $\vea_i[j]+\vea_k[j]$. Given $i$ and $k$, how can we guarantee that the equation $\vea_i+\vea_k=(\vea_i[0]+\vea_k[0],\vea_i[1]+\vea_k[1],\cdots,\vea_i[{\gamma}]+\vea_k[{\gamma}])=\vea_{\ell}+\vea_r$ has a unique solution $\{\ell,r\}=\{i,k\}$? A simple observation is that, since $\vea_i[j]\in {\cal{S}}_d$ (by Property 3 of Lemma~\ref{lemma:uniquesum-2}), we know that $2\vea_i[j]$ can only be expressed as $\vea_i[j]+\vea_i[j]$, and $\vea_i[j]$ can only be expressed as $\vea_i[j]+0$. Consequently, we may lift up the dimension by writing $\sigma(i)=(\vea_i[0],\vea_i[1],\cdots,\vea_i[{\gamma}],0)\cdot (1,x,\cdots,x^{\gamma},x^{\gamma+1})$, and consider the following sequence of numbers:

\small
\begin{align*}
& & (&\vea_i[0],&\vea_i[1],\quad\cdots,\quad&\vea_i[{\gamma-1}],&\vea_i[{\gamma}], \quad&0) \\
&\rightarrow& (&0,&\vea_i[1],\quad\cdots,\quad&\vea_i[{\gamma-1}],&\vea_i[{\gamma}], \quad&\vea_k[{0}])\\
&\rightarrow& (&\vea_k[0],&\vea_i[1],\quad\cdots,\quad&\vea_i[{\gamma-1}],&\vea_i[{\gamma}], \quad&0)\\
&\rightarrow& (&\vea_k[0],&0,\quad\cdots,\quad&\vea_i[{\gamma-1}],&\vea_i[{\gamma}], \quad&\vea_k[{1}])\\
&\rightarrow& (&\vea_k[0],&\vea_k[1],\quad\cdots,\quad&\vea_i[{\gamma-1}], &\vea_i[{\gamma}], \quad&0)\\
&\rightarrow& &\cdots\\
&\rightarrow& (&\vea_k[0],&\vea_k[1],\quad\cdots,\quad&\vea_k[{\gamma-1}], &0, \quad&\vea_k[{\gamma}])\\
&\rightarrow& (&\vea_k[0],&\vea_k[1],\quad\cdots,\quad&\vea_k[{\gamma-1}], &\vea_k[{\gamma}], \quad&0)
\end{align*}
\normalsize

%Here $\sigma'$ is the injection we obtain by applying Lemma~\ref{lemma:uniquesum-2} with $N=2\gamma+1$, and consequently $\sigma'(j)+\sigma'(j+1)$ can only be expressed as the summation of $\sigma'(j)$ and $\sigma'(j+1)$. 
It is easy to verify that the sum of any two adjacent vectors in the above sequence is unique, which gives possible values for $e_{i,j}$'s. Unfortunately, numbers constructed in this way do not necessarily satisfy that $E_i\cap E_{i'}=\emptyset$. In particular, there might exist some pair $i',k'$ with $k'=\tau(i')$ where $\vea_i[j]=\vea_{i'}[j]$ for $j\le \gamma-2$, and $\vea_k[j]=\vea_{k'}[j]$ for $j\ge \gamma -1$. In this case, we have $$(\vea_{i'}[0],\vea_{i'}[1],\cdots,\vea_{k'}[{\gamma-1}], \vea_{k'}[{\gamma}], 0)=(\vea_i[0],\vea_i[1],\cdots,\vea_k[{\gamma-1}], \vea_k[{\gamma}], 0),$$ 
violating $E_i\cap E_{i'}=\emptyset$. 

How can we construct unique $e_{i,j}$'s?
Towards this, we consider all the one-to-one mappings $g:{\cal{S}}^{\gamma+1}_d\rightarrow {\cal{S}}^{\gamma+1}_d$. Under composition of functions, all such mappings form a group $Aut({\cal{S}}^{\gamma+1}_d)$. We are interested in the special mapping that maps each $\vea_i$ to $\vea_k$ (which corresponds to the permutation $\tau$), which belongs to $Aut({\cal{S}}^{\gamma+1}_d)$. We show that any mapping in $Aut({\cal{S}}^{\gamma+1}_d)$, and hence this special mapping, can be decomposed into a sequence of simple mappings. More precisely, we consider any finite set ${\cal{M}}$ and the group $Aut({\cal{M}}^{\upsilon})$ of one-to-one mappings from ${\cal{M}}^{\upsilon}$ to itself. We call a mapping in $Aut({\cal{M}}^{\upsilon})$ an $h$-shuffler if this mapping only changes the $h$-th coordinate of the input vector, i.e., an $h$-shuffler $f$ satisfies that for any $\vey\in {\cal{M}}^{\upsilon}$,
$$f(\vey)=f(\vey[0],\vey[1],\cdots,\vey[\upsilon-1])=(\vey[0],\cdots,\vey[h-1],z,\vey[h+1],\cdots,\vey[\upsilon-1])),$$
for some $z\in {\cal{M}}$.
We show the following group theoretic lemma which states that any mapping of $Aut({\cal{M}}^{\upsilon})$ can be decomposed into $2\upsilon$ $h$-shufflers; see Appendix~\ref{appsubsec:shuffle} for the proof. Our $e_{i,j}$'s can be obtained from these $h$-shufflers. 

\begin{lemma}\label{lemma:shuffle-1}
Let ${\cal{M}}$ be a finite set, $\upsilon\in\Z^+$ and $Aut({\cal{M}}^{\upsilon})$ be the group of all one-to-one mappings from ${\cal{M}}^{\upsilon}$ to itself. The group operation is function composition and denoted as $\circ$. For any $\pi\in Aut({\cal{M}}^{\upsilon})$, there exist $h$-shufflers $f_h,\hat{f}_h\in Aut({\cal{M}}^{\upsilon})$ for every $1\le h\le \upsilon$ such that $F(\vey)=\hat{F}(\pi(\vey))$ for any $\vey\in {\cal{M}}^{\upsilon}$, where $F=f_{\upsilon-1}\circ f_{\upsilon-2}\circ \cdots\circ f_{0}$ and $\hat{F}=\hat{f}_{\upsilon-1} \circ \hat{f}_{\upsilon-2}\circ \cdots\circ \hat{f}_{0}$. Furthermore, $f_h$'s and $\hat{f}_h$'s can be constructed in time that is polynomial in $|{\cal{M}}^\upsilon|$.
\end{lemma}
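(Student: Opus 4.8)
The plan is to prove the statement by induction on the dimension $\upsilon$, exhibiting the shufflers explicitly. The base case $\upsilon=1$ is immediate: given $\pi\in Aut({\cal{M}}^1)$, simply set $f_0=\pi$ and $\hat f_0=\mathrm{id}$, so that $F(\vey)=f_0(\vey)=\pi(\vey)=\hat F(\pi(\vey))$; both are $0$-shufflers trivially. For the inductive step, suppose the claim holds for dimension $\upsilon-1$ and let $\pi\in Aut({\cal{M}}^\upsilon)$. Write a vector as $(\vey[0],\vey')$ with $\vey'\in{\cal{M}}^{\upsilon-1}$ its last $\upsilon-1$ coordinates. The strategy is: first use a single $0$-shuffler to "park" a uniform value in coordinate $0$ so that the action on the remaining coordinates becomes a genuine permutation of ${\cal{M}}^{\upsilon-1}$, apply the inductive hypothesis there, and then use a final pair of $0$-shufflers to restore coordinate $0$ to its target value.

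Concretely, here is the intended construction. Fix any element $m_0\in{\cal{M}}$. Define a $0$-shuffler $f_0$ that sends every $(\vey[0],\vey')$ to $(m_0,\vey')$; this collapses coordinate $0$ to the constant $m_0$. After applying $\pi$, we get $\pi(\vey[0],\vey')=(\pi_0(\vey[0],\vey'),\,\pi'(\vey[0],\vey'))$ for some coordinate functions. The obstacle is that $\pi'(\vey[0],\vey')$ may depend on $\vey[0]$, so it is not a well-defined map on ${\cal{M}}^{\upsilon-1}$. To fix this, I would instead first apply $\pi^{-1}$-conjugation style reasoning: note $F$ is built left-to-right as $f_{\upsilon-1}\circ\cdots\circ f_0$, so $f_0$ acts \emph{first} on $\vey$, and $\hat F$ is applied to $\pi(\vey)$. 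So what we need is that $\hat F \circ \pi = F$, i.e. $\hat F = F\circ \pi^{-1}$. Thus it suffices to write the single permutation $\rho := F \circ \pi^{-1}\circ F^{-1}$... rather, cleaner: set $\hat F := F\circ\pi^{-1}$ and we must show $\hat F$ decomposes as $\hat f_{\upsilon-1}\circ\cdots\circ\hat f_0$ with each $\hat f_h$ an $h$-shuffler. So the real content is: for a suitable choice of $F$ (a composition of one shuffler per coordinate), the map $F\circ\pi^{-1}$ is \emph{also} a composition of one shuffler per coordinate.

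I would realize this as follows. Build $F$ greedily coordinate by coordinate so that after $f_0,\dots,f_{h}$ are applied, the images all agree with $\pi^{-1}$ applied to something... Actually the clean inductive formulation: by induction produce $F,\hat G$ with $\hat G\circ\pi'_{\text{restricted}} = F'$ on ${\cal{M}}^{\upsilon-1}$ after a first $0$-shuffler normalizes coordinate $0$; then compose with two more $0$-shufflers at the end to account for coordinate $0$, one to record $\vey[0]$ into the parked slot before the reduction and one to place the correct final value. Counting: one $0$-shuffler at the start, $\upsilon-1$ shufflers (one per remaining coordinate) from the inductive hypothesis applied twice gives $2(\upsilon-1)$, plus possibly one more $0$-shuffler at the end — this must be organized so the total is exactly $2\upsilon$, i.e. two $0$-shufflers total and $2(\upsilon-1)$ for the rest. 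The polynomial-time claim follows since each shuffler is specified by a table of size $|{\cal{M}}^\upsilon|$ and there are $O(\upsilon)$ of them, and the inductive construction only inspects $\pi$ on its domain. The main obstacle is bookkeeping the dependence of later coordinates on coordinate $0$: making precise that a single $0$-shuffler suffices to decouple it, which I expect to follow from the fact that a $0$-shuffler can implement \emph{any} reassignment of coordinate $0$ as a function of the \emph{full} vector, in particular it can encode all the information of $\vey[0]$ into a form the subsequent shufflers can read and later discard.
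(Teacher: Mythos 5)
Your plan has a genuine gap, and its central move is incorrect. You propose a $0$-shuffler $f_0$ that "sends every $(\vey[0],\vey')$ to $(m_0,\vey')$." That map is not injective: for fixed $\vey'$ it collapses all values of $\vey[0]$ to the single value $m_0$. An $h$-shuffler must belong to $Aut({\cal{M}}^{\upsilon})$, i.e.\ be a bijection, which forces the map $\vey[0]\mapsto z(\vey[0],\vey')$ to be a permutation of ${\cal{M}}$ for each fixed $\vey'$. So a $0$-shuffler can reshuffle coordinate $0$ but can never discard the information in it, nor "encode it into a form the subsequent shufflers can read and later discard" --- later $h$-shufflers only overwrite coordinate $h$, and coordinate $0$ of $F(\vey)$ is exactly $(f_0(\vey))[0]$, which must already equal $(\hat f_0(\pi(\vey)))[0]$. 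Matching this single coordinate by bijective $0$-shufflers is precisely the nontrivial core of the proof, and you have no argument for it.

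The paper isolates this as a separate Lemma (Lemma~\ref{lemma:shuffle}): for any $\pi$ and any single coordinate $h$, there exist $h$-shufflers $f,\hat f$ with $(f(\vey))[h]=(\hat f(\pi(\vey)))[h]$ for all $\vey$. Its proof builds a $t$-regular bipartite multigraph (with $t=|{\cal{M}}|$) whose sides index the classes $\{\vey:\vey[-h]\ \text{fixed}\}$ and $\{\vew:\vew[-h]\ \text{fixed}\}$, with $\pi$ giving the edges; K\"onig's theorem yields a proper $t$-edge-coloring, and interpreting colors as elements of ${\cal{M}}$ defines $f$ and $\hat f$. This combinatorial step has no analogue in your plan. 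Moreover, even granting a coordinate-$0$ matching lemma, your proposed reduction to a permutation of ${\cal{M}}^{\upsilon-1}$ does not go through: after $f_0,\hat f_0$ the residual mismatch on coordinates $1,\dots,\upsilon-1$ is governed by $\hat f_0\circ\pi\circ f_0^{-1}$, which fixes coordinate $0$ but whose action on the remaining coordinates still \emph{depends} on coordinate $0$, so it is a family of permutations of ${\cal{M}}^{\upsilon-1}$ rather than a single one. The paper therefore does not induct on $\upsilon$ at all: it inducts on the coordinate index $k$, at each step applying Lemma~\ref{lemma:shuffle} to the conjugated permutation $\hat f_k\circ\cdots\circ\hat f_0\circ\pi\circ f_0^{-1}\circ\cdots\circ f_k^{-1}$ on the full space ${\cal{M}}^{\upsilon}$. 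You would need to discover and prove the single-coordinate matching lemma to make any version of your plan work.
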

%Here we emphasize that while $f_h$ and $\hat{f}_h$ are one-to-one mappings, the projection of $f_h$ and $\hat{f}_h$ to the $h$-th dimension is not necessarily a one-to-one mapping. % (and so is $\pi$). 
Lemma~\ref{lemma:shuffle-1} implies a decomposition of $\pi$ into $h$-shufflers, i.e., $\pi=\hat{f}_0^{-1}\circ \hat{f}_1^{-1}\circ\cdots\circ \hat{f}_{\upsilon-1}^{-1}\circ f_{\upsilon-1}\circ\cdots\circ f_0$.

%\subsection{Proof of Lemma~\ref{lemma:linked-sum}}\label{appsec:link-sum}
With Lemma~\ref{lemma:shuffle-1}, we are ready to prove Lemma~\ref{lemma:linked-sum}. We first set the value of all parameters. 
Towards this, we will apply Lemma~\ref{lemma:uniquesum-2} twice. %Note that $n$ is the number of variables in the given 3SAT$'$ instance.

At first, we apply Lemma~\ref{lemma:uniquesum-2} by taking
$N=n$. Then we obtain $\sigma: \Z_n\rightarrow \Z_{n'}$ where $n'=n^{1+\OO(1/\sqrt{\log\log n})}$, $\sigma(i)=\sum_{j=0}^\gamma \vea_i[j]x^j$ for $\vea_i[j]\in {\cal{S}}_d$ where $\gamma=\lceil\frac{\log n}{\log\log n}\rceil+\OO(\frac{\log n}{(\log\log n)^{3/2}})$, $d=\lceil e^{(\sqrt{\log\log n}+7)^2}\rceil=e^{\OO(\sqrt{\log\log n})}\cdot \log n $ and $x=5d+1$. Except $N$, all the other parameters, including $n',\sigma,x,d,\gamma$ and $\vea_i$'s are fixed throughout the following part of this section.

Next, we apply Lemma~\ref{lemma:uniquesum-2} again by setting $N=4\gamma+4$ where $\gamma$ takes the value we determined above. %(e.g., $N=\gamma+10$)
By doing so we obtain another injection $\sigma'$. We have the following simple observation.

\begin{observation}
If % $\gamma=\lceil\frac{\log n}{\log\log n}\rceil+O(\frac{\log n}{(\log\log n)^{3/2}})$ and  
$\ell\le 4\gamma +4$, then $\sigma'(\ell)=o(\log^2 n)<x^2$.
\end{observation}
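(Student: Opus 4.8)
The plan is to read this off directly from \textbf{Property 1} of Lemma~\ref{lemma:uniquesum-2}. Recall that $\sigma'$ is the injection produced by invoking that lemma with $N=4\gamma+4$; hence its image lies in $\Z_{N'}$ with $N'=N^{1+O(1/\sqrt{\log\log N})}$, where now $N=4\gamma+4$. So the whole task reduces to estimating $N$, and then $N'$, in terms of $n$.

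First I would record that $\gamma=\lceil\log n/\log\log n\rceil+O(\log n/(\log\log n)^{3/2})=\Theta(\log n/\log\log n)$, so that $N=4\gamma+4=\Theta(\log n/\log\log n)$. In particular $N\to\infty$, hence $\log\log N\to\infty$, as $n\to\infty$, so the exponent $1+O(1/\sqrt{\log\log N})$ is at most $2$ for all sufficiently large $n$. Consequently, for every $\ell\le 4\gamma+4$,
\[
\sigma'(\ell)\ \le\ N'\ \le\ N^2\ =\ \Theta\!\left(\frac{\log^2 n}{(\log\log n)^2}\right)\ =\ o(\log^2 n).
\]

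For the strict inequality $\sigma'(\ell)<x^2$ I would then use the definition $x=5d+1$ with $d=\lceil e^{(\sqrt{\log\log n}+c)^2}\rceil$ and $c\ge 7$ (the values fixed from the first application of Lemma~\ref{lemma:uniquesum-2}). Since $(\sqrt{\log\log n}+c)^2=\log\log n+2c\sqrt{\log\log n}+c^2\ge\log\log n$, we get $d\ge\log n$ and therefore $x^2\ge 25\log^2 n$. Combining this with the bound $\sigma'(\ell)=o(\log^2 n)$ from the previous step yields $\sigma'(\ell)<x^2$ for all sufficiently large $n$, which is exactly the claimed observation (and it is in this range-of-$n$ sense that the statement is to be understood).

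There is no genuine obstacle here; the only point worth a moment's care is that all asymptotics must be expressed in $n$ rather than in the auxiliary parameter $N=4\gamma+4$, so I would keep explicit that $N$ itself grows only like $\log n/\log\log n$ — this is precisely why the exponent correction in \textbf{Property 1} is harmless. The role of the observation in the sequel is simply to guarantee that the ``inner'' construction $\sigma'$ occupies only the two lowest-order base-$x$ digits (the coefficients of $x^0$ and $x^1$) of the outer representation, leaving the higher-degree coordinates free.
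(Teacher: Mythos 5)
Your proof is correct and follows essentially the same route as the paper: invoke Property~1 of Lemma~\ref{lemma:uniquesum-2} with $N=4\gamma+4$, note $\gamma=\Theta(\log n/\log\log n)$ so $N\to\infty$, and conclude $\sigma'(\ell)\le N'=N^{1+o(1)}\le N^2=o(\log^2 n)$. You go one small step further than the paper's one-line proof by also explicitly verifying $x^2\ge 25\log^2 n$ (via $d\ge e^{\log\log n}=\log n$) to close the gap to $\sigma'(\ell)<x^2$; the paper leaves that last comparison implicit, but your added check is exactly what justifies it.
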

\begin{proof}
Note that $\gamma=O(\frac{\log n}{\log\log n})$. By Lemma~\ref{lemma:uniquesum-2}, $\sigma'(\ell)\le (4\gamma+4)^{1+O(\frac{1}{\sqrt{\log\log (4\gamma+4)}})}=o(\gamma^2)=o(\log^2n)$. 
\end{proof}

Next, We will apply Lemma~\ref{lemma:shuffle-1}. In the following part of this paper, any $\upsilon$-dimensional vector~$\vecc$ represents the number given by the polynomial expression $\sum_{i=1}^{\upsilon-1}\vecc[i]x^i$; vectors and polynomial expressions are used interchangeably. %Recall the structure of a 3SAT$'$ instance. In $C_2$, every clause is of the form $z_i\oplus z_k$ where each positive literal $z_i$ and negative literal $\neg z_i$ appears exactly once. 
Given our permutation $\tau$, we define $\hat{\tau}$ as a one-to-one mapping that maps each vector $\vea_i$ to $\vea_k$, or equivalently, maps $\sigma(i)$ to $\sigma(k)$ if $k=\tau(i)$. Notice that $\vea_i$'s form a subset of ${\cal{S}}_d^{\gamma+1}$, so currently $\hat{\tau}$ is only defined on this subset. We can extend $\hat{\tau}$ to ${\cal{S}}_d^{\gamma+1}$ such that for $\vecc\in {\cal{S}}_d^{\gamma+1}$ and $\vecc\neq \vea_i$, then $\hat{\tau}(\vecc)=\vecc$. Hence, $\hat{\tau} \in Aut({\cal{S}}_d^{\gamma+1})$. According to Lemma~\ref{lemma:shuffle-1}, we can obtain $h$-shufflers $f_h$ and $\hat{f}_h$ for all $0\le h\le \gamma$ such that $f_{\gamma}\circ f_{\gamma-1}\circ\cdots\circ f_{0}=\hat{f}_{\gamma}\circ\hat{f}_{\gamma-1}\circ\cdots\circ \hat{f}_0\circ \hat{\tau}$.

For ease of notation, define $F_h=f_h\circ f_{h-1}\circ\cdots\circ f_0$ and $\hat{F}_h=\hat{f}_h\circ \hat{f}_{h-1}\circ\cdots\circ \hat{f}_0$. As $h$-shufflers only changes the $h$-th coordinate, we have the following observation.

\begin{observation}\label{obs:F-hat} The following statements are true:
\begin{itemize}
    \item For any $0\le h\le \gamma$ $$(F_{\gamma}(\vea_i))[h]=(F_{\gamma-1}(\vea_i))[h]=\cdots=(F_h(\vea_i))[h],$$
    $$(\hat{F}_{\gamma}(\vea_i))[h]=(\hat{F}_{\gamma-1}(\vea_i))[h]=\cdots=(\hat{F}_h(\vea_i))[h];$$
    \item For any $0\le h\le \gamma$, $$(F_{h-1}(\vea_i))[h]=(F_{h-2}(\vea_i))[h]=\cdots=(F_{0}(\vea_i))[h]=\vea_i[h],$$  $$(\hat{F}_{h-1}(\vea_i))[h]=(\hat{F}_{h-2}(\vea_i))[h]=\cdots=(\hat{F}_{0}(\vea_i))[h]=\vea_i[h].$$
\end{itemize}

\end{observation}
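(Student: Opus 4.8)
The plan is to read off both displayed chains of equalities directly from the only property of an $h$-shuffler that matters here: applying an $h$-shuffler changes the $h$-th coordinate and leaves every other coordinate untouched. Fix the index $i$ and, for brevity, write $\vev^{(k)}:=F_k(\vea_i)$ and $\hat\vev^{(k)}:=\hat F_k(\vea_i)$ for $0\le k\le\gamma$, adopting the convention $F_{-1}=\hat F_{-1}=\mathrm{id}$ so that $\vev^{(-1)}=\hat\vev^{(-1)}=\vea_i$ and $\vev^{(k)}=f_k(\vev^{(k-1)})$, $\hat\vev^{(k)}=\hat f_k(\hat\vev^{(k-1)})$ hold for all $0\le k\le\gamma$. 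Since $f_k$ and $\hat f_k$ are $k$-shufflers, whenever $k\neq h$ we have $f_k(\vey)[h]=\vey[h]$ and $\hat f_k(\vey)[h]=\vey[h]$ for every $\vey\in{\cal S}_d^{\gamma+1}$.

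For the first bullet, fix $0\le h\le\gamma$ and apply the displayed coordinate-invariance with $\vey=\vev^{(k-1)}$ for each $k$ with $h+1\le k\le\gamma$ (so $k\neq h$): this gives $\vev^{(k)}[h]=\vev^{(k-1)}[h]$. Telescoping these equalities down from $k=\gamma$ to $k=h+1$ yields
\[
(F_\gamma(\vea_i))[h]=\vev^{(\gamma)}[h]=\vev^{(\gamma-1)}[h]=\cdots=\vev^{(h)}[h]=(F_h(\vea_i))[h],
\]
which is exactly the claimed chain. The chain for $\hat F$ is obtained by the verbatim-identical argument with $\hat f_k$ and $\hat\vev^{(k)}$ in place of $f_k$ and $\vev^{(k)}$, since again the only input is that $\hat f_k$ is a $k$-shuffler.

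For the second bullet, note it is vacuous for $h=0$, so assume $1\le h\le\gamma$. For each $k$ with $0\le k\le h-1$ we again have $k\neq h$, hence $\vev^{(k)}[h]=\vev^{(k-1)}[h]$; telescoping from $k=h-1$ down to $k=0$ and using the convention $\vev^{(-1)}=\vea_i$ gives
\[
(F_{h-1}(\vea_i))[h]=\vev^{(h-1)}[h]=\cdots=\vev^{(0)}[h]=\vev^{(-1)}[h]=\vea_i[h],
\]
and likewise for $\hat F$. The only points requiring care are the boundary conventions (treating $F_{-1},\hat F_{-1}$ as the identity so the last step of the second telescoping is meaningful, and noting the first bullet is content-free when $h=\gamma$ and the second when $h=0$) and explicitly recording that the $F$- and $\hat F$-statements have identical one-line proofs. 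I do not expect any genuine obstacle: the observation is a pure bookkeeping consequence of how the shufflers in Lemma~\ref{lemma:shuffle-1} act on coordinates, so the proof is just the pair of telescoping products above.
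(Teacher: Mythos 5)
Your proof is correct and takes the same route as the paper: the paper treats Observation~\ref{obs:F-hat} as immediate from the fact that each $f_k$ (resp.\ $\hat f_k$) is a $k$-shuffler and therefore leaves the $h$-th coordinate unchanged whenever $k\neq h$, which is exactly the telescoping argument you spell out. Your handling of the boundary cases ($h=0$ vacuity in the second chain, $h=\gamma$ triviality in the first, and the $F_{-1}=\mathrm{id}$ convention) is sound and just makes explicit what the paper leaves implicit.
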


Now we are ready to construct a unique linking sequence for every $i$. %, thus finalizing the proof of Lemma~\ref{lemma:linked-sum}. %Recall that every vertex $u_i^h\in U^h$ (or $w^h_i\in W^h$) corresponds to $(a_0(i),a_1(i),\cdots,a_\gamma(i))$, and there is a unique path in $G^{\gamma}$ that passes $(u_i,w_k)$, which is $(u^{\gamma}(i_{\gamma}),\cdots,u^0(i_0),u_i,w_k,w^0(k_0),\cdots,w^\gamma(k_\gamma))$. 
Intuitively, note that since $\sigma'(h)<x^2$, then $\sigma'(h)$ occupies two \lq\lq bits\rq\rq in the polynomial (that is, two coordinates). With this in mind, we let $\hat{\sigma}'(i)=(0,\sigma'(i))$. Moreover, we define $\hat{\sigma}'(0)=(0,0)$.  Now consider the following two sequences, starting from $\vea_i$ and $\vea_k$, respectively, and end up at the same vector:
\small
\begin{align*}
&&(&\vea_i[0],&\vea_i[1],\quad\cdots,\quad&\vea_i[{\gamma-1}], &\vea_i[{\gamma}],\quad&0,&\hat{\sigma}'(0)) &:=&\veb^0_{i}\\&\rightarrow& (&0,&\vea_i[1],\quad\cdots,\quad &\vea_i[{\gamma-1}],&\vea_i[{\gamma}], \quad&(F_0(\vea_i))[0], &\hat{\sigma}'(1))&:=&\veb^1_{i}\\
&\rightarrow& (&(F_0(\vea_i))[0],&\vea_i[1],\quad\cdots,\quad&\vea_i[{\gamma-1}],&\vea_i[{\gamma}], \quad&0, &\hat{\sigma}'(2))&:=&\veb^2_{i}\\
&\rightarrow& (&(F_1(\vea_i))[0],&0,\quad\cdots,\quad&\vea_i[{\gamma-1}],&\vea_i[{\gamma}], \quad&(F_1(\vea_i))[1], &\hat{\sigma}'(3))&:=&\veb^3_i\\
&\rightarrow& (&(F_1(\vea_i))[0],&(F_1(\vea_i))[1],\quad\cdots,\quad&\vea_i[{\gamma-1}], &\vea_i[{\gamma}], \quad&0, &\hat{\sigma}'(4))&:=&\veb^4_i\\
&\rightarrow& &\cdots\\
&\rightarrow& (&(F_{\gamma}(\vea_i))[0],&(F_{\gamma}(\vea_i))[1],\quad\cdots,\quad&(F_{\gamma}(\vea_i))[\gamma-1], &0, \quad&(F_{\gamma}(\vea_i))[\gamma], &\hat{\sigma}'(2\gamma+1)) &:=&\veb^{2\gamma+1}_i\\
&\rightarrow& (&(F_{\gamma}(\vea_i))[0],&(F_{\gamma}(\vea_i))[1],\quad\cdots,\quad&(F_{\gamma}(\vea_i))[\gamma-1], &(F_{\gamma}(\vea_i))[\gamma], \quad&0, &\hat{\sigma}'(2\gamma+2))&:=&\veb^{2\gamma+2}_i
\end{align*}\normalsize
and
\small
\begin{align*}
&&(&\vea_k[0],&\vea_k[1],\quad\cdots,\quad&\vea_k[{\gamma-1}], &\vea_k[{\gamma}],\quad&0,&\hat{\sigma}'(4\gamma+4)) &:=&\hat{\veb}^0_k\\&\rightarrow& (&0,&\vea_k[1],\quad\cdots,\quad &\vea_k[{\gamma-1}],&\vea_k[{\gamma}], \quad&(\hat{F}_0(\vea_k))[0], &\hat{\sigma}'(4\gamma+3)) &:=&\hat{\veb}^1_k\\
&\rightarrow& (&(\hat{F}_0(\vea_k))[0],&\vea_k[1],\quad\cdots,\quad&\vea_k[{\gamma-1}],&\vea_k[{\gamma}], \quad&0, &\hat{\sigma}'(4\gamma+2))&:=&\hat{\veb}^2_k\\
&\rightarrow& (&(\hat{F}_1(\vea_k))[0],&0,\quad\cdots,\quad&\vea_k[{\gamma-1}],&\vea_k[{\gamma}], \quad&\hat{F}_1(\vea_k))[1]), &\hat{\sigma}'(4\gamma+1))&:=&\hat{\veb}^3_k\\
&\rightarrow& (&(\hat{F}_1(\vea_k))[0],&\hat{F}_1(\vea_k))[1],\quad\cdots,\quad&\vea_k[{\gamma-1}], &\vea_k[{\gamma}], \quad&0, &\hat{\sigma}'(4\gamma))&:=&\hat{\veb}^4_k\\
&\rightarrow& &\cdots\\
&\rightarrow& (&(\hat{F}_{\gamma}(\vea_k))[0],&(\hat{F}_{\gamma}(\vea_k))[1],\quad\cdots,\quad&(\hat{F}_{\gamma}(\vea_k))[\gamma-1], &0, \quad&(\hat{F}_{\gamma}(\vea_k))[\gamma]), &\hat{\sigma}'(2\gamma+3))&:=&\hat{\veb}^{2\gamma+1}_k\\
&\rightarrow& (&(\hat{F}_{\gamma}(\vea_k))[0],&(\hat{F}_{\gamma}(\vea_k))[1],\quad\cdots,\quad&(\hat{F}_{\gamma}(\vea_k))[\gamma-1]), &(\hat{F}_{\gamma}(\vea_k))[\gamma], \quad&0, &\hat{\sigma}'(2\gamma+2))&:=&\hat{\veb}^{2\gamma+2}_k
\end{align*}\normalsize

Consider each vector in the above sequence, say, $\veb^2_{i}$. According to Observation~\ref{obs:F-hat}, we know
\begin{align*}
&&(&(F_0(\vea_i))[0],&\vea_i[1],\quad&\cdots,&\vea_i[{\gamma-1}],\quad&\vea_i[{\gamma}], &0, \quad&\hat{\sigma}'(2))\\
&=&(&(F_0(\vea_i))[0],&(F_0(\vea_i))[1],&\cdots,&(F_0(\vea_i))[\gamma-1],\quad&(F_0(\vea_i))[\gamma], &0, \quad&\hat{\sigma}'(2))
\end{align*}
More generally, it is easy to verify that every $\veb^{2j}_i$ is the concatenation of the vector $\left(F_{j-1}(\vea_i),0\right)$ and $\hat{\sigma}'(2j)$, and each $\veb^{2j+1}_i$ is a combination $SW_j\left(\left(F_{j-1}(\vea_i),0\right)\right)$ and $\hat{\sigma}'(2j+1)$, where $SW_j$ is a one-to-one mapping that swaps two coordinates of a vector. A similar statement holds for $\hat{\veb}^{2j}_k$'s and $\hat{\veb}^{2j+1}_k$'s. Since $SW_j$'s, $F_j$'s and $\hat{F}_j$'s are all one-to-one mappings, and $\sigma'$ is an injection, each of the vectors in the sequence above is unique. More precisely, we have the following.
\begin{lemma}\label{lemma:veb-uniquesum-1}
For any $0\le h\le 2\gamma+1$ and $1\le i\le n$, 
\begin{itemize}
    \item If $\veb^h_{i}=\veb^{h'}_{i'}$, then $h=h'$ and $i=i'$;
    \item If $\hat{\veb}^h_i=\hat{\veb}^{h'}_{i'}$, then $h=h'$ and $i=i'$.
\end{itemize} 
\end{lemma}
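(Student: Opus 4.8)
The plan is to prove Lemma~\ref{lemma:veb-uniquesum-1} by exploiting the structural decomposition of each $\veb^h_i$ into a ``main block'' (the first $\gamma+2$ coordinates, which encode an image under $F_j$ or $SW_j\circ F_j$) and a ``counter block'' (the last two coordinates, which encode $\hat{\sigma}'(\cdot)$). The key point is that the counter block already pins down $h$ uniquely, and then the main block pins down $i$. Concretely, I would first record the structural claim made informally in the paragraph preceding the lemma: for every $0\le j\le \gamma+1$, the vector $\veb^{2j}_i$ equals the concatenation of $(F_{j-1}(\vea_i),0)$ with $\hat\sigma'(2j)$, and $\veb^{2j+1}_i$ equals the concatenation of $SW_j\big((F_{j}(\vea_i),0)\big)$ (or the analogous swap, matching exactly the displayed sequence) with $\hat\sigma'(2j+1)$, where by convention $F_{-1}=\mathrm{id}$; and likewise $\hat\veb^{2j}_k$ is $(\hat F_{j-1}(\vea_k),0)$ concatenated with $\hat\sigma'(4\gamma+4-2j)$ and $\hat\veb^{2j+1}_k$ is $SW_j\big((\hat F_j(\vea_k),0)\big)$ concatenated with $\hat\sigma'(4\gamma+3-2j)$. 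This is a routine induction on $j$ using Observation~\ref{obs:F-hat} (which guarantees the ``frozen'' coordinates agree with the images under the full $F_\gamma$ or $\hat F_\gamma$), so I would state it as a sub-claim and dispatch it quickly.

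Granting that structure, the proof of the first bullet goes as follows. Suppose $\veb^h_i=\veb^{h'}_{i'}$. Looking at the last two coordinates: these encode $\hat\sigma'(\lceil h\rceil$-indexed argument$)$, i.e. $\hat\sigma'(h)$ for $\veb^h_i$ (using the convention $\hat\sigma'(0)=(0,0)$ and $\hat\sigma'(\ell)=(0,\sigma'(\ell))$ for $\ell\ge 1$). Since $\sigma'$ is an injection on $\Z_{4\gamma+4}$ and the indices appearing in the first sequence are the distinct values $0,1,\dots,2\gamma+2$, equality of the counter blocks forces $h=h'$ — here one uses the elementary fact that the map $\ell\mapsto\hat\sigma'(\ell)$ is injective on $\{0,1,\dots,2\gamma+2\}$, which holds because $\sigma'$ is injective and never equals $0$ on $\{1,\dots,4\gamma+4\}$ (so $\hat\sigma'(0)$ is distinguished from all $\hat\sigma'(\ell)$, $\ell\ge1$). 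Once $h=h'$, both vectors have the same main block; if $h=2j$ this block is $(F_{j-1}(\vea_i),0)$ vs.\ $(F_{j-1}(\vea_{i'}),0)$, and since $F_{j-1}$ is a composition of $h$-shufflers hence a bijection on $\mathcal S_d^{\gamma+1}$, we get $\vea_i=\vea_{i'}$; if $h=2j+1$ the block is $SW_j((F_j(\vea_i),0))$ vs.\ the same with $i'$, and since $SW_j$ and $F_j$ are both bijections, again $\vea_i=\vea_{i'}$. Finally $\vea_i=\vea_{i'}$ implies $\sigma(i)=\sigma(i')$, and injectivity of $\sigma$ (Lemma~\ref{lemma:uniquesum-2}) gives $i=i'$. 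The second bullet is identical, using the $\hat F_j$'s and the fact that the counter arguments $4\gamma+4-2j$ for $j=0,\dots,\gamma+1$ range over the distinct values $2\gamma+2,\dots,4\gamma+4$, again distinct under $\hat\sigma'$.

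The only genuine subtlety — and the step I would flag as the main obstacle — is making the ``counter block pins down $h$'' argument airtight across the \emph{two} sequences simultaneously, i.e.\ verifying that the index sets $\{0,1,\dots,2\gamma+2\}$ (for the $\veb$-sequence) and $\{2\gamma+2,\dots,4\gamma+4\}$ (for the $\hat\veb$-sequence) are each mapped injectively by $\hat\sigma'$, and being careful that the statement of the lemma only claims uniqueness \emph{within} a sequence ($\veb^h_i=\veb^{h'}_{i'}\Rightarrow h=h',i=i'$ and separately for $\hat\veb$), not across them — so I do not need to worry about the overlap at $2\gamma+2$. A secondary bookkeeping point is matching the exact ``swap'' $SW_j$ and the exact counter argument to each line of the two displayed sequences (the counters in the $\veb$-sequence \emph{increase} as $0,1,2,\dots$ while those in the $\hat\veb$-sequence \emph{decrease} as $4\gamma+4,4\gamma+3,\dots$), which is where an off-by-one could creep in; I would simply read these off the displays rather than re-derive them. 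With those conventions fixed, the lemma follows by the composition-of-bijections argument above.
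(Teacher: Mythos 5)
Your proposal is correct and matches the paper's approach: the paper states this lemma without a formal proof, but the paragraph immediately preceding it gives exactly your argument — decompose each $\veb^h_i$ into a main block that is a bijective image of $\vea_i$ (via $F_j$'s, $\hat{F}_j$'s and the swap $SW_j$) and a counter block $\hat\sigma'(\cdot)$, then invoke injectivity of $\sigma'$ to recover $h$ and injectivity of $\sigma$ to recover $i$. You fill in the details the paper leaves implicit (in particular, that $\hat\sigma'(0)=(0,0)$ is distinguished from $\hat\sigma'(\ell)$ for $\ell\ge1$ since $\sigma'$ takes positive values), and you correctly write $F_j$ rather than the paper's $F_{j-1}$ in the odd-index case, which appears to be a harmless typo in the paper since both are bijections.
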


Furthermore, by the fact that $F_{\gamma}=\hat{F}_{\gamma}\circ \hat{\tau}$ and $\vea_k=\hat{\tau}(\vea_i)$, we have the following observation:
\begin{observation}
$$\veb^{2\gamma+2}_i=\hat{\veb}^{2\gamma+2}_k.$$ 
\end{observation}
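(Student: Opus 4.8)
The plan is to reduce the claimed identity $\veb^{2\gamma+2}_i=\hat\veb^{2\gamma+2}_k$ to the single equation $F_\gamma(\vea_i)=\hat F_\gamma(\vea_k)$, and then to read that equation off directly from Lemma~\ref{lemma:shuffle-1}. First I would invoke the structural description already recorded above: every vector $\veb^{2j}_i$ is the concatenation of $\big(F_{j-1}(\vea_i),0\big)$ with $\hat\sigma'(2j)$, and (the analogous statement) every $\hat\veb^{2j}_k$ is the concatenation of $\big(\hat F_{j-1}(\vea_k),0\big)$ with $\hat\sigma'(2j)$ for the appropriate index. Instantiating both at $j=\gamma+1$ (recall $F_\gamma=f_\gamma\circ\cdots\circ f_0$ and $\hat F_\gamma=\hat f_\gamma\circ\cdots\circ\hat f_0$, which are well defined since each $h$-shuffler only touches coordinate $h$), and reading off the final lines of the two displayed sequences, we obtain
\[
\veb^{2\gamma+2}_i=\big((F_\gamma(\vea_i))[0],\dots,(F_\gamma(\vea_i))[\gamma],\,0,\,\hat\sigma'(2\gamma+2)\big),\qquad
\hat\veb^{2\gamma+2}_k=\big((\hat F_\gamma(\vea_k))[0],\dots,(\hat F_\gamma(\vea_k))[\gamma],\,0,\,\hat\sigma'(2\gamma+2)\big).
\]
Thus the penultimate coordinate ($0$) and the trailing block $\hat\sigma'(2\gamma+2)$ already agree on both sides, so it suffices to establish $F_\gamma(\vea_i)=\hat F_\gamma(\vea_k)$.

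For that last equality I would combine two facts. By the definition of $\hat\tau$, whenever $k=\tau(i)$ we have $\vea_k=\hat\tau(\vea_i)$; and Lemma~\ref{lemma:shuffle-1} was applied precisely to guarantee $f_\gamma\circ f_{\gamma-1}\circ\cdots\circ f_0=\hat f_\gamma\circ\hat f_{\gamma-1}\circ\cdots\circ\hat f_0\circ\hat\tau$, i.e. $F_\gamma=\hat F_\gamma\circ\hat\tau$. Composing these, $\hat F_\gamma(\vea_k)=\hat F_\gamma(\hat\tau(\vea_i))=(\hat F_\gamma\circ\hat\tau)(\vea_i)=F_\gamma(\vea_i)$, which closes the argument.

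In short, this observation is a bookkeeping consequence of the defining identity $F_\gamma=\hat F_\gamma\circ\hat\tau$ produced by Lemma~\ref{lemma:shuffle-1}, and there is no genuine obstacle. The only point that needs a moment of care is the first step: verifying that the two sequences terminate at the same parameter index ($2\gamma+2$) of $\hat\sigma'$ and at the same zero entry in the coordinate of degree $\gamma+1$, so that matching the prefixes $F_\gamma(\vea_i)$ and $\hat F_\gamma(\vea_k)$ is indeed all that remains — this is immediate from inspection of the explicit final rows of the two displayed chains.
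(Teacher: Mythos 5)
Your proof is correct and follows exactly the same route the paper uses: the observation is read off from the final rows of the two displayed chains, both of which collapse to $\big(F_\gamma(\vea_i),0,\hat\sigma'(2\gamma+2)\big)$ and $\big(\hat F_\gamma(\vea_k),0,\hat\sigma'(2\gamma+2)\big)$, and then the identity $F_\gamma=\hat F_\gamma\circ\hat\tau$ supplied by Lemma~\ref{lemma:shuffle-1}, together with $\vea_k=\hat\tau(\vea_i)$, yields $F_\gamma(\vea_i)=\hat F_\gamma(\vea_k)$. This is precisely the argument the paper intends (it is summarized in the single sentence preceding the observation), so there is no gap and no divergence to discuss.
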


Next, we consider any two adjacent vectors in the above sequence. We observe that they differ at exactly three positions -- the last coordinate (i.e., $\hat{\sigma}'(j)$'s), and other two coordinates such that one of the two vectors has $0$ coordinate. Other coordinates, e.g., $(F_0(\vea_i))[0]$ in $\veb^2_{i}$ and $(F_1(\vea_i))[0]$ in $\veb^3_i$ are identical according to Observation~\ref{obs:F-hat}. This leads to the following Lemma.

\begin{lemma}\label{lemma:veb-uniquesum-2}
For any $0\le h\le 2\gamma+1$ and $1\le i\le n$, \begin{itemize}
    \item If $\veb^h_{i}+\veb^{h+1}_i=\veb^{h'}_{i'}+\veb^{h''}_{i''}$ where $h'\le h''$, then $h'=h$, $h''=h+1$, $i'=i''=i$;
    \item If $\hat{\veb}^h_i+\hat{\veb}^{h+1}_i=\hat{\veb}^{h'}_{i'}+\hat{\veb}^{h''}_{i''}$ where $h'\le h''$, then $h'=h$, $h''=h+1$, $i'=i''=i$.
\end{itemize} 
\end{lemma}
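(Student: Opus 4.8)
The plan is to prove Lemma~\ref{lemma:veb-uniquesum-2} by exploiting the rigid ``three-coordinate'' structure of consecutive vectors in the sequence, in the same spirit as the proof of Property~4 of Lemma~\ref{lemma:uniquesum-2}. I only treat the first bullet (the $\veb$ sequence); the $\hat{\veb}$ case is symmetric.

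First I would fix notation for a generic vector in the sequence. By the discussion preceding the lemma, each $\veb^{2j}_i$ is the concatenation of $\bigl(F_{j-1}(\vea_i),0\bigr)$ with $\hat{\sigma}'(2j)$, and each $\veb^{2j+1}_i$ is $SW_j\bigl(\bigl(F_{j-1}(\vea_i),0\bigr)\bigr)$ concatenated with $\hat{\sigma}'(2j+1)$, where $SW_j$ swaps coordinates $j$ and $\gamma+1$. So every $\veb^h_i$ has the form $(\ve c, \hat{\sigma}'(h))$, where $\ve c \in {\cal{S}}_d^{\gamma+2}$ has exactly one zero coordinate (namely coordinate $\gamma+1$ if $h$ is even, and coordinate $\lfloor h/2\rfloor$ if $h$ is odd), and all nonzero coordinates lie in ${\cal{S}}_d\subseteq\Z_d$, hence are strictly less than $x/5=d+1/5$. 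Since $\hat{\sigma}'(h)<x^2$ occupies the two highest ``bits'' (coordinates $\gamma+2$ and $\gamma+3$ in the polynomial representation, via $\hat{\sigma}'(h)=(0,\sigma'(h))$), we see $\veb^h_i$ is itself a polynomial in $x$ with all coordinate-sums bounded by $1$ copy each; adding two such vectors keeps every coordinate below $2x/5<x$ in coordinates $0,\dots,\gamma+1$, and $\sigma'(h)+\sigma'(h')<2x^2$ is still controlled because $\sigma'$ satisfies Property~3/4 of Lemma~\ref{lemma:uniquesum-2}. Therefore, from $\veb^h_i+\veb^{h+1}_i=\veb^{h'}_{i'}+\veb^{h''}_{i''}$, comparing coefficients of like powers of $x$ forces coordinate-wise equality of the two vector sums.

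Next I would read off the identities coordinate by coordinate. Looking at coordinates $\gamma+2,\gamma+3$ (i.e., the trailing $\hat{\sigma}'$ block), coordinate-wise equality gives $\sigma'(h)+\sigma'(h+1)=\sigma'(h')+\sigma'(h'')$. Since the indices $h,h+1,h',h''$ lie in $\{0,1,\dots,2\gamma+2\}\subseteq\Z_{4\gamma+4}$ and $\sigma'$ is the injection from Lemma~\ref{lemma:uniquesum-2} applied to $N=4\gamma+4$, its Property~4 (unique adjacent sum, $h'\le h''$) immediately yields $h'=h$ and $h''=h+1$. This is the crucial step that pins down the ``time coordinate''. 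Having fixed $h'=h$, $h''=h+1$, the remaining coordinates $0,\dots,\gamma+1$ then say that the projection of $\veb^h_i+\veb^{h+1}_i$ onto the first $\gamma+2$ coordinates equals that of $\veb^h_{i'}+\veb^{h+1}_{i''}$. Now I would use the structural description above: writing $j=\lfloor h/2\rfloor$, the pair $\bigl(\veb^h_i,\veb^{h+1}_i\bigr)$ involves exactly the vectors $\bigl(F_{j-1}(\vea_i),0\bigr)$ and its swap $SW_j$ of it (up to the $\hat{\sigma}'$ tail we have already removed), with an analogous statement for $i',i''$. In every coordinate $\ell\notin\{j,\gamma+1\}$ the two summands each contribute $(F_{j-1}(\vea_i))[\ell]$, so the sum is $2(F_{j-1}(\vea_i))[\ell]$; since this is a sum of two elements of ${\cal{S}}_d$ equal to $2(F_{j-1}(\vea_i))[\ell]$, Lemma~\ref{lemma:uniquesum-1} (the $h=2$ case) forces $(F_{j-1}(\vea_{i'}))[\ell]=(F_{j-1}(\vea_{i''}))[\ell]=(F_{j-1}(\vea_i))[\ell]$. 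In coordinates $j$ and $\gamma+1$ the sum is $(F_{j-1}(\vea_i))[j]+0=(F_{j-1}(\vea_i))[j]$ (resp.\ for $i',i''$), and since this single value lies in ${\cal{S}}_d$ it can only be written as that element plus $0$, so again $(F_{j-1}(\vea_{i'}))[j]=(F_{j-1}(\vea_{i''}))[j]=(F_{j-1}(\vea_i))[j]$ and likewise the coordinate-$(\gamma+1)$ contributions must both be $0$. Combining, $\bigl(F_{j-1}(\vea_{i'}),0\bigr)=\bigl(F_{j-1}(\vea_{i''}),0\bigr)=\bigl(F_{j-1}(\vea_i),0\bigr)$, and since $F_{j-1}$ is a composition of one-to-one $h$-shufflers hence a bijection on ${\cal{S}}_d^{\gamma+1}$, we conclude $\vea_{i'}=\vea_{i''}=\vea_i$, and because the $\vea$'s are distinct, $i'=i''=i$, as desired.

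The main obstacle — and the point I would be most careful about — is the bookkeeping of which coordinates carry $\hat\sigma'(h)$ versus the shuffler data, and ensuring the ``no carries'' bound actually holds: one must check that in \emph{every} coordinate of $\veb^h_i+\veb^{h+1}_i$ the value stays strictly below $x$ (so that equality of the integers forces coordinate-wise equality of the vectors), which requires using $\vea_\cdot[j]\le d$, $x=5d+1$, and the fact that at most two ${\cal{S}}_d$-entries (or one entry plus a zero) are ever added in a given coordinate, while the top block is $\sigma'(h)+\sigma'(h+1)<2x^2$ and still separates cleanly because it sits in its own two coordinates. Apart from that, once the ``$h'=h,h''=h+1$'' reduction via Property~4 of Lemma~\ref{lemma:uniquesum-2} is in hand, everything reduces to the already-established uniqueness statements of Lemmas~\ref{lemma:uniquesum-1} and \ref{lemma:uniquesum-2} together with bijectivity of $F_{j-1}$ (from Lemma~\ref{lemma:shuffle-1}) and injectivity of $\sigma'$.
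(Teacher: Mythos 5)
Your proof follows essentially the same route as the paper: isolate the trailing $\hat{\sigma}'$ block (no carries, because the lower $\gamma+2$ coordinates all lie in $\{0\}\cup\mathcal{S}_d$ with $\mathcal{S}_d\subseteq\Z_d$ and $x=5d+1$, so pairwise sums stay below $x$, and the $\hat{\sigma}'$ block is buffered by a zero coordinate), use the uniqueness of $\sigma'(h)+\sigma'(h+1)$ to pin down $h'=h$ and $h''=h+1$, then read off the remaining coordinates via Lemma~\ref{lemma:uniquesum-1} (the ``$a=b+c$'' and ``$2a=b+c$'' cases) and finish by injectivity; citing bijectivity of $F_{j-1}$ in place of Lemma~\ref{lemma:veb-uniquesum-1} is a cosmetic difference, since that lemma's proof itself rests on the $F_j$ being bijections.

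There is one genuine gap in the first step, however. You claim ``the indices $h,h+1,h',h''$ lie in $\{0,1,\dots,2\gamma+2\}\subseteq\Z_{4\gamma+4}$'' and invoke Property~4 of Lemma~\ref{lemma:uniquesum-2} for all $h$, but $\Z_{4\gamma+4}=\{1,\dots,4\gamma+4\}$ does \emph{not} contain $0$, and $\sigma'(0)=0$ is an ad hoc extension (via $\hat{\sigma}'(0)=(0,0)$), not a value of the injection constructed in Lemma~\ref{lemma:uniquesum-2}. Property~4 is stated only for $i\ge 1$ and does not cover $\sigma'(0)+\sigma'(1)=\sigma'(h')+\sigma'(h'')$. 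The paper handles $h=0$ separately using Property~3: by the ``no feasible solution when $h<k$'' clause, $\sigma'(1)$ cannot be written as $\sigma'(h')+\sigma'(h'')$ with $h',h''\ge 1$, so one of $h',h''$ must be $0$, and then injectivity of $\sigma'$ forces the other to be $1$. You would need to insert this case distinction to make the argument watertight. Separately, a very small imprecision: at the two ``changing'' coordinates $j$ and $\gamma+1$, the single-element sums are $(F_{j-1}(\vea_i))[j]$ and $(F_j(\vea_i))[j]$ respectively, which need not be equal; this does not harm the argument (each is still one element of $\mathcal{S}_d$ plus a zero), but the way you wrote it suggests both equal $(F_{j-1}(\vea_i))[j]$.
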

\begin{proof}
%For ease of presentation let $\sigma'(0)=0$. 
We prove the first statement, that is, $\veb^h_{i}+\veb^{h+1}_i=\veb^{h'}_{i'}+\veb^{h''}_{i''}$ implies $h'=h$, $h''=h+1$, $i'=i''=i$. The second statement can be proved in the same way. 

We first consider the last coordinate of the summation $\veb^h_{i}+\veb^{h+1}_i=\veb^{h'}_{i'}+\veb^{h''}_{i''}$, which is $\sigma'(h)+\sigma'(h+1)=\sigma'(h')+\sigma'(h'')$. If $h=0$, according to \textbf{Property 3} of Lemma~\ref{lemma:uniquesum-2}, $1\times \sigma'(1)$ can only be expressed as $0+\sigma'(1)$, hence we have $h'=0$ and $h''=1$. If $h\ge 1$, according to \textbf{Property 4} of Lemma~\ref{lemma:uniquesum-2}, we have $h'=h$ and $h''=h+1$.

Consider other coordinates of the equation $\veb^h_{i}+\veb^{h+1}_i=\veb^{h}_{i'}+\veb^{h+1}_{i''}$. On the left-side it is either $a$ or $2a$ where $a\in {\cal{S}}_d$. Consider the equation $a=b+c$ where $b,c\in \{0\}\cup {\cal{S}}_d$. By Lemma~\ref{lemma:uniquesum-1}, there do not exist two numbers in $ {\cal{S}}$ that add up to $a$, hence we know $b,c\in\{0,a\}$. Similarly if $2a=b+c$ for $a\in {\cal{S}}_d$ and $b,c\in\{0\}\cup {\cal{S}}_d$, then $b$ and $c$ are both nonzero, for otherwise the linear equation $2y=y_1$ admits a solution, which is a contradiction to that ${\cal{S}}_d$ satisfies Lemma~\ref{lemma:uniquesum-1}. Hence, $2a=b+c$ for $a,b,c\in {\cal{S}}_d$, and again by Lemma~\ref{lemma:uniquesum-1} we have $b=c=a$. Hence, we conclude that each coordinate of $\veb^{h}_{i'}$ and $\veb^{h+1}_{i''}$ must be the same as $\veb^h_{i}$ and $\veb^{h+1}_i$, respectively. By Lemma~\ref{lemma:veb-uniquesum-1}, it follows that $i'=i''=i$.
%only way of summation is $0+a_t(g_{\lceil h/2\rceil}(i))$ or $a_t(g_{\lceil h/2\rceil}(i))+a_t(g_{\lceil h/2\rceil}(i))$. Hence, we have $a_t(g_{\lceil h/2\rceil}(i))=a_t(g_{\lceil h/2\rceil}(i'))=a_t(g_{\lceil h/2\rceil}(i''))$ for all $t$, implying that $i=i'=i''$ (by property $P_0$).
\end{proof}

Using the same argument, we know if $2\veb_i^{2\gamma+2}=\veb_{i'}^{h'}+\hat{\veb}_{i''}^{h''}$, then $\veb_i^{2\gamma+2}=\veb_{i'}^{h'}=\hat{\veb}_{i''}^{h''}$. Thus the following is also true.
\begin{lemma}\label{lemma:veb-uniquesum-3}
$2\veb_i^{2\gamma+2}=\veb_{i'}^{h'}+\hat{\veb}_{i''}^{h''}$, then $h'=h''=2\gamma+2$, $(i',i'')=(i,\tau(i))$.
\end{lemma}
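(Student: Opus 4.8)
The existence half of the statement is immediate: the Observation $\veb^{2\gamma+2}_i=\hat\veb^{2\gamma+2}_{\tau(i)}$ already gives the decomposition $2\veb^{2\gamma+2}_i=\veb^{2\gamma+2}_i+\hat\veb^{2\gamma+2}_{\tau(i)}$, with $h'=h''=2\gamma+2$ and $(i',i'')=(i,\tau(i))$. So the entire content is uniqueness, and the plan is to mirror the coordinate-wise argument in the proof of Lemma~\ref{lemma:veb-uniquesum-2}, following the one-line sketch given just before the statement: reduce $2\veb^{2\gamma+2}_i=\veb^{h'}_{i'}+\hat\veb^{h''}_{i''}$ to $\veb^{2\gamma+2}_i=\veb^{h'}_{i'}=\hat\veb^{h''}_{i''}$ and then conclude.

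First I would make the digit structure explicit. Every vector of either sequence splits into an ``${\cal{S}}_d$-block'' — its first $\gamma+2$ base-$x$ digits, which is $(F_\bullet(\vea_\bullet),0)$ (resp.\ $(\hat F_\bullet(\vea_\bullet),0)$) up to a single transposition $SW_\bullet$, so each of these $\gamma+2$ digits lies in $\{0\}\cup{\cal{S}}_d\subseteq\{0,1,\dots,d\}$ with exactly one equal to $0$ — followed by a ``$\sigma'$-block'', the remaining digits encoding $\hat\sigma'(\cdot)$, of value at most $o(\log^2 n)<x^2$ by the Observation. Since $x=5d+1$ and $x^2=\omega(\log^2 n)$, doubling $\veb^{2\gamma+2}_i$ introduces no carry anywhere: $2d<x$ on the ${\cal{S}}_d$-block, and $2\sigma'(2\gamma+2)=o(\log^2 n)<x^2$ keeps the $\sigma'$-block inside its allotted digits, which start strictly above coordinate $\gamma+1$. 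Hence in $2\veb^{2\gamma+2}_i=\veb^{h'}_{i'}+\hat\veb^{h''}_{i''}$ the base-$x$ digits of the two sides must agree position by position, and the ${\cal{S}}_d$-block and the $\sigma'$-block can be treated independently.

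Next I would analyse the $\sigma'$-block. Since the $\hat\veb$-sequence carries $\hat\sigma'$ in reversed order, $\hat\veb^{h''}_{i''}$ contributes $\sigma'(4\gamma+4-h'')$ there (with the convention $\hat\sigma'(0)=(0,0)$), so the block equation is $2\sigma'(2\gamma+2)=\sigma'(h')+\sigma'(4\gamma+4-h'')$ with $h',h''\in\{0,\dots,2\gamma+2\}$. A zero summand would leave a single $\sigma'(\cdot)$ on the right, which \textbf{Property 3} of Lemma~\ref{lemma:uniquesum-2} forbids (the ``no solution when $k<h$'' clause, here $h=2$, $k=1$); with no zero summand, \textbf{Property 3} with $h=2$ forces $h'=2\gamma+2$ and $4\gamma+4-h''=2\gamma+2$, i.e.\ $h'=h''=2\gamma+2$. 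Now the ${\cal{S}}_d$-blocks of $\veb^{2\gamma+2}_i,\veb^{2\gamma+2}_{i'},\hat\veb^{2\gamma+2}_{i''}$ all have their unique $0$ at coordinate $\gamma+1$ and elements of ${\cal{S}}_d$ at coordinates $0,\dots,\gamma$, so for each such $j$ we get $2\,(F_\gamma(\vea_i))[j]=(F_\gamma(\vea_{i'}))[j]+(\hat F_\gamma(\vea_{i''}))[j]$ with all three entries in ${\cal{S}}_d$. By the reasoning already used in Lemma~\ref{lemma:veb-uniquesum-2} — no summand can be $0$ since $2y=y_1$ has no solution in ${\cal{S}}_d$, and then $2a=b+c$ with $a,b,c\in{\cal{S}}_d$ forces $b=c=a$ by Lemma~\ref{lemma:uniquesum-1} (case $h=2$) — every digit of $\veb^{2\gamma+2}_{i'}$ and of $\hat\veb^{2\gamma+2}_{i''}$ matches that of $\veb^{2\gamma+2}_i$, i.e.\ $\veb^{2\gamma+2}_i=\veb^{2\gamma+2}_{i'}=\hat\veb^{2\gamma+2}_{i''}$. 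Then $F_\gamma(\vea_{i'})=F_\gamma(\vea_i)$ gives $\vea_{i'}=\vea_i$ and $i'=i$, since $F_\gamma$ is a bijection of ${\cal{S}}_d^{\gamma+1}$ and $i\mapsto\vea_i$ is injective; and $\hat F_\gamma(\vea_{i''})=F_\gamma(\vea_i)=(\hat F_\gamma\circ\hat\tau)(\vea_i)=\hat F_\gamma(\vea_{\tau(i)})$ gives $\vea_{i''}=\vea_{\tau(i)}$ and $i''=\tau(i)$ (equivalently, one may invoke the extension of Lemma~\ref{lemma:veb-uniquesum-1} to $h=2\gamma+2$ at this point).

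The step I expect to be delicate is the no-carry claim at the seam of the two blocks: one must be certain that doubling the (at most two-digit) value $\sigma'(2\gamma+2)$ cannot bleed into coordinate $\gamma+1$ and corrupt the digit-by-digit comparison, and likewise that no carry propagates out of the ${\cal{S}}_d$-block. This is exactly where the size estimate $\sigma'(\ell)=o(\log^2 n)$ from the Observation is combined with $x=5d+1$ and $d=e^{\OO(\sqrt{\log\log n})}\log n$ (so $x^2=\omega(\log^2 n)$, whence $2\sigma'(\ell)<x^2$). The only other friction is purely clerical: keeping the reversed indexing of the $\hat\veb$-sequence straight so that \textbf{Property 3} (and, in the parallel argument of Lemma~\ref{lemma:veb-uniquesum-2}, \textbf{Property 4}) of Lemma~\ref{lemma:uniquesum-2} is applied to the correct arguments.
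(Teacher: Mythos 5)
Your proof is correct and follows exactly the approach the paper intends: the paper's own ``proof'' of this lemma is the single sentence preceding it (``Using the same argument, we know \dots $\veb_i^{2\gamma+2}=\veb_{i'}^{h'}=\hat{\veb}_{i''}^{h''}$''), which defers entirely to the coordinate-wise argument in the proof of Lemma~\ref{lemma:veb-uniquesum-2}. You reproduce that argument in full — no-carry check, the $\sigma'$-block handled via \textbf{Property~3} of Lemma~\ref{lemma:uniquesum-2} (the $h=2$ and $k=1<h$ clauses), the ${\cal{S}}_d$-block handled via Lemma~\ref{lemma:uniquesum-1}, and the final identification $i'=i$, $i''=\tau(i)$ from bijectivity of $F_\gamma$, $\hat F_\gamma$ and the identity $F_\gamma=\hat F_\gamma\circ\hat\tau$ — so the route is the same, just fully fleshed out rather than cited.
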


%\noindent\textbf{Finalize the proof of Lemma~\ref{lemma:linked-sum}}. 
We are now ready to prove Lemma~\ref{lemma:linked-sum}.
\begin{proof}[Proof of Lemma~\ref{lemma:linked-sum}]The sequence of $E_i$ that links $\sigma(i)$ and $\sigma(k)$ for $k=\tau(i)$ is exactly $\veb^0_{i}=\vea_i,\veb^1_{i},\veb^2_{i},\cdots,\veb^{2\gamma+2}_i=\hat{\veb}^{2\gamma+2}_k$, $\hat{\veb}^{2\gamma+1}_k$, $\cdots$, $\hat{\veb}^1_k,\hat{\veb}^0_k=\vea_k$ (recall that by a vector $\veb$ we mean the integer $\veb\cdot\vex$ with $x=5d+1=O(\log n)$). The uniqueness of each linking sequence is ensured by Lemma~\ref{lemma:veb-uniquesum-1} and Lemma~\ref{lemma:veb-uniquesum-2}. The largest number is bounded by $x^{\gamma+5}=n^{1+O(\frac{1}{\sqrt{\log\log n}})}$. Furthermore, $\omega=O(\gamma)=O(\frac{\log n}{\log\log n})$. Thus, all properties of Lemma~\ref{lemma:linked-sum} are satisfied. 
\end{proof}

\clearpage

\appendix

\section{Omitted Proofs in Section~\ref{sec:upper} - Algorithms AL\texorpdfstring{$_1$}{Lg}, AL\texorpdfstring{$_2$}{Lg}, AL\texorpdfstring{$_3$}{Lg}}\label{appsec:alg}

\subsection{Algorithm 1}\label{appsubsec:al1}

\begin{lemma}
Consider an instance after the preprocessing of Lemma~\ref{lemma:prepocessing}.	For any $\epsilon>0$, there exists an algorithm \textrm{AL}$_{\ref{alg:1}}$ that returns an $(1+O(\epsilon))$-approximate solution for $P||\sum_i C_i^q$ and runs in time $(m/\epsilon)^{O(m)}$.
\end{lemma}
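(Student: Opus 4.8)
The plan is to reduce, via a coarse rounding of the processing times, to an instance on which a straightforward dynamic program over the jobs runs within the claimed bound. Recall that after the preprocessing of Lemma~\ref{lemma:prepocessing} every processing time lies in $[\epsilon,1]$ and $\mathrm{size}(I)=m(I)$ (which we write as $m$), so the number of jobs satisfies $n\le \mathrm{size}(I)/\epsilon=m/\epsilon$.

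\textbf{Rounding.} First I would round each processing time $p_j$ \emph{up} to the nearest multiple of $\epsilon^2$, obtaining $p_j'=\lceil p_j/\epsilon^2\rceil\,\epsilon^2\in[p_j,\,p_j+\epsilon^2)$. Since $p_j\ge\epsilon$, this gives $p_j'/p_j\le 1+\epsilon^2/\epsilon=1+\epsilon$, so for \emph{any} assignment the load of every machine is multiplied by a factor in $[1,1+\epsilon]$; hence the optimum $\mathrm{OPT}'$ of the rounded instance satisfies $\mathrm{OPT}'\le(1+\epsilon)^q\,\mathrm{OPT}$, while any assignment of the rounded jobs, read as an assignment of the original jobs, has objective at most its rounded objective. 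So it will be enough to compute an optimal assignment of the rounded instance and output it for $I$, which yields a $(1+\epsilon)^q=1+O(\epsilon)$ approximation for $I$ (and, composing with Lemma~\ref{lemma:prepocessing}, for $I_0$). In the rounded instance every $p_j'$ is an integer multiple of $\epsilon^2$ and the total size is at most $m+n\epsilon^2\le m(1+\epsilon)\le 2m$.

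\textbf{The dynamic program.} I would then order the rounded jobs arbitrarily as $1,\dots,n$ and build a table indexed by $j\in\{0,\dots,n\}$ and a load vector $(L_1,\dots,L_m)$ with each $L_i$ a multiple of $\epsilon^2$ in $[0,2m]$; the entry records whether this configuration is reachable by assigning jobs $1,\dots,j$. The transition on job $j{+}1$ sends a reachable $(L_1,\dots,L_m)$ to $(L_1,\dots,L_i+p'_{j+1},\dots,L_m)$ for each $i\in[m]$; after filling the table one scans the configurations reachable at level $n$ and returns one minimizing $\sum_{i=1}^m L_i^q$, recovering the assignment by backtracking. As jobs of equal rounded size are interchangeable both for the loads and for the objective, this computes $\mathrm{OPT}'$ exactly; by the remark after Lemma~\ref{lemma:prepocessing} it suffices to evaluate $\sum_i L_i^q$ up to a negligible additive error. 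For the running time, each $L_i$ takes at most $2m/\epsilon^2+1$ values, so there are $(2m/\epsilon^2+1)^m=(m/\epsilon)^{O(m)}$ load vectors; with $n+1\le m/\epsilon+1$ levels, $m$ out-transitions per configuration processed in $\mathrm{poly}(m)$ time, and a final scan of the same order, the total time is $(m/\epsilon)^{O(m)}$.

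\textbf{Main obstacle.} I do not expect a genuine obstacle here; the only choice that matters is the rounding granularity. A coarse \emph{additive} grid of width $\Theta(\epsilon^2)$ does two things at once: because the smallest job has size at least $\epsilon$, it keeps the per-job distortion multiplicative $1+\epsilon$; and because every machine load then becomes a multiple of $\epsilon^2$ bounded by $2m$, it forces only $\mathrm{poly}(m/\epsilon)$ distinct DP states per level, hence $(m/\epsilon)^{O(m)}$ overall — rather than the $2^{\tilde{O}(1/\epsilon)}$ one would incur by the usual rounding to powers of $1+\epsilon$ together with tracking per-type job counts. If desired, one could also invoke the near-optimal solution of Lemma~\ref{lemma:prepocessing} with all loads in $[1/2,2]$ to restrict each $L_i$ to $[0,3]$ and shrink the table, but this is not needed for the stated bound.
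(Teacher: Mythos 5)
Your proof is correct, and it takes a slightly different route from the paper's. Both approaches run the same dynamic program over jobs with an $m$-dimensional load vector as the state, and both need to bound the number of distinct states by $(m/\epsilon)^{O(m)}$. The paper does this \emph{a posteriori}: it runs the DP on the original processing times but repeatedly trims the table, invoking Woeginger's framework with a multiplicative grid of resolution $\delta=\epsilon/n$ so that every reachable state is replaced by a nearby retained state within factor $(1+\delta)^n=1+O(\epsilon)$. You instead discretize \emph{a priori}: round all $p_j$ up to multiples of $\epsilon^2$, which makes every machine load a multiple of $\epsilon^2$ bounded by $O(m)$, so the exact DP on the rounded instance already has only $(m/\epsilon)^{O(m)}$ states, and the $1+O(\epsilon)$ loss is charged entirely to the rounding step (using $p_j\ge\epsilon$ so the additive $\epsilon^2$ grid is a multiplicative $1+\epsilon$ perturbation). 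Your version is more self-contained since it avoids the appeal to the Woeginger trimming framework, at the modest cost of solving a rounded instance rather than the original; the paper's version keeps the original sizes and is therefore closer in spirit to how the authors later reuse the same DP skeleton in AL$_2$, where the rounding is more delicate. Either way the correctness argument ($(1+\epsilon)^q=1+O(\epsilon)$ loss, exact optimum on the discretized state space, objective evaluated up to a negligible additive error) and the running-time accounting (state space $(m/\epsilon^2+1)^m$, $n\le m/\epsilon$ levels, $m$ transitions per state) go through as you wrote them.
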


The algorithm can be formulated as a dynamic program. For each $h\in \Z_n$, we create a set of states $\mathcal{F}_h$. A state $(h,L_1,\dots,L_m)$ belongs to $\mathcal{F}_h$ if it is possible to assign jobs $1,\dots,h$ on machines such that the total load on machine $i$ equals $L_i$ for all $i\in \Z_m$. Starting from $(0,\dots,0)\in \mathcal{F}_0$, every state in $\mathcal{F}_{i-1}$ can give rise to some states in $\mathcal{F}_i$ by trying all the possible assignment of job $h$. And the solution is given by the state in $\mathcal{F}_n$ with the minimum objective i.e., $\sum_{i=1}^m L_i^q$. Due to Lemma \ref{lemma:prepocessing}, we know that one of the optimal solutions has a load vector  $({L_1^*},\dots,{L_m^*})\in \mathcal{F}_n$ satisfying $L_i^*\leq 2$ for all $i$. Clearly, the running time of Algorithm \ref{alg:1} is $O(m\sum_i |\mathcal{F}_i|)$. However, the total number of states stored during the dynamic programming can be pseudo-polynomial. Fortunately, using the framework by Woeginger \cite{woeginger1999does}, we are able to trim the state space to make it polynomial. More precisely, for any $\delta>0$ construct $\hat{\mathcal{F}}_1,\dots,\hat{\mathcal{F}}_n$ satisfying the following property:
\begin{itemize}
	\item the size of each $\hat{\mathcal{F}}_i$ is bounded by $(1/\delta)^{O(m)}$;
	\item for each $(i,L_1,\dots,L_m)\in \mathcal{F}_i$, there exists $(i,\hat{L}_1,\dots,\hat{L}_m)\in \hat{\mathcal{F}}_i$ such that $L_j\leq \hat{L}_j \leq L_j(1+\delta)^i$.
\end{itemize}

In Algorithm \ref{alg:1}, we set the parameter $\delta=\epsilon/n$. Due to Lemma~\ref{lemma:prepocessing}, $n\in O(m/\epsilon)$, the total running time is bounded by $(m/\epsilon)^{O(m)}$. Let $(n,\hat{L}_1,\dots,\hat{L}_m)$ be the state with the minimum objective in $\hat{\mathcal{F}}_n$, and $(n,L_1^*,\dots,L_m^*)$ be the state with the minimum objective in $\mathcal{F}_n$. Hence, $(n,\hat{L}_1,\dots,\hat{L}_m)$ is an $(1+O(\epsilon))$-approximate solution, as, 

\[ \sum_i \hat{L}_i^q \leq (1+\delta)^{nq}
 \sum_i {L_i^*}^q  \leq (1+2q\epsilon) OPT,\]
 where $q$ is a fixed constant.

\begin{algorithm}[tb]
	\caption{Pseudo Code Description of AL$_1$}
	\label{alg:1}
	\textbf{Input}: $I,\epsilon$\\
	\textbf{Output}: $\min \{\sum_{i=1}^m \hat{L}_i^q: (\hat{L}_1,\dots,\hat{L}_m) \in \hat{\mathcal{F}_n} \}$
	\begin{algorithmic}[1]
		\STATE{$\delta=\epsilon/n$}
		\STATE{$\Gamma = \{[0], [\epsilon,\epsilon(1+\delta)),[\epsilon(1+\delta),\epsilon(1+\delta)^2),\dots \}$}
		\STATE{$\hat{\mathcal{F}}_0=\{(0,\dots,0)\}$ }
		\FOR{$j=1$ to $n$}
		\STATE {$\mathcal{F}'_j=\emptyset$}
		\FORALL {$(L_1,\dots,L_m) \in \mathcal{F}_{i-1}$}
		\FORALL{$i\in[1,m]$ and $L_i+p_j\leq 2$}
		\STATE{$\mathcal{F}'_j= \mathcal{F}'_j \cup (L_1,\dots,L_{i-1},L_i+p_j,L_{i+1},\dots,L_m)$}
		\ENDFOR
		\ENDFOR
		\STATE{$\hat{\mathcal{F}}_j=\emptyset$}
		\FORALL{$S\in \Gamma^m$}
		\FOR{$i=1$ to $m$}
		\STATE{$\hat{L}_i = \max \{L_i: (\dots,L_i,\dots)\in \mathcal{F}'\cap S \}$}
		\ENDFOR
		\STATE{$\hat{\mathcal{F}}_j = \hat{\mathcal{F}}_j \cup (\hat{L}_1,\dots,\hat{L}_m)$}
		\ENDFOR
		
		\ENDFOR
	\end{algorithmic}
\end{algorithm}

\subsection{Algorithm 2}\label{appsubsec:al2}
We first recall Lemma~\ref{lemma:al2} and then present its proof.
\begin{T1}
  	Consider an instance after the preprocessing of Lemma~\ref{lemma:prepocessing}. For any $\epsilon>0$, there exists an algorithm  \textrm{AL}$_{2}$ that outputs an $(1+O(\epsilon))$-approximation solution for $P||\sum_i C_i^q$ within $m^{\tilde{O}(1/\sqrt{\epsilon})}$ time.
\end{T1} 
\begin{proof}
Based on Lemma \ref{thm:structure}, we design \textrm{AL}$_{2}$ as a dynamic program as follows.

We say that a vector  $(i,v,u_1,u_2,\ldots,u_\tau)$ is a valid state if it is possible to assign the $u_h$ largest jobs in $\mathcal{G}_h$, for each $h$, on the first $i$ machines with the objective equal to $v$. In our algorithm, for each $i\in\{0,\ldots,m\}$, we construct a set $\mathcal{F}_i$ of valid states. Start from $(0,\dots,0)\in \mathcal{F}_0$. To construct the valid states in $\mathcal{F}_{i}$, we consider a state in $\mathcal{F}_{i-1}$ and try all possible assignments of jobs to machine $i$ that respect the ordering of jobs given by Lemma~\ref{thm:structure} and the load bound for each machine implied by Lemma~\ref{lemma:prepocessing}. Given the sets $\mathcal{F}_{0},\ldots,\mathcal{F}_{m}$, the answer can be found by searching the state with the minimum objective in $\mathcal{F}_m$. In order to limit the number of states, we can eliminate \emph{dominated} states. Namely, if two states $(i,v,u_1,\dots,u_\tau)$ and $(i,v',u_1,\dots,u_\tau)$ in $\mathcal{F}_i$ satisfy $v'<v$, then we say that  $(i,v',u_1,\dots,u_\tau)$ is dominated and delete it from $\mathcal{F}_i$. %{\color{red}Here we emphasize that since we do not round job processing times (but only group them), the objective value $v$ is not a rounded value, either. To make sure that in any partial solution we can compute the accurate objective value by only knowing the number of jobs in each group, the dynamic programming must follow the order specified by Lemma~\ref{thm:structure}.}
 
The overall running time of \textrm{AL}$_{2}$ can be bounded as follows. Recall that by Lemma~\ref{lemma:prepocessing}, $p_j\ge \epsilon$ and that the load of each machine is at most 2. Hence, each state in $\mathcal{F}_{i-1}$ can give rise to at most $(2/\epsilon+1)^{\tau}$ (dominated or undominated) states in $\mathcal{F}_i$. As also the number of jobs in each set $\mathcal{G}_h$ is bounded by $2m/\epsilon$, the number of undominated states in $\mathcal{F}_{i}$ is at most $(2m/\epsilon+1)^{\tau}=m^{\tilde{O}(\sqrt{1/\epsilon})}$. Hence, each set $\mathcal{F}_i$ can be constructed in time $(2/\epsilon+1)^{\tau}m^{\tilde{O}(\sqrt{1/\epsilon})}=m^{\tilde{O}(\sqrt{1/\epsilon})}$, which implies the same bound for the overall running of our dynamic programming algorithm. The lemma follows.
\end{proof}

\subsection{Algorithm 3}\label{appsubsec:al3}
%{\color{red} should change $OPT(\bar{I})$ to $\overline{OPT}(I)$?}
The goal of this subsection is to prove the following lemma.
\begin{lemma}\label{lemma:bin-packing-scheduling}
Consider an instance after the preprocessing of Lemma~\ref{lemma:prepocessing}. For any $\epsilon>0$, there exists an algorithm that outputs a feasible schedule for well structured instance of $P||\sum_i C_i^q$ whose objective value is at most $OPT+O(\log^2 m)$ within $(1/\epsilon)^{O(1)}+n^{O(1)}$ time.
%There exists a polynomial time algorithm AL$_2$ that produces a feasible schedule of objective value at most $(1+\epsilon)OPT+O(\log^3 m)$ for $I_{pre}$, while noticing that $OPT\ge m$.
\end{lemma}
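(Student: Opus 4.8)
The plan is to adapt the Karmarkar--Karp bin-packing machinery to the $\ell_q$-objective through a configuration LP. After the preprocessing of Lemma~\ref{lemma:prepocessing} the instance has total size $m$, all processing times in $[\epsilon,1]$, an optimal solution with loads in $[1/2,2]$, and, being a rounded instance, at most $\mathrm{poly}(1/\epsilon)$ distinct processing times (also $m\le n$). Call a \emph{configuration} $C$ a multiset of processing times whose total load $\ell(C)$ is at most $2$, write $n_{s,C}$ for the multiplicity of size $s$ in $C$ and $N_s$ for the number of jobs of size $s$, and consider
\[
\min\ \sum_C x_C\,\ell(C)^q \quad\text{s.t.}\quad \sum_C x_C=m,\quad \sum_C x_C\,n_{s,C}=N_s\ \text{ for every size }s,\quad x\ge 0 .
\]
The optimal integral schedule (each machine is a configuration of load at most $2$) is feasible, so the LP optimum is at most $OPT$. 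First I would solve this LP up to an additive $\delta=1/\mathrm{poly}(n)$ in time $n^{O(1)}+\mathrm{poly}(1/\epsilon)$: there are only $\mathrm{poly}(1/\epsilon)$ size-classes, so the dual has $\mathrm{poly}(1/\epsilon)$ variables and one runs the ellipsoid method with an approximate separation oracle; separating a dual inequality amounts to maximizing $\sum_s\beta_s n_{s,C}-\ell(C)^q$ over configurations, which, after discretizing the admissible value $\ell(C)\in[0,2]$ into $\mathrm{poly}(1/\epsilon)$ steps, is a bounded knapsack problem solved approximately by an FPTAS. This yields a basic near-optimal solution $x^*$ whose support has size at most one plus the number of size-classes, i.e.\ $\mathrm{poly}(1/\epsilon)$.

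Next comes the iterated rounding following Karmarkar--Karp. Place $\lfloor x^*_C\rfloor$ copies of each $C$ --- the schedule is recorded by listing, for each of the $\mathrm{poly}(1/\epsilon)$ support configurations, how many machines use it --- which occupies $m-m_1$ machines at total cost $\sum_C\lfloor x^*_C\rfloor\ell(C)^q\le OPT+\delta$. The leftover is a residual instance $(J^{(1)},m_1)$ with $m_1\le|\mathrm{supp}(x^*)|=\mathrm{poly}(1/\epsilon)$ machines and total size $\sum_C\{x^*_C\}\ell(C)\le 2m_1$. The key bookkeeping observation is that the vector of fractional parts $(\{x^*_C\})_C$ is itself a feasible fractional solution of the configuration LP of $(J^{(1)},m_1)$, so that LP's optimum is at most the ``leftover cost'' $\sum_C\{x^*_C\}\ell(C)^q$; together with the cost already placed, the budget telescopes back to the global LP optimum (and by convexity of $t\mapsto t^q$ the leftover cost also dominates the cost of the perfectly balanced schedule of $J^{(1)}$, the target one would like to approach). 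One then regroups $J^{(1)}$ --- rounding job sizes up so as to, say, halve the number of size-classes, while the freshly displaced jobs have bounded total load and hence inflate $\sum_iC_i^q$ by only $O(\log m)$ --- and recurses, the residuals shrinking geometrically in their number of size-classes. After $O(\log(1/\epsilon))=O(\log m)$ levels the residual has a constant number of size-classes and machines and is solved directly (e.g.\ by $\mathrm{AL}_1$ or a configuration dynamic program) at an additive cost $O(1)$; undoing all the regroupings only decreases loads.

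Summing up, the produced schedule has cost at most $OPT+\delta+(\text{LP errors})+(\text{regrouping errors})+O(1)=OPT+O(\log^2 m)$, the quadratic term being the product of the $O(\log m)$ recursion depth and the $O(\log m)$ charged per level. The running time is $n^{O(1)}$ for the input, the preprocessing and the first LP (whose constraint matrix has $\mathrm{poly}(1/\epsilon)$ rows and is handled implicitly), plus $\mathrm{poly}(1/\epsilon)$ for the remaining $O(\log(1/\epsilon))$ levels, each operating on a residual instance of size $\mathrm{poly}(1/\epsilon)$.

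I expect the main obstacle to be the quantitative calibration of the recursion: every regrouping step must reduce the number of distinct sizes fast enough for the recursion to terminate within $\mathrm{poly}(1/\epsilon)$ time, and yet be cheap enough that the displaced jobs add only $O(\log m)$ to $\sum_iC_i^q$ at that level; achieving both at once requires reproducing the precise Karmarkar--Karp grouping (balanced groups together with the ``shift'' argument, now accounted against the $\ell_q$-cost rather than against a bin count) with care, and is the one place where a naive implementation breaks. Two smaller technical points are the nonlinear --- possibly irrational --- term $\ell(C)^q$ inside the separation oracle, handled by the $\epsilon/\mathrm{poly}(n)$-approximate evaluation already adopted in Section~\ref{sec:upper} together with the value discretization above, and the requirement that the slack stay truly additive ($OPT+O(\log^2 m)$ rather than $(1+O(\epsilon))OPT$), which is why we never perturb the configuration-LP value and only ever incur errors of the shape ``a few extra jobs of bounded load,'' each changing the objective by an additive constant.
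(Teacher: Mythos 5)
Your proposal follows essentially the same route as the paper: interpret the problem via a configuration LP with cost $\ell(C)^q$ per configuration, solve it approximately by the ellipsoid method with a knapsack-FPTAS separation oracle, iterate a Karmarkar--Karp-style round-down-and-recurse on the residual, control the number of size classes by a harmonic-grouping step, and pay for the discarded jobs at the end after noting that adding a group of total size $O(1)$ to a machine of $O(1)$ load increases $\sum_i C_i^q$ by $O(1)$. The $O(\log m)$ recursion depth times $O(\log m)$ discarded load per level is exactly the paper's accounting.

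There is, however, one genuine gap that would stop the argument as written. You claim that after the preprocessing of Lemma~\ref{lemma:prepocessing} the instance has only $\mathrm{poly}(1/\epsilon)$ distinct processing times, and you use this to argue that the first LP's extreme point has $\mathrm{poly}(1/\epsilon)$ support and hence that the residual has only $\mathrm{poly}(1/\epsilon)$ machines. Lemma~\ref{lemma:prepocessing} guarantees no such bound: it only gives $\text{size}(I)=m$, processing times in $[\epsilon,1]$, and an optimum with loads in $[1/2,2]$ --- ``rounded'' there refers to grouping small jobs, not to collapsing sizes to a small set. So the first configuration LP can have up to $n$ equality constraints, its extreme point can have support of size up to $n+1$, and the residual can have as many machines as $m$, i.e.\ no progress at all in the first step. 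The paper avoids this by performing the harmonic grouping \emph{inside} the loop, \emph{before} each LP solve, so that $\chi\le\text{size}(I)/2\le m(I)/2$ at the time the LP is set up, making the support $O(m/2)$ and hence halving the number of residual machines each iteration. In other words, the fix is just to move your ``regroup'' step to the front of each iteration rather than applying it only to the residual after the first solve; with that reordering the rest of your argument goes through. A smaller omission: the paper also does a balancing step (making all loads differ by at most $1$) before placing the $O(\log^2 m)$ groups of discarded jobs, so that every machine load stays $O(1)$ and each added group costs only an additive constant; you gesture at this (``each changing the objective by an additive constant'') but don't supply the balancing, which is what makes the claim true.
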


Recall Lemma~\ref{lemma:prepocessing}, and that an approximation scheme for well structured instances also implies an approximation scheme for general instances. Given Lemma~\ref{lemma:bin-packing-scheduling} and the fact that $OPT\ge m$ for well structured instances, we know that the additive error $O(\log^2 m) \le \epsilon OPT$ if $m=\Omega(1/\epsilon\log^2 (1/\epsilon))$. Hence, Theorem~\ref{thm:algorithm} is proved.

In the following, we present Algorithm $\ref{alg:3}$, which is the algorithm claimed in Lemma~\ref{lemma:bin-packing-scheduling}. Algorithm $\ref{alg:3}$ modifies upon the famous algorithm for bin packing by Karmarkar and Karp \cite{karmarkar1982efficient}. Given an instance $I$, $m(I)$ denotes the number of machines, and $\text{size}(I)$ denotes the total processing time of jobs. To exclude the trivial cases, if $I$ consists of less than $m(I)$ jobs, then it is obvious that the optimal solution assigns each job to a separate machine. Hence, in the following parts, we can assume without of loss generality there are more than $m(I)$ jobs.

We give a very high-level description. The scheduling problem can be interpreted as a bin packing problem where the bin number is a constraint, and the objective is to minimize the cost of bins instead of minimizing the number of bins (where the cost of a bin is its load to the power of $q$). Under such an interpretation, we are able to iteratively apply the {\it harmonic grouping scheme} to round job processing times and establish a {\it configuration LP} for the rounded instance. Based on the extreme point solution of the configuration LP, we assign jobs to roughly $m(I)/2$ machines and continue with the remaining jobs and machines. %(we call them the {\it residue instance}). The main technical challenge is that, unlike the bin packing problem where its residue instance remains as a bin packing instance with the same bin capacity, in our problem, $AVG(I)$ varies 

%Consider an arbitrary instance $I$ and let $m(I)$ be the available machines and $AVG(I)$ be the average load. 

As discussed in Lemma \ref{lemma:prepocessing}, here we assume the processing time of all jobs in $I$ are within $[\epsilon,1]$. Suppose there are $\chi$ distinct job processing times with $b_1$ jobs of processing time $p_1$, $b_2$ jobs of processing time $b_2$, ... , $b_\chi$ jobs of processing time $p_\chi$. Consider the the subset of jobs that can be scheduled on a single machine, which can be characterized by a $\chi$-tuple $\vet_j=(t_{1j},t_{2j},\cdots,t_{\chi j})$ where $t_{ij}$ indicates the number of jobs of processing time $p_i$ on this machine. We call any $\vet_j$ with $t_{ij}\le b_i$ for every $i$ as a configuration. Let $N$ denote the number of configurations, let $\vet_1,\vet_2,\cdots,\vet_N$ be a complete enumeration of them. 

We establish a configuration integer program for instance $I$ as follows. We introduce a variable $x_j$ for each configuration $\vet_j$ which indicates the number of machines which is scheduled according to $\vet_j$. Consequently, all machines of configuration $\vet_j$ accommodate $t_{ij}x_j$ jobs of processing time $p_i$. We define the load, or total processing time of configuration $\vet_j$ as $L(\vet_j)=\sum_{i=1}^\chi t_{ij}p_i$. We define the cost of configuration $\vet_j$ as $v_j=(\sum_{i=1}^\chi t_{ij}p_i)^q$.

\begin{subequations}
\begin{eqnarray}
\text{Conf-IP($I$)}: &\min& \sum_{j=1}^N {v}_jx_j \nonumber\\
&s.t.& \sum_{j=1}^N x_j= m(I)  \label{eq:11}\\
&& \sum_{j=1}^N t_{ij}x_j\ge b_i, \quad i=1,2,\cdots, \chi \label{eq:2}\\
&& x_j\in\mathbb{N}, \quad j=1,2,\cdots, N \nonumber
\end{eqnarray}
\end{subequations}

%Replacing $v_j$ with $\bar{v}_j$ in the objective function, we obtain Conf-IP($\bar{I}$). Let ${OPT}_{IP}(I)$ and ${OPT}_{IP}(\bar{I})$ be the optimal objective value of Conf-IP($I$) and Conf-IP($\bar{I}$), respectively. Any feasible solution to Conf-IP($\bar{I}$) is also a feasible solution to Conf-IP($I$) with the objective value differs by $O(\epsilon)$ times, in particular, ${OPT}_{IP}(\bar{I})\le (1+O(\epsilon)){OPT}_{IP}(I)$.
Let ${OPT}_{IP}(I)$ be the optimal objective value of Conf-IP($I$). 
Relaxing the integral constraint $x_j\in\mathbb{N}$ to $x_j\ge 0$ in Conf-LP(${I}$), we obtain a configuration linear programming Conf-LP(${I}$). Let $OPT_{LP}({I})$ be its optimal objective value, it is obvious that $OPT_{LP}({I})\le OPT_{IP}({I})$. We have the following lemma.

\begin{lemma}\label{lemma:conf}
	There exists an algorithm of running time polynomial in $\chi, \log(m/\epsilon)$ and $1/\varsigma$, and returns a feasible extreme point solution of objective value at most $OPT_{LP}({I})+\varsigma$ for Conf-LP(${I}$). %, where $OPT_{LP}({I})$ is the objective value of the optimal solution.
\end{lemma}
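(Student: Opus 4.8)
\textbf{Proof plan for Lemma~\ref{lemma:conf}.}
The plan is to solve Conf-LP$(I)$ approximately by running the ellipsoid method on its dual, equipped with an \emph{approximate} separation oracle, in the style of Karmarkar and Karp~\cite{karmarkar1982efficient}. The key structural observation is that Conf-LP$(I)$ has only $\chi+1$ constraints --- the covering constraints, one per distinct processing time, together with the machine-count equality --- so every basic feasible solution has support at most $\chi+1$. Hence, once a polynomially sized set $\mathcal{C}'$ of ``relevant'' configurations has been identified, it suffices to re-solve the LP restricted to the variables indexed by $\mathcal{C}'$ with an exact polynomial-time LP algorithm; the basic feasible solution returned is then an extreme point of the feasible region of the full Conf-LP$(I)$, with the desired objective value.

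First I would write down the dual: a free variable $y_0$ for the machine-count equality, variables $y_i\ge 0$ for the covering constraints, objective $\max\ m(I)\,y_0+\sum_{i=1}^{\chi}b_iy_i$, and the constraint associated with configuration $\vet_j$ being $y_0+\sum_{i=1}^{\chi}t_{ij}y_i\le v_j=\big(\sum_{i=1}^{\chi}t_{ij}p_i\big)^{q}$. The separation problem is therefore: given $(y_0,y_1,\dots,y_\chi)$, decide whether some tuple $\vet=(t_1,\dots,t_\chi)$ with $0\le t_i\le b_i$ satisfies $\sum_i t_iy_i>\big(\sum_i t_ip_i\big)^{q}-y_0$. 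The nonlinear right-hand side is handled by \emph{guessing the load} $L=\sum_i t_ip_i$: since $\text{size}(I)=m(I)\le m$ and $p_i\ge\epsilon$ (recall Lemma~\ref{lemma:prepocessing}), $L$ lies in $[\epsilon,\text{size}(I)]$, so it suffices to try the $O(\log_{1+\delta}(\text{size}(I)/\epsilon))$ values of the geometric grid $\{\epsilon(1+\delta)^{k}\}$ (and in fact, by superadditivity of $x\mapsto x^{q}$ for $q>1$, a near-optimal LP solution uses only configurations of load $O(1)$). For each grid value $L$ one solves the bounded knapsack problem $\max\{\sum_i t_iy_i:\sum_i t_ip_i\le L,\ 0\le t_i\le b_i,\ t_i\in\mathbb{Z}\}$ via its FPTAS and reports a violation whenever the approximate optimum exceeds $L^{q}-y_0$ by more than the accumulated slack. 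This yields a separation oracle running in time polynomial in $\chi$, $\log(m/\epsilon)$ and $1/\delta$, whose error can be made an arbitrarily small function of $\delta$ and of the knapsack precision.

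Feeding this oracle to the ellipsoid method yields, by the standard ``approximate separation implies approximate optimization'' argument, a dual solution within additive $\varsigma/2$ of the dual optimum together with the collection $\mathcal{C}'$ of all configurations ever returned by the oracle, whose cardinality (bounded by the number of ellipsoid iterations) is polynomial in $\chi$, $\log(m/\epsilon)$ and $1/\varsigma$. Restricting Conf-LP$(I)$ to the variables in $\mathcal{C}'$ gives an LP with polynomially many variables and $\chi+1$ constraints; choosing $\delta$ and the knapsack precision as $\mathrm{poly}(\varsigma/\text{size}(I))$ ensures that its optimum is within $\varsigma$ of $OPT_{LP}(I)$, and solving it exactly returns a basic feasible, hence extreme point, solution attaining that value. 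The overall running time is polynomial in $\chi$, $\log(m/\epsilon)$ and $1/\varsigma$, as claimed.

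The step I expect to be the main obstacle is coping with the nonlinear per-configuration cost $v_j=L(\vet_j)^{q}$: unlike the classical bin-packing configuration LP, the dual separation is not a single knapsack feasibility test, so one must justify that heavy configurations are irrelevant and, more delicately, track how the errors introduced by the $(1+\delta)$ discretization of $L$ and by the knapsack FPTAS propagate, so that the final guarantee is an \emph{additive} $+\varsigma$ on the LP value rather than a multiplicative one. A secondary point is to set up the weak-separation/weak-optimization reduction with an oracle that errs in the correct direction; this is routine but must be stated with care.
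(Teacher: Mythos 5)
Your proposal follows essentially the same route as the paper: both apply the ellipsoid method to the dual with a Karmarkar--Karp-style approximate separation oracle, where the exponential set of configuration constraints is handled by bucketing (you geometrically discretize the load $L$, the paper equivalently rounds the cost $v_j=L^q$ to powers of $1+\epsilon$) and solving a bounded knapsack FPTAS per bucket, and both then recover an extreme point by re-solving the primal restricted to the polynomially many configurations surfaced during the ellipsoid run. The paper only gestures at this last step (``by using exactly the same argument as that of Karmarkar and Karp''), whereas you spell it out; otherwise the arguments coincide.
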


\begin{proof}
We consider the dual of Conf-LP(${I}$):
\begin{subequations}
\begin{eqnarray}
\text{Dual-LP(${I}$)}: &\max& \sum_{i=1}^\chi b_iy_i-m(I)z \nonumber\\
&s.t.& \sum_{i=1}^\chi t_{ij}y_i-z\le {v}_j, \quad j=1,2,\cdots, N \label{eq:dual1}\\
&& y_i\ge 0, \quad j=1,2,\cdots, N \nonumber
\end{eqnarray}
\end{subequations}

We use a similar algorithm as that for the classical bin packing problem. We employ the ellipsoid method to solve Dual-LP($\bar{I}$). We give a brief description. %We first rewrite Dual-LP(I) as a feasibility test problem by incorporating the objective function into the constraints, say, $\sum_{i=1}^\chi b_iy_i-mz\ge \lambda$ for some fixed value $\lambda$. 
The ellipsoid method iteratively computes a sequence of ellipsoids $E_0,E_1,\cdots,$. In each iteration, it implements a separation oracle to check whether the center of the current ellipsoid $E_k$, say, $(\vey^k,z^k)=(y_1^k,\cdots,y_\chi^k,z^k)$, is feasible. If it is, then it outputs a cut $\veb\cdot\vey-m(I)z\ge \veb\cdot\vey^k-m(I)z^k$ where $\veb=(b_1,\cdots,b_{\chi})$; Otherwise, it finds out a violating constraint, say, $\vet_j\cdot \vey-z\le\bar{v}_j$, and outputs $\vet_j\cdot\vey-z\le \vet_j\cdot \vey^k-z^k$. Incorporating the cut output by the separation oracle, the ellipsoid method computes a new ellipsoid $E_{k+1}$ and guarantees that the volume of the new ellipsoid is smaller than $E_k$ by a factor of $e^{-\frac{1}{5(\chi+1)}}$. After a polynomial number of iterations (specifically, which is polynomial in $\log 1/\varsigma$), the ellipsoid method finds a near-optimal feasible solution with an additive error of $\varsigma$.

In their seminal work, Karmarkar and Karp \cite{karmarkar1982efficient} further prove that to compute an approximate solution to Conf-LP(${I}$) up to an additive precision of $\varsigma$, it suffices to construct an approximate separation oracle such that in each iteration, instead of checking whether the center $(\vey^k,z^k)$ is feasible and returns a violating constraint if it is infeasible, the approximate separation oracle checks a point $(\tilde{\vey}^k,\tilde{z}^k)$ and does the following:
\begin{itemize}
\item If $(\tilde{\vey}^k,\tilde{z}^k)$ violates a constraint, say, $\vet_j\cdot\tilde{\vey}^k-\tilde{z}^k> {v}_j$, then outputs cut $\vet_j\cdot\vey-{z}\le\vet_j\cdot\vey^k-\tilde{z}^k$;
\item If $(\tilde{\vey}^k,\tilde{z}^k)$ does not violate any constraint, then outputs
$\veb\cdot{\vey}-m(I)z\ge \veb\cdot\tilde{\vey}^k-m(I)\tilde{z}^k$.
\end{itemize}

Karmarkar and Karp showed that ellipsoid method equipped with the approximate separation oracle can return a near-optimal solution within an additive error of $O(\varsigma)$ as long as the point $(\tilde{\vey}^k,\tilde{z}^k)$ in each iteration satisfies that 
\begin{itemize}
\item For any constraint, if $\vet_j \cdot\tilde{\vey}^k-\tilde{z}^k>{v}_j$, then $\vet_j \cdot{\vey}^k-{z}^k>{v}_j$;
\item For the objective function, $\veb\cdot\vey^k-m(I)z^k\le \veb\cdot\tilde{\vey}^k-m(I)\tilde{z}^k+\varsigma$.
\end{itemize}

Here the first property ensures that if $(\tilde{\vey}^k,\tilde{z}^k)$ is infeasible, then $({\vey}^k,{z}^k)$ is also infeasible by violating the same constraint, and therefore the approximate separation oracle proceeds exactly the same as an (accurate) separation oracle. The second property ensures that the approximate separation oracle will never cut off a feasible point whose objective value is significantly better than $(\tilde{\vey}^k,\tilde{z}^k)$ by $\varsigma$, and hence ensures the near-optimality. 

Now we describe our approximate separation oracle as follows. We first round $v_j$ {\it up} to be the nearest value of the form $(1+\epsilon)^k$ and let it be $\bar{v}_j$. Notice that there are a polynomial number of distinct rounded values. Given an arbitrary point $(\vey^k,z^k)$, we consider all inequalities of the form 
$$\vet_j\vey^k\le \bar{v}_j+z^k.$$ 
%Now we employ ellipsoid method with approximate separation oracle to Dual-LP($I$). Given $(\vey^k,z^k)$, 

Our goal is to find out a violating constraint or determine there is none. Since there are only a polynomial number of different values for $\bar{v}_j$'s, we can sequentially check for every value $(1+\epsilon)^h$, whether $(\vey^k,z^k)$ violates the constraint $\vet_j\vey^k\le (1+\epsilon)^h+z^k$ for all configurations $\vet_j$ such that $(1+\epsilon)^{h-1}<(\vet_j\vep)^q\le (1+\epsilon)^h$ where $\vep=(p_1,\cdots,p_\chi)$. We argue that we can drop the lower bound by sequentially checking for every value $(1+\epsilon)^h$, whether $(\vey^k,z^k)$ violates the constraint $\vet_j\vey^k\le (1+\epsilon)^h+z^k$ for all configurations $\vet_j$ such that $(\vet_j\vep)^q\le (1+\epsilon)^h$. 
This is because that if $(\vey^k,z^k)$ violates $\vet_j\vey^k\le (1+\epsilon)^h+z^k$ but $(\vet_j\vep)^q\le (1+\epsilon)^{h-1}$, say, $(\vet_j\vep)^q$ rounded up to $(1+\epsilon)^{h'}$ for $h'<h$, then $(\vey^k,z^k)$ also violates $\vet_j\vey^k\le (1+\epsilon)^{h'}+z^k$, which will be found out already. Hence, finding a violating constraint is equivalent as finding a vector $\vet\le \veb$ such that $(\vet\cdot\vep)^q\le (1+\epsilon)^h$ and $\vet\cdot\vey^k$ is maximized, and comparing this maximal value with $z^k+(1+\epsilon)^k$. This is a knapsack problem which admits a fully polynomial time approximation scheme (FPTAS). More precisely, using the same method as Karmarkar and Karp~\cite{karmarkar1982efficient}, we can round $\vey^k$ to some value $\tilde{\vey}^k$ close enough such that 
\begin{itemize}
\item For any $\chi$-dimensional vector $\ved$ whose coordinates are non-negative and $\|\ved\|_{1}\le n$, $0\le \ved\cdot {\vey}^k-\ved\cdot \tilde{\vey}^k\le \varsigma$;
\item In polynomial time (specifically, polynomial in $1/\varsigma$), we are able to find $\vet^*$ such that by taking $\vet=\vet^*$, $\vet\cdot\tilde{\vey}^k$ is maximized subject to $(\vet\cdot\vep)^q\le (1+\epsilon)^h$.
\end{itemize}
%in polynomial time  we are able to find $\vet^*$ such that by taking $\vet=\vet^*$, $\vet\cdot\tilde{\vey}^k$ is maximized subject to $(\vet\cdot\vep)^q\le (1+\epsilon)^k$, and moreover, $0\le \vet\cdot {\vey}^k-\vet\cdot\tilde{\vey}^k\le \varsigma$. 

Overall, our above argument ensures that in polynomial time we either determine some configuration $\vet_j$ such that $\vet_j\tilde{\vey}^k> \bar{v}_j+z^k$ for some $j$, and hence 
$$\vet_j{\vey}^k+z^k\ge \vet_j\tilde{\vey}^k \bar{v}_j+z^k>\bar{v}_j+z^k\ge v_j+z^k;$$
or we conclude there is no such configuration and guarantee that $\veb\cdot\vey^k-m(I)z^k\le \veb\cdot\tilde{\vey}^k-m(I)\tilde{z}^k+\varsigma$.

Hence, there exists an approximate separation oracle for Dual-LP($I$), indicating that Dual-LP($I$) can be solved using the ellipsoid method in polynomial time (up to arbitrary precision). Notice that the derived solution may not necessarily be an extreme point, however, by using exactly the same argument as that of Karmarkar and Karp \cite{karmarkar1982efficient}, we can make it into an extreme point solution.
\end{proof}
%Lemma~\ref{lemma:conf} allows us to solve Conf-LP($\bar{I}$) efficiently. 

%\begin{lemma}
%Either $I$ consists of less than $m(I)$ jobs, or we may assume without loss of generality that $\sum_{j=1}^N x_j(I)= m(I)$.
%\end{lemma}
%\begin{proof}
%If $I$ only consists of no more than $m(I)$ jobs, then each job is scheduled on one distinct machine in $\vex(I)$ and we say $I$ is trivial. Otherwise $I$ is non-trivial, and we prove that $\sum_{j=1}^N x_j(I)= m(I)$. To see the equation, note that there are more than $m(I)$ jobs, so there exists at least one configuration, say, $\vet_k$, such that $x_k(I)>0$ and $\vet_k$ contains at least two jobs. Let $\delta=\min\{x_k(I),m(I)-\sum_jx_j(I)\}$ and consider a new solution $x'(I)$ that split $\delta$ fraction of $\vet_k$ into $\delta$ fraction of configuration $\vet_{k_1}$ and $\delta$ fraction of $\vet_{k_2}$ such that the jobs in $\vet_{k_1}$ and $\vet_{k_2}$ are exactly the same as the jobs in $\vet_k$. Using the fact that $(a+b)^p>a^p+b^p$ for any $a>0$, $b>0$, $p>1$, we know that the new solution $\vex'(I)$ achieves a smaller objective function for Conf-LP(${I}$). Therefore we may assume without loss of generality that if $I$ is non-trivial, then the extreme point solution $\vex(I)$ returned by the algorithm of Lemma~\ref{lemma:conf} satisfy that $\sum_{j=1}^N x_j(I)= m(I)$.
%\end{proof}
 
Consider the near-optimal extreme point solution $\vex(I)$ given by Lemma~\ref{lemma:conf}. We denote by $\lfloor \vex(I)\rfloor=(\lfloor x_1(I)\rfloor,\cdots, \lfloor x_N(I)\rfloor)$ the rounded solution. Similar as the algorithm for bin packing, we assign jobs to machine according to $\lfloor \vex(I)\rfloor$ and then proceed with the residue instance $I_{res}(\vex)$. Denote by $I_{res}(\vex)$ the {\it residue instance} where we take away jobs scheduled according to $\lfloor \vex(I)\rfloor$ from the original instance $I$, i.e., $I_{res}$ consists jobs in $\vex(I)-\lfloor \vex(I)\rfloor$. It is easy to see $I_{res}(\vex)$ consists of $m(I_{res}(\vex))=m(I)- \sum_j \lfloor x_j(I) \rfloor$ machines. 

\begin{lemma}
	$OPT_{LP}(I_{res}(\vex)) + OPT_{LP}(I\setminus I_{res}(\vex)) \leq OPT_{LP}(I)$
\end{lemma}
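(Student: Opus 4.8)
The plan is to split the extreme point $\vex=\vex(I)$ returned by Lemma~\ref{lemma:conf} into its integral and fractional parts and to read off feasible solutions for the two sub-instances, whose objective values add up to the objective value of $\vex$. Write $\vex=\lfloor\vex\rfloor+\vex'$, where $\vex'=\vex-\lfloor\vex\rfloor$ has all coordinates in $[0,1)$. Recall that, by definition, the job multiset of $I\setminus I_{res}(\vex)$ is exactly the multiset scheduled by $\lfloor\vex\rfloor$ and that $I\setminus I_{res}(\vex)$ has $\sum_j\lfloor x_j\rfloor$ machines, while $I_{res}(\vex)$ contains the remaining jobs and has $m(I)-\sum_j\lfloor x_j\rfloor$ machines.

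First I would verify that $\lfloor\vex\rfloor$ is feasible for Conf-LP$(I\setminus I_{res}(\vex))$: the right-hand sides $b_i$ of the covering constraints~\eqref{eq:2} for this instance are, by construction, $\sum_j t_{ij}\lfloor x_j\rfloor$, so~\eqref{eq:2} holds with equality, and $\sum_j\lfloor x_j\rfloor$ equals the number of machines, so~\eqref{eq:11} holds. Next I would verify that $\vex'$ is feasible for Conf-LP$(I_{res}(\vex))$: the covering requirement of size-$p_i$ jobs in $I_{res}(\vex)$ is $b_i-\sum_j t_{ij}\lfloor x_j\rfloor$, so the inequality $\sum_j t_{ij}x'_j\ge b_i-\sum_j t_{ij}\lfloor x_j\rfloor$ is precisely constraint~\eqref{eq:2} for $\vex$ on the original instance $I$; and $\sum_j x'_j=\sum_j x_j-\sum_j\lfloor x_j\rfloor=m(I)-\sum_j\lfloor x_j\rfloor$ is exactly the number of machines of $I_{res}(\vex)$, so~\eqref{eq:11} holds with equality (no padding is needed). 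Since the cost vector $(v_j)_j$ is identical across all three configuration LPs, these two feasible solutions give $OPT_{LP}(I\setminus I_{res}(\vex))\le\sum_j v_j\lfloor x_j\rfloor$ and $OPT_{LP}(I_{res}(\vex))\le\sum_j v_j x'_j$; adding them yields $OPT_{LP}(I_{res}(\vex))+OPT_{LP}(I\setminus I_{res}(\vex))\le\sum_j v_j x_j=OPT_{LP}(I)$, where the last equality uses that $\vex$ is an optimal LP solution (the additive slack $\varsigma$ from Lemma~\ref{lemma:conf} is absorbed into the overall error analysis, so we may take $\vex$ to be an exact optimal extreme point here).

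The step I expect to require the most care is the bookkeeping behind the claim that the job multisets of $I\setminus I_{res}(\vex)$ and of $I_{res}(\vex)$ partition that of $I$; this needs that $\lfloor\vex\rfloor$ never "requests" more size-$p_i$ jobs than $I$ actually contains, i.e. $\sum_j t_{ij}\lfloor x_j\rfloor\le b_i$, and more generally that each configuration used fractionally by $\vex$ is still a valid configuration for $I_{res}(\vex)$. This is where one invokes the structure produced by the harmonic-grouping step (few distinct processing times, bounded configurations) and the precise definition of the residual instance; once this is granted, the remainder is just the summation above. I would also remark that the reverse inequality $OPT_{LP}(I)\le OPT_{LP}(I_{res}(\vex))+OPT_{LP}(I\setminus I_{res}(\vex))$ holds trivially, by concatenating any pair of feasible solutions of the two sub-instances, so in fact the three LP values coincide.
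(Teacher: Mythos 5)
Your proof takes the same route as the paper's own two-sentence argument: split the (near-)optimal extreme point $\vex$ into its integral part $\lfloor\vex\rfloor$ and fractional part $\vex-\lfloor\vex\rfloor$, observe that these are feasible for Conf-LP$(I\setminus I_{res}(\vex))$ and Conf-LP$(I_{res}(\vex))$ respectively, and sum the objective values. Your write-up is correct and more detailed than the paper's, and you are right to flag both the job-multiset bookkeeping and the $\varsigma$-slack from Lemma~\ref{lemma:conf}, which the paper silently absorbs into the overall error analysis.
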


\begin{proof}
	We know each configuration of instance $I$ is also a configuration of instance $I_{res}$ and $I\setminus I_{res}(\vex)$. Hence, for any solution $\vex(I)$ of instance $I$, $\lfloor \vex(I) \rfloor$ is an feasible solution of $I\setminus I_{res}(\vex)$ and $\vex(I)-\lfloor \vex(I) \rfloor$ is an feasible solution of $I_{res}(\vex)$.
\end{proof}

Similar to the bin packing algorithm by Karmarkar and Karp \cite{karmarkar1982efficient}, we will employ the {\it harmonic grouping scheme} to round the instance and apply Lemma~\ref{lemma:conf} on the rounded instance. It has to be noticed that the processing time of every job in the instance is no more than 1 which is guaranteed by Lemma~\ref{lemma:prepocessing}. The harmonic grouping works as follows: We deal with the job one by one in non-decreasing order of its processing time and pack the job into the current group. At any time, only one group is open. When the total processing time of jobs in the current group is at least 2, we close it and start a new group. By doing this, all jobs in $I$ are packed into $r$ groups i.e., $G_1,\dots,G_r$. We discard all jobs in $G_1$ along with $|G_{i-1}|-|G_i|$ jobs with smallest processing time in $G_i$ for each $i\in[2,r]$ and let it be $I_d$. For those remaining jobs, we lift the processing time to the largest one among their group and let it be $I'$. The {\it harmonic grouping scheme} has the following property \cite{williamson2011design}:
\begin{itemize}
	\item the number of distinct job processing times in $I'$ is at most  $\text{size}(I)/2$;
	\item $\text{size}(I_d) \in O(\log \text{size}(I))$.
\end{itemize}

\begin{lemma}
	$OPT_{LP}(I')\leq OPT_{LP}(I)$
\end{lemma}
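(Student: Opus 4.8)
The plan is to prove the inequality by a primal‑to‑primal argument: I will turn an arbitrary feasible solution of Conf-LP($I$) into a feasible solution of Conf-LP($I'$) whose objective value is no larger, and then apply this to an optimal solution of Conf-LP($I$). Since both LPs carry the same machine bound $m(I')=m(I)$, the transformation will preserve the constraint $\sum_j x_j=m(I)$ automatically. The transformation is a fixed combinatorial map at the level of configurations, along which the (possibly fractional) weights $x_j$ are merely carried; no rounding is performed, and the only property of the cost function used is that $t\mapsto t^q$ is nondecreasing on $t\ge 0$, which holds for every $q>1$.

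First I would record the structural facts produced by the harmonic grouping. Assume (relabelling if necessary) that the jobs are processed in nonincreasing order of processing time, so that $G_1$ holds the largest jobs, $|G_1|\le|G_2|\le\cdots\le|G_r|$, and $G_1$ is discarded entirely. For each $i\in\{2,\dots,r\}$ let $\bar p_i$ be the common processing time (the largest one occurring in $G_i$) assigned to the surviving jobs of $G_i$ after the lifting step. Then: (a) $I'$ contains exactly $|G_{i-1}|$ jobs of size $\bar p_i$; and (b) since consecutive groups are sorted by size, $\bar p_i$ is at most the smallest processing time occurring in $G_{i-1}$, so every job of $G_{i-1}$ has original processing time at least $\bar p_i$. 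Property (b) is exactly what makes rounding \emph{up} harmless here: a rounded job of $I'$ of size $\bar p_i$ can be charged to an original job of $G_{i-1}$ that is at least as large.

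Next I would define the configuration map. Let $\vet=(t_1,\dots,t_\chi)$ be a configuration of $I$, where $t_k$ counts the jobs of original processing time $p_k$. Define $\Phi(\vet)$ to be the configuration of $I'$ in which, for each $i\in\{2,\dots,r\}$, the number of jobs of size $\bar p_i$ equals $\sum_{k:\,p_k\in G_{i-1}}t_k$ — i.e. every job of $G_{i-1}$ appearing in $\vet$ is reinterpreted as a job of the smaller rounded size $\bar p_i$ — while all jobs of $G_r$ appearing in $\vet$ are deleted. Two checks are immediate. First, $\Phi(\vet)$ is a legitimate configuration of $I'$, because $\sum_{k:\,p_k\in G_{i-1}}t_k\le\sum_{k:\,p_k\in G_{i-1}}b_k=|G_{i-1}|$, which by (a) is exactly the number of size-$\bar p_i$ jobs present in $I'$. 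Second, the load does not increase: $L(\Phi(\vet))=\sum_{i=2}^r \bar p_i\sum_{k:\,p_k\in G_{i-1}}t_k\le\sum_{i=2}^r\sum_{k:\,p_k\in G_{i-1}}p_k t_k\le\sum_k p_k t_k=L(\vet)$ by (b), hence $v(\Phi(\vet))=L(\Phi(\vet))^q\le L(\vet)^q=v(\vet)$.

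Finally I would transfer the weights: given a feasible solution $\vex$ of Conf-LP($I$), set $x'_{\vet'}=\sum_{\vet:\,\Phi(\vet)=\vet'}x_{\vet}$ for each configuration $\vet'$ of $I'$. Then $\sum_{\vet'}x'_{\vet'}=\sum_{\vet}x_{\vet}=m(I)=m(I')$; for each $i$ the covering constraint holds since $\sum_{\vet'}(\#\bar p_i\text{-jobs in }\vet')\,x'_{\vet'}=\sum_{k:\,p_k\in G_{i-1}}\sum_{\vet}t_k x_{\vet}\ge\sum_{k:\,p_k\in G_{i-1}}b_k=|G_{i-1}|$, which is the demand of $I'$ for size $\bar p_i$; and $\sum_{\vet'}v(\vet')x'_{\vet'}=\sum_{\vet}v(\Phi(\vet))x_{\vet}\le\sum_{\vet}v(\vet)x_{\vet}$. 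Taking $\vex$ optimal gives $OPT_{LP}(I')\le OPT_{LP}(I)$. I expect the only point needing care to be precisely the bookkeeping behind (a)–(b): checking that each rounded size class of $I'$ is matched with one original group whose members are all at least as large \emph{and} at least as numerous, so that the covering constraints and the load inequality survive together. The fractionality of the LP is not an obstacle, since $\Phi$ is a deterministic map on configurations and the weights pass through it unchanged.
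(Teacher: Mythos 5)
Your proof is correct and follows essentially the same route as the paper's: replace each job of group $G_{i-1}$ in a configuration by the (no larger) lifted size $\bar p_i$ of group $G_i$, and carry the fractional weights along this configuration map. The paper compresses this into two sentences; you simply spell out the feasibility and cost bookkeeping in full (and, along the way, silently correct the paper's slip in the sort direction used to describe the harmonic grouping).
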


\begin{proof}
	Given any solution $\vex(I)$ of instance $I$, for each configuration we can replace each job $j$ in group $G_i$ with a job $j'$ in group $G_{i+1}$. In such a modified solution, all jobs of $I'$ are scheduled. Since $p_{j'}\leq p_j$ holds, the objective does not increase.
\end{proof}

Now we formally present Algorithm $\ref{alg:3}$. Given an instance $I$ of the scheduling problem,  we first apply the harmonic grouping scheme to obtain the rounded instance $I'$ along with another instance $I_d$ composed by the discarded jobs. Then we apply Lemma~\ref{lemma:conf} to derive a feasible solution $\vex(I')$ for $I'$ and assign jobs to machines according to $\lfloor\vex(I')\rfloor$ and close these machines. The remaining jobs $\vex(I')-\lfloor\vex(I')\rfloor$ and the remaining empty machines $m(I')-\|\lfloor\vex(I'_j)\rfloor\|_1$ (here $\|\cdot\|_1$ is the 1-norm, which counts the number of integral configurations) forms a new instance $I'_{res}$. In the next iteration, $I'_{res}$ servers as the input and we repeat this process until there are only constant machines left. For the instance with constant machines, we call Algorithm $\ref{alg:1}$. Then we do the balancing operation to make sure that for every two machines their load difference is at most 1. At last, we group all the discarded jobs into $O(\log^2m)$ groups where the total processing time of each group is at most 2. This can be easily done, since the processing time of each job is at most 1. We pick arbitrarily $O(\log^2m)$ machines and schedule each group of jobs on one machine.

\begin{algorithm}[tb]
    \setcounter{algorithm}{2}
	\caption{Pseudo Code Description of AL$_3$}
	\label{alg:3}
	
	\begin{algorithmic}[1]
		\STATE{$I_D=\emptyset$}
		\WHILE{$m(I)> 10$}
			\STATE{apply harmonic grouping scheme to create instance $I'$ and $I_d$}
			\STATE{$I_D=I_D\cup I_d$}
			\STATE{let $\hat{\vex}(I')$ be the $(\epsilon/\log m)$-balanced solution for Conf-IP($I'$)}
			\STATE{schedule $I'\setminus I'_{res}(\hat{\vex})$ according to $\lfloor \hat{\vex}(I') \rfloor$}
			\STATE{$I=I'_{res}(\hat{\vex})$}
		\ENDWHILE
		\STATE{schedule $I$ using Algorithm \ref{alg:1}}
		\WHILE{$\exists i,j$ s.t. $L_i-L_j>1$}
		\STATE{move the job with the largest processing time in machine $i$ to machine $j$}
		\ENDWHILE
		\STATE{schedule $I_D$ on arbitrarily $O(\log^2m)$ machines with max increasing load 2}
	\end{algorithmic}
\end{algorithm}

Finally, we estimate the overall loss incurred. In each iteration, only $m(I)/2+1$ variables take non-zero value in solution $I'_{res}(\hat{\vex})$. Hence, $m(I'_{res}(\hat{\vex}))\leq m(I)/2+1$ and after at most $O(\log m)$ rounds Algorithm \ref{alg:3} terminates. Each iteration introduces an additional cost of $\varsigma$, together with discarded jobs of total processing time $O(\log^2m)$, which need to be handled at last. Set $\varsigma=O(\log m)$ and observing that $OPT\ge OPT_{LP}(I)$ and $OPT\ge m$, we know the overall additional cost is bounded by $O(\log^2m)$. Now consider all the discarded jobs. Through Lemma~\ref{lemma:prepocessing} and the balancing operation, we know the load of machines in the solution is at most 3. Meanwhile, due to the convexity of the objective, the balancing operation does not increase the objective value. Given that $q$ is a constant, hence, the overall objective value increases by at most $O(\log^2 m)$, and Lemma~\ref{lemma:bin-packing-scheduling} is proved.

\section{Omitted Proofs in Section~\ref{subsec:maxsat} - Proof of Lemma~\ref{lemma:maxsat-eth2}}\label{appsec:sat}
The goal of this section is to prove the following lemma.
\begin{T3}
Assuming ETH, there exists a constant $\beta\in(0,1)$ such that for any sufficiently small $\epsilon',\delta>0$, it is not possible to distinguish between instances of 3SAT$\,'$ with $(1-\epsilon')\cdot 4n/3 $ clauses where at least $4n/3$ clauses are satisfiable, from instances where at most $(\beta+\epsilon')\cdot 4n/3$ clauses are satisfiable, in time $2^{O(n^{1-\delta})}$.
\end{T3}

Towards the proof, we start with the following result (see, e.g., Corollary 1 of~\cite{bonnet2015subexponential}).
\begin{lemma}\cite{moshkovitz2010two,bonnet2015subexponential}\label{lemma-cite:1}
Under ETH, for sufficiently small $\epsilon'>0$, and $\delta>0$, it is impossible to distinguish between instances of 3SAT with $\Lambda$ clauses where at least $(1-\epsilon')\Lambda$ are satisfiable from instances where at most $(7/8+\epsilon')\Lambda$ are satisfiable, in time $O(2^{\Lambda^{1-\delta}})$.
\end{lemma}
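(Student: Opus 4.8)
The plan is to derive Lemma~\ref{lemma-cite:1} by composing three standard ingredients — the Sparsification Lemma, an almost-linear-size PCP, and Håstad's optimal inapproximability of MAX-3SAT — and then chaining the running-time bounds.

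First I would isolate the consequence of ETH that is actually used: by the Sparsification Lemma of Impagliazzo, Paturi and Zane~\cite{impagliazzo2001complexity}, an arbitrary $3$CNF on $n$ variables can be written, in $2^{o(n)}$ time, as a disjunction of $2^{o(n)}$ formulas each with $O(n)$ clauses; hence, under ETH, $3$SAT restricted to instances with $n$ variables and $\Theta(n)$ clauses cannot be decided in time $2^{o(n)}$. Next, starting from such a sparse instance $\phi$ with $n$ variables and $N=\Theta(n)$ clauses, I would apply the two-query projection PCP (Label Cover) of Moshkovitz and Raz~\cite{moshkovitz2010two} with a sufficiently small \emph{constant} soundness $s=s(\epsilon')$. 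This yields, in polynomial time, a Label Cover instance $G$ of size $N^{1+o(1)}$ over a constant-size alphabet, with perfect completeness and soundness $s$, so that $\phi$ is satisfiable iff $G$ admits a labelling satisfying all constraints, and $\phi$ is unsatisfiable iff no labelling of $G$ satisfies more than an $s$-fraction of constraints.

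I would then feed $G$ into Håstad's long-code--based reduction to MAX-E3LIN (and, via the standard $4$-clause gadget per linear equation, to MAX-3SAT). Since the alphabet is constant, the long-code blowup is only a constant factor, so one obtains a $3$CNF $\Psi$ with $\Lambda=N^{1+o(1)}=n^{1+o(1)}$ clauses such that $\phi\in\mathrm{SAT}$ forces at least $(1-\epsilon')\Lambda$ clauses of $\Psi$ to be simultaneously satisfiable, while $\phi\notin\mathrm{SAT}$ forces at most $(7/8+\epsilon')\Lambda$ clauses of $\Psi$ to be satisfiable — the $7/8$ coming from the $1/2$ soundness of the E3LIN test combined with the gadget, and the freedom to take $s$ arbitrarily small absorbing all losses into $\epsilon'$.

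Finally I would chain the running times: an algorithm distinguishing $(1-\epsilon')$-satisfiable from $(7/8+\epsilon')$-satisfiable $3$CNFs with $\Lambda$ clauses in time $2^{O(\Lambda^{1-\delta})}$, preceded by the polynomial-time reduction above, decides $\phi$ in time $2^{O((n^{1+o(1)})^{1-\delta})}+\text{poly}(n)$; since the $o(1)$ in the PCP size bound tends to $0$ as $n\to\infty$ independently of $\epsilon'$, for all large $n$ we have $(1+o(1))(1-\delta)<1$, so this running time is $2^{o(n)}$, contradicting the first step and hence ETH. The hard part is not this chaining, which is routine, but the very existence of a PCP/Label Cover instance of \emph{near-linear} size $N^{1+o(1)}$ together with arbitrarily small constant soundness — this is exactly the content of the Moshkovitz--Raz construction and its PCP composition; given it, Håstad's gadget contributes only a constant blowup and the claimed gap and time bound follow.
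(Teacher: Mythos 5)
The paper does not actually prove this lemma; it cites it as Corollary~1 of~\cite{bonnet2015subexponential}, and your blind reconstruction correctly retraces the derivation underlying that citation: apply the Impagliazzo--Paturi--Zane Sparsification Lemma so that ETH rules out $2^{o(m)}$-time algorithms for $m$-clause 3SAT, run the Moshkovitz--Raz almost-linear-size two-query projection PCP with constant soundness (hence constant alphabet) to obtain a Label Cover instance of size $m^{1+o(1)}$, compose with H\aa stad's long-code test and the standard $4$-clause MAX-E3LIN$\to$MAX-3SAT gadget to get the $\bigl(1-\epsilon',\,7/8+\epsilon'\bigr)$ gap, and chain $(1+o(1))(1-\delta)<1$ to contradict ETH. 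Your proof is correct and takes essentially the same route as the cited source.
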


Applying the classical technique of constructing enforcer via expander for 3SAT (see, e.g., Theorem 5 of \cite{trevisan2004inapproximability}), we have the following,
\begin{lemma}\cite{trevisan2004inapproximability}\label{lemma-cite:2}
There exists a constant $d_0$ such that given a 3SAT formula $\phi$ with $\Lambda$ clauses, another 3SAT formula $\phi'$ with $\Lambda'=\Lambda+3d_0\Lambda=O(\Lambda)$ clauses can be constructed in polynomial time such that:
\begin{itemize}
\item Every variable occurs in at most $2d+1$ clauses in $\phi'$;
\item There is an assignment for $\phi$ where at most $k$ clauses are not satisfied if and only if there is an assignment for $\phi'$ such that at most $k$ clauses are not satisfied.
\end{itemize}
\end{lemma}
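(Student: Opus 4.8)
The plan is to establish Lemma~\ref{lemma-cite:2} through the classical \emph{expander-based equality enforcer}. First I would split occurrences: if a variable $x_i$ of $\phi$ occurs $t_i$ times, introduce $t_i$ fresh copies $x_i^{(1)},\dots,x_i^{(t_i)}$ and rewrite each of the $\Lambda$ clauses of $\phi$ using, for every literal, the copy attached to that occurrence (keeping polarities); since each clause has at most three literals, $\sum_i t_i\le 3\Lambda$. To force the copies to agree I place on the $t_i$ copies of $x_i$ a $d_0$-regular (multi)graph $H_i$ of edge expansion at least $1$, i.e. $e_{H_i}(S,\overline S)\ge\min(|S|,|\overline S|)$ for every vertex subset $S$; a large enough even constant $d_0$ works for all relevant sizes (random $d_0$-regular graphs, or small complete graphs when $t_i\le d_0$), and this is the source of the constant $d_0$. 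For every edge $\{u,v\}$ of $H_i$ I add the clauses $(\neg u\vee v)$ and $(u\vee\neg v)$, which encode $u=v$. This yields $\sum_i d_0 t_i$ equality clauses, hence $\Lambda'\le\Lambda+3d_0\Lambda=O(\Lambda)$ (an equality once every $H_i$ is made exactly $d_0$-regular), and each copy $x_i^{(j)}$ appears once in a rewritten original clause and twice per incident $H_i$-edge, so in at most $2d_0+1$ clauses; everything is polynomial-time computable.

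For the equivalence, the easy direction is the \emph{consistent lift}: from an assignment $a$ of $\phi$ with at most $k$ unsatisfied clauses, setting $x_i^{(j)}:=a(x_i)$ for all $i,j$ satisfies every equality clause and leaves a rewritten original clause unsatisfied exactly when the corresponding clause of $\phi$ was, so $\phi'$ has an assignment with at most $k$ unsatisfied clauses.

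The crux is the converse: turning an assignment $b$ of $\phi'$ with at most $k$ unsatisfied clauses into a \emph{consistent} assignment (all copies of each variable equal) without increasing the number of unsatisfied clauses — such an assignment then induces an assignment $a$ of $\phi$ with $\mathrm{unsat}_\phi(a)\le\mathrm{unsat}_{\phi'}(b)\le k$, because all equality clauses are satisfied and the rewritten original clauses match $\phi$. I would clean up one variable at a time: fix $x_i$, let $S_0$ and $S_1$ be the copies currently set to false and true with $|S_0|\le|S_1|$, and flip every copy in $S_0$ to true. Inside $H_i$ this repairs precisely the previously violated equality clauses — one per edge of the cut $(S_0,S_1)$, of which there are $e_{H_i}(S_0,S_1)\ge|S_0|$ — and creates no new violation there, while the equality clauses of $H_j$ for $j\ne i$ are untouched. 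On the original clauses, each flipped copy lies in a single clause, so at most $|S_0|$ rewritten clauses can switch from satisfied to unsatisfied. The net change is at most $|S_0|-e_{H_i}(S_0,S_1)\le0$. Since copies of distinct variables are disjoint, consistency established for earlier variables is preserved, so iterating over all variables produces the required consistent assignment.

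The main obstacle is exactly this cleanup step: obtaining the uniform bound $e_{H_i}(S_0,S_1)\ge\min(|S_0|,|S_1|)$ (which is why an expander, not a path or cycle, is needed) and checking that flipping the smaller side of the cut cannot unsatisfy more original clauses than the equality clauses it restores. The remaining points — parity of $d_0t_i$ and variables occurring at most once, and, if one insists on exactly three literals per clause, padding each two-literal equality clause with a dedicated dummy variable appearing exactly twice — are routine and do not affect the stated $O(\Lambda)$ and $2d_0+1$ bounds.
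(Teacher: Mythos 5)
The paper does not actually prove Lemma~\ref{lemma-cite:2}; it cites it as Theorem~5 of Trevisan~\cite{trevisan2004inapproximability}, explicitly describing it as ``the classical technique of constructing enforcer via expander for 3SAT.'' Your proof is a correct and complete rendering of precisely that classical argument: split each variable into one fresh copy per occurrence, connect the copies of each variable by a constant-degree graph of edge expansion at least $1$, encode each graph edge by two equality clauses $(\neg u\vee v)$ and $(u\vee\neg v)$, and in the hard direction flip the minority side of any inconsistent cut, using that the number of cut edges (each contributing one violated equality clause that becomes repaired) dominates the number of flipped copies (each of which can spoil at most the one rewritten original clause it sits in). This is exactly the approach the paper points to, so you have reconstructed the cited proof rather than diverged from it.

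Two small remarks, neither a gap. First, the paper's bound ``$2d+1$'' should read ``$2d_0+1$''; your count of $1+2d_0$ occurrences per copy (one in its rewritten original clause, two per incident enforcer edge) is the intended one. Second, when you write ``let $S_0$ and $S_1$ be the copies currently set to false and true with $|S_0|\le|S_1|$,'' you implicitly take WLOG that the false side is the minority; stating the flip as ``move the minority to the majority value'' avoids the unnecessary assumption. Your parenthetical handling of small blocks (complete graphs when $t_i\le d_0$, possibly with multi-edges to hit exact regularity) and of variables with a single occurrence is the right fix for the boundary cases and does not affect the $O(\Lambda)$ and $2d_0+1$ bounds.
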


Denote by 3SAT-$d$ the 3SAT problem where every variable occurs at most $d$ times. Combining Lemma~\ref{lemma-cite:1} and Lemma~\ref{lemma-cite:2}, we have the following lemma.
\begin{lemma}\label{lemma:maxsat-eth}
Under ETH, there exists some constants $d\in\mathbb{N}$ and $\alpha\in(0,1)$ such that for sufficiently small $\epsilon'>0$, and $\delta>0$, it is impossible to distinguish between instances of 3SAT-$d$ with $\Lambda$ clauses where at least $(1-\epsilon')\Lambda$ are satisfiable from instances where at most $(\alpha+\epsilon')\Lambda$ are satisfiable, in time $O(2^{\Lambda^{1-\delta}})$.
\end{lemma}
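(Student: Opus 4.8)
The plan is to prove Lemma~\ref{lemma:maxsat-eth} by composing the two cited ingredients: the gap hardness of plain 3SAT from Lemma~\ref{lemma-cite:1}, and the expander-based bounded-occurrence transformation from Lemma~\ref{lemma-cite:2}. The only real work is tracking how the satisfiability gap and the instance size behave under this composition.

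First I would take a 3SAT formula $\phi$ with $\Lambda$ clauses such that, by Lemma~\ref{lemma-cite:1}, the two cases ``at least $(1-\epsilon')\Lambda$ clauses are satisfiable'' and ``at most $(7/8+\epsilon')\Lambda$ clauses are satisfiable'' cannot be distinguished in time $O(2^{\Lambda^{1-\delta}})$ under ETH. Applying Lemma~\ref{lemma-cite:2} to $\phi$ produces, in polynomial time, a formula $\phi'$ with $\Lambda' = (1+3d_0)\Lambda = \Theta(\Lambda)$ clauses in which every variable occurs in at most $d := 2d_0+1$ clauses (so $\phi'$ is a legitimate 3SAT-$d$ instance regardless of whether $\phi$ is a yes- or no-instance), and whose minimum number of unsatisfied clauses equals that of $\phi$ exactly.

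Next I would push the gap through this exact preservation of the optimum. If some assignment of $\phi$ leaves at most $\epsilon'\Lambda$ clauses unsatisfied, then so does some assignment of $\phi'$, so at least $\Lambda' - \epsilon'\Lambda = \big(1 - \tfrac{\epsilon'}{1+3d_0}\big)\Lambda' \ge (1-\epsilon')\Lambda'$ clauses of $\phi'$ are satisfiable. If instead every assignment of $\phi$ leaves at least $(1/8 - \epsilon')\Lambda$ clauses unsatisfied, then every assignment of $\phi'$ leaves at least $(1/8-\epsilon')\Lambda$ clauses unsatisfied, so at most $\Lambda' - (1/8-\epsilon')\Lambda = \big(1 - \tfrac{1/8-\epsilon'}{1+3d_0}\big)\Lambda'$ clauses of $\phi'$ are satisfiable. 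Setting $\alpha := 1 - \tfrac{1}{8(1+3d_0)} \in (0,1)$, this bound is at most $\big(\alpha + \tfrac{\epsilon'}{1+3d_0}\big)\Lambda' \le (\alpha+\epsilon')\Lambda'$; crucially $\alpha$ depends only on $d_0$, hence is bounded away from $1$ uniformly in $\epsilon'$ and $\delta$.

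Finally I would transfer the running-time lower bound. The map $\phi \mapsto \phi'$ is polynomial-time and $\Lambda' = c\Lambda$ for the constant $c = 1+3d_0$, so $(\Lambda')^{1-\delta} = c^{1-\delta}\Lambda^{1-\delta} = \Theta(\Lambda^{1-\delta})$. Hence an algorithm distinguishing the two 3SAT-$d$ cases for $\phi'$ in time $O(2^{(\Lambda')^{1-\delta}})$ would, after the reduction, distinguish the two 3SAT cases for $\phi$ in time $O(2^{\Lambda^{1-\delta}})$, contradicting Lemma~\ref{lemma-cite:1} and thus ETH. There is no substantive obstacle; the only care needed is the bookkeeping above --- checking that the constant-factor blow-up $\Lambda' = \Theta(\Lambda)$ leaves the exponent $\Lambda^{1-\delta}$ unchanged (true because $1-\delta<1$ absorbs constant factors) and that the gap constant $\alpha$ is an absolute constant independent of the free parameters.
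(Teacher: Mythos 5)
Your proof follows exactly the approach the paper intends: the paper states this lemma as an immediate consequence of composing Lemma~\ref{lemma-cite:1} and Lemma~\ref{lemma-cite:2} without filling in details, and you supply precisely that bookkeeping — the gap calculation giving $\alpha = 1 - \tfrac{1}{8(1+3d_0)}$, the observation that the optimum-preserving reduction yields a valid 3SAT-$d$ instance regardless of which case holds, and the constant-factor blow-up $\Lambda' = (1+3d_0)\Lambda$.

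One small imprecision appears in your final sentence. You write that $(\Lambda')^{1-\delta} = c^{1-\delta}\Lambda^{1-\delta} = \Theta(\Lambda^{1-\delta})$ and then conclude that the composed distinguisher for $\phi$ runs in time $O(2^{\Lambda^{1-\delta}})$, remarking that ``$1-\delta<1$ absorbs constant factors.'' But the constant $c^{1-\delta}$ sits \emph{inside} the exponent of $2$, so $O\bigl(2^{c^{1-\delta}\Lambda^{1-\delta}}\bigr)$ is not $O\bigl(2^{\Lambda^{1-\delta}}\bigr)$ when $c>1$; the exponent being sublinear in $\Lambda$ does not by itself collapse a multiplicative constant in it. The clean fix is to invoke Lemma~\ref{lemma-cite:1} with a slightly smaller parameter $\delta'\in(0,\delta)$: since $c^{1-\delta}\Lambda^{1-\delta}\le\Lambda^{1-\delta'}$ for all sufficiently large $\Lambda$, the composed algorithm runs in $O\bigl(2^{\Lambda^{1-\delta'}}\bigr)$ time, and $\delta'$ is still ``sufficiently small,'' so the contradiction with Lemma~\ref{lemma-cite:1} (hence with ETH) still goes through. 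This is an easy patch and the rest of your argument is sound.
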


It is worth mentioning that the reduction in~\cite{trevisan2004inapproximability} involves constructing 2-clauses, that is, 3SAT-$d$ in Lemma~\ref{lemma:maxsat-eth} refers to a 3SAT instance where clauses may contain 2 or 3 variables. For ease of presentation, we want to enforce every clause to contain exactly 3 variables\footnote{We remark, however, that our reduction also works if $C_1$ contains 2-causes and 3-clauses. It suffices to create two CL$_\ell$, one true copy and one false copy instead of three, and meanwhile adjust the number of dummy jobs.}. This can be done by introducing dummy variables together with 3-clauses that enforce a dummy variable to be true or false (called enforcers). In particular, Berman et al.~\cite{berman2003approximation} provide a general enforcer that allows them to deduce the APX-hardness of MAX3SAT (where every clause contains 3 variables and every variable appears 4 times) through the APX-hardness of MAX2SAT. We can apply their technique directly to get a strengthened version of Lemma~\ref{lemma:maxsat-eth} where in 3SAT-$d$ every clause contains exactly 3 variables.

 %Notice that a clause, say, $z_i\oplus \neg z_k$ in $C_2$ is {\em not} satisfied if both $z_i$ is true and $z_i$ is false.

The following proof is a slight variation of that from Tovey~\cite{tovey}. 

\begin{lemma}\label{le:tovey}
Given a 3SAT-$d$ formula $\phi$ with $\Lambda$ clauses, a 3SAT$\,'$ formula $\phi'$ with $|C_1|=\Lambda$ and $|C_2|\le 3\Lambda$ clauses can be constructed in polynomial time such that:
\begin{itemize}
\item If there is an assignment for $\phi$ where at most $k$ clauses are not satisfied, then there is an assignment for $\phi'$ where at most $k$ clauses are not satisfied. %Furthermore, in this assignment every clause in $C_2$ is satisfied, and at most $k$ clauses in $C_1$ are not satisfied;
\item If there is an assignment for $\phi'$ where there are at most $k$ clauses not satisfied, then there is an assignment for $\phi$ where at most $kd$ clauses are not satisfied.
\end{itemize}
\end{lemma}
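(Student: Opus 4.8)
The plan is to adapt Tovey's variable-splitting reduction so that the consistency gadgets have exactly the rigid shape demanded by 3SAT$'$. Write $\Lambda$ for the number of clauses of the input 3SAT-$d$ formula $\phi$; since every clause has exactly three literals on three distinct variables, $\phi$ has exactly $3\Lambda$ literal occurrences, and I would create one fresh variable per occurrence, so that $\phi'$ has $n=3\Lambda$ variables (note $3\mid n$, as required). For an original variable $x$ occurring $t_x\le d$ times, call its copies $x^{(1)},\dots,x^{(t_x)}$ in the order of the occurrences. I would let $C_1$ be the $\Lambda$ clauses of $\phi$ with every variable-occurrence replaced by the corresponding copy and all signs preserved; then every copy $x^{(j)}$ appears in exactly one clause of $C_1$. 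For $C_2$ I would add, for each $x$, the cycle of equality clauses $x^{(1)}\oplus\neg x^{(2)},\ x^{(2)}\oplus\neg x^{(3)},\ \dots,\ x^{(t_x)}\oplus\neg x^{(1)}$ (a single self-loop $x^{(1)}\oplus\neg x^{(1)}$ if $t_x=1$). Since $z\oplus\neg z'$ is true exactly when $z=z'$, satisfying a whole cycle forces all copies of $x$ to agree. A short verification then gives $|C_1|=\Lambda$ and $|C_2|=\sum_x t_x=3\Lambda$, that each literal $x^{(j)}$ and $\neg x^{(j)}$ occurs exactly once in $C_2$ (once as the left entry of a cycle edge, once as the right entry of the next), and that the induced pairing is a permutation $\tau$ of $\Z_n$ (a disjoint union of the cyclic shifts of the copy-blocks); so $\phi'$ is a legal 3SAT$'$ instance, built in polynomial time.

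For the first implication I would simply propagate assignments: given $\sigma$ for $\phi$ with at most $k$ clauses unsatisfied, set $\sigma'(x^{(j)})=\sigma(x)$ for all copies. Then all of $C_2$ is satisfied, and a clause of $C_1$ is satisfied iff the corresponding clause of $\phi$ is, so $\phi'$ has at most $k$ unsatisfied clauses.

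For the converse --- which I expect to be the only nontrivial part --- suppose $\sigma'$ leaves at most $k$ clauses of $\phi'$ unsatisfied, of which $k_1$ are in $C_1$ and $k_2$ in $C_2$. I would define $\sigma$ for $\phi$ by calling $x$ \emph{bad} if at least one clause of its equality-cycle is violated by $\sigma'$; for a non-bad $x$ all $\sigma'(x^{(j)})$ coincide and I set $\sigma(x)$ to that common value, and for a bad $x$ I set $\sigma(x)$ arbitrarily. Since each violated $C_2$-clause lies in exactly one cycle, there are at most $k_2$ bad variables. The key observation is that a disagreement $\sigma(x)\ne\sigma'(x^{(j)})$ can occur only for bad $x$, and for a fixed bad $x$ it occurs in at most $t_x\le d$ of the clauses of $\phi$ (one per occurrence of $x$); moreover, whenever a clause $C$ of $\phi$ has no such disagreement among its variable-occurrences, $C$ is violated by $\sigma$ exactly when its image in $C_1$ is violated by $\sigma'$. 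Hence the number of clauses of $\phi$ violated by $\sigma$ is at most $k_1+d\,k_2\le d(k_1+k_2)\le dk$.

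The substantive step is this charging argument in the converse direction; the structural bookkeeping on $\phi'$ and the forward implication are routine. The one point requiring care is making the gadget literally fit the 3SAT$'$ template --- each variable used once in $C_1$, each of its two literals used once in $C_2$, and $\tau$ a genuine permutation --- which is why the consistency clauses are arranged as a directed cycle of XOR-equalities rather than as the implication chain in Tovey's original argument.
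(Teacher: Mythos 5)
Your proof is correct and follows essentially the same route as the paper's: split each variable of the 3SAT-$d$ instance into one fresh copy per occurrence, link the copies of a variable by a directed cycle of XOR-equalities (with the self-loop $x^{(1)}\oplus\neg x^{(1)}$ when a variable occurs once), take $C_1$ to be the original clauses with copies substituted, and bound the damage in the converse direction by charging each disagreement to a broken cycle. Your bookkeeping $k_1+d\,k_2\le d(k_1+k_2)\le dk$ is a slightly cleaner version of the paper's $k+(d-1)k=kd$, but it is the same argument.
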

\begin{proof}
Let $z$ be any variable in $\phi$ and suppose it appears $\ell\le d$
times in clauses. If $\ell=1$ then we add a dummy clause $(z\oplus \neg
z)$. Otherwise $\ell\ge 2$ and we introduce $\ell$ new variables $z_1$,
$z_2$, $\cdots$, $z_\ell$ and $\ell$ new clauses $(z_1\oplus \neg z_2)$,
$(z_2\oplus \neg z_3)$, $\cdots$, $(z_{\ell}\oplus \neg z_{1})$ which enforce $z_1,z_2,\cdots,z_\ell$ to take the same truth value. Meanwhile
we replace the $\ell$ occurrences of $z$ in the original clauses by $z_1$, $z_2$, $\cdots$,
$z_\ell$ in turn and remove $z$. By doing so we transform $\phi$
into a new formula $\phi'$ by introducing at most $3\Lambda$ new variables and $3\Lambda$
new clauses.

Notice that each new clause we add in $\phi'$ is of the form
$(z_i\oplus \neg z_{i'})$. We let $C_2$ be the set of them and let $C_1$
be the set of other clauses. It is easy to verify that $\phi'$ is
an instance of 3SAT$'$. Notice that every clause in $C_1$ has a corresponding clause in $\phi$ by replacing $z_i$'s with $z$.

Suppose there is an assignment for $\phi$ where at most $k$ clauses are not satisfied. Then for any variable $z$ in $\phi$ that occurs $\ell$ times, we let $z_1$, $z_2$, $\cdots$, $z_\ell$ all take the same value as $z$. It is easy to see that at most $k$ clauses in $C_1$ of $\phi'$ are not satisfied.

Suppose there is an assignment for $\phi'$ where at most $k$ clauses in $C_1$ are not satisfied. For any variable $z$ in $\phi$ that correspond to $z_1$, $z_2$, $\cdots$, $z_\ell$ in $\phi'$, we let $z_i$ take the same value of $z_1$ for all $i$. Now we check the number of additional unsatisfied clauses in $C_1$ we introduce by doing so. If all $z_i$'s take the same value, then no additional unsatisfied clauses are introduced. Otherwise, it is possible that some of the clauses in $C_1$, which is satisfied by $z_2$, $z_3$, $\cdots$ or $z_\ell$, becomes unsatisfied. But there are at most $d-1$ such kind of clauses. Hence, at most $d-1$ unsatisfied clauses are introduced, if there is at least one unsatisfied clause among $(z_1\oplus \neg z_2)$, $(z_2\oplus \neg z_3)$, $\cdots$, $(z_{\ell}\oplus \neg z_{1})$. This implies that we have introduced at most $k(d-1)$ unsatisfied clauses by setting $z_i=z_1$ for all variables, i.e., there are at most $k(d-1)+k=kd$ unsatisfied clauses in $C_1$ now. Hence, there is an assignment for $\phi$ where at most $kd$ clauses are not satisfied.
\end{proof}

Given Lemma~\ref{le:tovey}, we know that if there is an assignment for $\phi$ with are at most $\alpha \Lambda$ unsatisfied clauses, then there is an assignment for $\phi'$ with every clause in $C_2$ satisfied and at most $\alpha \Lambda$ clauses in $C_1$ unsatisfied; if every assignment has at least $\beta \Lambda$ unsatisfied clauses in $\phi$, then there are at least $\beta \Lambda/d$ unsatisfied clauses in $\phi'$. Hence, according to Lemma~\ref{lemma:maxsat-eth}, Lemma~\ref{lemma:maxsat-eth2} is proved.

\section{Omitted Contents in Section~\ref{subsec:overview} - Why Old Reduction does not Work}\label{appsec:old}
Chen et al.~\cite{chen2014optimality} provided a reduction that meets the conditions CO1 to CO4 with job processing times, and hence the target value $T$, being $O(n^{1+\delta})$ for any arbitrary small constant $\delta>0$. This reduction provides a strong lower bound for $P||C_{max}$, but does not work well for our problem $P||\sum_iC_i^q$. To see this, we take $q=2$ as an example and compare the two objective values for the constructed scheduling problem when $I_{sat}$ is satisfiable and when it is not. If $I_{sat}$ is satisfiable, then there exists a schedule such that every machine has a load of exactly~$T$, implying that the optimal objective value is $mT^2$. Otherwise, at least one machine has load $T+1$ or more and machine has load $T-1$ or less, and then the optimal objective is at least $(m-2)T^2+(T+1)^2+(T-1)^2=mT^2+2$. For the sake of contradiction, let us assume that there exists a PTAS with running time $2^{O((1/\epsilon)^{\kappa})}$ for some $\kappa\in (0,1)$. If we take $\epsilon$ to be sufficiently small such that $mT^2\epsilon\le 1$, then the PTAS can be used to determine whether the constructed scheduling instance admits a feasible schedule of objective value at most $mT^2+1<mT^2+2$, and hence whether $I_{sat}$ is satisfiable. The running time of the PTAS becomes $2^{O((1/\epsilon)^{\kappa})}=2^{O((mT^2)^{\kappa})}$. Plug in $m=O(n)$ and $T=O(n^{1+\delta})$ in the reduction, we have $mT^2=O(n^{3+2\delta})$. Hence, if $\kappa=1/3-\delta$, we have $(mT^2)^{\kappa}\le O(n^{1-\delta})$, and an efficient PTAS of running time $2^{O((1/\epsilon)^{1/3-\delta})}$ can thus determine the satisfiability of $I_{sat}$ in $2^{O(n^{1-\delta})}$ time, contradicting ETH. To summarize, the above argument implies a lower bound of $2^{O((1/\epsilon)^{1/3-\delta})}$ on the running time of PTAS for arbitrary constant $\delta>0$, which is not strong enough to match our algorithms in Theorem~\ref{thm:algorithm}.

To overcome the obstacle, a natural idea is to decrease the value of $m$ or $T$ in the reduction. However, if, say, $m=n^{0.9}$ and $T=n^{O(1)}$, then we know the standard dynamic programming for scheduling returns the optimal solution in $T^{O(m)}=2^{O(n^{0.9})}$ time; similarly, if $T=n^{0.9}$ and $m=n^{O(1)}$, then we know there are at most $n^{0.9}$ different kinds of jobs, and the scheduling problem can also be solved in time $2^{O(n^{0.9})}$ through dynamic programming. Hence, we cannot expect to reduce $I_{sat}$ to such scheduling instances, assuming ETH. 

As a consequence, in this paper, we will not try to decrease $m$ or $T$. Instead, we increase the gap between the two optimal objective values for the constructed scheduling problem when $I_{sat}$ is satisfiable and when $I_{sat}$ is not satisfiable by exploiting the hardness gap in Lemma~\ref{lemma:maxsat-eth2}. %More precisely, we give the following reduction:
%}

\section{Omitted Proofs in Section~\ref{subsec:adjacent-sum} - Proof of Lemma~\ref{lemma:uniquesum-1}}\label{appsec:ajacent-sum}
\begin{T2}
Let $N\in\mathbb{Z}^+$. There exists a subset ${\cal{S}}\subseteq \Z_N$ such that $|{\cal{S}}|\ge N^{1-c_0\sqrt{\frac{1}{{\log N}}}}$ for some sufficiently large $c_0$ (in particular, $c_0=7$ suffices), and for any $y\in {\cal{S}}$ and $1\le h\le 5$, the linear equation $h\cdot y=y_1+y_2+\cdots+y_h$ with $y_i\in {\cal{S}}$ for all $i$ has a unique solution $y_1=y_2=\cdots=y_h=y$.
\end{T2}
\begin{proof}
%We use a similar proof idea of~\cite{behrend1946sets}. 
For any $d\ge 2$, $M\ge 2$ and $k\le (M+1)(d-1)^2$, we let $x=5d-1$ and $\vex=(1,x,x^2,\cdots,x^M)$, $\vecc=(\vecc[0],\vecc[1],\cdots,\vecc[M])$. We define the set $S_k(M,d)$ as:
\begin{eqnarray*}
{\cal{S}}_k(M,d)&=&\{y: y=\vecc\vex+x^{M+1}=\vecc[0]+\vecc[1]x^1+\cdots+\vecc[M]x^M+x^{M+1},\\ &&\vecc[i]\in\mathbb{N}, 0\le \vecc[i]<d, \sum_{i=0}^M(\vecc[i])^2=k\}.
\end{eqnarray*}
%Define $\vex=(1,x,x^2,\cdots,x^M,x^{M+1})$, $\vecc=(\vecc[0],\vecc[1],\cdots,\vecc[M],\vecc[M+1])$ where $\vecc[M+1]=1$. %$\hat{\vex}=(\vex,x^{M+1})$ and $\hat{\vea}=(\vea,1)$. 
That is, ${\cal{S}}_k(M,d)$ is the set of all integers which can be expressed in the form of ${\vecc}\cdot{\vex}+x^{M+1}$ such that $\vecc[i]\in [0,d)$ and $\|\vecc\|_2^2=k$, where $\|\cdot\|_2$ is the $\ell_2$-norm of a vector. 

We claim that for any $y\in {\cal{S}}_k(M,d)$ and $1\le h,h'\le 5$, if $hy=y_1+y_2+\cdots+y_{h'}$, $y_i\in S_k(M,d)$, then we have $h'=h$, $y_1=y_2=\cdots=y_h=y$. Let $y=\vecc\vex+x^{M+1}$ and $y_j=\vecc_j\vex+x^{M+1}$, where $\vecc_j=(\vecc_j[0],\vecc_j[2],\cdots,\vecc_j[M])$. 
Using the fact that $x=5d-1$ and $\vecc_j[i]< d$, we know
for $h'\le 5$ we have $\sum_{j=1}^{h'}\vecc_j[i]<x$. Hence by checking the coefficient of $x^{M+1}$ on both sides of the equation $hy=\sum_{j=1}^{h'}y_j$, we have $h=h'$. 
Moreover, we can conclude that the coefficient of $x^i$ in the sum $y_1+y_2+\cdots+y_{h'}$ equals $\sum_{j=1}^{h'}\vecc_j[i]$, hence by comparing the coefficient of $x^i$ on both sides, we have  
%Next, suppose on the contrary that for some $y\in {\cal{S}}_k(M,d)$ and some $1\le h\le 5$, there exist $y_1,y_2,\cdots,y_h$ such that $hy=y_1+y_2+\cdots+y_h$ and $(y_1,y_2,\cdots,y_h)\neq (y,y,\cdots,y)$, i.e., $y_i$'s are not all identical. From $hy=y_1+y_2+\cdots+y_h$, we know 
$$\sum_{j=1}^h\vecc_j=h\vecc.$$ 

By the definition of ${\cal{S}}_k(M,d)$, the followings are true:
$$\sum_{i=0}^{M}(\vecc_j[i])^2=k, \quad \forall j,$$
and 
$$\sum_{i=0}^{M}(\frac{\sum_{j=1}^h \vecc_j[i]}{h})^2=\sum_{i=0}^{M}(\vecc[i])^2=k$$
Hence,
\begin{eqnarray}\label{eq:ari}
 k=\sum_{i=0}^{M}(\frac{\sum_{j=1}^h \vecc_j[i]}{h})^2=\frac{\sum_{j=1}^h\sum_{i=0}^{M}(\vecc_j[i])^2}{h}=\sum_{i=0}^{M}\frac{\sum_{j=1}^h(\vecc_j[i])^2}{h}. 
\end{eqnarray}

According to the inequality between the quadratic mean and arithmetic mean, we know
$$ \left(\frac{\sum_{j=1}^h \vecc_j[i]}{h}\right)^2\le \frac{\sum_{j=1}^h (\vecc_j[i])^2}{h}, \quad \forall i$$
and the equality only holds when $c_j[i]$'s are identical for all $j$. Hence by Eq~\eqref{eq:ari} we know $\vecc=\vecc_j$, and consequently
$y_j=y$ for $1\le j\le h$. Hence, the claim is true. 

It remains to select an appropriate ${\cal{S}}_k(M,d)$ such that ${\cal{S}}_k\subseteq \Z_N$ and has a large cardinality. Towards this, we first observe that the largest number of ${\cal{S}}_k(M,d)$ is bounded by $(5d-1)^{M+2}$. We shall select $d$ and $M$ such that $(5d-1)^{M+2}\le N$. Notice that there are $d^{M+1}-1$ different positive integer numbers which can be expressed as $\vecc\cdot\vex+x^{M+1}$ where $\vecc[i]\in [0,d)$. Furthermore $k=\|\vecc\|_2^2\le (M+1)(d-1)^2$, hence there exists some $1\le k^*\le (M+1)(d-1)^2$ such that
$$|{\cal{S}}_{k^*}({M,d})|\ge \frac{d^{M+1}-1}{(M+1)(d-1)^2}>\frac{d^{M-1}}{M+1}.$$

It remains to select $d$ and $M$ subject to $(5d-1)^{M+2}\le N$ such that $\frac{d^{M-1}}{M+1}$ is large. Below all logarithms are taken with the base $e$. We pick $M=\lfloor \sqrt{\frac{\log N}{\log 5}}\rfloor-2$ and $d=\lfloor\frac{e^{\sqrt{\log 5\cdot \log N}}}{5}\rfloor$. It is easy to see that $(M+2)\log (5d-1)\le \sqrt{\frac{\log N}{\log 5}}\cdot \log e^{\sqrt{\log 5\cdot \log N}}=\log N$, hence $(5d-1)^{M+2}\le N$. Furthermore,
for sufficiently large $N$ (e.g., $N>e^{10}$), we know %$M-1\ge \frac{1}{2}\sqrt{\frac{\log N}{\log 5}}$ and 
$d\ge \frac{e^{\sqrt{\log 5\cdot \log N}}}{10}$, hence
\begin{eqnarray*}
    \frac{d^{M-1}}{M+1}=e^{(M-1)\log d -\log (M+1)}&\ge& e^{(\sqrt{\log N\cdot\log 5}-\log 10)(\sqrt{\frac{\log N}{\log 5}}-4)-\frac{1}{2}(\log\log N-\log\log 5)} \\&\ge&e^{\log N\cdot (1+\frac{-4\sqrt{\log N\log 5}-\frac{\log 10}{\sqrt{\log 5}}\sqrt{\log N}-\Omega(\log\log N)}{\log N})}\\&=&N^{1-\Omega(\frac{1}{\sqrt{\log N}})}
\end{eqnarray*}

Hence, Lemma~\ref{lemma:uniquesum-1} is proved. In particular, it is easy to verify that $4\sqrt{\log 5}+\frac{\log 10}{\sqrt{\log 5}}\le 7$, hence the $\frac{d^{M-1}}{M+1}\ge N^{1-\frac{7}{\sqrt{\log N}}}$.
\end{proof}

\section{Omitted proofs in %Section~\ref{subsec:linked-sum} 
Section~\ref{subsec:adjacent-sum} - Proof of Lemma~\ref{lemma:linked-sum}}
\iffalse
\label{appsec:link-sum}
The goal of this section is to prove the following lemma.
\begin{T4}
Let $\tau$ be an arbitrary permutation of $\Z_n$. Then there exists a set ${\cal{S}}=\{s_1,s_2,\cdots,s_n\}$ of positive integers together with an auxiliary set $E=\bigcup_{i=1}^nE_i$ of positive integers such that:
\begin{compactitem}
\item All integers in ${\cal{S}}\cup E$ are bounded by $n^{1+O(\frac{1}{\sqrt{\log\log n}})}$;
\item $E_i=\{e_{i,1},e_{i,2},\cdots,e_{i,\omega}\}$  for all $i$, where $\omega=O(\frac{\log n}{\log\log n})$;
\item $E_i\cap E_{i'}=\emptyset$ for any $i'\neq i$;
\item For every $i\in\Z_n$ and $k=\tau(i)$, each sum $s_i+e_{i,1}$, $e_{i,1}+e_{i,2}$, $\cdots$, $e_{i,\omega-1}+e_{i,\omega}$, $e_{i,\omega}+s_k$ is unique in the sense that no other pairs in ${\cal{S}}\cup E$ that adds up to the same value.
\end{compactitem}
In particular, all these properties are satisfied by setting $s_i=\sigma(i)$ where $\sigma:\Z_N\rightarrow \Z_{N'}$ is the function specified in Lemma~\ref{lemma:uniquesum-2} by taking $N=n$. %and $c_0\ge 5$.
\end{T4}

Towards the proof, we need a shuffling lemma that can decompose the permutation $\tau$ into a sequence of ``simpler" permutations.
%The following part of this subsection is devoted to the proof of Lemma~\ref{lemma:linked-sum}. 
%The proof will be carried out in three main steps. Before that, we first set some parameters. Notice that the input to Lemma~\ref{lemma:uniquesum-2} is a number $N\in \Z^+$, and as an output we get $N'$, $d$, $x$, $\gamma$ and an injection $\sigma$, which are all dependent on $N$. 
\subsection{A Shuffling Lemma}
\fi
\label{appsubsec:shuffle}

The goal of this subsection is to prove the following.
\begin{T6}%\label{lemma:shuffle-1}
For any $\pi\in Aut({\cal{M}}^{\upsilon})$, there exist $h$-shufflers $f_h,\hat{f}_h\in Aut({\cal{M}}^{\upsilon})$ for every $1\le h\le \upsilon$ such that $F(\vey)=\hat{F}(\pi(\vey))$ for any $\vey\in {\cal{M}}^{\upsilon}$, where $F=f_{\upsilon-1}\circ f_{\upsilon-2}\circ \cdots\circ f_{0}$ and $\hat{F}=\hat{f}_{\upsilon-1} \circ \hat{f}_{\upsilon-2}\circ \cdots\circ \hat{f}_{0}$. Furthermore, $f_h$'s and $\hat{f}_h$'s can be constructed in time that is polynomial in $|{\cal{M}}^\upsilon|$.
\end{T6}

For any finite set $X$, we denote by $Aut(X)$ the set of all one-to-one mapping from $X$ to itself. For any $f_1,f_2\in Aut(X)$, we denote by $f_1\circ f_2 \in Aut(X)$ the composition of $f_1$ and $f_2$, i.e., $f_1\circ f_2(x)=f_1(f_2(x))$ for any $x\in X$. Note that $Aut(X)$ is a symmetric group under composition. We denote by $f^{-1}\in Aut(X)$ the inverse of $f\in Aut(X)$.

Let ${\cal{M}}$ be an arbitrary finite set of cardinality $t$. %For simplicity, we let ${\cal{M}}=\{1,2,\cdots,t\}$. %Denote by $Id\in Aut({\cal{M}})$ the identity mapping, i.e., $Id(\mu_i)=\mu_i$ for every $\mu\in {\cal{M}}$. 
Let $\upsilon\in \Z^+$. Denote by ${\cal{M}}^{\upsilon}$  the set of all $\upsilon$-dimensional vectors whose entries belong to ${\cal{M}}$.

For any vector $\vey\in {\cal{M}}^{\upsilon}$, we denote by $\vey[h]\in {\cal{M}}$ the $h$-th coordinate of $\vey$, and $\vey[-h]\in  {\cal{M}}^{\upsilon-1}$ the vector obtained by removing the $h$-th coordinate from $\vey$. 

For any $f\in Aut({\cal{M}}^{\upsilon})$ and $0\le h\le \upsilon-1$, we call $f$ an $h$-shuffler if $\left(f(\vey)\right)[-h]=\vey[-h]$ for all $y\in \cal{M}^\upsilon$, that is,
$$f(\vey)=f(\vey[0],\vey[1],\cdots,\vey[\upsilon-1])=(\vey[0],\cdots,\vey[h-1],z,\vey[h+1],\cdots,\vey[\upsilon-1])),$$
for some $z\in {\cal{M}}$.

We prove the following lemma.

\begin{lemma}\label{lemma:shuffle}
For any $\pi\in Aut({\cal{M}}^{\upsilon})$ and $0\le h\le \upsilon-1$, there exist $h$-shufflers $f,\hat{f}\in Aut({\cal{M}}^{\upsilon})$ such that for any $\vey\in {\cal{M}}^{\upsilon}$, $\left(f(\vey)\right)[h]=\left(\hat{f}(\pi(\vey))\right)[h]$. Furthermore, $f$ and $\hat{f}$ can be constructed in time that is polynomial in $|{\cal{M}}^\upsilon|$.
\end{lemma}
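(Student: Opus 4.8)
The plan is to recast the existence of the two $h$-shufflers as a bipartite edge-colouring problem and then invoke K\"onig's edge-colouring theorem. Write $t=|{\cal M}|$ and let ${\cal P}$ be the partition of ${\cal M}^{\upsilon}$ into the blocks $B_{\vew}:=\{\vey\in{\cal M}^{\upsilon}:\vey[-h]=\vew\}$, one for each $\vew\in{\cal M}^{\upsilon-1}$; each block has size $t$. The key observation is that an $h$-shuffler $f$ is exactly the datum of one permutation $\rho_{\vew}$ of ${\cal M}$ per block, via $\bigl(f(\vey)\bigr)[h]=\rho_{\vey[-h]}(\vey[h])$. Hence, setting $\chi(\vey):=\bigl(f(\vey)\bigr)[h]$, the $h$-shufflers $f$ are in bijection with the maps $\chi:{\cal M}^{\upsilon}\to{\cal M}$ that restrict to a bijection onto ${\cal M}$ on every block of ${\cal P}$; call such a $\chi$ \emph{${\cal P}$-rainbow}. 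The same description applies to $\hat\chi(\vez):=\bigl(\hat f(\vez)\bigr)[h]$. The identity we want, $\bigl(f(\vey)\bigr)[h]=\bigl(\hat f(\pi(\vey))\bigr)[h]$ for all $\vey$, thus reads $\chi=\hat\chi\circ\pi$, i.e.\ $\hat\chi=\chi\circ\pi^{-1}$; and $\chi\circ\pi^{-1}$ is ${\cal P}$-rainbow precisely when $\chi$ is $\pi^{-1}({\cal P})$-rainbow, where $\pi^{-1}({\cal P})=\{\pi^{-1}(B):B\in{\cal P}\}$ is again a partition of ${\cal M}^{\upsilon}$ into $t^{\upsilon-1}$ blocks of size $t$. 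So it suffices to produce a single $\chi:{\cal M}^{\upsilon}\to{\cal M}$ that is simultaneously ${\cal P}$-rainbow and $\pi^{-1}({\cal P})$-rainbow.

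To do this I would form the bipartite multigraph $H$ whose left vertices are the blocks of ${\cal P}$, whose right vertices are the blocks of $\pi^{-1}({\cal P})$, and which has one edge per element $\vey\in{\cal M}^{\upsilon}$ joining the ${\cal P}$-block containing $\vey$ to the $\pi^{-1}({\cal P})$-block containing $\vey$. Since every block contains exactly $t$ elements, $H$ is a $t$-regular bipartite multigraph, so by K\"onig's edge-colouring theorem it has a proper edge colouring using $\Delta(H)=t$ colours; identify the colour set with ${\cal M}$ and view the colouring as a map $\chi:{\cal M}^{\upsilon}\to{\cal M}$ on edges $=$ elements. At each vertex the $t$ incident edges get $t$ distinct colours, hence all of ${\cal M}$, so $\chi$ is rainbow on both ${\cal P}$ and $\pi^{-1}({\cal P})$, as needed. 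Finally I would read off the shufflers: $\rho_{\vew}(a):=\chi(\vey)$ for the unique $\vey\in B_{\vew}$ with $\vey[h]=a$ yields $f$, and applying the same recipe to $\hat\chi:=\chi\circ\pi^{-1}$ yields $\hat f$; by construction $\bigl(f(\vey)\bigr)[h]=\chi(\vey)=\hat\chi(\pi(\vey))=\bigl(\hat f(\pi(\vey))\bigr)[h]$.

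For the running time, $H$ has $2t^{\upsilon-1}$ vertices and $t^{\upsilon}\le|{\cal M}^{\upsilon}|$ edges, and a proper $t$-edge-colouring of a $t$-regular bipartite multigraph is obtained in polynomial time by the standard matching-peeling argument: Hall's condition holds (a set $S$ of left vertices meets $t|S|$ edge-endpoints, all lying in $N(S)$, which can absorb at most $t|N(S)|$), so $H$ has a perfect matching found via augmenting paths; deleting it leaves a $(t-1)$-regular bipartite multigraph, and iterating $t$ times decomposes the edge set into $t$ perfect matchings, i.e.\ the colour classes. This, together with computing $\pi^{-1}$ and recovering $f$ and $\hat f$ from $\chi$, is polynomial in $|{\cal M}^{\upsilon}|$. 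I expect the only genuine step to be the reformulation in the first paragraph, and within it the point that conjugation by $\pi$ turns the constraint on $\hat f$ into a rainbow-colouring requirement for the \emph{pushed-forward} partition $\pi^{-1}({\cal P})$; once both constraints live on the common ground set ${\cal M}^{\upsilon}$, the $t$-regularity of $H$ is immediate and K\"onig's theorem closes the argument.
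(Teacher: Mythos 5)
Your proposal is correct and takes essentially the same route as the paper: you both reduce the problem to decomposing a $t$-regular bipartite multigraph (your $H$, the paper's contracted graph $\bar G$) into $t$ perfect matchings and read the shufflers off the resulting edge-colouring. Your "doubly rainbow colouring" framing via the partitions ${\cal P}$ and $\pi^{-1}({\cal P})$ is a slightly more explicit packaging of the paper's contraction-then-colour argument, but the underlying combinatorics and the König-based polynomial-time construction are the same.
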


\begin{figure}[tb]
	\centering
	\includegraphics[height=89.94mm,width=135mm]{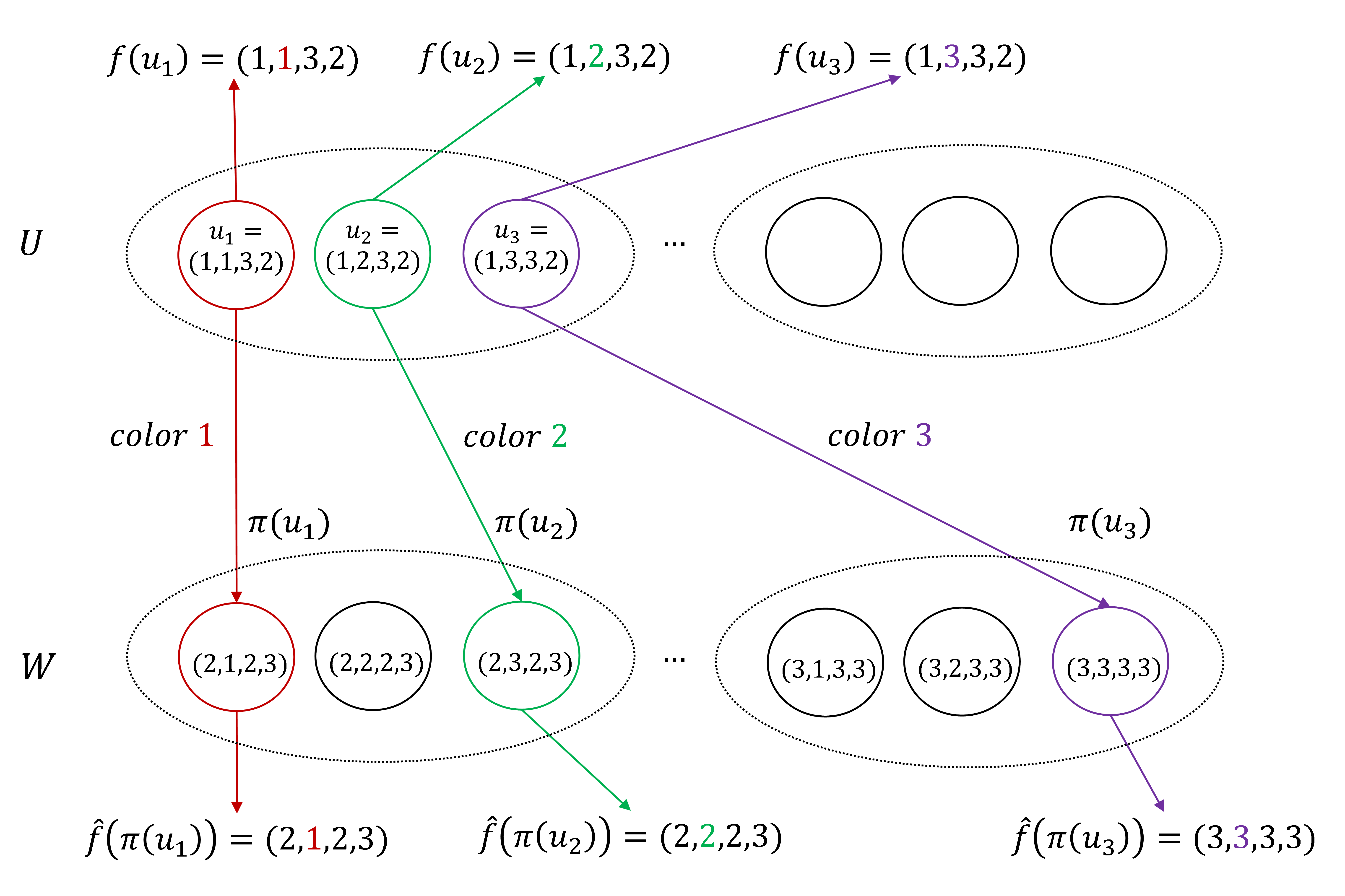}
	\caption{Illustration of Lemma \ref{lemma:shuffle} for $M=\{1,2,3\}$, $\upsilon=4$, $h=1$ (recall that vectors start with the $0$-th coordinate). Each vertex in $U$ (i.e., the solid circle) is a 4-dimensional vector. Each mega-vertex in $\bar{U}$ (i.e., the dotted circle) contains exactly 3 vertices. Each solid line between vertices represents the mapping $\pi$ which is colored by one of 3 colors.}
	\label{fig:shuffle_lemma}
\end{figure}

Briefly speaking, $f$ and $\hat{f}$ shuffles the $h$-coordinate of $\vey$ and its image under $\pi$ such that they become identical.

Now we are ready to prove Lemma~\ref{lemma:shuffle-1}.

\begin{proof}[Proof of Lemma~\ref{lemma:shuffle-1}]
For ease of presentation, we let ${\cal{M}}=\{1,2,\cdots,t\}$. 
Note that $|{\cal{M}}^{\upsilon}|=t^{\upsilon}$. Sort elements (vectors) of ${\cal{M}}^{\upsilon}$ in an arbitrary order and denote them by $\{\vecc_1,\vecc_2,\cdots,\vecc_{t^{\upsilon}}\}$. We create a bipartite graph $G=(U\cup W, E)$ to represent $\pi$ as follows: Both $U$ and $W$ contain $t^{\upsilon}$ vertices. Let $U=\{u_1,u_2,\cdots,u_{t^{\upsilon}}\}$ and $W=\{w_1,w_2,\cdots,w_{t^{\upsilon}}\}$. There is an edge between $u_i$ and $w_j$ if and only if $\pi(\vecc_i)=\vecc_j$. Since $\pi$ is one-to-one mapping, $G$ is 1-regular.

 As each $u_i$ and $w_i$ correspond to $\vecc_i$, we will slightly abuse notation and write $u_i[h]$ or $u_i[-h]$ to refer to $\vecc_i[h]$ and $\vecc_i[-h]$. 

\smallskip
\noindent\textbf{Contraction.} We contract the graph $G$ as follows. We partition $U$ (or $W$) into $t^{\upsilon-1}$ subsets such that $u_i$ and $u_j$ (or $w_i$ and $w_j$) are in the same subset if and only if $u_i[-h]=u_j[-h]$ (or $w_i[-h]=w_j[-h]$). Denote by $\bar{U}_i$ (or $\bar{W}_i$), $1\le i\le t^{\upsilon-1}$, all the subsets in the partition of $U$ (or $W$). It is clear that each  $\bar{U}_i$ (or $\bar{W}_i$) contains exactly $t$ vertices from $U$ (or $W$). We now contract all the $t$ vertices in  $\bar{U}_i$ (or $\bar{W}_i$) into one mega-vertex, and denote this mega-vertex as $\bar{u}_i$ (or $\bar{w}_i$). By doing so we generate parallel edges, that is, there are $\ell$ parallel edges between each pair of mega-vertices $\bar{u}_i$ and $\bar{w}_j$ if there are $\ell$ edges between vertices in  $\bar{U}_i$ and $\bar{W}_j$ in the original graph $G$. We denote by $\psi$ an arbitrary one-to-one mapping between a parallel edge (between mega-vertices $\bar{u}_i$ and $\bar{w}_j$) and an edge in $G$ (between some vertex in subset $\bar{U}_i$ and some vertex in subset $\bar{W}_j$).  
Denote by $\bar{G}=(\bar{U}\cup \bar{W},\bar{E})$ the contracted graph. Given that $G$ is $1$-regular and every mega-vertex contains exactly $t$ vertices, we have the following observation:
\begin{observation}
The contracted graph $\bar{G}=(\bar{U}\cup \bar{W},\bar{E})$ is a $t$-regular bipartite graph.
\end{observation}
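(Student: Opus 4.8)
The plan is to read off both properties of $\bar G$ directly from the construction, using only two facts: that $G$ is $1$-regular (it is the graph of the bijection $\pi$, hence a perfect matching, so every vertex of $U$ and every vertex of $W$ has degree exactly $1$), and that each mega-vertex is the contraction of exactly $t$ original vertices.

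First I would verify that $\bar G$ is bipartite with parts $\bar U$ and $\bar W$. Every edge of $\bar G$ is the image under contraction of some edge of $G$, and $G$ has no edge inside $U$ or inside $W$; since the partition of $U$ refines $U$ and the partition of $W$ refines $W$, a contracted edge still joins a mega-vertex of $\bar U$ to a mega-vertex of $\bar W$. Hence $\bar G$ is bipartite, and $|\bar U|=|\bar W|=t^{\upsilon-1}$, because for each of the $t^{\upsilon-1}$ possible values of a vector's $(-h)$-part there is exactly one block in the corresponding partition.

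Second I would compute the degrees. Fix a block $\bar U_i$: it consists of all vectors in ${\cal M}^{\upsilon}$ with a prescribed $(-h)$-part, and there are exactly $t$ such vectors (one per choice of the $h$-th coordinate from ${\cal M}$), so $|\bar U_i|=t$. Each of these $t$ vertices carries exactly one edge of $G$ since $G$ is $1$-regular; these $t$ edges are pairwise distinct, and when the block is contracted to $\bar u_i$ each such edge becomes an edge of $\bar G$ incident to $\bar u_i$, with parallel edges retained with multiplicity. Therefore $\deg_{\bar G}(\bar u_i)=t$, and the symmetric argument on the $W$-side (again invoking $1$-regularity of $G$, which holds on both sides as $\pi$ is a bijection) gives $\deg_{\bar G}(\bar w_j)=t$ for every $j$. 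Consequently $\bar G$ is $t$-regular bipartite, which is the claim.

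There is no genuine obstacle here; the statement is an immediate bookkeeping consequence of the contraction step. The only points that need care are to count edge multiplicities rather than distinct neighbouring mega-vertices (so that the degree really is $t$ and not merely the number of adjacent mega-vertices), and to note explicitly that $1$-regularity of $G$ holds on both the $U$-side and the $W$-side precisely because $G$ encodes a bijection.
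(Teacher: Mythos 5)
Your proof is correct and follows the same reasoning the paper uses (the paper states the observation follows from $G$ being $1$-regular and each mega-vertex containing exactly $t$ vertices, which is precisely the bookkeeping you carry out). You simply spell out the details—bipartiteness preserved under contraction, degree counted with edge multiplicity—that the paper leaves implicit.
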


\smallskip
\noindent\textbf{Coloring.} It is known that every bipartite regular graph admits a perfect matching (see, e.g.~\cite{konig1916graphen}). Consequently, every $t$-regular bipartite graph can be decomposed into $t$ perfect matchings. We decompose $\bar{G}$ into $t$ perfect matchings and color edges in each perfect matching with a distinct color. 
Overall we have used $t$ colors. Since $|{\cal{M}}|=t$, we can map the $i$-th color to integer $i\in {\cal{M}}$. %In the following we call the $i$-th color as color  $\mu_i$.

Recall that $\psi$ is a one-to-one mapping between $\bar{E}$ and $E$, hence via $\psi$ we also obtain a coloring for $E$ (by coloring each edge in $E$ with the same color as its corresponding edge in $\bar{E}$). Recall that $G$ is 1-regular. Thus we can extend the edge coloring to a vertex coloring, such that each vertex in $G$ is colored with the same color as the unique edge incident to it. 

\smallskip
\noindent\textbf{Define functions $f$ and $\hat{f}$.} 
Consider every vertex set $\bar{U}_k$. We know $\bar{U}_k$ contains $t$ vertices, and let $\bar{U}_k=\{u_{k_1},u_{k_2},\cdots,u_{k_t}\}$.  By definition $u_{k_i}[-h]$'s are identical and  $u_{k_i}[h]$'s are exactly the $t$ elements in ${\cal{M}}$. Recall that we decompose $\bar{G}$ into $t$ perfect matchings and each perfect matching is colored with a unique color, we know the $t$ parallel edges incident to the mega-vertex $\bar{u}_k$ are colored with $t$ distinct colors. Consequently, each vertex $u_{k_i}$ is also colored with a distinct color. Recall the one-to-one correspondence between a vertex $u_j$ and $\vecc_j\in{\cal{M}}$. %, and the one-to-one correspondence between the $j$-th color and $\mu_j$. 
Now we define a function $f$ such that $f(\vecc_{k_i})[-h]=\vecc_{k_i}[-h]$, and $f(\vecc_{k_i})[h]$ equals the color of $u_{k_i}$, where we interpret each color as a number in $\{1,\cdots,t\}$. Consequently, $(f(\vecc_{k_1}),f(\vecc_{k_2}),\cdots,f(\vecc_{k_t}))$ is a permutation of $(\vecc_{k_1},\vecc_{k_2},\cdots,\vecc_{k_t})$. Hence, $f\in Aut({\cal{M}}^{\upsilon})$.

Similarly, we consider each $\bar{W}_k=\{w_{k_1},w_{k_2},\cdots,w_{k_t}\}$  and define a function $\hat{f}$ such that $\hat{f}(\vecc_{k_i})[-h]=\vecc_{k_i}[-h]$, and $\hat{f}(\vecc_{k_i})[h]$ equals the color of $w_{k_i}$. Consequently, $(\hat{f}(\vecc_{k_1}),\hat{f}(\vecc_{k_2}),\cdots,\hat{f}(\vecc_{k_t}))$ is also a permutation of $(\vecc_{k_1},\vecc_{k_2},\cdots,\vecc_{k_t})$, and $\hat{f}\in Aut({\cal{M}}^{\upsilon})$.

Furthermore, the color of each $u_i$ or $w_j$ is defined as the color of the edge incident to it, hence if there is an edge between $u_i$ and $w_j$ in $G$, then we know $(f(u_i))[h]=(\hat{f}(w_j))[h]$. Hence, Lemma~\ref{lemma:shuffle} is proved.
\end{proof}

See Figure~\ref{fig:shuffle_lemma} for an illustration of the mapping $f,\hat{f}$ we construct in Lemma~\ref{lemma:shuffle}. Iteratively applying Lemma~\ref{lemma:shuffle}, we are able to prove the following.

\begin{proof}
We prove the following statement by induction: For $0\le k\le \upsilon-1$, there exist $h$-shufflers $f_h,\hat{f}_h\in Aut({\cal{M}}^{\upsilon})$ for every $0\le h\le k$ such that $F_k(\vey)[h']=\hat{F}_k(\pi(\vey))[h']$ for any $\vey\in {\cal{M}}^{\upsilon}$ and $h'\le k$, where $F_k=f_{k}\circ f_{k-1}\circ \cdots\circ f_{0}$ and $\hat{F}_k=\hat{f}_{k} \circ \hat{f}_{k-1}\circ \cdots\circ \hat{f}_{0}$.

The statement is true for $k=1$ by Lemma~\ref{lemma:shuffle}. Suppose the statement is true for $k$, we prove it is true for $k+1$.

Consider $\hat{f}_{k}\circ \hat{f}_{k-1}\circ\cdots\circ \hat{f}_0\circ \pi\circ f_0^{-1}\circ f_1^{-1}\circ\cdots \circ f_k^{-1}$. According to Lemma~\ref{lemma:shuffle}, there exist $(k+1)$-shufflers $f_{k+1},\hat{f}_{k+1}$ such that 
\begin{eqnarray}\label{eq:induction}
\left( f_{k+1}(\vey)\right)[k+1]=\left(\left(\hat{f}_{k+1}\circ\hat{f}_{k}\circ \cdots\circ \hat{f}_0\circ \pi\circ f_0^{-1}\circ f_2^{-1}\circ\cdots \circ f_k^{-1}\right)(\vey)\right)[k+1], \quad\forall \vey\in {\cal{M}}^{\upsilon}.
\end{eqnarray}

Since $f_k\circ f_{k-1}\circ\cdots\circ f_1\in Aut({\cal{M}}^{\upsilon})$, for every $\vey\in {\cal{M}}^{\upsilon}$ there exists some $\vez\in {\cal{M}}^{\upsilon}$ such that $\vey=\left(f_k\circ f_{k-1}\circ\cdots\circ f_1\right)(\vez)$, plug this into Equation~\eqref{eq:induction}, for all $\vez\in {\cal{M}}^{\upsilon}$ we get 
\begin{eqnarray*}
&&\left(f_{k+1}\circ f_k\circ\cdots\circ f_0\right)(\vez)[k+1]\\
&=&\left(\left(\hat{f}_{k+1}\circ \cdots\circ \hat{f}_0\circ \pi\circ f_0^{-1}\circ \cdots \circ f_k^{-1}\circ f_k\circ f_{k-1}\circ\cdots\circ f_0\right)(\vez)\right)[k+1]\\
&=&\left(\left(\hat{f}_{k+1}\circ \cdots\circ \hat{f}_0\circ \pi\right)(\vez)\right)[k+1]
\end{eqnarray*}

Moreover, for any $h\le k$, recall that $f_{k+1}$ and $\hat{f}_{k+1}$ does not change the $h$-th coordinate, hence 
\begin{eqnarray*}
\left(f_{k+1}\circ f_k\circ\cdots\circ f_0\right)(\vez)[h]
&=&\left( f_k\circ\cdots\circ f_0\right)(\vez)[h]\\
&=&\left(\left(\hat{f}_{k}\circ \cdots\circ \hat{f}_1\circ \pi\right)(\vez)\right)[h]\\
&=&\left(\left(\hat{f}_{k+1}\circ\hat{f}_{k}\circ \cdots\circ \hat{f}_0\circ \pi\right)(\vez)\right)[h]
\end{eqnarray*}

Hence, the statement holds for all $k\le \upsilon-1$, and Lemma~\ref{lemma:shuffle-1} is proved.
\end{proof}

\section{Construction of the Scheduling Instance}\label{appsec:reduction-construction}

Now we provide the details of the reduction. We first recall all the functions and parameters we have set in proving Lemma~\ref{lemma:linked-sum}. %For ease of presentation, we slightly change some notations. 
\begin{itemize}
    \item Recall that $\tau$ is the one-to-one mapping that maps $i$ to $k$ for every $(z_i\oplus \neg z_k)\in C_2$. %In this subsection we abuse the notation such that $\tau$ denotes the one-to-one mapping that maps $i$ to $k$ for each pair $(i,k)$.
    \item Apply Lemma~\ref{lemma:uniquesum-2} by taking $N=n$, we get $\sigma: \Z_n\rightarrow \Z_{n'}$ where $n'=n^{1+\OO(\frac{1}{\sqrt{\log\log n}})}$, $\sigma(i)=\sum_{j=0}^\gamma \vea_i[j]x^j$ for $\vea_i[j]\in {\cal{S}}_d$ where $\gamma=\lceil\frac{\log n}{\log\log n}\rceil+O(\frac{\log n}{(\log\log n)^{3/2}})$, $d=e^{\OO(\sqrt{\log\log n})}\log n$ and $x=5d+1$. We lift the dimension such that $\vea_i=(\vea_i[0],\cdots,\vea_i[\gamma+3])$ where $\vea_i[\gamma+1]=\vea_i[\gamma+2]=\vea_i[\gamma+3]=0$. 
    \item Apply Lemma~\ref{lemma:uniquesum-2} again by setting $N=4\gamma+4$, we get another injection $\sigma'$ such that $\sigma'(y)=o(\log^2 n)<x^2$ for $y\le 4\gamma+4$.
    %\item ${\cal{N}}$ denotes the whole set of numbers $\{\sum_{h=0}^\gamma a_hx^h: a_h\in {\cal{S}}_d\}\supseteq \{\sigma(i):1\le i\le n\}$.
    \item We have constructed in the proof of Lemma~\ref{lemma:linked-sum}: %Section~\ref{appsec:link-sum}:
    $\veb^0_{i}=\vea_i,\veb^1_{i},\veb^2_{i},\cdots,\veb^{2\gamma+2}_i=\hat{\veb}^{2\gamma+2}_k$, $\hat{\veb}^{2\gamma+1}_k$, $\cdots$, $\hat{\veb}^1_k,\hat{\veb}^0_k=\vea_k$ where $k=\tau(i)$. %Since we have lifted up the dimension of $\vea_i$'s, we have $\vea_i=\veb^0_{i}=\hat{\veb}^0(i)$. %where $e^{(i,k)}_j=\veb^j(i)\vex$ for $1\le j\le 2\gamma+2$, and $e^{(i,k)}_j=\hat{\veb}^{4\gamma+4-j}(i)\vex$ for $2\gamma+2$ in Lemma~\ref{lemma:linked-sum}.
    \item Again, each vector $\vecc$ represents the polynomial $\sum_i\vecc[i]x^i$. Polynomials and vectors are used interchangeably.
    \item Let $\sigma_{max}=x^{\gamma+6}=n^{1+O(\frac{1}{\sqrt{\log\log n}})}$, and thus $\sigma_{max}>x\cdot \veb^h_{i}\vex$ and $\sigma_{max}> x\cdot \hat{\veb}^h_{i}\vex$ for any $i,h$, and also $\sigma_{max}>\sigma'(y)x^{\gamma+3}$ for any $y\le 4\gamma+4$.
\end{itemize}

\smallskip
\noindent\textbf{Construction of the scheduling instance.} 
We shall construct two major classes of jobs, gap jobs and main jobs. %Gap jobs have very huge processing times. 
Main jobs are divided into $5$ types: dummy jobs, clause jobs, truth-assignment jobs, link jobs and variable jobs. 
The three types -- truth-assignment, link and variable jobs -- are further divided into sub-types, e.g., variable jobs are further divided into 4 sub-types (see Table~\ref{table:job-time}). A gap job is defined as a fixed huge value $10^{14}\sigma_{max}$ subtracting several main jobs. 

The processing time of each job can be expressed as a summation over three components: Type, Index and True/False. The type component of a main job is always of the form $10^j\sigma_{max}$ where $2\le j\le 13$. Table~\ref{table:job-time} summarizes the value $j$ for each kind of main job, e.g., the type-component of a variable job whose sub-type belongs to V$_{\cdot,+,1}$ is $10^5\sigma_{max}$. The index-component of clause jobs, truth-assignment jobs and variable jobs is of the form $10\sigma(i)$ for some index $i$. Dummy jobs do not have index-component; Link jobs have much more complicated index-components, which will be specified in the following part of this subsection. Each main job has a true version and a false version. A gap job does not have a true/false version but only one unified version.     

\begin{table}[!ht]
\renewcommand\arraystretch{1.3}
%\scriptsize
\setlength\tabcolsep{1.5pt}
\centering{}
\resizebox{14cm}{!}{
\begin{tabular}{|c|c|c|c|c|c|c|c|c|c|c|c|c|}
\hline
$\backslash$&Dummy& Clause & \multicolumn{4}{c|}{Truth-assignment} & \multicolumn{2}{c|}{Link} & \multicolumn{4}{c|}{Variable}  \\ \hline
$\backslash$&DM& CL$_{\cdot}$ &  TR$_{\cdot,a}$   &  TR$_{\cdot,b}$   &  TR$_{\cdot,c}$   & TR$_{\cdot,d}$    &  LN$_{\cdot,+}$ & LN$_{\cdot,-}$  & V$_{\cdot,+,1}$     &    V$_{\cdot,+,2}$       &   V$_{\cdot,-,1}$  &   V$_{\cdot,-,2}$      \\ \hline
$\zeta(\cdot)$&13 & 12 &   11  &  10   &   9  &   8  &       7    &      6     &  5   &   4  &   3  &   2  \\ \hline
\end{tabular}
}
\caption{Type-component of main jobs}
\label{table:job-time}
\end{table}

Define a function $\zeta$ that maps the (sub)-type of a main job to the exponent of $10$ as indicated by Table~\ref{table:job-time}, e.g., $\zeta(\textrm{TR}_{\cdot,a})=11$. 
Now we provide the exact processing time of every job. In the following $\rho\in\{T,F\}$, $\iota\in\{+,-\}$.

$\bullet$ Variable jobs: 4 jobs $V_{i,+,1}^{\rho}$ and $V_{i,+,2}^{\rho}$ are
constructed for the positive literal $z_i$, and 4 jobs $V_{i,-,1}^{\rho}$ and $V_{i,-,2}^{\rho}$ are
for the negative literal $\neg z_i$.
\begin{eqnarray*}
&&s(V_{i,\iota,\kappa}^T)=10^{\zeta(\textrm{V}_{\cdot,\iota,k})}\sigma_{max}+10\sigma(i)+1,\\
&&s(V_{i,\iota,\kappa}^F)=10^{\zeta(\textrm{V}_{\cdot,\iota,k})}\sigma_{max}+10\sigma(i)+2,\quad
\kappa=1,2, \iota=+,-
\end{eqnarray*}

$\bullet$ Truth-assignment jobs: 8 jobs $\textrm{TR}_{i,a}^{\rho}$, $\textrm{TR}_{i,b}^{\rho}$,
$\textrm{TR}_{i,c}^{\rho}$ and $\textrm{TR}_{i,d}^{\rho}$ are constructed for every $i$.
\begin{eqnarray*}
&&s(\textrm{TR}_{i,\kappa}^{T})=10^{\zeta(\textrm{TR}_{\cdot,\kappa})}\sigma_{max}+10\sigma(i)+1.5, \\
&&s(\textrm{TR}_{i,\kappa}^{F})=10^{\zeta(\textrm{TR}_{\cdot,\kappa})}\sigma_{max}+10\sigma(i)+1. \quad \kappa=a,b,c,d
\end{eqnarray*}

$\bullet$ Clause jobs: there are 3 clause jobs for every clause $cl_{\ell}\in C_1$
where $\ell\in \{2,5,\cdots,n-1\}$, with one $\textrm{CL}_{\ell}^T$ and two copies of $\textrm{CL}_{\ell}^F$:
$$s(\textrm{CL}_{\ell}^T)=10^{\zeta(\textrm{CL}_{\cdot})}\sigma_{max}+10\sigma(\ell)+2, \quad s(\textrm{CL}_{\ell}^F)=10^{\zeta(\textrm{CL}_{\cdot})}\sigma_{max}+10\sigma(\ell)+1.$$

$\bullet$ Dummy jobs: there are $n+n/3$ true dummy jobs $\textrm{DM}^T$ of processing time $10^{\zeta(\textrm{DM})}\sigma_{max}+1$, and $n-n/3$
false dummy jobs $\textrm{DM}^F$ of processing time $10^{\zeta(\textrm{DM})}\sigma_{max}+2$.

$\bullet$ Link jobs: %For every clause in $C_2$, say, $(z_i\oplus z_k)$, we create $4\gamma+2$ link jobs. 
We create $4\gamma+4$ links jobs for each clause in $C_2$.
Recall the vectors $\veb^h_{i}$ and $\hat{\veb}^h_i$ for $1\le h\le 2\gamma+2$. %and recall that $\vea_i=(\vea_i[0],\vea_i[1],\cdots,\vea_i[{\gamma}],0,0)$, $\vex=(1,x,\cdots,x^{\gamma+2})$. 
For every clause $(z_i\oplus z_k)\in C_2$ and every $1\le h\le 2\gamma+2$, we create two pairs of link jobs, LN$_{i,h,+}^T$ and LN$_{i,h,+}^F$, and LN$_{k,h,-}^T$ and LN$_{k,h,-}^F$ such that
$$s(\textrm{LN}_{i,h,+}^T)=10^{\zeta(\textrm{LN}_{\cdot,+})}\sigma_{max}+10\veb^{h}(i)\vex+1, \quad s(\textrm{LN}_{i,h,+}^F)=10^{\zeta(\textrm{LN}_{\cdot,+})}\sigma_{max}+10\veb^h_{i}\vex+2,$$
$$s(\textrm{LN}_{k,h,-}^T)=10^{\zeta(\textrm{LN}_{\cdot,-})}\sigma_{max}+10\hat{\veb}^{h}(k)\vex+1, \quad s(\textrm{LN}_{k,h,-}^F)=10^{\zeta(\textrm{LN}_{\cdot,-})}\sigma_{max}+10\hat{\veb}^{h}(i)\vex+2.$$
%Specifically, we also create a pair of link jobs LN$_{k,-1,-}^T$ and LN$_{k,-1,-}^F$ such that
%$$s(\textrm{LN}_{k,-1,-}^T)=10^{\zeta(\textrm{LN}_{\cdot,-})}\sigma_{max}+10\veA_{-1}(k)\vex+1, \quad s(\textrm{LN}_{k,-1,-}^F)=10^{\zeta(\textrm{LN}_{\cdot,-})}\sigma_{max}+10\veA_{-1}(k)\vex+2.$$
%Notice that the processing time of $\textrm{LN}_{i,h,+}^\rho$ depends on $k=\tau(i)$, which is uniquely determined by the index $i$. Furthermore, it should be clear that $\veA_h({i,k})\vex,\veA_h(k)\vex< \sigma_{max}=x^{\gamma+2}$.

Let
$\textrm{TR}_A$, $\textrm{TR}_B$, $\textrm{TR}_C$, $\textrm{TR}_D$ be the set of jobs $\textrm{TR}_{i,a}^{\rho}$, $\textrm{TR}_{i,b}^{\rho}$,
$\textrm{TR}_{i,c}^{\rho}$ and $\textrm{TR}_{i,d}^{\rho}$ respectively. Sometimes we may
drop the superscript for simplicity, e.g., we use $\textrm{TR}_{i,a}$ to represent
$\textrm{TR}_{i,a}^T$ or $\textrm{TR}_{i,a}^F$. We construct gap jobs. There are 5 kinds of gap jobs.

$\bullet$ There are two gap jobs (variable-link jobs) $\theta_{\textrm{LN},i,+}$ and
$\theta_{\textrm{V-L},i,-}$ for each variable $z_i$:
\begin{eqnarray*}
s(\theta_{\textrm{V-L},i,+})&=&(10^{14}-10^{7}-10^4)\sigma_{max}-10\left(\vea_i[0]+2\sum_{j=1}^{\gamma}\vea_i[j]x^j+(F_0(\vea_i))[0]\cdot x^{\gamma+1}+{\sigma}'(1)x^{\gamma+2}\right)-3\\
&=& \left(10^{14}-10^{\zeta(\textrm{LN}_{\cdot,+})}-10^{\zeta(V_{\cdot,+,2})}\right)\sigma_{max}-10(\vea_i+\veb^1_{i})\vex-3\\
s(\theta_{\textrm{V-L},i,-})&=&(10^{14}-10^6-10^2)\sigma_{max}-10\left(\vea_i[0]+2\sum_{j=1}^{\gamma}\vea_i[j]x^j+(\hat{F}(\vea_i))[0]\cdot x^{\gamma+1}+{\sigma}'(1)x^{\gamma+2}\right)-3\\
&=& \left(10^{14}-10^{\zeta(\textrm{LN}_{\cdot,-})}-10^{\zeta(V_{\cdot,-,2})}\right)\sigma_{max}-10\left(\vea_i+\hat{\veb}^1_i\right)\vex-3
\end{eqnarray*}

$\bullet$ There are $4\gamma+3$ gap jobs (link-link jobs), $\theta_{\textrm{L-L},i,h,+}$ and
$\theta_{\textrm{L-L},i,h,-}$ and $\theta_{\textrm{L-L},i,+,-}$ for every $1\le i\le n$. For
$h=1,2,\cdots,2\gamma+1$, we define
%\begin{eqnarray*}
%s(\theta_{\textrm{L-L},i,h,+})&=&(10^{14}-2\times 10^7)\sigma_{max}-10[2\sum_{j=0}^{h}a_j(i)x^j+a_{h+1}(i)x^{h+1}+a_{h+1}(\tau(i))x^{h+2}+2\sum_{j=h+2}^{\gamma}a_j(\tau(i))x^{j+1}]-3\\
%&=& (10^{14}-2\times 10^{\zeta(\textrm{LN}_{\cdot,+})})\sigma_{max}-10[\veA_h(i,\tau(i))\vex+\veA_{h+1}(i,\tau(i))\vex]-3\\
%s(\theta_{\textrm{L-L},i,h,-})&=&(10^{14}-2\times 10^6)\sigma_{max}-10[2\sum_{j=0}^{h}a_j(i)x^j+a_{h+1}(i)x^{h+1}+a_{h+1}(i)x^{h+2}+2\sum_{j=h+2}^{\gamma}a_j(i)x^{j+1}]-3\\
%&=& (10^{14}-2\times 10^{\zeta(\textrm{LN}_{\cdot,-})})\sigma_{max}-10[\veA_h(i)\vex+\veA_{h+1}(i)\vex]-3
%\end{eqnarray*}
\begin{eqnarray*}
s(\theta_{\textrm{L-L},i,h,+})
&=& \left(10^{14}-2\times 10^{\zeta(\textrm{LN}_{\cdot,+})}\right)\sigma_{max}-10\left(\veb^h_{i}\vex+\veb^{h+1}_i\right)\vex-3\\
s(\theta_{\textrm{L-L},i,h,-})
&=& \left(10^{14}-2\times 10^{\zeta(\textrm{LN}_{\cdot,-})}\right)\sigma_{max}-10\left(\hat{\veb}^h_i\vex+\hat{\veb}^{h+1}_i\right)\vex-3
\end{eqnarray*}
Additionally, we define
\begin{eqnarray*}
s(\theta_{\textrm{L-L},i,+,-})
= \left(10^{14}-10^{\zeta(\textrm{LN}_{\cdot,+})}-10^{\zeta(\textrm{LN}_{\cdot,-})}\right)\sigma_{max}-2\times 10\veb^{2\gamma+2}_i\vex-3
\end{eqnarray*}
Here recall that $\veb^{2\gamma+2}_i=\hat{\veb}^{2\gamma+2}_{\tau(i)}$.

$\bullet$ There are three gap jobs (variable-clause-dummy jobs) for each
$cl_{\ell}\in C_1$ ($\ell\in \{2,5,\cdots,n-1\}$):
for $i=\ell-1,\ell,\ell+1$,
if $z_i\in cl_\ell$, we construct $\theta_{\textrm{V-C-D},\ell,i,+}$, otherwise $\neg
z_i\in cl_\ell$, and we construct $\theta_{\textrm{V-C-D},\ell,i,-}$:
\begin{eqnarray*}
&&s(\theta_{\textrm{V-C-D}, \ell,i,+})=\left(10^{14}-10^{\zeta(\textrm{DM})}-10^{\zeta(\textrm{CL}_{\cdot})}-10^{\zeta(V_{\cdot,+,1})}\right)\sigma_{max}-10(\sigma(\ell)+\sigma(i))-4,\\
&&s(\theta_{\textrm{V-C-D},\ell,i,-})=\left(10^{14}-10^{\zeta(\textrm{DM})}-10^{\zeta(\textrm{CL}_{\cdot})}-10^{\zeta(V_{\cdot,-,1})}\right)\sigma_{max}-10(\sigma(\ell)+\sigma(i))-4.
\end{eqnarray*}

$\bullet$ There is one gap job (variable-dummy job) for each variable.
Notice that each variable appears exactly once in clauses of $C_1$, if
$z_i$ appears in $C_1$, we construct
$\theta_{\textrm{V-D},i,-}$. Otherwise, we construct $\theta_{\textrm{V-D},i,+}$ instead.
\begin{eqnarray*}
&&s(\theta_{\textrm{V-D},i,+})=\left(10^{14}-10^{\zeta(\textrm{DM})}-10^{\zeta(V_{\cdot,+,1})}\right)\sigma_{max}-10\sigma(i)-3,\\
&&s(\theta_{\textrm{V-D},i,-})=\left(10^{14}-10^{\zeta(\textrm{DM})}-10^{\zeta(V_{\cdot,-,1})}\right)\sigma_{max}-10\sigma(i)-3.
\end{eqnarray*}

Thus, for each clause $cl_\ell$ and $i=\ell-1,\ell,\ell+1$, either
$\theta_{\textrm{V-D},i,+}$ and $\theta_{\textrm{V-C-D},\ell,i,-}$ exist, or
$\theta_{\textrm{V-D},i,-}$ and $\theta_{\textrm{V-C-D},\ell,i,+}$ exist.

$\bullet$ There are four gap jobs (variable-truth jobs) for each
variable $z_i$, namely $\theta_{\textrm{V-T},i,a,c}$, $\theta_{\textrm{V-T},i,b,d}$,
$\theta_{\textrm{V-T},i,a,d}$ and $\theta_{\textrm{V-T},i,b,c}$:
\begin{eqnarray*}
&&s(\theta_{\textrm{V-T},i,a,c})=\left(10^{14}-10^{\zeta(V_{\cdot,+,1})}-10^{\zeta(\textrm{TR}_{\cdot,a})}-10^{\zeta(\textrm{TR}_{\cdot,c})}\right)\sigma_{max}-30\sigma(i)-4,\\
&&s(\theta_{\textrm{V-T},i,b,d})=\left(10^{14}-10^{\zeta(V_{\cdot,+,2})}-10^{\zeta(\textrm{TR}_{\cdot,b})}-10^{\zeta(\textrm{TR}_{\cdot,d})}\right)\sigma_{max}-30\sigma(i)-4,\\
&&s(\theta_{\textrm{V-T},i,a,d})=\left(10^{14}-10^{\zeta(V_{\cdot,-,1})}-10^{\zeta(\textrm{TR}_{\cdot,a})}-10^{\zeta(\textrm{TR}_{\cdot,d})}\right)\sigma_{max}-30\sigma(i)-4,\\
&&s(\theta_{\textrm{V-T},i,b,c})=\left(10^{14}-10^{\zeta(V_{\cdot,-,2})}-10^{\zeta(\textrm{TR}_{\cdot,b})}-10^{\zeta(\textrm{TR}_{\cdot,c})}\right)\sigma_{max}-30\sigma(i)-4,
\end{eqnarray*}

Overall, we have constructed $2\gamma n+8n$ gap jobs. We also construct $2\gamma n+8n$ machines. The following Table~\ref{table:job-time-whole} summarizes the processing times of all jobs. %Readers may refer to Appendix~\ref{appsec:remaining-proofs} for missing proofs of the reduction.

\begin{table}[!ht]
\renewcommand\arraystretch{1.3}
%\scriptsize
\setlength\tabcolsep{1.5pt}
\centering
\resizebox{14cm}{!}{

\begin{tabular}{|c|l|l|l|c|c|}
\hline
       Job-type           & Sub-type & Type-component & Index-component &   \makecell{T/F\\(T)}        &     \makecell{T/F\\(F)}      \\ \hline
\multirow{4}{*}{Variable} & $V_{i,+,1}$ &  $10^{\zeta(\textrm{V}_{\cdot,+,1})}\sigma_{max}$ & $10\sigma(i)$ &      1     &     2      \\ \cline{2-6} 
                  & $V_{i,+,2}$ & $10^{\zeta(\textrm{V}_{\cdot,+,2})}\sigma_{max}$ & $10\sigma(i)$ &     1     &     2      \\ \cline{2-6} 
                  & $V_{i,-,1}$ & $10^{\zeta(\textrm{V}_{\cdot,-,1})}\sigma_{max}$  & $10\sigma(i)$ &      1     &     2      \\ \cline{2-6} 
                  & $V_{i,-,2}$ & $10^{\zeta(\textrm{V}_{\cdot,-,2})}\sigma_{max}$ & $10\sigma(i)$ &     1      &        2   \\ \hline
\multirow{4}{*}{Truth-assignment} & TR$_{i,a}$ & $10^{\zeta(\textrm{TR}_{\cdot,a})}\sigma_{max}$ & $10\sigma(i)$ &    1.5       &     1      \\ \cline{2-6} 
                  & TR$_{i,b}$ & $10^{\zeta(\textrm{TR}_{\cdot,b})}\sigma_{max}$  & $10\sigma(i)$ &     1.5      &     1      \\ \cline{2-6} 
                  & TR$_{i,c}$ & $10^{\zeta(\textrm{TR}_{\cdot,c})}\sigma_{max}$  & $10\sigma(i)$ &      1.5     &     1      \\ \cline{2-6} 
                  &  TR$_{i,d}$ & $10^{\zeta(\textrm{TR}_{\cdot,d})}\sigma_{max}$ & $10\sigma(i)$ &     1.5      &     1      \\ \hline
                 Clause  & CL$_\ell$ & $10^{\zeta(\textrm{CL}_{\cdot})}\sigma_{max}$ & $10\sigma(\ell)$ &       2    &   1        \\ \hline
                 Dummy & DM &  $10^{\zeta(\textrm{DM})}\sigma_{max}$ & 0 &     1      &    2       \\ \hline
\multirow{2}{*}{\makecell{Link\\$h\in\{0,1,\cdots,2\gamma+2\}$}} & {LN$_{i,h,+}$} & $10^{\zeta(\textrm{LN}_{\cdot,+})}\sigma_{max}$ & $10\veb^h_{i}\vex$ &    1       &      2     \\ \cline{2-6} 
                  & {LN$_{i,h,-}$} & $10^{\zeta(\textrm{LN}_{\cdot,-})}\sigma_{max}$ & $10\veb^h_{i}\vex$ &    1       &  2         \\ \cline{2-6} \hline

\multirow{2}{*}{Variable-Link} & $\theta_{\textrm{V-L},i,+}$ & $(10^{14}-10^{\zeta(\textrm{LN}_{\cdot,+})}-10^{\zeta(V_{\cdot,+,2})})\sigma_{max}$ & $-10(\vea_i+\veb^1_{i})\vex$ & \multicolumn{2}{l|}{-3} \\ \cline{2-6} 
                  & $\theta_{\textrm{V-L},i,-}$ & $(10^{14}-10^{\zeta(\textrm{LN}_{\cdot,-})}-10^{\zeta(V_{\cdot,-,2})})\sigma_{max}$ & $-10(\vea_i+\hat{\veb}^1_i)\vex$ & \multicolumn{2}{l|}{-3} \\ \hline
\multirow{3}{*}{\makecell{Link-Link\\$h\in\{1,\cdots,2\gamma+1\}$}} & {$\theta_{\textrm{L-L},i,h,+}$} & $(10^{14}-2\times 10^{\zeta(\textrm{LN}_{\cdot,+})})\sigma_{max}$ & $-10(\veb^h_{i}+\veb^{h+1}_i)\vex$ & \multicolumn{2}{l|}{-3} \\ \cline{2-6} 
                  & {$\theta_{\textrm{L-L},i,h,-}$} & $(10^{14}-2\times 10^{\zeta(\textrm{LN}_{\cdot,-})})\sigma_{max}$ & $-10(\veb^h_{i}+\hat{\veb}^{h+1}_i)\vex$ & \multicolumn{2}{l|}{-3} \\ \cline{2-6} 
                  & $\theta_{\textrm{L-L},i,+,-}$ & $(10^{14}-10^{\zeta(\textrm{LN}_{\cdot,+})}-10^{\zeta(\textrm{LN}_{\cdot,-})})\sigma_{max}$ & $-20\veb^{2\gamma+2}_i\vex$ & \multicolumn{2}{l|}{-3} \\ \cline{2-6} \hline
\multirow{2}{*}{\makecell{Variable-Clause\\-Dummy, $|i-\ell|\le 1$}} & $\theta_{\textrm{V-C-D},\ell,i,+}$ & $(10^{14}-10^{\zeta(\textrm{DM})}-10^{\zeta(\textrm{CL}_{\cdot})}-10^{\zeta(V_{\cdot,+,1})})\sigma_{max}$ & $-10(\sigma(\ell)+\sigma(i))$ & \multicolumn{2}{l|}{-4} \\ \cline{2-6} 
                  & $\theta_{\textrm{V-C-D},\ell,i,-}$ & $(10^{14}-10^{\zeta(\textrm{DM})}-10^{\zeta(\textrm{CL}_{\cdot})}-10^{\zeta(V_{\cdot,-,1})})\sigma_{max}$ & $-10(\sigma(\ell)+\sigma(i))$ & \multicolumn{2}{l|}{-4} \\ \hline
\multirow{2}{*}{Variable-Dummy} & $\theta_{\textrm{V-D},i,+}$ & $(10^{14}-10^{\zeta(\textrm{DM})}-10^{\zeta(V_{\cdot,+,1})})\sigma_{max}$ & $-10\sigma(i)$ & \multicolumn{2}{l|}{-3} \\ \cline{2-6} 
                  & $\theta_{\textrm{V-D},i,-}$ & $(10^{14}-10^{\zeta(\textrm{DM})}-10^{\zeta(V_{\cdot,-,1})})\sigma_{max}$ & $-10\sigma(i)$ & \multicolumn{2}{l|}{-3} \\ \hline
\multirow{4}{*}{{Variable-Truth}} & $\theta_{\textrm{V-T},i,a,c}$ & $(10^{14}-10^{\zeta(V_{\cdot,+,1})}-10^{\zeta(\textrm{TR}_{\cdot,a})}-10^{\zeta(\textrm{TR}_{\cdot,c})})\sigma_{max}$ & $-30\sigma(i)$ & \multicolumn{2}{l|}{-4} \\ \cline{2-6} 
                  & $\theta_{\textrm{V-T},i,b,d}$ & $(10^{14}-10^{\zeta(V_{\cdot,+,2})}-10^{\zeta(\textrm{TR}_{\cdot,b})}-10^{\zeta(\textrm{TR}_{\cdot,d})})\sigma_{max}$ & $-30\sigma(i)$ & \multicolumn{2}{l|}{-4} \\ \cline{2-6} 
                  & $\theta_{\textrm{V-T},i,a,d}$ & $(10^{14}-10^{\zeta(V_{\cdot,-,1})}-10^{\zeta(\textrm{TR}_{\cdot,a})}-10^{\zeta(\textrm{TR}_{\cdot,d})})\sigma_{max}$ & $-30\sigma(i)$ & \multicolumn{2}{l|}{-4} \\ \cline{2-6} 
                  & $\theta_{\textrm{V-T},i,b,c}$ & $(10^{14}-10^{\zeta(V_{\cdot,-,2})}-10^{\zeta(\textrm{TR}_{\cdot,b})}-10^{\zeta(\textrm{TR}_{\cdot,c})})\sigma_{max}$ & $-30\sigma(i)$ & \multicolumn{2}{l|}{-4} \\ \hline
\end{tabular}
}
\caption{Job processing times}
\label{table:job-time-whole}
\end{table}

\section{Proof of Theorem~\ref{thm:lower-bound}}\label{appsec:remaining-proofs}
The proof is carried out in 4 steps. We first show in Section~\ref{subsec:unique-time} that every job in the constructed instance has a unique processing time. This allows us to refer to a job by its symbol (e.g., $V_{i,+,1}^T$) as well as by its processing time. Next, we show in Section~\ref{subsec:sat-to-scheduling-simple} that if a significant fraction of clauses in the 3SAT$'$ instance are satisfiable, then the constructed scheduling instance admits a solution with a small objective value. Next, we show in Section~\ref{subsec:sche-to-sat} that if any truth-assignment for the 3SAT$'$ instance will leave a significant fraction of clauses unsatisfied, then the constructed scheduling instance does not admit a solution with a small objective value. Finally, we are able to prove the correctness of our reduction in Section~\ref{subsec:formal-proof} by leveraging the above two facts.

\subsection{Uniqueness of job processing times}\label{subsec:unique-time}
We claim that the processing time of each job we create is unique, whereas there is a one-to-one correspondence between the symbol of a job and its processing time. To see the claim, consider Table~\ref{table:job-time-whole}. It suffices to compare the processing time of jobs within each subtype. Given that $\sigma$ is an injection, it is easy to see that the processing time of each variable job, truth-assignment job, clause job, variable-dummy job and variable-truth job is unique. For variable-clause-dummy jobs, by property 4 of Lemma~\ref{lemma:uniquesum-2} we know the sum $\sigma(\ell)+\sigma(i)$ for $i=\ell-1,\ell,\ell+1$ is unique. The uniqueness of link jobs follows from the uniqueness of $\veb^h_{i}$'s from Lemma~\ref{lemma:veb-uniquesum-1}. The uniqueness of link-link jobs follows from the uniqueness of the summation $\veb^h_{i}+\veb^{h+1}_i$ from Lemma~\ref{lemma:veb-uniquesum-2}.

\subsection{3SAT\texorpdfstring{$'$}{Lg} to Scheduling}
\label{subsec:sat-to-scheduling-simple}
The goal of this subsection is to prove the following lemma.
\begin{lemma}\label{lemma:sat-sche}
If there are at most $\vartheta n$ clauses which are not satisfied, then the constructed scheduling instance admits a feasible schedule with objective value at most $(10^{14}\sigma_{max})^{q} (2\gamma n+ 8n) +\vartheta n\cdot \frac{q(q-1)}{2}(10^{14}\sigma_{max})^{q-2}+o(n\sigma_{max}^{q-2})$.
\end{lemma}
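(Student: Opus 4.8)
\textbf{Proof plan for Lemma~\ref{lemma:sat-sche}.}
The plan is to take a truth assignment for the 3SAT$'$ instance that leaves at most $\vartheta n$ clauses unsatisfied, and convert it into an explicit schedule of the constructed instance in which nearly every machine has load exactly $Q := 10^{14}\sigma_{max}$, with only a few exceptional machines. First I would set up the ``canonical'' assignment of jobs to machines dictated by the truth values: for a variable $z_i$ set true, the jobs $V_{i,+,1}^F,V_{i,+,2}^F,V_{i,-,1}^T,V_{i,-,2}^T$ go to the variable-truth gaps $\theta_{\textrm{V-T},i,\cdot,\cdot}$ (and symmetrically if $z_i$ is false), then the remaining variable jobs get placed in variable-clause-dummy, variable-link/link-link, and variable-dummy gaps together with the appropriate clause, link, truth-assignment and dummy jobs. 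The key arithmetic fact, which follows directly from the construction in Appendix~\ref{appsec:reduction-construction} and the uniqueness properties of Lemma~\ref{lemma:uniquesum-2} and Lemma~\ref{lemma:linked-sum}, is that every gap job plus its ``intended'' filling jobs sums to exactly $Q$: the type-components (powers of $10$ times $\sigma_{max}$) were chosen so the $10^{14}\sigma_{max}$ term is reconstructed, the $\sigma(\cdot)$ / $\veb^h_i\vex$ index-components cancel against the negative index-components of the gap jobs, and the small $T/F$ additive constants ($1,1.5,2,-3,-4,\dots$) cancel as well. Thus a machine filled according to the canonical placement has load exactly $Q$.

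Next I would analyze where the canonical placement fails, namely at clauses that the assignment does not satisfy. For each satisfied clause of $C_1$, the clause job $\textrm{CL}_\ell^T$ can be matched with a ``true'' variable job $V_{i,+,1}^T$ (or $V_{i,-,1}^T$) corresponding to a literal that satisfies $cl_\ell$, so the corresponding variable-clause-dummy gap is filled exactly; for the (at most $\vartheta n$) unsatisfied clauses, no such true variable job is available, so we are forced to fill that gap with a job whose additive $T/F$-constant is off by a bounded amount, creating a machine of load $Q \pm 1$ (one extra or one deficient unit of $1$). By rebalancing — pairing each surplus-$1$ machine with a deficit-$1$ machine, which is possible because the total load equals $mQ$ exactly (property ii of the reduction) — we get at most $O(\vartheta n)$ machines of load exactly $Q+1$ or $Q-1$ and all others of load exactly $Q$. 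I would then invoke the Taylor-expansion estimate exactly as in the proof idea of Theorem~\ref{thm:lower-bound} / Lemma~\ref{thm:structure}: writing each load as $Q+\Delta_i$ with $\sum_i\Delta_i=0$ and $|\Delta_i|\le 1$,
\[
\sum_{i} (Q+\Delta_i)^q = \sum_i Q^q\Bigl(1+q\tfrac{\Delta_i}{Q}+\tfrac{q(q-1)}{2}(1+\xi_i)^{q-2}\bigl(\tfrac{\Delta_i}{Q}\bigr)^2\Bigr)
= mQ^q + \tfrac{q(q-1)}{2}Q^{q-2}\sum_i\Delta_i^2 + \text{(lower order)},
\]
where $m = 2\gamma n+8n$, the linear term vanishes, and $\sum_i\Delta_i^2 \le O(\vartheta n)$ since at most $O(\vartheta n)$ machines have $\Delta_i\neq 0$. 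The error term in the Taylor bound contributes $o(n\sigma_{max}^{q-2})$ because $Q^{q-3}\cdot O(\vartheta n)$ is lower order than $Q^{q-2}n$. This yields the claimed bound $(10^{14}\sigma_{max})^q(2\gamma n+8n) + \vartheta n\cdot\frac{q(q-1)}{2}(10^{14}\sigma_{max})^{q-2} + o(n\sigma_{max}^{q-2})$.

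The main obstacle, which I expect to take the bulk of the careful writing, is the bookkeeping verifying that the canonical placement is globally consistent: that every job is used exactly once, that the counts match (e.g.\ the number of true vs.\ false dummy jobs, $n+n/3$ vs.\ $n-n/3$, and of clause jobs exactly pad out the gaps that are not ``filled by a true variable job''), and that the link-job chains for clauses in $C_2$ fit together — this is where Lemma~\ref{lemma:linked-sum} (the linked unique sum) and the identity $\veb^{2\gamma+2}_i = \hat\veb^{2\gamma+2}_{\tau(i)}$ are used, to ensure the sequence of variable-link and link-link gaps for each clause $(z_i\oplus\neg z_k)$ can be filled exactly precisely when $z_i$ and $z_k$ receive matching (both true or both false) values, consistent with the XOR constraint. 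Assuming that consistency (most of which is forced by the design of the gap job processing times), the conversion to the objective-value bound is the routine Taylor computation above. I would organize the proof as: (1) define the canonical placement from a given near-satisfying assignment; (2) verify each gap is filled exactly except at unsatisfied clauses, quantifying the deviation; (3) rebalance to get loads in $\{Q-1,Q,Q+1\}$ with $O(\vartheta n)$ exceptions; (4) apply the Taylor estimate to obtain the stated bound.
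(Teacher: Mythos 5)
Your plan is essentially the paper's proof of Lemma~\ref{lemma:sat-sche}: build the canonical schedule dictated by the truth assignment (the paper's Tables~\ref{table:job-schedule}, \ref{table:job-variable-true}, \ref{table:job-variable-false}), verify that good machines have load exactly $10^{14}\sigma_{max}$ because type-, index-, and $T/F$-components all cancel, show each unsatisfied clause contributes at most two machines deviating by exactly $\pm 1$, and finish with the Taylor expansion in which the linear term cancels. One minor imprecision: the ``rebalancing'' step you describe is not needed --- the construction already forces every machine's load into $\{Q-1,Q,Q+1\}$, and equal counts of $+1$ and $-1$ deviations follow automatically from total processing time equaling $mQ$; the real work (which you correctly flag but defer) is the bookkeeping that the dummy/clause/variable-job counts all balance, which the paper handles via the counting system~\eqref{ILP:1}--\eqref{ILP:6}.
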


Recall that every main job, except the clause job, admits a true copy and false copy, while the clause job admits a true copy and two false copies. We first ignore the true/false version of jobs and schedule them according to Table~\ref{table:job-schedule}, where each row represents jobs that are scheduled on one machine. %We first show that every job has been scheduled. 

\begin{table}[!ht]
\renewcommand\arraystretch{1.3}
%\scriptsize
\setlength\tabcolsep{1.5pt}
\centering
\resizebox{9cm}{!}{
\begin{tabular}{|c|l|l|l|l|ll}
\hline
\multirow{2}{*}{Variable-Link} & $\theta_{\textrm{V-L},i,+}$ & $V_{i,+,2}$  & LN$_{i,1,+}$ & $\backslash$ \\ \cline{2-5} 
                  & $\theta_{\textrm{V-L},i,-}$ & $V_{i,-,2}$ & LN$_{i,1,-}$ & $\backslash$  \\ \hline
\multirow{3}{*}{\makecell{Link-Link\\$h\in\{1,2,\cdots,2\gamma+1\}$}} & {$\theta_{\textrm{L-L},i,h,+}$} & LN$_{i,h,+}$ & LN$_{i,h+1,+}$ & $\backslash$  \\ \cline{2-5} 
                  & {$\theta_{\textrm{L-L},i,h,-}$} & LN$_{i,h,-}$ & LN$_{i,h+1,-}$ & $\backslash$    \\ \cline{2-5} 
                  & $\theta_{\textrm{L-L},i,+,-}$ & LN$_{i,2\gamma+2,+}$ & LN$_{\tau(i),2\gamma+2,-}$ & $\backslash$  \\ \cline{2-5} \hline
\multirow{2}{*}{\makecell{Variable-Clause-Dummy\\$|i-\ell|\le 1$}} & $\theta_{\textrm{V-C-D},\ell,i,+}$ & $V_{i,+,1}$ & CL$_{\ell}$ & DM  \\ \cline{2-5} 
                  & $\theta_{\textrm{V-C-D},\ell,i,-}$ & $V_{i,-,1}$ & CL$_{\ell}$ & DM  \\ \hline
\multirow{2}{*}{{Variable-Dummy}} & $\theta_{\textrm{V-D},i,+}$ & $V_{i,+,1}$ & DM & $\backslash$  \\ \cline{2-5} 
                  & $\theta_{\textrm{V-D},i,-}$ & $V_{i,-,1}$ & DM & $\backslash$  \\ \hline
\multirow{4}{*}{{Variable-Truth}} & $\theta_{\textrm{V-T},i,a,c}$ & $V_{i,+,1}$ & TR$_{i,a}$ & TR$_{i,c}$  \\ \cline{2-5} 
                  & $\theta_{\textrm{V-T},i,b,d}$ & $V_{i,+,2}$ & TR$_{i,b}$ & TR$_{i,d}$  \\ \cline{2-5} 
                  & $\theta_{\textrm{V-T},i,a,d}$ & $V_{i,-,1}$ & TR$_{i,a}$ & TR$_{i,d}$  \\ \cline{2-5} 
                  & $\theta_{\textrm{V-T},i,b,c}$ & $V_{i,-,2}$ & TR$_{i,b}$ & TR$_{i,c}$  \\ \hline
\end{tabular}
}

\caption{SAT to Scheduling -- Jobs scheduled on each machine}
\label{table:job-schedule}
\end{table}

%We consider main jobs. 
We show that if we schedule according to Table~\ref{table:job-schedule}, then every job has been scheduled (ignoring the superscripts $T$ or $F$, which will be determined later). It is obvious that every gap job is scheduled. For simplicity, we abuse the notation a bit by using the symbol of a gap job to denote the machine on which it is scheduled. 

%We consider the main jobs. Through the following analysis, it is easy to know all the jobs are scheduled according to Table~\ref{table:job-schedule}.

$\bullet$ Consider clause jobs. Recall that for each clause $cl_\ell$ and $i=\ell-1,\ell,\ell+1$, we either construct
$\theta_{\textrm{V-D},i,-}$ and $\theta_{\textrm{V-C-D},\ell,i,+}$ if the positive literal $z_i$ occurs in $C_1$, or
construct $\theta_{\textrm{V-D},i,+}$ and $\theta_{\textrm{V-C-D},\ell,i,-}$ if the negative literal $\neg z_i$ occurs in $C_1$. Hence the three copies of job $\textrm{CL}_\ell$ appear on machine $\theta_{\textrm{V-C-D},\ell,\ell-1,+}$ or $\theta_{\textrm{V-C-D},\ell,\ell-1,-}$, machine $\theta_{\textrm{V-C-D},\ell,\ell,+}$ or $\theta_{\textrm{V-C-D},\ell,\ell,-}$, and machine $\theta_{\textrm{V-C-D},\ell,\ell+1,+}$ or $\theta_{\textrm{V-C-D},\ell,\ell+1,-}$. Thus, all three copies of a clause job are scheduled.

$\bullet$ Consider truth-assignment jobs. There are two copies of TR$_{i,a}$, TR$_{i,b}$, TR$_{i,c}$ and TR$_{i,d}$. It is easy to see that all of them are scheduled on machines $\theta_{\textrm{V-T},i,a,c}$, $\theta_{\textrm{V-T},i,b,d}$, $\theta_{\textrm{V-T},i,a,d}$ and $\theta_{\textrm{V-T},i,b,c}$.

$\bullet$ Consider variable jobs. There are two copies of $V_{i,+,1}$, $V_{i,+,2}$, $V_{i,-,1}$ and $V_{i,-,2}$. It is easy to see that one copy of them are scheduled on machines $\theta_{\textrm{V-T},i,a,c}$, $\theta_{\textrm{V-T},i,b,d}$, $\theta_{\textrm{V-T},i,a,d}$ and $\theta_{\textrm{V-T},i,b,c}$. One copy of $V_{i,+,2}$ and $V_{i,-,2}$ are scheduled on machines $\theta_{\textrm{V-L},i,+}$ and $\theta_{\textrm{V-L},i,-}$. If machines
$\theta_{\textrm{V-D},i,-}$ and $\theta_{\textrm{V-C-D},\ell,i,+}$ exist (when the positive literal $z_i$ occurs in $C_1$), then $V_{i,-,1}$ and $V_{i,+,1}$ are scheduled on them respectively; otherwise  machines $\theta_{\textrm{V-D},i,+}$ and $\theta_{\textrm{V-C-D},\ell,i,-}$ exist (the negative literal $\neg z_i$ occurs in $C_1$), then $V_{i,+,1}$ and $V_{i,-,1}$ are scheduled on them respectively.

$\bullet$ Consider link jobs. There are two copies of LN$_{i,h,+}$ (or LN$_{i,h,-}$) for $1\le h\le 2\gamma+2$. Let $\iota\in\{+,-\}$. The two copies of LN$_{i,1,\iota}$ are scheduled on machines $\theta_{\textrm{V-L},i,\iota}$ and $\theta_{\textrm{L-L},i,1,\iota}$. The two copies of LN$_{i,h,\iota}$ are scheduled on $\theta_{\textrm{L-L},i,h,\iota}$ and $\theta_{\textrm{L-L},i,h+1,\iota}$ for $2\le h\le 2\gamma+1$. The two copies of LN$_{i,2\gamma+2,+}$ are scheduled on machines $\theta_{\textrm{L-L},2\gamma+1,+}$ and $\theta_{\textrm{L-L},i,+,-}$, and the two copies of LN$_{i,2\gamma+2,-}$ are scheduled on machines $\theta_{\textrm{L-L},2\gamma+1,-}$ and $\theta_{\textrm{L-L},\tau^{-1}(i),+,-}$, where $\tau^{-1}$ is the inverse of the mapping $\tau$ (note that $\tau^{-1}$ exists since $\tau$ is one-to-one).

$\bullet$ Consider dummy jobs. There are in total $2n$ dummy jobs. It is obvious that for every $i$, 2 dummy jobs are scheduled on machines $\theta_{\textrm{V-C-D},\ell,i,+}$, $\theta_{\textrm{V-D},i,-}$ or machines $\theta_{\textrm{V-C-D},\ell,i,-}$, $\theta_{\textrm{V-D},i,+}$. 

%Hence, all the jobs are scheduled according to Table~\ref{table:job-schedule}.

Next, we consider the load of every machine. According to Table~\ref{table:job-time-whole}, it is easy to verify that if we sum up the type-component of jobs on each machine, it becomes $10^{14}\sigma_{max}$; if we sum up the index-component of jobs on each machine, it becomes $0$. Now we consider the T/F-component of jobs. It is easy to verify that the T/F-components of all jobs add up to $0$, hence we have the following direct observation.

\begin{observation}\label{obs:total-processing}
The total processing time of all jobs add up to $10^{14}\sigma_{max} \cdot (2\gamma n +8n)$.
\end{observation}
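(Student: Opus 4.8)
Looking at this statement, it's Observation \ref{obs:total-processing}: "The total processing time of all jobs add up to $10^{14}\sigma_{max} \cdot (2\gamma n +8n)$."

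The plan is to write each job's processing time as the sum of its three components — the type-component, the index-component, and the T/F-component, as laid out in Table~\ref{table:job-time-whole} — and to sum the three pieces separately over all jobs. The total processing time of the instance does not depend on any assignment, so I am free to organize the sum according to the particular schedule of Table~\ref{table:job-schedule}, under which, as verified in the bullet points above, every job is placed on exactly one of the $2\gamma n + 8n$ machines.

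First I would handle the type-components. Since the type-component of a job does not depend on its true/false version, I can group the jobs by the machine they occupy in Table~\ref{table:job-schedule} and check, row by row, that the type-components on each machine sum to $10^{14}\sigma_{max}$: the gap job contributes $10^{14}\sigma_{max}$ minus the relevant powers $10^{\zeta(\cdot)}\sigma_{max}$, and the one or two main jobs sharing that machine contribute exactly those powers back. Hence the type-components of all jobs sum to $(2\gamma n + 8n)\cdot 10^{14}\sigma_{max}$. Next, the index-components are also independent of the true/false version, and a row-by-row comparison of Tables~\ref{table:job-time-whole} and~\ref{table:job-schedule} shows that each gap job's index-component is precisely the negative of the sum of the index-components of the main jobs on its machine (for instance $\theta_{\textrm{V-L},i,+}$ carries $-10(\vea_i+\veb^1_i)\vex$, while $V_{i,+,2}$ carries $10\sigma(i)=10\vea_i\vex$ and $\mathrm{LN}_{i,1,+}$ carries $10\veb^1_i\vex$; the link-link, variable-clause-dummy, variable-dummy and variable-truth rows are entirely analogous). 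Thus the index-components cancel on each machine and sum to $0$.

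Finally, the T/F-components. Here the per-machine sum need not vanish — this slack is exactly what later permits the ``bad'' machines quantified in Lemma~\ref{lemma:sat-sche} — so instead I would tally the contributions directly from the construction's counts: the variable, truth-assignment, clause, link and dummy jobs contribute their small positive constants ($+1$, $+1.5$, $+2$) with the stated multiplicities (e.g.\ $n+n/3$ true and $n-n/3$ false dummy jobs), and the five families of gap jobs contribute their $-3$'s and $-4$'s with the stated multiplicities; a finite arithmetic check shows these cancel to $0$. Adding the three totals yields that the total processing time of all jobs equals $(2\gamma n + 8n)\cdot 10^{14}\sigma_{max} + 0 + 0$, as claimed. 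The only real work is the bookkeeping in the last two steps — matching each gap job against the correct main jobs of Table~\ref{table:job-schedule} for the index cancellation, and carrying out the T/F tally with the right multiplicities — but there is no conceptual obstacle.
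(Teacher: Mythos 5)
Your proposal matches the paper's argument essentially exactly: decompose each processing time into type-, index-, and T/F-components, observe that the type-components sum to $10^{14}\sigma_{max}$ on each of the $2\gamma n+8n$ machines under the schedule of Table~\ref{table:job-schedule}, that the index-components cancel machine by machine, and that the T/F-components cancel globally. The paper compresses this into a one-paragraph remark before the observation, but the decomposition, the per-machine bookkeeping, and the global T/F tally are all the same.
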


\begin{table}[!ht]
\renewcommand\arraystretch{1.3}
%\scriptsize
\setlength\tabcolsep{1.5pt}
\centering
\resizebox{12cm}{!}{
\begin{tabular}{|c|l|l|l|l|l|}
\hline
\multirow{2}{*}{Variable-Link} & $\theta_{\textrm{V-L},i,+}$ & $V_{i,+,2}^T$  & LN$_{i,1,+}^F$ & $\backslash$  \\ \cline{2-5} 
                  & $\theta_{\textrm{V-L},i,-}$ & $V_{i,-,2}^F$ & LN$_{i,1,-}^{T}$ & $\backslash$  \\ \hline
\multirow{3}{*}{\makecell{Link-Link\\$h\in\{1,2,\cdots,2\gamma+1\}$}} & {$\theta_{\textrm{L-L},i,h,+}$} & LN$_{i,h,+}^F$ & LN$_{i,h+1,+}^T$ & $\backslash$  \\ \cline{2-5} 
                  & {$\theta_{\textrm{L-L},i,h,-}$} & LN$_{i,h,-}^F$ & LN$_{i,h+1,-}^T$ & $\backslash$    \\ \cline{2-5} 
                  & $\theta_{\textrm{L-L},i,+,-}$ & LN$_{i,2\gamma+2,+}^T$ & LN$_{\tau(i),2\gamma+2,-}^*$ & $\backslash$  \\ \cline{2-5} \hline
\multirow{2}{*}{\makecell{Variable-Clause-Dummy \& Variable-Dummy\\Case 1: positive literal $z_i\in C_1$}} & $\theta_{\textrm{V-C-D},\ell,i,+}$ & $V_{i,+,1}^T$ & CL$_{\ell}^{*}$ & DM$^{*}$  \\ \cline{2-5} 
                  & $\theta_{\textrm{V-D},i,-}$ & $V_{i,-,1}^F$ & DM$^*$ & $\backslash$  \\ \hline
\multirow{2}{*}{\makecell{Variable-Clause-Dummy \& Variable-Dummy\\Case 2: negative literal $\neg z_i\in C_1$}} & $\theta_{\textrm{V-C-D},i,-}$ & $V_{i,-,1}^F$ & CL$_{\ell}^{*}$ & DM$^*$  \\ \cline{2-5} 
                  & $\theta_{\textrm{V-D},i,+}$ & $V_{i,+,1}^T$ & DM$^*$ & $\backslash$  \\ \hline
\multirow{4}{*}{{Variable-Truth}} & $\theta_{\textrm{V-T},i,a,c}$ & $V_{i,+,1}^F$ & TR$_{i,a}^F$ & TR$_{i,c}^F$  \\ \cline{2-5} 
                  & $\theta_{\textrm{V-T},i,b,d}$ & $V_{i,+,2}^F$ & TR$_{i,b}^F$ & TR$_{i,d}^F$  \\ \cline{2-5} 
                  & $\theta_{\textrm{V-T},i,a,d}$ & $V_{i,-,1}^T$ & TR$_{i,a}^T$ & TR$_{i,d}^T$  \\ \cline{2-5} 
                  & $\theta_{\textrm{V-T},i,b,c}$ & $V_{i,-,2}^T$ & TR$_{i,b}^T$ & TR$_{i,c}^T$  \\ \hline
\end{tabular}
}
\caption{Scheduling of Truth/False types of jobs if the variable $z_i$ is true}
\label{table:job-variable-true}
\end{table}

\begin{table}[!ht]
\renewcommand\arraystretch{1.3}
%\scriptsize
\setlength\tabcolsep{1.5pt}
\centering
\resizebox{12cm}{!}{
\begin{tabular}{|c|l|l|l|l|l|}
\hline
\multirow{2}{*}{Variable-Link} & $\theta_{\textrm{V-L},i,+}$ & $V_{i,+,2}^F$  & LN$_{i,1,+}^T$ & $\backslash$ \\ \cline{2-5} 
                  & $\theta_{\textrm{V-L},i,-}$ & $V_{i,-,2}^T$ & LN$_{i,1,-}^{F}$ & $\backslash$  \\ \hline
\multirow{3}{*}{\makecell{Link-Link\\$h\in\{1,2,\cdots,2\gamma+1\}$}} & {$\theta_{\textrm{L-L},i,h,+}$} & LN$_{i,h,+}^T$ & LN$_{i,h+1,+}^F$ & $\backslash$\\ \cline{2-5} 
                  & {$\theta_{\textrm{L-L},i,h,-}$} & LN$_{i,h,-}^T$ & LN$_{i,h+1,-}^F$ & $\backslash$   \\ \cline{2-5} 
                  & $\theta_{\textrm{L-L},i,+,-}$ & LN$_{i,2\gamma+2,+}^F$ & LN$_{\tau(i),2\gamma+2,-}^*$ & $\backslash$  \\ \cline{2-5} \hline
\multirow{2}{*}{\makecell{Variable-Clause-Dummy \& Variable-Dummy\\Case 1: positive literal $z_i\in C_1$}} & $\theta_{\textrm{V-C-D},\ell,i,+}$ & $V_{i,+,1}^F$ & CL$_{\ell}^{*}$ & DM$^{*}$  \\ \cline{2-5} 
                  & $\theta_{\textrm{V-D},i,-}$ & $V_{i,-,1}^T$ & DM$^*$ & $\backslash$  \\ \hline
\multirow{2}{*}{\makecell{Variable-Clause-Dummy \& Variable-Dummy\\Case 2: negative literal $\neg z_i\in C_1$}} & $\theta_{\textrm{V-C-D},i,-}$ & $V_{i,-,1}^T$ & CL$_{\ell}^{*}$ & DM$^*$  \\ \cline{2-5} 
                  & $\theta_{\textrm{V-D},i,+}$ & $V_{i,+,1}^F$ & DM$^*$ & $\backslash$  \\ \hline
\multirow{4}{*}{{Variable-Truth}} & $\theta_{\textrm{V-T},i,a,c}$ & $V_{i,+,1}^T$ & TR$_{i,a}^T$ & TR$_{i,c}^T$ \\ \cline{2-5} 
                  & $\theta_{\textrm{V-T},i,b,d}$ & $V_{i,+,2}^T$ & TR$_{i,b}^T$ & TR$_{i,d}^T$  \\ \cline{2-5} 
                  & $\theta_{\textrm{V-T},i,a,d}$ & $V_{i,-,1}^F$ & TR$_{i,a}^F$ & TR$_{i,d}^F$  \\ \cline{2-5} 
                  & $\theta_{\textrm{V-T},i,b,c}$ & $V_{i,-,2}^F$ & TR$_{i,b}^F$ & TR$_{i,c}^F$  \\ \hline
\end{tabular}
}
\caption{Scheduling of Truth/False types of jobs if the variable $z_i$ is false}
\label{table:job-variable-false}
\end{table}

Consider the truth-assignment of $I_{sat}$. If the variable $z_i$ is true, then we determine the true/false version of main jobs according to Table~\ref{table:job-variable-true}. Otherwise the variable $z_i$ is false in the assignment, then we flip the True/False version of all jobs in Table~\ref{table:job-variable-true}, i.e., we schedule according to Table~\ref{table:job-variable-false}. It is easy to see that in each row of Table~\ref{table:job-variable-true}, if there is no job with a superscript of $*$, then their T/F-components sum up to $0$, i.e., the load of this machine is exactly $10^{14}\sigma_{max}$. We call the current schedule a semi-schedule. It remains to determine the true/false version of jobs with the superscript $*$. 

$\bullet$ Consider link-link machines. We only need to consider machines $\theta_{\textrm{L-L},i,+,-}$. The T/F-type of the job LN$_{i,2\gamma+2,+}$ has already been decided based on the true/false of variable $z_i$. Consider the other job LN$_{\tau(i),2\gamma+2,-}$ scheduled on this machine. Notice that based on the true/false of the variable $z_{\tau(i)}$, one copy of LN$_{\tau(i),2\gamma+2,-}$ is scheduled on $\theta_{\textrm{L-L},\tau(i),2\gamma+1,-}$, and the remaining copy is scheduled on $\theta_{\textrm{L-L},i,+,-}$. If $z_{\tau(i)}$ is true, the remaining copy is LN$_{\tau(i),2\gamma+2,-}^F$; otherwise, the remaining copy is LN$_{\tau(i),2\gamma+2,-}^T$. Hence, we have the following observation:
\begin{itemize}[--]
\item if variables $z_i$ is true and $z_{\tau(i)}$ is false, then LN$_{i,2\gamma+2,+}^T$ and LN$_{\tau(i),2\gamma+2,-}^T$ are on this machine, whereas the load is $10^{14}\sigma_{max}-1$;
\item if variables $z_i$ is false and $z_{\tau(i)}$ is true, then LN$_{i,2\gamma+2,+}^F$ and LN$_{\tau(i),2\gamma+2,-}^F$ are on this machine, whereas the load is $10^{14}\sigma_{max}+1$;
\item if variables $z_i$ and $z_{\tau(i)}$ are both true or both false, then one of LN$_{i,2\gamma+2,+}$ and LN$_{\tau(i),2\gamma+2,-}$ is true and the other is false, whereas the load is $10^{14}\sigma_{max}$.
\end{itemize}

The above observation leads to the following claim.
\begin{claim}\label{claim:sat-to-schedule-1}
The load of machine $\theta_{\textrm{L-L},i,+,-}$ is $10^{14}\sigma_{max}$ if the clause $(z_i\oplus\neg z_{\tau(i)})$ is satisfied, and is $10^{14}\sigma_{max}\pm 1$ otherwise.
\end{claim}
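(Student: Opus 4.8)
The plan is to prove Claim~\ref{claim:sat-to-schedule-1} directly from the three-case observation that immediately precedes it. Recall that the machine $\theta_{\textrm{L-L},i,+,-}$ carries exactly two main jobs in the semi-schedule, namely $\textrm{LN}_{i,2\gamma+2,+}$ and $\textrm{LN}_{\tau(i),2\gamma+2,-}$, together with its gap job $\theta_{\textrm{L-L},i,+,-}$. By construction (Table~\ref{table:job-time-whole}), the type-components and index-components of these three jobs already cancel to $10^{14}\sigma_{max}$; the only remaining contribution to the load is the sum of the three T/F-components, which are $-3$ for the gap job and each $\in\{1,2\}$ for the two link jobs. Hence the load of this machine equals $10^{14}\sigma_{max}$ plus the ``excess'' $e_i := (\text{T/F-comp of }\textrm{LN}_{i,2\gamma+2,+}) + (\text{T/F-comp of }\textrm{LN}_{\tau(i),2\gamma+2,-}) - 3 - 1$, where the final $-1$ accounts for the baseline both-true value; more cleanly, it suffices to observe that the load is $10^{14}\sigma_{max}$ exactly when one of the two link jobs is the true copy and the other the false copy, and is $10^{14}\sigma_{max}\mp 1$ when both are true (resp. both false).

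First I would recall precisely how the T/F-versions of these two particular link jobs were fixed. The job $\textrm{LN}_{i,2\gamma+2,+}$ appears in two copies: one is placed on $\theta_{\textrm{L-L},i,2\gamma+1,+}$ (where, by Tables~\ref{table:job-variable-true}--\ref{table:job-variable-false}, it is $\textrm{LN}_{i,2\gamma+2,+}^T$ if $z_i$ is true and $\textrm{LN}_{i,2\gamma+2,+}^F$ if $z_i$ is false), and the remaining copy — which goes on our machine $\theta_{\textrm{L-L},i,+,-}$ — is therefore the opposite version, i.e. $\textrm{LN}_{i,2\gamma+2,+}^T$ if $z_i$ is true and $\textrm{LN}_{i,2\gamma+2,+}^F$ if $z_i$ is false, matching the parenthetical note in the three-case observation. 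Wait: re-reading the observation, the copy on $\theta_{\textrm{L-L},i,+,-}$ is $\textrm{LN}_{i,2\gamma+2,+}^T$ exactly when $z_i$ is true. Symmetrically, $\textrm{LN}_{\tau(i),2\gamma+2,-}$ contributes its true copy to $\theta_{\textrm{L-L},i,+,-}$ precisely when $z_{\tau(i)}$ is \emph{false} (since its other copy, on $\theta_{\textrm{L-L},\tau(i),2\gamma+1,-}$, is true iff $z_{\tau(i)}$ is true, per Table~\ref{table:job-variable-true}). I would assemble these two facts into the exact statement of the three-case observation, which is the bulk of the argument.

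Then I would finish by translating ``satisfied'' into this case distinction. The clause $(z_i\oplus\neg z_{\tau(i)})$ is satisfied iff $z_i$ and $\neg z_{\tau(i)}$ take the same truth value, i.e. iff $z_i = \neg z_{\tau(i)}$, i.e. iff exactly one of $z_i, z_{\tau(i)}$ is true — equivalently, $z_i$ and $z_{\tau(i)}$ have \emph{different} truth values. Comparing with the three-case observation: the third case (both true or both false, load $=10^{14}\sigma_{max}$) is exactly the case ``$z_i$ and $z_{\tau(i)}$ equal'', i.e. the clause is \emph{unsatisfied}; the first two cases ($z_i$ true/$z_{\tau(i)}$ false, or $z_i$ false/$z_{\tau(i)}$ true, load $=10^{14}\sigma_{max}\mp 1$) are exactly ``$z_i$ and $z_{\tau(i)}$ differ'', i.e. the clause is \emph{satisfied}. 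Hold on — this is the reverse of what the claim asserts, so I need to recheck the sign convention on the XOR clause $(z_i\oplus\neg z_k)$: it is satisfied iff $z_i\oplus\neg z_k=1$ iff $z_i\neq\neg z_k$ iff $z_i=z_k$, i.e. $z_i$ and $z_{\tau(i)}$ take the \emph{same} value, which lands us in case three with load exactly $10^{14}\sigma_{max}$ — consistent with the claim. The main obstacle, and the one point I would be most careful about, is precisely this bookkeeping of which copy is true and which is false on each machine, and the exact semantics of $\oplus$ with the negation inside; all of it is routine but error-prone, so I would lay out a small table (variable value $\to$ version of each link copy $\to$ load) rather than argue in prose. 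Once that table is in place the claim is immediate, and it feeds directly into the count of bad machines needed for Lemma~\ref{lemma:sat-sche}.
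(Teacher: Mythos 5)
Your proposal is correct and follows essentially the same route the paper takes: the paper's "proof" of this claim is literally the three-case observation that immediately precedes it, and your argument just unpacks that observation and then does the XOR bookkeeping. Your final case analysis — load $10^{14}\sigma_{max}-1$ when $z_i$ true and $z_{\tau(i)}$ false, $10^{14}\sigma_{max}+1$ when $z_i$ false and $z_{\tau(i)}$ true, $10^{14}\sigma_{max}$ when they agree — matches the paper, and the corrected XOR translation ($(z_i\oplus\neg z_{\tau(i)})$ satisfied iff $z_i=z_{\tau(i)}$, which is the agree case) is right. The one caveat is that your write-up contains two self-contradictory false starts (the "opposite version" sentence names the wrong copy before the "Wait," and the ad hoc "excess" formula with the spurious extra $-1$) that you subsequently repair; in a polished version you should drop those and go directly to the clean statement that load $=10^{14}\sigma_{max}$ iff exactly one of the two link jobs on $\theta_{\textrm{L-L},i,+,-}$ is the true copy, since $1+2-3=0$ while $1+1-3=-1$ and $2+2-3=+1$.
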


$\bullet$ Consider variable-clause-dummy and variable-dummy machines. Notice that there is one true copy and two false copies of CL$_{\ell}$, scheduled on machines $\theta_{\textrm{V-C-D},\ell,i,\kappa_{i-\ell}}$ where $i\in\{\ell-1,\ell,\ell+1\}$ and $\kappa_{i-\ell}\in\{+,-\}$. If there exists at least one $i=i^*$ such that $V_{i,\kappa_{i^*-\ell},1}^T$ is on machine $\theta_{\textrm{V-C-D},\ell,i^*,\kappa_{i^*-\ell}}$, then we schedule CL$_{\ell}^T$ on machine $\theta_{\textrm{V-C-D},\ell,i^*,\kappa_{i^*-\ell}}$, and schedule the two copies of CL$_{\ell}^F$ on the remaining two machines, respectively. Otherwise, we schedule CL$_{\ell}^T$ on machine $\theta_{\textrm{V-C-D},\ell,\ell-1,\kappa_{-1}}$ and the two false copies CL$_{\ell}^F$ on machines $\theta_{\textrm{V-C-D},\ell,i,\kappa_{i-\ell}}$ where $i=\ell,\ell+1$.

Finally, we determine the true/false version of dummy jobs on variable-clause-dummy and variable-dummy machines. Recall that there are $n+n/3$ true dummy and $n-n/3$ false dummy jobs. 

On variable-clause-dummy machines, if the clause job is true, schedule a true dummy job. Otherwise, the clause job is false, then if the variable job is true (or false), schedule a false (or true) dummy job.

On variable-dummy machines, we schedule dummy jobs in the following way. A false variable job is always scheduled with a true dummy job. For true variable jobs, 
we first partition the indices of variables, $\{1,2,\cdots,n\}$, into two subsets $S_1,S_2$ such that 
$$S_1=\{i: {\textrm{On machine $\theta_{\textrm{V-C-D},\ell,i,\kappa_{i-\ell}}$ there is a true clause job and a false variable job}}\},$$
and $S_2$ consists of the remaining indices. On machine $\theta_{\textrm{V-D},i,+}$ or $\theta_{\textrm{V-D},i,-}$ where $i\in S_1$ and the variable job is true, we schedule a true dummy job; on machine $\theta_{\textrm{V-D},i,+}$ or $\theta_{\textrm{V-D},i,-}$ where $i\not\in S_1$ and the variable job is true, we schedule a false dummy job. 

Consider the true/false versions of all two jobs on a variable-dummy machine and use $(T/F,T/F)$ to denote the true/false version of the two jobs in the order of variable job, dummy job. Then the above scheduling can be restated as follows. A variable-dummy machine $\theta_{\textrm{V-D},i,+}$ or $\theta_{\textrm{V-D},i,+}$ is:
\begin{itemize}
    \item $(F,T)$, if a false variable job is on it;
    \item $(T,F)$, if a true variable job is on it and $i\not\in S_1$;
    \item $(T,T)$, if a true variable job is on it and $i\in S_1$.
\end{itemize}
Hence there are in total three kinds of variable-dummy machines $(F,T), (T,T), (T,F)$.

Now we check the total number of true and false dummy jobs scheduled in the above way. Similarly we consider the true/false versions of all three jobs on a variable-clause-dummy machine and use $(T/F,T/F,T/F)$ to denote the true/false versions of the three jobs in the order of variable job, clause job and dummy job, then there are in total four kinds of variable-clause-dummy machines: $(F,T,T), (T,T,T), (T,F,F), (F,F,T)$.  Let $\sharp(T/F,T/F,T/F)$ and $\sharp(T/F,T/F)$ be the number of machines of each kind. Then we have the following observations:
\begin{subequations}
\begin{eqnarray}
&&\sharp(F,T,T)=|S_1| \label{ILP:1}\\
&&\sharp(F,T,T)+\sharp(T,T,T)=n/3 \label{ILP:2}\\
&&\sharp(T,F,F)+\sharp(F,F,T)=2n/3 \label{ILP:3}\\
&&\sharp(F,T)+\sharp(T,T)+\sharp(T,F)=n \label{ILP:4}\\
&&\sharp(F,T,T)+\sharp(F,F,T)+\sharp(F,T)=n \label{ILP:5}\\
&&\sharp(T,T)=|S_1|\label{ILP:6}
\end{eqnarray}
\end{subequations}

Here Eq~\eqref{ILP:1} follows from the definition of $S_1$. Eq~\eqref{ILP:2} follows from the fact that there are in total $n/3$ true clause jobs. Eq~\eqref{ILP:3} follows from the fact that there are in total $n$ clause jobs, and hence $2n/3$ false clause jobs. Eq~\eqref{ILP:4} follows from the fact that there are in total $n$ variable-dummy machines. Eq~\eqref{ILP:5} follows from the fact that there are in total $n$ false variable jobs. We now explain Eq~\eqref{ILP:6}. Notice that for each $i$ there are in total $8$ variable jobs (i.e., $V_{i,\cdot,\cdot}$), 4 true copies and 4 false copies. Among them 2 true and 2 false copies are scheduled on variable-truth machines, 1 true and 1 false copies are scheduled on variable-link machines (see Table~\ref{table:job-variable-true}). Hence, 1 true and 1 false copies are scheduled on variable-clause-dummy and variable-dummy machines. For any $i$, if the false (or true) variable job $V_{i,\cdot,\cdot}$ is scheduled on a variable-clause-dummy machine, then the remaining true (or false) variable job is scheduled on a variable-dummy machine.  Now consider the set of all $i$'s where the true variable job $V_{i,\cdot,\cdot}$ is scheduled with a true dummy job on a variable-dummy machine and let it be $S_3$. According to the way we schedule, on machine $\theta_{\textrm{V-D},i,+}$ or $\theta_{\textrm{V-D},i,-}$, we schedule a true variable job and a true dummy job only if $i\in S_1$ (otherwise, either the variable job or the dummy job is false), hence $S_3\subseteq S_1$. Meanwhile, for any $i\in S_1$, we know the false variable job $V_{i,\cdot,\cdot}$ is scheduled on a variable-clause-dummy machine, whereas the true variable job must be scheduled on a variable-dummy machine, this implies that any $i\in S_1$ also satisfies that $i\in S_3$. Hence $S_1=S_3$ and Eq~\eqref{ILP:6} is true.

The total number of true dummy jobs scheduled equals $\sharp(F,T,T)+\sharp(T,T,T)+\sharp(F,F,T)+\sharp(F,T)+\sharp(T,T)=n+\sharp(T,T,T)+\sharp(T,T)=n+n/3-|S_1|+|S_1|=4n/3$. Similarly, we can show the total number of false jobs scheduled equals $2n/3$. Hence, our way of scheduling dummy jobs is feasible.

Now we check the load of every variable-clause-dummy machines and variable-dummy machines. It is easy to verify that for a variable-clause-dummy machine, if its kind is $(T,T,T)$, or $(T,F,F)$, or $(F,F,T)$, then its load is $10^{14}\sigma_{max}$; if its kind is $(F,T,T)$, then its load is $10^{14}\sigma_{max}+1$. For a variable-dummy machine, if its kind is $(F,T)$ or $(T,F)$, then its load is $10^{14}\sigma_{max}$; if its kind is $(T,T)$, then its load is $10^{14}\sigma_{max}-1$. 

Notice that for every $1\le i\le n$, the variable-dummy machine $\theta_{\textrm{V-D},i,\cdot}$ is of $(T,T)$ if and only if the variable-clause-dummy machine $\theta_{\textrm{V-C-D},\ell,i,\cdot}$ is of $(F,T,T)$. Recall that we always try to schedule the true clause job CL$_{\ell}^T$ with a true variable job, if possible. Hence, CL$_{\ell}^T$ is scheduled with a false variable job if and only if all the three variable jobs scheduled on variable-clause-dummy machines, i.e., $V_{\ell-1,\kappa_{-1},1}$, $V_{\ell,\kappa_0,1}$ and $V_{\ell+1,\kappa_1,1}$, are all false where $\kappa_{-1},\kappa_{0},\kappa_1\in\{+,-\}$. Consider $V_{\ell-1,\kappa_{-1},1}$. If $\kappa_{-1}=+$, then $\theta_{\textrm{V-C-D},\ell,\ell-1,+}$ exists, indicating case 1 of Table~\ref{table:job-variable-true} or Table~\ref{table:job-variable-false} occurs, i.e., the positive literal $z_{\ell-1}$ is in clause $cl_\ell\in C_1$. Furthermore, as $V_{\ell-1,+,1}^F$ is scheduled on the variable-clause-dummy machine, the scheduling follows Table~\ref{table:job-variable-false}, the variable $z_{\ell-1}$ is false in the assignment of $I_{sat}$. That is, $cl_\ell$ is not satisfied by $z_{\ell-1}$. Similarly, we can show that if $\kappa_{-1}=-$, then the negative literal $\neg z_{\ell-1}$ is in $cl_\ell$ and variable $z_i$ is true, whereas $cl_\ell$ is not satisfied by $z_{\ell-1}$, either. Using the same argument, we can show that if all three jobs $V_{\ell-1,\kappa_{-1},1}$, $V_{\ell,\kappa_0,1}$ and $V_{\ell+1,\kappa_1,1}$ scheduled together with CL$_\ell$ are all false, then $cl_\ell$ is not satisfied by the assignment. Furthermore, according to our scheduling method, if we cannot schedule CL$_{\ell}^T$ with a true variable job, we schedule it with the false job $V_{\ell-1,\kappa_{-1},1}^F$. That means, among the three machines $\theta_{\textrm{V-C-D},\ell,i,\cdot}$, only $\theta_{\textrm{V-C-D},\ell-1,i,\cdot}$ is of kind $(F,T,T)$ and has a load of $10^{14}\sigma_{max}+1$. The other two machines have a load of $10^{14}\sigma_{max}$. Similarly, we check variable-dummy machines and see that among the three machines $\theta_{\textrm{V-D},i,\cdot}$ where $i\in\{\ell-1,\ell,\ell+1\}$, only machine $\theta_{\textrm{V-D},\ell-1,\cdot}$ is of kind $(T,T)$ and has a load of $10^{14}\sigma_{max}-1$. The other two machines have a load of $10^{14}\sigma_{max}$.

According to our observation in the above paragraph, we have the following claim.
\begin{claim}\label{claim:sat-to-schedule-2}
If $cl_\ell\in C_1$ is satisfied, then the three clause-variable-dummy machines $\theta_{\textrm{V-C-D},i,\ell,\cdot}$ and the three variable-dummy machines $\theta_{\textrm{V-D},i,\cdot}$, $i\in\{\ell-1,\ell,\ell+1\}$ all have a load of $10^{14}\sigma_{max}$; otherwise, machine $\theta_{\textrm{V-C-D},\ell-1,\ell,\cdot}$ has a load of $10^{14}\sigma_{max}+1$, $\theta_{\textrm{V-D},\ell-1,\cdot}$ has a load of $10^{14}\sigma_{max}-1$, and all the remaining 4 machines have a load of $10^{14}\sigma_{max}$.
\end{claim}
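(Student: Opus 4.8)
The plan is to derive Claim~\ref{claim:sat-to-schedule-2} from the scheduling rules and the per-machine load computations already established, by determining, for every $i\in\{\ell-1,\ell,\ell+1\}$, the ``kind'' (the tuple of $T/F$ superscripts) of the two associated machines $\theta_{\textrm{V-C-D},\ell,i,\cdot}$ and $\theta_{\textrm{V-D},i,\cdot}$ and reading off the corresponding load from the case list preceding the claim. The essential structural fact I would invoke is that, by the definition of 3SAT$'$, each variable index lies in exactly one clause of $C_1$; hence $cl_\ell$ is the only $C_1$-clause touching the indices $\ell-1,\ell,\ell+1$, and the global set $S_1$ intersected with $\{\ell-1,\ell,\ell+1\}$ equals $\{i:\theta_{\textrm{V-C-D},\ell,i,\cdot}\text{ has kind }(F,T,T)\}$. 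This localization is exactly what lets one reason clause by clause.

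First I would establish the dictionary between a variable job and literal satisfaction: by checking the four cases of Tables~\ref{table:job-variable-true} and~\ref{table:job-variable-false} (literal positive or negative, variable assigned true or false), the copy of $V_{i,\cdot,1}$ placed on $\theta_{\textrm{V-C-D},\ell,i,\cdot}$ carries superscript $T$ exactly when the literal $w_i$ of $cl_\ell$ is satisfied and $F$ exactly when it is falsified, and the remaining copy (on $\theta_{\textrm{V-D},i,\cdot}$) carries the opposite superscript. In particular $cl_\ell$ is satisfied if and only if at least one of the three V-C-D machines of $cl_\ell$ carries a true variable job.

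Next I would treat the satisfied case: choosing some $i^{\ast}$ with a true variable job on $\theta_{\textrm{V-C-D},\ell,i^{\ast},\cdot}$, the clause-placement rule puts $\textrm{CL}_\ell^T$ there (kind $(T,T,\cdot)$) and $\textrm{CL}_\ell^F$ on the other two V-C-D machines (kind $(\cdot,F,\cdot)$); the dummy-placement rule then forces each of the three into $(T,T,T)$, $(T,F,F)$ or $(F,F,T)$, all of load $10^{14}\sigma_{max}$. Since none has kind $(F,T,T)$, none of $\ell-1,\ell,\ell+1$ lies in $S_1$, so each $\theta_{\textrm{V-D},i,\cdot}$ has kind $(F,T)$ or $(T,F)$ and load $10^{14}\sigma_{max}$. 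For the unsatisfied case, all three variable jobs on the V-C-D machines are false, so by the tie-breaking rule $\textrm{CL}_\ell^T$ lands on $\theta_{\textrm{V-C-D},\ell,\ell-1,\cdot}$; together with a true dummy this gives kind $(F,T,T)$, load $10^{14}\sigma_{max}+1$, while the other two V-C-D machines receive $\textrm{CL}_\ell^F$ and true dummies, hence kind $(F,F,T)$, load $10^{14}\sigma_{max}$. All three V-D machines then carry true variable jobs, and since $S_1\cap\{\ell-1,\ell,\ell+1\}=\{\ell-1\}$, the machine $\theta_{\textrm{V-D},\ell-1,\cdot}$ gets a true dummy (kind $(T,T)$, load $10^{14}\sigma_{max}-1$) while $\theta_{\textrm{V-D},\ell,\cdot}$ and $\theta_{\textrm{V-D},\ell+1,\cdot}$ get false dummies (kind $(T,F)$, load $10^{14}\sigma_{max}$), which is exactly the claim.

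The only real difficulty is bookkeeping. I expect the main obstacle to be twofold: verifying the variable-job dictionary by a careful pass through the two scheduling tables, and confirming that the global true/false dummy budget pinned down by~\eqref{ILP:1}--\eqref{ILP:6} is compatible with the per-clause dummy assignment used above. As noted, both reduce to the single-occurrence property of variables in $C_1$, so no genuinely new idea is needed beyond organizing the case analysis cleanly.
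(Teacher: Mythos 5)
Your proposal is correct and follows essentially the same approach as the paper: establish the literal-satisfaction/variable-job dictionary via the case analysis in Tables~\ref{table:job-variable-true} and~\ref{table:job-variable-false}, invoke the clause-job placement rule (with its $\ell-1$ tie-break) to identify which V-C-D machine, if any, has kind $(F,T,T)$, then use the dummy-placement rule and the equivalence ``$\theta_{\textrm{V-D},i,\cdot}$ is $(T,T)$ iff $\theta_{\textrm{V-C-D},\ell,i,\cdot}$ is $(F,T,T)$'' to read off all six loads. The phrasing in terms of $S_1 \cap \{\ell-1,\ell,\ell+1\}$ is an equivalent repackaging of the paper's more direct argument, and the observation that this localization is sound because each variable index appears in exactly one $C_1$-clause is the right justification.
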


Combining Claim~\ref{claim:sat-to-schedule-1} amd Claim~\ref{claim:sat-to-schedule-2}, we know that each unsatisfied clause can lead to at most $2$ machines with load $10^{14}\sigma_{max}\pm 1$. Recall that the total processing time of all jobs is $10^{14}\sigma_{max}\cdot (2\gamma n+8n)$, hence the number of machines with load $10^{14}\sigma_{max}+ 1$ should equal the number of machines with load $10^{14}\sigma_{max}-1$. Consequently, if there are $\vartheta n$ unsatisfied clauses, the resulted schedule will contain at most $2\vartheta n$ machines with load $10^{14}\sigma_{max}\pm 1$. Using Taylor's expression, we have that

\begin{align*}
    (x+1)^q+(x-1)^q &= x^q[(1+\frac{1}{x})^q+(1-\frac{1}{x})^q] \\
    &= x^q(1+\frac{q(q-1)}{2x^2}+o(\frac{1}{x^2}))\\
    &= x^q+\frac{q(q-1)}{2}\cdot x^{q-2}+o(x^{q-2}),
\end{align*}
%$$(10^{14}\sigma_{max}+1)^q+(10^{14}\sigma_{max}-1)^q$$
Hence, by simple calculations Lemma~\ref{lemma:sat-sche} is proved.

\subsection{Scheduling to 3SAT\texorpdfstring{$'$}{Lg}}
\label{subsec:sche-to-sat}
The goal of this subsection is to show that if the constructed scheduling instance admits a feasible schedule of a small objective value, then the given 3SAT$'$ instance admits a truth-assignment that satisfies most clauses. More precisely, we prove the following lemma.
\begin{lemma}\label{lemma:sche-sat}
If there are at least $\vartheta n$ clauses not satisfied, then any feasible schedule has an objective value at least $(2\gamma n+8n)(10^{14}\sigma_{max})^q+\frac{q(q-1)\vartheta n}{48}\cdot (10^{14}\sigma_{max})^{q-2}+o(n\sigma_{max}^{q-2})$.
\end{lemma}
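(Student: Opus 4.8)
The plan is to argue by contraposition: assuming the constructed scheduling instance admits a schedule $S$ whose objective is \emph{below} the claimed bound, I will read off a truth assignment for the $3\mathrm{SAT}'$ instance that leaves fewer than $\vartheta n$ clauses unsatisfied, contradicting the hypothesis. Write $Q=10^{14}\sigma_{\max}$ and $m=2\gamma n+8n$; by Observation~\ref{obs:total-processing} the total processing time is exactly $mQ$, so setting $C_i=Q+\Delta_i$ gives $\sum_i\Delta_i=0$. If $\mathrm{obj}(S)$ meets the bound there is nothing to prove, so assume $\mathrm{obj}(S)\le mQ^q+O(nQ^{q-2})$ (legitimate since $\vartheta\le 1$, so the target excess is $O(nQ^{q-2})$, and $Q^{q-2}=\Theta(\sigma_{\max}^{q-2})$).

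First I would show $S$ is \emph{almost} the intended schedule. A Jensen/convexity estimate rules out any machine of load $\ge \tfrac32 Q$: such a machine alone forces an excess of $\Omega(Q^q)$, and since $Q=n^{1+o(1)}$ we have $Q^2=\omega(n)$, so $\Omega(Q^q)=\omega(nQ^{q-2})$, contradicting near-optimality. Hence all loads lie in $[0,\tfrac32Q]$, and a second-order expansion of $(1+x)^q$ on $[-1,\tfrac12]$ yields $\mathrm{obj}(S)\ge mQ^q+c_q Q^{q-2}\sum_i\Delta_i^2$ for a constant $c_q>0$ (for $q\ge2$ take $c_q=\tfrac{q(q-1)}{2}$; for $1<q<2$, $c_q=\tfrac{q(q-1)}{2}2^{q-2}$ works), so $\sum_i\Delta_i^2=O(n)$ and $\max_i|\Delta_i|=O(\sqrt n)$. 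From this I peel off the structure: (i) every machine carries exactly one gap job (two would give $\Delta_i\ge 0.9Q$, impossible); (ii) on each machine the multiset of type-components of its main jobs equals exactly the ``missing'' type-exponents of its gap job, since a mismatch occurs at scale $\ge\sigma_{\max}$, which the index- and $T/F$-parts (of total size $o(\sigma_{\max})$) cannot absorb, again forcing $\Delta_i^2=\omega(n)$; (iii) on each machine the index-components of the main jobs exactly cancel that of the gap job. For (iii) one invokes the uniqueness lemmas: Property~3 and Property~4 of Lemma~\ref{lemma:uniquesum-2} for the variable-truth, variable-clause-dummy and variable-dummy machines, and Lemmas~\ref{lemma:veb-uniquesum-1}--\ref{lemma:veb-uniquesum-3} for the variable-link and link-link machines; a non-cancelling index part is a nonzero multiple of $10$ and, since the $T/F$-residual is at most $2$ in absolute value, contributes $\Delta_i^2\ge 64$, so only $O(n)$ machines can fail (iii). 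Consequently, apart from $O(n)$ \emph{exceptional} machines, every machine holds exactly the main jobs prescribed by Table~\ref{table:job-schedule}, with load $Q+\delta_i$ where $\delta_i\in\tfrac12\Z$, $|\delta_i|\le 2$, determined only by the $T/F$ choices. Call a machine \emph{bad} if it is exceptional or $\delta_i\ne0$ (so $|\Delta_i|\ge\tfrac12$), and \emph{clean} otherwise.

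Next I extract the assignment and charge unsatisfied clauses to bad machines. Each bad machine ``involves'' only $O(1)$ indices (its $O(1)$ main jobs plus its gap job's references), so I declare a variable $z_i$ \emph{corrupted} if some bad machine involves $i$; then $\#\{\text{corrupted variables}\}=O(\#\{\text{bad machines}\})$, and I set corrupted variables arbitrarily. For an uncorrupted $z_i$, the same case analysis as in Section~\ref{subsec:sat-to-scheduling-simple} shows its eight variable jobs sit in the prescribed slots with a consistent polarity pattern; I set $z_i$ accordingly. For a clause $cl_\ell\in C_1$: if none of the three machines $\theta_{\mathrm{V\text-C\text-D},\ell,\cdot,\cdot}$ (which also carry the three copies of $\mathrm{CL}_\ell$) is bad and $z_{\ell-1},z_\ell,z_{\ell+1}$ are uncorrupted, then by the argument behind condition CO2 (following CO1--CO4) $\mathrm{CL}_\ell^T$ is paired with a ``$T$'' variable job of a literal in $cl_\ell$ that we set true, so $cl_\ell$ is satisfied; hence an unsatisfied $cl_\ell$ charges to a nearby bad machine or corrupted variable. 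For $(z_i\oplus\neg z_{\tau(i)})\in C_2$: if the whole link chain of $O(\gamma)$ variable-link/link-link machines attached to $(i,\tau(i))$ is clean and $z_i,z_{\tau(i)}$ uncorrupted, the alternation of $T/F$ versions along the chain (Tables~\ref{table:job-variable-true}--\ref{table:job-variable-false}) forces $z_i=z_{\tau(i)}$, i.e.\ the clause holds (as in Claim~\ref{claim:sat-to-schedule-1}); conversely an unsatisfied clause charges to a bad chain machine or a corrupted endpoint. Since each bad machine is charged $O(1)$ times, $\#\{\text{unsat clauses}\}=O(\#\{\text{bad machines}\})$; a careful accounting of the constants (as in the forward direction, Section~\ref{subsec:sat-to-scheduling-simple}) gives $\#\{\text{bad machines}\}\ge \vartheta n/6$ whenever every assignment leaves $\ge\vartheta n$ clauses unsatisfied.

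Finally I assemble the estimate. Since bad machines have $\Delta_i^2\ge\tfrac14$, we get $\sum_i\Delta_i^2\ge \vartheta n/24$. Re-expanding the objective around $Q$ — now every nonzero $\Delta_i$ satisfies $|\Delta_i|=O(\sqrt n)=o(Q)$, so $\sum_i C_i^q=mQ^q+\tfrac{q(q-1)}{2}Q^{q-2}\sum_i\Delta_i^2+o(nQ^{q-2})$ (the linear term vanishes as $\sum_i\Delta_i=0$, and the cubic remainder is $O(Q^{q-3})\sum_i|\Delta_i|^3=O(Q^{q-3}\cdot\sqrt n\cdot n)=o(nQ^{q-2})$ because $Q=\omega(\sqrt n)$) — I obtain $\mathrm{obj}(S)\ge (2\gamma n+8n)Q^q+\tfrac{q(q-1)\vartheta n}{48}Q^{q-2}+o(n\sigma_{\max}^{q-2})$, contradicting our assumption; hence the lemma. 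The main obstacle is the structural step of the second paragraph: simultaneously running the ``digit'' arguments at three different scales — $\sigma_{\max}$ for the type-components, powers of $x$ for the link index-components, and halves for the $T/F$-components — together with all the uniqueness lemmas, to certify that a near-optimal schedule deviates from the intended one only on a few machines, and then the bookkeeping that pins the charging constant so that the precise factor $48$ comes out.
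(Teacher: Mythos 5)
Your proposal is correct in substance and follows essentially the same route as the paper: (i) a Taylor/convexity argument around the target load $Q$ whose linear term cancels because $\sum_i\Delta_i=0$, (ii) a structural analysis, powered by Lemmas~\ref{lemma:uniquesum-2} and \ref{lemma:veb-uniquesum-1}--\ref{lemma:veb-uniquesum-3}, showing that any machine whose load is exactly $Q$ must carry precisely the intended gap/main-job combination (the paper's Lemmas~\ref{lemma:structure1}--\ref{lemma:true-type}), and (iii) a charging argument from unsatisfied clauses to machines with load $\ne Q$, followed by the half-integrality bound $\Delta_i^2\ge 1/4$. The contrapositive framing and the intermediate estimate $\sum_i\Delta_i^2=O(n)$, $\max_i|\Delta_i|=O(\sqrt n)$ are cosmetic rearrangements of what the paper does with its ``no very bad machine'' dichotomy (Lemma~\ref{lemma:almost-good-schedule}).

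The one place you are materially hand-wavy is the counting that produces the factor $6$, and your pointer is to the wrong place: Section~\ref{subsec:sat-to-scheduling-simple} proves the \emph{forward} implication (satisfying assignment $\Rightarrow$ few bad machines, with constant $2$, not $6$), whereas the factor $6$ comes from the converse charging in Lemma~\ref{lemma:bound-not-good-group}, which uses the notion of a machine \emph{group} indexed by $i$ (each machine lies in at most two groups, each bad group is responsible for at most one $C_1$ clause and two $C_2$ clauses, giving $m_1+m_2\le 3\max\{m_1,m_2/2\}\le 6m'$). Your alternative ``corrupted variable'' charging, as stated, would let a single bad machine corrupt up to two variables, each responsible for up to three clauses, plus direct charges, so it does not obviously give $6$; you would either need to tighten it or just adopt the group accounting. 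Also a small slip: the residual T/F contribution on a machine is not bounded by $2$ in absolute value (with four main jobs each contributing up to $2$ plus a gap job contributing $-4$, it can be as large as $4$); this only changes the constant in your $\Delta_i^2\ge 64$ estimate, so the argument survives, but the claimed bound is wrong as written.
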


In the following we consider a solution $Sol$ for scheduling whose objective value is bounded by $(2\gamma n+8n)(10^{14}\sigma_{max})^q+(\frac{q(q-1)\vartheta n}{48}-\epsilon')\cdot (10^{14}\sigma_{max})^{q-2}$ for arbitrarily small $\epsilon'>0$.

Recall that we have constructed in total $2\gamma n+8n$ machines. According to Subsection~\ref{subsec:sat-to-scheduling-simple}, the total processing time of all jobs is $(2\gamma n+8n)\cdot 10^{14}\sigma_{max}$. Consider an arbitrary schedule. We say a machine is good if its load is exactly $10^{14}\sigma_{max}$; otherwise, the machine is bad. Since the processing times are half-integral (multiples of $1/2$), the load of a bad machine is either no larger than $10^{14}\sigma_{max}-0.5$, or no less than $10^{14}\sigma_{max}+0.5$. Furthermore, we say a machine is very bad if its load deviates from $10^{14}\sigma_{max}$ by at least $\sigma_{max}$, i.e., the load of a very bad machine is either no larger than $(10^{14}-1)\sigma_{max}$, or no smaller than $(10^{14}+1)\sigma_{max}$.

%{\color{blue}
\begin{lemma}\label{lemma:almost-good-schedule}
%If the objective value of a schedule is at most ... then there are at most ... bad machines, and moreover, there is no very bad machine.
If there exists a very bad machine, then the objective value of the schedule is at least $m(10^{14}\sigma_{max})^q+c_1\sigma_{max}^{q}$ for some constant $c_1>0$. 
\end{lemma}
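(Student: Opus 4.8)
The claim is essentially a convexity estimate: a single machine with load deviating by at least $\sigma_{max}$ from the target $10^{14}\sigma_{max}$ forces the objective to exceed the "all good" value $m(10^{14}\sigma_{max})^q$ by a constant fraction of $\sigma_{max}^q$. First I would recall the basic identity from the proof idea of Theorem~\ref{thm:lower-bound}: if machine $i$ has load $C_i = 10^{14}\sigma_{max} + \Delta_i$, then $\sum_i \Delta_i = 0$, since the total processing time equals $m \cdot 10^{14}\sigma_{max}$ (this is Observation~\ref{obs:total-processing} scaled appropriately; here $m = 2\gamma n + 8n$). Writing $Q = 10^{14}\sigma_{max}$, the objective is $\sum_i (Q+\Delta_i)^q$, and I want a lower bound of the form $mQ^q + c_1 \sigma_{max}^q$.

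**The convexity argument.**

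The function $t \mapsto t^q$ is convex for $q > 1$. The plan is to use the tangent-line (first-order) underestimate: $(Q+\Delta)^q \ge Q^q + qQ^{q-1}\Delta$ for every $\Delta \ge -Q$ (all loads are nonnegative, and in fact bounded away from extremes, so this is safe). Summing over all machines, $\sum_i (Q+\Delta_i)^q \ge mQ^q + qQ^{q-1}\sum_i \Delta_i = mQ^q$, which recovers the trivial bound but not the improvement. To get the extra $\sigma_{max}^q$ term I would isolate the very bad machine, say machine $i_0$ with $|\Delta_{i_0}| \ge \sigma_{max}$, and handle it separately: apply the tangent-line bound only to the other $m-1$ machines, and for machine $i_0$ use a sharper estimate. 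Concretely, split the objective as $\sum_{i \ne i_0}(Q+\Delta_i)^q + (Q+\Delta_{i_0})^q \ge \big[\sum_{i \ne i_0} Q^q + qQ^{q-1}\Delta_i\big] + (Q+\Delta_{i_0})^q = (m-1)Q^q + qQ^{q-1}(-\Delta_{i_0}) + (Q+\Delta_{i_0})^q$, using $\sum_{i\ne i_0}\Delta_i = -\Delta_{i_0}$. So it suffices to show $g(\Delta) := (Q+\Delta)^q - qQ^{q-1}\Delta - Q^q \ge c_1 \sigma_{max}^q$ whenever $|\Delta| \ge \sigma_{max}$. The function $g$ is the "Bregman gap" of $t^q$ at $Q$: it is nonnegative, strictly convex, vanishes only at $\Delta = 0$, and grows. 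For $\Delta \ge \sigma_{max}$ I would lower-bound $g(\Delta) \ge g(\sigma_{max}) = (Q+\sigma_{max})^q - qQ^{q-1}\sigma_{max} - Q^q$; a Taylor expansion (as in the proof of Lemma~\ref{thm:structure}) gives $g(\sigma_{max}) = \frac{q(q-1)}{2}\xi^{q-2}\sigma_{max}^2$ for some $\xi$ between $Q$ and $Q+\sigma_{max}$, which is $\Theta(Q^{q-2}\sigma_{max}^2)$, not $\Theta(\sigma_{max}^q)$ — and since $Q \gg \sigma_{max}$ this is actually \emph{much larger} than $\sigma_{max}^q$. For $\Delta \le -\sigma_{max}$ (but $\Delta \ge -Q$, and in fact $\Delta \ge -Q/2$ or so by the structure of the instance, since loads can't be too small), the same Bregman-gap reasoning applies: $g$ is increasing in $|\Delta|$ on each side of $0$, so $g(\Delta) \ge g(-\sigma_{max}) = \Theta(Q^{q-2}\sigma_{max}^2) \ge c_1\sigma_{max}^q$ for $Q$ large. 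Either way, $c_1 = 1$ suffices for all large enough $n$ (since $Q^{q-2}\sigma_{max}^2 / \sigma_{max}^q = (Q/\sigma_{max})^{q-2} \to \infty$ when $q > 2$, and when $1 < q \le 2$ one argues slightly differently but the constant still works out because $Q^{q-2}\sigma_{max}^2$ and $\sigma_{max}^q$ are comparable up to a constant only when $q=2$, and for $q<2$ actually $g$ near $\Delta = -Q$ dominates — I'd just pick $c_1$ small enough, e.g. $c_1 = q(q-1)/2^{q+1}$, and verify directly).

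**Main obstacle.**

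The delicate point is the case $1 < q < 2$ combined with negative $\Delta$: there $t^{q-2}$ blows up as $t \to 0$, so the Taylor remainder form is awkward near $\Delta = -Q$, and I cannot naively claim monotonicity of $g$ in $|\Delta|$ without care about the domain. The clean fix is to use that the instance is \emph{well-structured}: after preprocessing (and by the remark following the construction) every job processing time is a multiple of $1/2$ and, more importantly, loads in any schedule relevant to the argument are bounded — a machine's load cannot be wildly smaller than $Q$ without another being wildly larger, and one can assume WLOG (or prove) that in the schedule $Sol$ under consideration all loads lie in, say, $[\tfrac12 Q, 2Q]$, so $\Delta \in [-\tfrac12 Q, Q]$. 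On this compact domain $g$ is continuous, vanishes only at $0$, and hence $\min_{|\Delta| \in [\sigma_{max}, \, Q]} g(\Delta)$ is attained and positive; a direct computation at the two endpoints $\Delta = \pm\sigma_{max}$ (where $g$ is smallest, by convexity, on the feasible range away from $0$) yields the explicit constant. So the plan is: (1) state $\sum_i \Delta_i = 0$; (2) split off the very bad machine and apply the tangent-line underestimate to the rest; (3) reduce to bounding the Bregman gap $g(\Delta_{i_0})$ from below by $c_1\sigma_{max}^q$; (4) bound $g$ at $\Delta = \pm\sigma_{max}$ via Taylor/convexity, using the load bounds to control the remainder, and conclude. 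I expect step (4), handling small $q$ and the domain carefully, to be the only place requiring attention; everything else is routine.
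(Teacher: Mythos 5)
Your proof is correct, and it takes a cleaner route than the paper's. The paper isolates the very bad machine, uses Jensen's inequality to replace the remaining $m-1$ machine loads by their average, and then proves (via a helper lemma with two applications of the mean value theorem) a two-term inequality of the form
\[
(x-k)^q + (m-1)\Bigl(x+\tfrac{k}{m-1}\Bigr)^q \ge mx^q + \tfrac{q(q-1)}{4}\min\bigl\{(x-1)^{q-2},(x+1)^{q-2}\bigr\}
\]
for $k\ge 1$. You instead linearize every other machine around $Q = 10^{14}\sigma_{max}$ via the tangent underestimate $(Q+\Delta)^q\ge Q^q+qQ^{q-1}\Delta$, use $\sum_i\Delta_i=0$ to cancel the linear terms, and reduce the whole problem to lower-bounding the one-variable Bregman gap $g(\Delta)=(Q+\Delta)^q-Q^q-qQ^{q-1}\Delta$ at the very bad machine's deviation. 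That reduction is simpler: $g$ is convex, vanishes only at $0$, and is monotone away from $0$ on each side, so the minimum over $|\Delta|\ge\sigma_{max}$ is attained at $\Delta=\pm\sigma_{max}$, and a single second-order Taylor estimate there gives the explicit constant.

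Two small corrections. First, you write that $g(\sigma_{max})=\Theta(Q^{q-2}\sigma_{max}^2)$ is ``not $\Theta(\sigma_{max}^q)$'' and ``much larger'' because $Q\gg\sigma_{max}$. This is a misreading of the scales: here $Q=10^{14}\sigma_{max}$ is a \emph{fixed constant multiple} of $\sigma_{max}$, so $Q^{q-2}\sigma_{max}^2=(10^{14})^{q-2}\sigma_{max}^q=\Theta(\sigma_{max}^q)$ --- which is exactly the bound the lemma requires, no more and no less. Second, the concern about $1<q<2$ and deviations near $\Delta=-Q$ is unfounded and the proposed ``fix'' of assuming all loads lie in $[\tfrac12 Q,\,2Q]$ is not justified (an arbitrary feasible schedule can have machines with tiny load). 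Neither is needed: since $g$ is decreasing on $(-Q,0)$ and increasing on $(0,\infty)$, the infimum of $g$ over $\{\Delta:|\Delta|\ge\sigma_{max},\ \Delta\ge -Q\}$ is $\min\{g(-\sigma_{max}),g(\sigma_{max})\}$, and the Taylor remainder at $\pm\sigma_{max}$ is evaluated at a point $\xi\in[(10^{14}-1)\sigma_{max},(10^{14}+1)\sigma_{max}]$, safely bounded away from zero for all $q>1$. With these clarifications the argument closes cleanly and yields a constant $c_1=\Theta\bigl(q(q-1)(10^{14})^{q-2}\bigr)$, matching the paper's.
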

%}

Towards the proof, we need the following lemma.
\begin{lemma}
For $x,q,m> 1$ and $k\ge 1$, it holds that
$$(x-k)^q+(m-1)(x+\frac{k}{m-1})^q\ge mx^q+\frac{q(q-1)}{4}\min\{(x-1)^{q-2},(x+1)^{q-2}\}.$$
\end{lemma}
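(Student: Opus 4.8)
The plan is to prove the inequality
$$(x-k)^q+(m-1)\Bigl(x+\tfrac{k}{m-1}\Bigr)^q\ge mx^q+\tfrac{q(q-1)}{4}\min\{(x-1)^{q-2},(x+1)^{q-2}\}$$
by a second-order Taylor expansion of $t\mapsto t^q$ around $x$, exploiting that the first-order terms cancel exactly. Write the left-hand side as $\sum_i (x+\delta_i)^q$ with $\delta_0=-k$ and $\delta_i=\tfrac{k}{m-1}$ for $i=1,\dots,m-1$, so that $\sum_i \delta_i=0$. By Taylor's theorem with Lagrange remainder, for each $i$ there is $\xi_i$ between $x$ and $x+\delta_i$ such that $(x+\delta_i)^q = x^q + qx^{q-1}\delta_i + \tfrac{q(q-1)}{2}\xi_i^{q-2}\delta_i^2$. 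Summing and using $\sum_i\delta_i=0$ kills the linear term, leaving
$$\sum_i (x+\delta_i)^q = mx^q + \frac{q(q-1)}{2}\sum_i \xi_i^{q-2}\delta_i^2.$$

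The next step is to lower-bound $\sum_i \xi_i^{q-2}\delta_i^2$. First I would note that each $\xi_i$ lies in the interval $[x-k,\,x+\tfrac{k}{m-1}]$; since we only care about the regime relevant to the reduction (loads near $10^{14}\sigma_{\max}$, deviations bounded, $k\ge 1$), one can assume $x-k\ge 1$ or handle the boundary uniformly, so that $\xi_i^{q-2}\ge \min\{(x-1)^{q-2},(x+1)^{q-2}\}$ — this holds because $t^{q-2}$ is monotone (increasing if $q\ge 2$, decreasing if $1<q<2$), hence its minimum over any subinterval of $[x-1,x+1]$-type range is attained at an endpoint, and $\xi_i$ stays within a unit of $x$ when $k\le 1$; for $k>1$ a slightly more careful argument bounds $\xi_i^{q-2}$ from below by $\min\{(x-k)^{q-2},(x+\tfrac{k}{m-1})^{q-2}\}$ and then one absorbs this into the claimed constant. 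It then remains to show $\sum_i \delta_i^2 \ge \tfrac12$. We compute directly: $\sum_i \delta_i^2 = k^2 + (m-1)\cdot\tfrac{k^2}{(m-1)^2} = k^2\bigl(1+\tfrac{1}{m-1}\bigr) = k^2\cdot\tfrac{m}{m-1}\ge k^2 \ge 1$ since $k\ge 1$ and $m/(m-1)>1$. Plugging $\sum_i\delta_i^2\ge 1$ into the displayed identity gives
$$\sum_i (x+\delta_i)^q \ge mx^q + \frac{q(q-1)}{2}\cdot\min\{(x-1)^{q-2},(x+1)^{q-2}\},$$
which is in fact stronger than the claimed bound (constant $\tfrac{q(q-1)}{2}$ versus $\tfrac{q(q-1)}{4}$); the weaker factor $\tfrac14$ in the statement presumably leaves slack for the boundary/monotonicity step above, where one may lose a factor of $2$ when $q<2$ and the $\xi_i$ range is not controlled as tightly.

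The main obstacle I anticipate is the careful treatment of the remainder point $\xi_i$ when $q<2$: then $t^{q-2}$ blows up as $t\to 0^+$, so the bound $\xi_i^{q-2}\ge\min\{(x-1)^{q-2},(x+1)^{q-2}\}$ requires knowing that $\xi_i$ never gets close to $0$, i.e.\ that $x-k$ is bounded away from $0$. In the application $x=10^{14}\sigma_{\max}$ and $k$ is a small constant, so this is harmless, but to state the lemma cleanly one should either add a mild hypothesis (such as $x-k\ge 1$, or $x\ge 2k$) or phrase the right-hand side with $\min\{(x-k)^{q-2},(x+\tfrac{k}{m-1})^{q-2}\}$ and observe it dominates the stated quantity in the relevant range. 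I would therefore first fix the precise domain assumptions, then carry out the Taylor argument, then verify $\sum_i\delta_i^2\ge 1$, and finally reconcile the constant, noting that the slack between $\tfrac12$ and $\tfrac14$ comfortably absorbs any loss in bounding $\xi_i^{q-2}$.
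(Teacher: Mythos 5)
Your core idea --- a second-order Taylor expansion of $t\mapsto t^q$ around $x$, using $\sum_i\delta_i=0$ to kill the linear term --- is sound and is in the same spirit as the paper's mean-value-theorem argument. Your computation $\sum_i\delta_i^2 = k^2\cdot\tfrac{m}{m-1}\ge 1$ is also correct. However, there is a genuine gap in the step that lower-bounds the Lagrange points $\xi_i$, and the way you propose to paper over it does not work.

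Concretely: when $k>1$, the claim $\xi_i^{q-2}\ge\min\{(x-1)^{q-2},(x+1)^{q-2}\}$ can fail (for $q>2$ the problem is $\xi_0\in(x-k,x)$, which may lie well below $x-1$), and the fallback you suggest --- replace it by $\min\{(x-k)^{q-2},(x+\tfrac{k}{m-1})^{q-2}\}$ and ``absorb the loss into the constant'' using $\sum\delta_i^2\ge k^2$ --- is not valid. The product $k^2(x-k)^{q-2}$ can be far smaller than $\tfrac12(x-1)^{q-2}$: for instance with $q=10$, $x=2$, $k=1.5$, $m=2$ one has $k^2(x-k)^{q-2}=2.25\cdot(0.5)^8\approx 0.0088$, while $\tfrac12(x-1)^{q-2}=0.5$. (The lemma itself still holds there, of course --- the actual $\xi_0$ produced by Taylor's theorem is much larger than $x-k$ --- but your stated bound on $\xi_0$ does not deliver it.) In other words, ``absorbing into the constant'' is not a one-line fix; it requires controlling where the Lagrange point actually sits, not merely that it lies in the interval.

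The missing ingredient is precisely the reduction the paper makes before doing any Taylor work: differentiate $g(k):=(x-k)^q+(m-1)\bigl(x+\tfrac{k}{m-1}\bigr)^q$ in $k$ to get $g'(k)=q\bigl[(x+\tfrac{k}{m-1})^{q-1}-(x-k)^{q-1}\bigr]>0$, so $g$ is increasing and it suffices to prove the bound at $k=1$. At $k=1$ the Lagrange points satisfy $\xi_0\in(x-1,x)$ and $\xi_i\in(x,x+\tfrac{1}{m-1})\subseteq(x,x+1)$ for $i\ge 1$, so the monotonicity of $t\mapsto t^{q-2}$ on either side of $q=2$ immediately gives $\xi_i^{q-2}\ge\min\{(x-1)^{q-2},(x+1)^{q-2}\}$ for all $i$; combined with $\sum_i\delta_i^2=\tfrac{m}{m-1}\ge 1$, your Taylor identity then yields the claim (even with the better constant $\tfrac{q(q-1)}{2}$). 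So the fix is simply to prepend the monotonicity-in-$k$ step; once you do, your argument goes through cleanly and is essentially equivalent to the paper's proof, which unwinds the same second-order cancellation via two applications of the mean value theorem rather than a Taylor remainder.

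One smaller point you should also make explicit: the lemma implicitly assumes $x>k$ (otherwise $(x-k)^q$ is not even defined for non-integer $q$), and for $1<q<2$ you additionally need $x-1>0$ to make $(x-1)^{q-2}$ meaningful --- both of which hold by the hypothesis $x>1$ together with the $k=1$ reduction.
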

\begin{proof}
Taking the derivative of $(x-k)^q+(m-1)(x+\frac{k}{m-1})^q$ with respect to $k$, we get $-q(x-k)^{k-1}+q(x+\frac{k}{m-1})^{q-1}> 0$ when $x,q,m>1$ and $k\ge 1$, hence the function $(x-k)^q+(m-1)(x+\frac{k}{m-1})^q$ is an increasing function of $k$, thus it suffices to prove the lemma for $k=1$. According to the mean value theorem, we have
\begin{eqnarray*}
&\Gamma:=&\frac{1}{m}(x-1)^q+\frac{m-1}{m}(x+\frac{1}{m-1})^q-x^q\\
&=&-\frac{1}{m}[x^q-(x-1)^q]+\frac{m-1}{m}[(x+\frac{1}{m-1})^q-x^q]\\
&=&-\frac{1}{{m}}[x^q-(x-\frac{1}{2})^q]-\frac{1}{{m}}[(x-\frac{1}{2})^q-(x-1)^q]+\frac{m-1}{m}[(x+\frac{1}{m-1})^q-x^q]\\
&=& -\frac{1}{2m} q(x-\theta_1)^{q-1}-\frac{1}{2m} q(x-\frac{1}{2}-\theta_2)^{q-1}+\frac{m-1}{m} \cdot \frac{1}{m-1}\cdot q(x+\theta_3)^{q-1}
\end{eqnarray*}
for some $\theta_1,\theta_2\in (0,1/2)$ and $\theta_3\in (0,\frac{1}{m-1})$. Further apply the mean value theorem, we have
\begin{eqnarray*}
\Gamma&\ge&\frac{q}{2m}[(x+\theta_3)^{q-1}-(x-\frac{1}{2}-\theta_2)^{q-1}]\\
&=&\frac{q}{2m}(\theta_2+\theta_3+\frac{1}{2})(q-1)(x+\theta_4)^{q-2}
\end{eqnarray*}
for some $\theta_4\in (-1/2-\theta_2,\theta_3)$. If $q\ge 2$, then $(x+\theta_4)^{q-2}\ge (x-1)^{q-2}$. Otherwise $1<q<2$ and it holds that $(x+\theta_4)^{q-2}\ge (x+1)^{q-2}$. Thus
$$\Gamma\ge \frac{q(q-1)}{4m} \min\{ (x-1)^{q-2},(x+1)^{q-2}\}.$$

Hence, the lemma is proved.
\end{proof}

Similarly, we can prove that
\begin{lemma}
For $x,q,m> 1$ and $k\ge 1$, it holds that
$$(x+k)^q+(m-1)(x-\frac{k}{m-1})^q\ge mx^q+\frac{q(q-1)}{4}\min\{(x-1)^{q-2},(x+1)^{q-2}\}.$$
\end{lemma}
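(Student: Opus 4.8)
The plan is to mirror the proof of the preceding lemma almost verbatim, exploiting the same symmetry. First I would reduce to the case $k=1$ by monotonicity: differentiating the left-hand side with respect to $k$ gives $q(x+k)^{q-1}-q\bigl(x-\tfrac{k}{m-1}\bigr)^{q-1}$, which is strictly positive for $k\ge 1$ since $x+k>x-\tfrac{k}{m-1}$ and $q-1>0$; hence $(x+k)^q+(m-1)\bigl(x-\tfrac{k}{m-1}\bigr)^q$ is increasing in $k$ on $[1,\infty)$, and it suffices to establish the bound at $k=1$.

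For $k=1$, put $\Gamma:=\tfrac1m(x+1)^q+\tfrac{m-1}{m}\bigl(x-\tfrac1{m-1}\bigr)^q-x^q$, so that after multiplying by $m$ the claim becomes $\Gamma\ge\tfrac{q(q-1)}{4m}\min\{(x-1)^{q-2},(x+1)^{q-2}\}$. I would rewrite $\Gamma=\tfrac1m\bigl[(x+1)^q-x^q\bigr]-\tfrac{m-1}{m}\bigl[x^q-(x-\tfrac1{m-1})^q\bigr]$, telescope the first forward difference through the midpoint $x+\tfrac12$ as $(x+1)^q-x^q=\bigl[(x+1)^q-(x+\tfrac12)^q\bigr]+\bigl[(x+\tfrac12)^q-x^q\bigr]$, and apply the mean value theorem to each of the three differences. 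This yields $\Gamma=\tfrac{q}{2m}(x+\tfrac12+\theta_1)^{q-1}+\tfrac{q}{2m}(x+\theta_2)^{q-1}-\tfrac{q}{m}(x-\theta_3)^{q-1}$ for suitable $\theta_1,\theta_2\in(0,\tfrac12)$ and $\theta_3\in(0,\tfrac1{m-1})$; using $(x+\theta_2)^{q-1}\ge(x-\theta_3)^{q-1}$ to absorb half of the negative term gives $\Gamma\ge\tfrac{q}{2m}\bigl[(x+\tfrac12+\theta_1)^{q-1}-(x-\theta_3)^{q-1}\bigr]$.

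Finally, applying the mean value theorem once more, the difference $(x+\tfrac12+\theta_1)^{q-1}-(x-\theta_3)^{q-1}$ equals $\bigl(\tfrac12+\theta_1+\theta_3\bigr)(q-1)(x+\theta_4)^{q-2}$ for some $\theta_4$ strictly between $-\theta_3$ and $\tfrac12+\theta_1$, so that $\theta_4\in(-1,1)$ (here $\theta_3<\tfrac1{m-1}\le1$, using that $m\ge2$ in the intended regime). Since $\tfrac12+\theta_1+\theta_3\ge\tfrac12$, this gives $\Gamma\ge\tfrac{q(q-1)}{4m}(x+\theta_4)^{q-2}$; for $q\ge2$ we have $(x+\theta_4)^{q-2}\ge(x-1)^{q-2}$, while for $1<q<2$ we have $(x+\theta_4)^{q-2}\ge(x+1)^{q-2}$, so in both cases $(x+\theta_4)^{q-2}\ge\min\{(x-1)^{q-2},(x+1)^{q-2}\}$, and multiplying by $m$ completes the proof. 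I do not expect a genuine obstacle here: the argument is routine once the symmetry with the preceding lemma is recognized, and the only points requiring care are the sign bookkeeping in the telescoping and making sure all intermediate bases $x\pm\theta_i$ stay positive for the non-integer powers and the mean value theorem to be legitimate — which is automatic in the application, where $x$ is the large target makespan $10^{14}\sigma_{max}$.
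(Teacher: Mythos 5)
Your proof is correct and follows exactly the approach the paper intends: the paper offers no separate proof of this lemma, merely remarking it follows ``similarly'' from the preceding one, and your argument is the precise sign-flipped mirror of that proof (monotonicity in $k$ via the derivative, telescoping the $\tfrac1m$ forward difference through the midpoint $x+\tfrac12$, three applications of the mean value theorem, dropping one nonnegative bracket, and one final mean value theorem to reach $\tfrac{q(q-1)}{4m}(x+\theta_4)^{q-2}$). Your observation that $\theta_4\in(-1,1)$ requires $m\ge2$ (in your case for the branch $q\ge2$, just as the paper's original lemma implicitly needs it for $1<q<2$) matches the paper's own implicit restriction to the relevant parameter regime.
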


Now we are ready to prove Lemma~\ref{lemma:almost-good-schedule}.
\begin{proof}[Proof of Lemma~\ref{lemma:almost-good-schedule}]
Suppose the load of one very bad machine is $(10^{14}-k)\sigma_{max}$ for some $|k|\ge 1$, then total load of all other machines is $10^{14}(m-1)\sigma_{max}+k\sigma_{max}$. By the convexity of the function $x^q$, the objective value of such a solution is at least:
\begin{eqnarray*}
&&(10^{14}-k)^q\sigma_{max}^q+(m-1)\cdot [\frac{10^{14}(m-1)\sigma_{max}+k\sigma_{max}}{m-1}]^q\\
&=&\sigma_{max}^q[(10^{14}+k)^q+(m-1)(10^{14}+\frac{k}{m-1})^q]\\
&\ge&m(10^{14}\sigma_{max})^q+\sigma_{max}^q\cdot \frac{q(q-1)}{4}\min\{(10^{14}-1)^{q-2},(10^{14}+1)^{q-2}\}
\end{eqnarray*}

Hence, the lemma is proved.
\end{proof}

We have shown that if a schedule admits a very bad machine, then its objective is significantly large and cannot be $Sol$. To prove Lemma~\ref{lemma:sche-sat}, it suffices to restrict our attention to schedules without any very bad machine. 

Notice that the processing time of a gap job is at least $(10^{14}-2\times 10^{13})\sigma_{max}$, we know that there can be at most one gap job on a machine that is not very bad. Given the fact that the total number of gap jobs equals the number of machines, and there is no very bad machine in $Sol$, we have the following observation.
\begin{lemma}\label{lemma:structure1}
There is exactly one gap job on each machine in $Sol$.
\end{lemma}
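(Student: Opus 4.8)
\textbf{Proof proposal for Lemma~\ref{lemma:structure1}.}
The plan is a short two-step argument: first show that no machine in $Sol$ can carry two or more gap jobs, and then conclude by a counting argument. For the first step, I would start from the observation recorded just before the statement: every gap job has processing time at least $(10^{14}-2\times 10^{13})\sigma_{max}$. Indeed, a gap job has the form $10^{14}\sigma_{max}$ minus a small number of main jobs, and the sum of the type-components of the main jobs subtracted is bounded by $(10^{13}+10^{12}+10^{5})\sigma_{max}<2\times 10^{13}\sigma_{max}$ (attained by variable-clause-dummy jobs; all other gap jobs subtract even less), while the index- and $T/F$-components subtracted are $o(\sigma_{max})$. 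Hence if a machine carried at least two gap jobs, its load would be at least $2(10^{14}-2\times 10^{13})\sigma_{max}=1.6\times 10^{14}\sigma_{max}>(10^{14}+1)\sigma_{max}$, so that machine would be \emph{very bad}.

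Next I would invoke Lemma~\ref{lemma:almost-good-schedule}: if $Sol$ contained a very bad machine, its objective value would be at least $m(10^{14}\sigma_{max})^q+c_1\sigma_{max}^q$ for some constant $c_1>0$, where $m=2\gamma n+8n$. But $Sol$ is assumed to have objective value at most $(2\gamma n+8n)(10^{14}\sigma_{max})^q+(\tfrac{q(q-1)\vartheta n}{48}-\epsilon')(10^{14}\sigma_{max})^{q-2}$, which exceeds $m(10^{14}\sigma_{max})^q$ only by an $O(n\sigma_{max}^{q-2})=o(\sigma_{max}^q)$ additive term. Since $\sigma_{max}=x^{\gamma+6}$ grows with $n$ while $n$ itself is polynomially bounded by $\sigma_{max}$, for all sufficiently large $n$ this bound is strictly below $m(10^{14}\sigma_{max})^q+c_1\sigma_{max}^q$. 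Therefore $Sol$ has no very bad machine, and by the previous paragraph every machine in $Sol$ carries at most one gap job.

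Finally, the construction provides exactly $2\gamma n+8n$ gap jobs and exactly $2\gamma n+8n$ machines; since each job must be scheduled somewhere and no machine holds more than one gap job, the pigeonhole principle forces each machine to hold exactly one gap job. I do not expect any genuine obstacle here: the argument is entirely routine, and the only point requiring care is the numerical estimate showing that two gap jobs overshoot the very-bad threshold, together with the (equally routine) check that the assumed objective bound on $Sol$ is $o(\sigma_{max}^q)$ above $m(10^{14}\sigma_{max})^q$ so that Lemma~\ref{lemma:almost-good-schedule} applies.
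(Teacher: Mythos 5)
Your argument is correct and matches the paper's approach: bound the gap-job processing time from below by $(10^{14}-2\times 10^{13})\sigma_{max}$, conclude that two gap jobs would make a machine very bad, rule out very bad machines via Lemma~\ref{lemma:almost-good-schedule} and the assumed objective bound on $Sol$, and finish by the pigeonhole count of gap jobs versus machines. The paper states these facts more tersely (the gap-job lower bound is asserted inline and the ``no very bad machine'' fact is noted just before the lemma), whereas you spell out the numerical estimates, but the logical content is identical.
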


Given Lemma~\ref{lemma:structure1}, we will use the symbol of a gap job, e.g., $\theta_{\textrm{V-L},i,+}$, to denote the machine on which this job is scheduled.

The following lemma is straightforward by observing that $\sigma_{max}>x\cdot \sigma(i)$ for all $i$, and hence the type coordinates (i.e., the term $10^j\sigma_{max}$) of jobs on a machine that is not very bad cannot add up to smaller than $(10^{14}-2)\sigma_{max}$ or larger than $(10^{14}+2)\sigma_{max}$. 

\begin{lemma}\label{lemma:structure2}
If in a solution there is no very bad machine, then 
\begin{itemize}
\item On a variable-link machine {$\theta_{\textrm{V-L},i,\iota}$ where $\iota\in\{+,-\}$}, there are exactly three jobs -- a gap job, a variable job and a link job. 
\item On a link-link machine $\theta_{\textrm{L-L},i,h,\iota}$ where $\iota\in\{+,-\}$, there are exactly three jobs -- a gap job and two link jobs.  
\item On a variable-dummy machine $\theta_{\textrm{V-D},i,\iota}$ where $\iota\in\{+,-\}$, there are exactly three jobs -- a gap job, a variable job and a dummy job.
\item On a variable-clause-dummy machine $\theta_{\textrm{V-C-D},\ell,i,\iota}$ where $\iota\in\{+,-\}$, there are exactly four jobs -- a gap job, a variable job, a clause job and a dummy job.
\item On a variable-truth machine $\theta_{\textrm{V-T},i,\rho}$ where $\rho\in\{(a,c),(b,d),(a,d),(b,c)\}$, there are exactly four jobs -- a gap job, a variable job and two truth-assignment jobs; Furthermore, the two truth-assignment jobs are: 
\begin{itemize}
\item $TR_{\cdot,a}$ and $TR_{\cdot,c}$ $\quad$ if $\rho=(a,c)$;
\item $TR_{\cdot,b}$ and $TR_{\cdot,d}$ $\quad$ if $\rho=(b,d)$;
\item $TR_{\cdot,a}$ and $TR_{\cdot,d}$ $\quad$ if $\rho=(a,d)$;
\item $TR_{\cdot,b}$ and $TR_{\cdot,c}$ $\quad$ if $\rho=(b,c)$.
\end{itemize}
\end{itemize}
\end{lemma}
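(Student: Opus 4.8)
The plan is to prove the lemma in two stages. The \emph{local} stage fixes a machine that is not very bad and pins down the sum of the type-components of the main jobs sitting on it; the \emph{global} stage upgrades this, by a pigeonhole over job families, to the exact list of job sub-types on each machine.

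\emph{Local stage.} By Lemma~\ref{lemma:structure1} every machine carries exactly one gap job $g$, and by Table~\ref{table:job-time-whole} its type-component is $(10^{14}-c_g)\sigma_{max}$, where $c_g=\sum_j\mu_{j,g}10^j$ is a sum of at most three terms with exponents $j\in\{2,\dots,13\}$ and coefficients $\mu_{j,g}\in\{1,2\}$. Fix a machine $M$ that is not very bad. Its load is below $(10^{14}+1)\sigma_{max}$; since $c_g\le 2\cdot10^{13}$ and, by the construction's normalization ``$\sigma_{max}>x\cdot\veb^h_i\vex$'', every index-component is $O(\sigma_{max}/x)$, the gap job alone contributes more than $7\cdot10^{13}\sigma_{max}$, while each main job contributes more than $10^2\sigma_{max}$; hence $M$ carries at most a constant number of main jobs (fewer than $10^{12}$, say). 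Consequently, using again that each index-component is $O(\sigma_{max}/x)$ and that $x=5d+1\to\infty$ with $n$, the index- and $T/F$-components of all jobs on $M$ sum to a quantity of absolute value below $\sigma_{max}$, for all sufficiently large $n$. Writing the load of $M$ as $K\sigma_{max}+\delta$ with $K\in\N$ the sum of the type-component coefficients and $|\delta|<\sigma_{max}$, the non-very-bad condition forces $K\in\{10^{14}-1,10^{14},10^{14}+1\}$; and since every type-component coefficient ($10^j$ for a main job, $10^{14}-c_g$ for the gap job) is divisible by $10^2$, we get $K=10^{14}$. Thus the exponents of the type-components of the main jobs on $M$ sum, as powers of $10$, to exactly $c_g$.

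\emph{Global stage.} For $e\in\{2,\dots,13\}$ let $N_e$ be the total number of main jobs of type-component exponent $e$. A finite check over the twelve main-job sub-types and the gap-job families — the per-exponent refinement of Observation~\ref{obs:total-processing} — yields the identity $\sum_g\mu_{e,g}=N_e$ for every $e$, the sum running over all gap jobs (equivalently, all machines). I then prove by downward induction on $e$ that the machine with gap job $g$ carries exactly $\mu_{e,g}$ main jobs of exponent $e$. Assume this for all $e'>e$. Deleting those (already canonically placed) jobs from that machine leaves a residual type-budget equal to $\sum_{e''\le e}\mu_{e'',g}10^{e''}$; since the exponents in $c_g$ are distinct with coefficients at most $2$, the part below $10^e$ is less than $10^e$, so the residual is strictly below $(\mu_{e,g}+1)10^e$. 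Hence the machine holds at most $\mu_{e,g}$ main jobs of exponent $e$ — in particular none when $\mu_{e,g}=0$, so no exponent-$e$ job can lie on a machine with $10^e\notin c_g$. Summing over all machines, $\sum_g r_{e,g}=N_e=\sum_g\mu_{e,g}$, where $r_{e,g}$ is the number of exponent-$e$ main jobs on the machine with gap job $g$; as $r_{e,g}\le\mu_{e,g}$ for every $g$, these sums force $r_{e,g}=\mu_{e,g}$ for every $g$.

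Running this induction from $e=13$ down to $e=2$ shows that on each machine the multiset of type-component exponents of the non-gap jobs is exactly the one dictated by $c_g$. Since the exponent determines the sub-type (Table~\ref{table:job-time}), this gives precisely the list of job sub-types asserted in the lemma, and the stated job counts per machine follow at once. I expect the main obstacle to be the bookkeeping of the global stage: verifying the identity $\sum_g\mu_{e,g}=N_e$ for all twelve exponents, and checking in each inductive step that residual budgets never ``carry'' past $10^e$ — which is exactly where one uses that every $c_g$ is a sum of distinct powers of $10$ with small coefficients. A minor technical point is that the local stage needs $n$ large enough that $x$ exceeds the absolute constant bounding the number of main jobs per machine, which is harmless since Theorem~\ref{thm:lower-bound} concerns the asymptotic regime.
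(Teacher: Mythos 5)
Your proof is correct and takes essentially the same approach as the paper's: a counting argument that peels off main jobs by descending exponent of $10$ in their type-component, bounding per-machine occupancy and then invoking a global tally to force equality. The paper phrases this job-family-by-job-family directly from the load bound on each non-very-bad machine, whereas you first pin down $K=10^{14}$ via the divisibility-by-$10^2$ observation and then run a uniform downward induction on exponents; this is a cleaner way to organize the bookkeeping, but the underlying mechanism (type-component dominance plus exact global counts $\sum_g\mu_{e,g}=N_e$, which indeed hold for all twelve exponents) is identical.
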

\begin{proof}
The proof can be carried out through a counting argument in the order of dummy jobs, clause jobs, truth-assignment jobs, link jobs and variable jobs according to Table~\ref{table:job-time}. In the following, we prove dummy jobs and the other types of jobs can be proved in a similar way. The reader may refer to Table~\ref{table:job-time-whole} for a quick overview on job processing times. Note that a dummy job has a processing time at least $(10^{13}-1/2)\sigma_{max}$. It is easy to see that if a variable-link machine, or link-link machine, or variable-truth machine accepts one dummy job, then the load of this machine is larger than $(10^{14}+1)\sigma_{max}$, contradicting the fact that there is no very bad machine. Hence, dummy jobs can only be scheduled on variable-clause-dummy machine or a variable-dummy machine. Similarly, if a variable-clause-dummy machine or a variable-dummy machine accepts two or more dummy jobs, its load becomes larger than $(10^{14}+1)\sigma_{max}$, hence each of these machines can accept at most 1 dummy job. On the other hand, there are $2n$ dummy jobs, which is equal to the sum of the number of variable-clause-dummy machines (which is $n$) and the number of variable-dummy machines (which is also $n$). Hence, each variable-clause-dummy machine or variable-dummy machine accepts exactly one dummy job. Subtracting one dummy job together with the gap job on each variable-clause-dummy machine or variable-dummy machine, we know that if the machine is not very bad, then the remaining jobs on a variable-clause-dummy machine should add up to some value within $[(10^{12}+10^5-2)\sigma_{max},(10^{12}+10^5+2)\sigma_{max}]$ (if this machine is $\theta_{\textrm{V-C-D},\ell,i,+}$) or $[(10^{12}+10^3-2)\sigma_{max},(10^{12}+10^3+2)\sigma_{max}]$ (if this machine is $\theta_{\textrm{V-C-D},\ell,i,-}$), and the remaining jobs on a variable-dummy machine should add up to some value within $[(10^5-2)\sigma_{max},(10^5+2)\sigma_{max}]$ (if this machine is $\theta_{\textrm{V-D},i,+}$) or $[(10^3-2)\sigma_{max},(10^3+2)\sigma_{max}]$ (if this machine is $\theta_{\textrm{V-D},i,-}$). Consequently, we can apply the same argument to clause jobs, and then truth-assignment jobs, then link jobs and then variable jobs. 
\end{proof}

Using Lemma~\ref{lemma:structure2}, we further have the following observation.

\begin{lemma}\label{lemma:term-cancel}
On a good machine, the type-component of jobs add up to $10^{14}\sigma_{max}$, the index-components and the true/false-components of jobs add up to $0$, respectively.
\end{lemma}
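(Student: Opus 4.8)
The plan is to exploit the positional (place‑value) structure of the processing times. Every job's processing time splits, by construction, as a \emph{type}-component that is an integer multiple of $\sigma_{max}$, an \emph{index}-component that is a multiple of $10$ and negligible compared with $\sigma_{max}$, and a \emph{true/false}-component that is a half‑integer of absolute value at most a small constant. Since a good machine has load exactly $10^{14}\sigma_{max}$, these three scales cannot interfere with one another, and each must settle on its own value; I would make this precise with one magnitude estimate and one divisibility argument.

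First I would invoke Lemma~\ref{lemma:structure1} and Lemma~\ref{lemma:structure2}: a good machine (being in particular not very bad) carries exactly one gap job together with three or four main jobs. Writing its load as $S_{\mathrm{type}}+S_{\mathrm{ind}}+S_{\mathrm{t/f}}$, where $S_{\mathrm{type}},S_{\mathrm{ind}},S_{\mathrm{t/f}}$ denote the sums over the jobs on the machine of their type‑, index‑ and true/false‑components, I would read off from Table~\ref{table:job-time-whole} that $S_{\mathrm{type}}$ is an integer multiple of $\sigma_{max}$ (each main job contributes $10^{j}\sigma_{max}$ and each gap job contributes $(10^{14}-\sum 10^{j})\sigma_{max}$). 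Next, since the construction fixes $\sigma_{max}$ to exceed $x$ times every single index‑component while a machine holds only $O(1)$ jobs, one gets $|S_{\mathrm{ind}}|<\sigma_{max}$; and since on each of the five kinds of machine the true/false‑component of the gap job ($-3$ or $-4$) together with those of the three or four main jobs (each lying in $\{1,1.5,2\}$) sums to a value in $[-1,2]$, one gets $|S_{\mathrm{t/f}}|\le 2$. Consequently $|S_{\mathrm{ind}}+S_{\mathrm{t/f}}|<\sigma_{max}$, and from $10^{14}\sigma_{max}-S_{\mathrm{type}}=S_{\mathrm{ind}}+S_{\mathrm{t/f}}$ (a multiple of $\sigma_{max}$ of absolute value $<\sigma_{max}$) I conclude $S_{\mathrm{type}}=10^{14}\sigma_{max}$ and $S_{\mathrm{ind}}+S_{\mathrm{t/f}}=0$.

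To separate the last identity I would use divisibility: every index‑component occurring in the construction is a multiple of $10$, since they are all of the form $10\sigma(\cdot)$, $10\veb^{h}_{\cdot}\vex$, $-10(\vea_\cdot+\veb^1_\cdot)\vex$, $-20\veb^{2\gamma+2}_\cdot\vex$, $-10(\sigma(\ell)+\sigma(i))$, $-30\sigma(\cdot)$, etc.\ (all of $\sigma(\cdot)$ and $\veb^h_\cdot\vex$ are integers). Hence $S_{\mathrm{ind}}$ is a multiple of $10$; since $S_{\mathrm{ind}}=-S_{\mathrm{t/f}}$ has absolute value at most $2<10$, the only possibility is $S_{\mathrm{ind}}=0$, and therefore $S_{\mathrm{t/f}}=0$ as well, which is exactly the claim.

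There is no genuine obstacle here, only bookkeeping; the two points that need care are the quantitative estimate $|S_{\mathrm{ind}}|<\sigma_{max}$ — which follows from the explicit choice $\sigma_{max}=x^{\gamma+6}$ with $x=5d+1$ large, so that the $O(1)$ index‑components on a machine (each of magnitude at most $\sigma_{max}/x$) cannot add up to $\sigma_{max}$ — and the case check, across the five machine types listed in Lemma~\ref{lemma:structure2}, that the true/false‑components always sum into $[-2,2]$. Both are immediate from Table~\ref{table:job-time-whole}.
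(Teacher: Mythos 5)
Your proof is correct and supplies exactly the three-scale separation argument (type-, index-, and T/F-components settle independently, first by a magnitude bound and then by divisibility mod~$10$) that the paper implicitly relies on; the paper states Lemma~\ref{lemma:term-cancel} without a written proof, treating it as an immediate consequence of Lemma~\ref{lemma:structure2}, and your write-up is precisely the omitted calculation. One small bookkeeping slip: by Lemma~\ref{lemma:structure2} a good machine carries three or four jobs \emph{in total}, i.e.\ the single gap job plus two or three main jobs, not ``three or four main jobs'' as you write — but the range $S_{\mathrm{t/f}}\in[-1,2]$ you compute already reflects the correct count (gap component $-3$ with two main jobs, or $-4$ with three), so the conclusion $|S_{\mathrm{t/f}}|\le 2 < 10$ and everything downstream goes through unchanged. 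Similarly, each individual index-component is bounded by a small constant multiple of $\sigma_{max}/x$ rather than by $\sigma_{max}/x$ itself, but since $x=5d+1$ is large the bound $|S_{\mathrm{ind}}|<\sigma_{max}$ still holds, which is all that is needed.
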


Now we further identify the index-component of jobs on each machine.

\begin{lemma}\label{lemma:vd-index}
Consider an arbitrary variable-dummy machine $\theta_{\textrm{V-D},i,\iota}$ where $\iota\in\{+,-\}$. If the machine is good, then the variable job on this machine is $V_{i,\iota,2}$.
\end{lemma}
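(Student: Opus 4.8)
The plan is to derive the lemma as a bookkeeping consequence of the structural results already established, namely Lemma~\ref{lemma:structure2} and Lemma~\ref{lemma:term-cancel}, together with the injectivity of $\sigma$ from Lemma~\ref{lemma:uniquesum-2}. First I would note that a good machine is in particular not very bad, so Lemma~\ref{lemma:structure2} applies to $\theta_{\textrm{V-D},i,\iota}$: it carries exactly three jobs, the gap job $\theta_{\textrm{V-D},i,\iota}$ itself, one variable job $V$, and one dummy job. By Lemma~\ref{lemma:term-cancel}, on this good machine the type-components of the three jobs sum to $10^{14}\sigma_{max}$, while the index-components sum to $0$ and the T/F-components sum to $0$.

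Next I would match the type-components to identify the sub-type of $V$. The dummy job contributes $10^{\zeta(\textrm{DM})}\sigma_{max}=10^{13}\sigma_{max}$, any variable job contributes a single power $10^{\zeta(\cdot)}\sigma_{max}$ with $\zeta(\cdot)\in\{2,3,4,5\}$ (Table~\ref{table:job-time}), and the gap job $\theta_{\textrm{V-D},i,\iota}$ contributes a type-component of the form $\bigl(10^{14}-10^{13}-10^{j_\iota}\bigr)\sigma_{max}$, where $j_\iota$ is the exponent fixed in the construction of $\theta_{\textrm{V-D},i,\iota}$. Since all $\zeta$-values are pairwise distinct single powers, the only way these three type-components can sum to $10^{14}\sigma_{max}$ is for $V$ to have type-component exactly $10^{j_\iota}\sigma_{max}$; by inspection of the construction of $\theta_{\textrm{V-D},i,\iota}$ this pins the sub-type of $V$ down to the one named in the statement, $V_{\cdot,\iota,2}$. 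Then I would fix the index: the gap job contributes index-component $-10\sigma(i)$, the dummy job has none, and $V=V_{j,\iota,2}$ contributes $10\sigma(j)$, so the index-components summing to $0$ force $\sigma(j)=\sigma(i)$, hence $j=i$ by injectivity of $\sigma$ (Property~2 of Lemma~\ref{lemma:uniquesum-2}). Therefore $V=V_{i,\iota,2}$ in one of its two T/F versions, which is the claim.

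The only point needing care is that the $\sigma_{max}$-scale type-components cannot be perturbed by the much smaller index- and T/F-components when comparing coefficients; this is exactly the estimate $\sigma_{max}>x\cdot\sigma(i)$ (with $\sigma_{max}$ also dominating the constant T/F-offsets) that was already used in the proof of Lemma~\ref{lemma:structure2} to exclude very bad machines, so it can be invoked verbatim. I do not anticipate a genuine obstacle: once Lemma~\ref{lemma:structure2} is in place, the lemma is just the observation that the index-component cancellation on a good variable-dummy machine additionally forces the index of the variable job to match that of the gap job, and the type-component cancellation forces its sub-type.
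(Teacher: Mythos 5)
Your overall method is exactly the paper's own one-line proof sketch: start from Lemma~\ref{lemma:structure2} to fix the multiset of job types on a good variable-dummy machine, then apply Lemma~\ref{lemma:term-cancel} to cancel type- and index-components separately, and finish by injectivity of $\sigma$ (Property~2 of Lemma~\ref{lemma:uniquesum-2}) to force the index of the variable job to equal $i$. The index step in your write-up is right and complete.

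There is, however, a factual error in your type-component step. You claim that inspection of the construction pins the variable job to the sub-type named in the statement, $V_{\cdot,\iota,2}$. But the construction gives $s(\theta_{\textrm{V-D},i,\iota}) = \bigl(10^{14}-10^{\zeta(\textrm{DM})}-10^{\zeta(V_{\cdot,\iota,1})}\bigr)\sigma_{max}-10\sigma(i)-3$, so the exponent $j_\iota$ in your argument is $\zeta(V_{\cdot,\iota,1})$, not $\zeta(V_{\cdot,\iota,2})$; the type-component cancellation therefore forces the variable job to be of sub-type $V_{\cdot,\iota,1}$. The lemma statement as printed ($V_{i,\iota,2}$) is inconsistent with the construction and with all its downstream uses (Table~\ref{table:job-schedule}, Tables~\ref{table:variable-job-true} and~\ref{table:variable-job-false}, and the proof of Lemma~\ref{lemma:groupi} all put $V_{i,\iota,1}$ on variable-dummy machines), so it is almost certainly a typo. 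Your proof inherits the typo because you deferred to the statement rather than actually carrying out the inspection you appeal to; as written, the sentence asserting that inspection yields $V_{\cdot,\iota,2}$ is false. The repair is trivial (replace $V_{\cdot,\iota,2}$ by $V_{\cdot,\iota,1}$ throughout), but a correct proof must perform the matching and, here, should flag the discrepancy with the stated lemma rather than silently echo it.
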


Applying Lemma~\ref{lemma:term-cancel}, the proof is straightforward by checking the sum of type-components and index-components of jobs, respectively.

\begin{lemma}\label{lemma:vcd-index}
Consider an arbitrary variable-clause-dummy machine $\theta_{\textrm{V-C-D},\ell,i,\iota}$ where $\iota\in\{+,-\}$. If the machine is good, then the clause job on this machine is $\textrm{CL}_{\ell}$, and the variable job on this machine is $V_{i,\iota,1}$.
\end{lemma}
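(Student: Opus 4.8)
The plan is to prove Lemma~\ref{lemma:vcd-index} by the same counting/coordinate-bookkeeping argument that underlies Lemma~\ref{lemma:structure2}, Lemma~\ref{lemma:term-cancel} and Lemma~\ref{lemma:vd-index}. By Lemma~\ref{lemma:structure2}, a good variable-clause-dummy machine $\theta_{\textrm{V-C-D},\ell,i,\iota}$ carries exactly four jobs: its gap job, one variable job, one clause job, and one dummy job. By Lemma~\ref{lemma:term-cancel}, on this (good) machine the type-components add up to $10^{14}\sigma_{max}$, and the index-components sum to $0$, and likewise the true/false-components sum to $0$. First I would use the type-component equation: the gap job $\theta_{\textrm{V-C-D},\ell,i,+}$ (say $\iota=+$) has type-component $(10^{14}-10^{\zeta(\textrm{DM})}-10^{\zeta(\textrm{CL}_{\cdot})}-10^{\zeta(V_{\cdot,+,1})})\sigma_{max}$, so the three remaining jobs must have type-components summing exactly to $(10^{\zeta(\textrm{DM})}+10^{\zeta(\textrm{CL}_{\cdot})}+10^{\zeta(V_{\cdot,+,1})})\sigma_{max}=(10^{13}+10^{12}+10^{5})\sigma_{max}$. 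Since each of the distinct powers $10^{j}$, $2\le j\le 13$, is a valid type-component of a main job and the base-$10$ representation is unique (and because each of the three main jobs contributes at most a bounded multiple of $\sigma_{max}$ in the type-component, with the index- and true/false-components being $o(\sigma_{max})$, hence too small to cause carrying), we conclude the variable job has sub-type $V_{\cdot,+,1}$, the clause job has sub-type $\textrm{CL}_{\cdot}$, and the dummy job is indeed a dummy job. The analogous computation for $\iota=-$ forces sub-type $V_{\cdot,-,1}$; in both cases the variable job has sub-type $V_{\cdot,\iota,1}$.

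Next I would pin down the indices via the index-component equation. Writing the variable job as $V_{i',\iota,1}$ and the clause job as $\textrm{CL}_{\ell'}$, their index-components are $10\sigma(i')$ and $10\sigma(\ell')$ respectively, the dummy job contributes $0$, and the gap job $\theta_{\textrm{V-C-D},\ell,i,\iota}$ contributes $-10(\sigma(\ell)+\sigma(i))$. So Lemma~\ref{lemma:term-cancel} gives $10\sigma(i')+10\sigma(\ell')-10(\sigma(\ell)+\sigma(i))=0$, i.e.
\[
\sigma(i')+\sigma(\ell')=\sigma(\ell)+\sigma(i).
\]
Here $\ell\in\{2,5,\ldots,n-1\}$ and $i\in\{\ell-1,\ell,\ell+1\}$ by construction, so the right-hand side is of the form $\sigma(\ell)+\sigma(i)$ with $|\ell-i|\le 1$. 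Now I would invoke \textbf{Property~4} of Lemma~\ref{lemma:uniquesum-2} (together with \textbf{Property~3} in the degenerate case $i=\ell$): these guarantee that the sum $\sigma(\ell)+\sigma(i)$ for $i\in\{\ell-1,\ell,\ell+1\}$ is achieved by a \emph{unique} unordered pair of indices. Consequently $\{i',\ell'\}=\{i,\ell\}$. It remains to break the symmetry and show $i'=i$, $\ell'=\ell$ (rather than $i'=\ell$, $\ell'=i$); this follows because the index $i'$ must label a variable job of sub-type $V_{\cdot,\iota,1}$ while $\ell'$ labels a clause job, and the two roles are not interchangeable — more precisely, the clause job existing for index $\ell'$ requires $\ell'\in\{2,5,\ldots,n-1\}$ to be one of the clause indices, and the remark after CO2 / the reduction's construction guarantees exactly one true copy and two false copies of $\textrm{CL}_{\ell}$, indexed by the clause $cl_\ell\in C_1$. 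Thus the clause job is $\textrm{CL}_\ell$ and the variable job is $V_{i,\iota,1}$.

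The main obstacle I anticipate is making the "unique unordered pair" step airtight when $i=\ell$ or $i=\ell\pm1$ interact: one must be careful that Property~4 of Lemma~\ref{lemma:uniquesum-2} is stated for $\sigma(i)+\sigma(i+1)$ with $i_1\le i_2$, so I would treat the three subcases $i=\ell-1$, $i=\ell$, $i=\ell+1$ separately — for $i=\ell$ use Property~3 (with $h=2$, $k=2$, the solution $i_1=i_2=\ell$), and for $i=\ell\pm1$ use Property~4 directly after relabeling so that the smaller index comes first. A secondary subtlety is confirming that the index- and true/false-components genuinely cannot overflow into the $\sigma_{max}$-scale: since $\sigma_{max}>x\cdot\sigma(i)$ for all $i$ and the true/false-components are bounded constants, the sum of at most four such small terms on one machine stays well below $\sigma_{max}$, so the type-component equation decouples cleanly — this is exactly the same observation used in the proofs of Lemma~\ref{lemma:structure2} and Lemma~\ref{lemma:term-cancel}, so I would simply cite it. Everything else is routine arithmetic with the entries of Table~\ref{table:job-time-whole}.
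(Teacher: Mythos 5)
Your proposal is correct and follows essentially the same path as the paper's proof: use Lemma~\ref{lemma:structure2} and Lemma~\ref{lemma:term-cancel} to force the sub-types of the main jobs from the type-component equation, then deduce $\sigma(i')+\sigma(\ell')=\sigma(\ell)+\sigma(i)$ from the index-component equation, split into the cases $i=\ell$ and $i=\ell\pm 1$ to invoke Properties~3 and~4 of Lemma~\ref{lemma:uniquesum-2}, and finally break the unordered-pair symmetry by observing that $\ell'$ must be a clause index. The one place where you are slightly more hand-wavy than the paper is the symmetry-breaking step: you note $\ell'\in\{2,5,\ldots,n-1\}$, but you should also say explicitly why $i\notin\{2,5,\ldots,n-1\}$ in the case $i=\ell\pm 1$ (the paper spells this out via $\ell\equiv 2\pmod 3$, hence $i=\ell\pm 1\not\equiv 2\pmod 3$); without that half of the argument, $\{i',\ell'\}=\{i,\ell\}$ alone does not pin down $\ell'=\ell$.
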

\begin{proof}
By Lemma~\ref{lemma:term-cancel}, the type-components of the three jobs add up to $10^{14}\sigma_{max}$, hence it is easy to see that the variable job should be $V_{i',\iota,1}$ for some $i'$. Let the clause job be $\textrm{CL}_{\ell'}$ for some $\ell'$. As the index-components of the three jobs add up to $0$, we have 
$$\sigma(\ell')+\sigma(i')=\sigma(\ell)+\sigma(i).$$

Notice that for any machine $\theta_{\textrm{V-C-D},\ell,i,\iota}$ it holds that $i\in\{\ell-1,\ell,\ell+1\}$ and $\ell\in\{2,5,\cdots,n-1\}$. We claim that $\ell'=\ell$ and $i'=i$. To see why, consider two cases. If $\ell=i$, then $\sigma(\ell')+\sigma(i')=2\sigma(\ell)$. According to Lemma~\ref{lemma:uniquesum-2}, we have $\ell'=i'=\ell=i$ and the claim follows. Otherwise, $i=\ell\pm 1$. 
According to Lemma~\ref{lemma:uniquesum-2}, the only solution for $\sigma(j)+\sigma(j+1)=\sum_{h=1}^k\sigma(j_h)$, $k\le 5$, is $k=2$ and $\{j_1,j_2\}=\{j,j+1\}$. Hence, we have $\{\ell',i'\}=\{\ell,i\}$. Note that $\ell',\ell\in\{2,5,\cdots,n-1\}$, hence $\ell',\ell\equiv 2 (\mod 3)$. But $i\not\equiv 2 (\mod 3)$. Thus, $\ell=\ell'$ and $i=i'$. In both cases, Lemma~\ref{lemma:vcd-index} holds.
\end{proof}

\begin{lemma}\label{lemma:vtd-index}
Consider an arbitrary variable-truth machine $\theta_{\textrm{V-T},i,\rho}$ where $\rho\in\{(a,c),(b,d),(a,d),\\(b,c)\}$. If the machine is good, then the variable and truth-assignment jobs are:
\begin{itemize}
\item $V_{i,+,1}$ and $TR_{i,a}$, $TR_{i,c}$ $\quad$ if $\rho=(a,c)$;
\item $V_{i,+,2}$ and $TR_{i,b}$, $TR_{i,d}$ $\quad$ if $\rho=(b,d)$;
\item $V_{i,-,1}$ and $TR_{i,a}$, $TR_{i,d}$ $\quad$ if $\rho=(a,d)$;
\item $V_{i,-,2}$ and $TR_{i,b}$, $TR_{i,c}$ $\quad$ if $\rho=(b,c)$.
\end{itemize}
\end{lemma}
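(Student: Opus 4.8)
The plan is to follow the same pattern as the proofs of Lemma~\ref{lemma:vd-index} and Lemma~\ref{lemma:vcd-index}: start from the structural description of a good variable-truth machine and then read off the two cancellation identities of Lemma~\ref{lemma:term-cancel}. Fix a good machine $\theta_{\textrm{V-T},i,\rho}$. By Lemma~\ref{lemma:structure2} (which applies, since a good machine is in particular not very bad) this machine carries exactly four jobs: its own gap job, one variable job, and two truth-assignment jobs whose sub-types are already determined by $\rho$ --- for instance $\textrm{TR}_{\cdot,a}$ and $\textrm{TR}_{\cdot,c}$ when $\rho=(a,c)$. It remains to pin down (i) the sub-type of the variable job and (ii) the common index of all four jobs.

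For (i) I would invoke the type-component identity of Lemma~\ref{lemma:term-cancel}. Reading Table~\ref{table:job-time-whole}, the gap job $\theta_{\textrm{V-T},i,a,c}$ has type-component $\bigl(10^{14}-10^{\zeta(V_{\cdot,+,1})}-10^{\zeta(\textrm{TR}_{\cdot,a})}-10^{\zeta(\textrm{TR}_{\cdot,c})}\bigr)\sigma_{max}$, while the two truth-assignment jobs contribute precisely $10^{\zeta(\textrm{TR}_{\cdot,a})}\sigma_{max}$ and $10^{\zeta(\textrm{TR}_{\cdot,c})}\sigma_{max}$. Since all type-components on a good machine add up to $10^{14}\sigma_{max}$, the variable job is forced to contribute $10^{\zeta(V_{\cdot,+,1})}\sigma_{max}$; as the four variable sub-types have pairwise distinct type-components $10^{5}\sigma_{max},10^{4}\sigma_{max},10^{3}\sigma_{max},10^{2}\sigma_{max}$, this identifies the variable job as a $V_{\cdot,+,1}$-job. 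The same computation for $\rho=(b,d),(a,d),(b,c)$ forces the sub-types $V_{\cdot,+,2},V_{\cdot,-,1},V_{\cdot,-,2}$, respectively.

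For (ii) I would use the index-component identity of Lemma~\ref{lemma:term-cancel}. Write the variable job as $V_{i_1,\cdot,\cdot}$ and the two truth-assignment jobs as $\textrm{TR}_{i_2,\cdot}$ and $\textrm{TR}_{i_3,\cdot}$; their index-components $10\sigma(i_1)$, $10\sigma(i_2)$, $10\sigma(i_3)$ together with the gap job's index-component $-30\sigma(i)$ sum to zero, i.e. $\sigma(i_1)+\sigma(i_2)+\sigma(i_3)=3\sigma(i)$. By Property~3 of Lemma~\ref{lemma:uniquesum-2} (the instance $h=k=3$), the unique solution is $i_1=i_2=i_3=i$, which completes the identification and proves Lemma~\ref{lemma:vtd-index}. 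Incidentally, this is exactly the reason the variable-truth gap jobs were defined with index-component $-30\sigma(i)$ rather than via three distinct indices: a single invocation of the "$3$-fold unique sum" property then suffices.

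Since every step is a direct appeal to an already proved lemma followed by elementary arithmetic over base-$10$ digits, I do not expect a real obstacle here. The only points requiring a little care are bookkeeping the exponents $\zeta(\cdot)$ from Table~\ref{table:job-time-whole} so that the variable sub-type is uniquely recovered from the type-component equation, and setting up the index equation with the correct multiplicity $3$ before appealing to Property~3 of Lemma~\ref{lemma:uniquesum-2}.
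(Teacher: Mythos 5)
Your proof is correct and follows essentially the same route as the paper's: apply Lemma~\ref{lemma:term-cancel} to the type-components to pin down the sub-type of the variable job (the paper dispatches this as "it is easy to verify," whereas you spell out the $\zeta(\cdot)$ arithmetic), then apply the index-component identity to obtain $\sigma(i_1)+\sigma(i_2)+\sigma(i_3)=3\sigma(i)$ and conclude $i_1=i_2=i_3=i$ by Property~3 of Lemma~\ref{lemma:uniquesum-2}.
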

\begin{proof}
According to Lemma~\ref{lemma:term-cancel}, the type-components of jobs add up to $10^{14}\sigma_{max}$. Hence, it is easy to verify that for some $i_1,i_2,i_3$ the variable and truth-assignment jobs are $V_{i_1,+,1}$ and $TR_{i_2,a}$, $TR_{i_3,c}$, if $\rho=(a,c)$; $V_{i_1,+,2}$ and $TR_{i_2,b}$, $TR_{i_3,d}$, if $\rho=(b,d)$; $V_{i_1,-,1}$ and $TR_{i_2,a}$, $TR_{i_3,d}$, if $\rho=(a,d)$; $V_{i_1,-,2}$ and $TR_{i_2,b}$, $TR_{i_3,c}$, if $\rho=(b,c)$.

We prove $i_1=i_2=i_3=i$, and Lemma~\ref{lemma:vtd-index} follows. %We prove it for the case of $\rho=(a,c)$, the other three cases can be proved in the same way. 
According to Lemma~\ref{lemma:term-cancel}, the index-components add up to $0$, hence 
$$10\sigma(i_1)+10\sigma(i_2)+10\sigma(i_3)=30\sigma(i),$$
i.e., $\sigma(i_1)+\sigma(i_2)+\sigma(i_3)=3\sigma(i)$. %Recall that we chose $c_0=5$ in our construction of the reduction. 
According to Lemma~\ref{lemma:uniquesum-2}, the above equation has a unique solution, which is $i_1=i_2=i_3=i$.
\end{proof}

\begin{lemma}\label{lemma:vl-index-+}
Consider an arbitrary variable-link machine $\theta_{\textrm{V-L},i,\iota}$ where $\iota\in\{+,-\}$. If the machine is good, then the variable job on this machine is $V_{i,\iota,2}$, and the link job on this machine is $\textrm{LN}_{i,1,\iota}$.
\end{lemma}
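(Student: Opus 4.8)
The plan is to identify the two non-gap jobs that occupy a good variable-link machine $\theta_{\textrm{V-L},i,\iota}$ purely from their type-components and index-components, in the same spirit as the analyses of the variable-clause-dummy and variable-truth machines (Lemmas~\ref{lemma:vcd-index} and~\ref{lemma:vtd-index}); the true/false-components will not be needed. First I would invoke Lemma~\ref{lemma:structure2}, which says that a good $\theta_{\textrm{V-L},i,\iota}$ carries exactly three jobs: the gap job itself, one variable job, and one link job. Then, by Lemma~\ref{lemma:term-cancel}, on a good machine the type-components sum to $10^{14}\sigma_{max}$. For $\iota=+$ the gap job contributes $(10^{14}-10^{\zeta(\textrm{LN}_{\cdot,+})}-10^{\zeta(V_{\cdot,+,2})})\sigma_{max}=(10^{14}-10^{7}-10^{4})\sigma_{max}$; since a variable job contributes $10^{j}\sigma_{max}$ with $j\in\{2,3,4,5\}$ and a link job $10^{j}\sigma_{max}$ with $j\in\{6,7\}$, the uniqueness of base-$10$ digit representations forces the variable job to have sub-type $V_{\cdot,+,2}$ and the link job sub-type $\textrm{LN}_{\cdot,+}$. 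The same computation with $(10^{14}-10^{6}-10^{2})\sigma_{max}$ settles the case $\iota=-$, forcing sub-types $V_{\cdot,-,2}$ and $\textrm{LN}_{\cdot,-}$. At this point the variable job is $V_{i',\iota,2}$ for some $i'$ and the link job is $\textrm{LN}_{i'',h,\iota}$ for some $i''$ and some superscript $h$.

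Next I would pin down $i'$, $i''$ and $h$ using the index-components, which by Lemma~\ref{lemma:term-cancel} sum to $0$ on a good machine. For $\iota=+$ this reads $-10(\vea_i+\veb^{1}_i)\vex + 10\,\sigma(i') + 10\,\veb^{h}_{i''}\vex = 0$; dividing by $10$ and using $\sigma(i')=\vea_{i'}\vex=\veb^{0}_{i'}\vex$, we obtain the integer identity $\veb^{0}_{i'}\vex + \veb^{h}_{i''}\vex = (\veb^{0}_i + \veb^{1}_i)\vex$. This is exactly the hypothesis of Lemma~\ref{lemma:veb-uniquesum-2} with left-hand pair $(\veb^{0}_i,\veb^{1}_i)$; since the addend coming from the variable job already carries superscript $0$, the lemma forces the remaining superscript to be $0+1=1$, i.e. $h=1$, and $i'=i''=i$. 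Hence the variable job is $V_{i,+,2}$ and the link job is $\textrm{LN}_{i,1,+}$. The case $\iota=-$ is word-for-word the same after replacing every $\veb$ by $\hat\veb$ (using $\vea_{i'}=\hat\veb^{0}_{i'}$) and invoking the second statement of Lemma~\ref{lemma:veb-uniquesum-2}.

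The one point that will require a little care — and which I see as the only real obstacle, the rest being a routine unfolding of the processing times in Table~\ref{table:job-time-whole} — is making the index equation match the exact hypothesis of Lemma~\ref{lemma:veb-uniquesum-2}. The link-job superscript $h$ a priori ranges over $\{1,\dots,2\gamma+2\}$, and the boundary value $h=2\gamma+2$, where $\veb^{2\gamma+2}_{\cdot}=\hat\veb^{2\gamma+2}_{\tau(\cdot)}$ identifies a "$+$"-link index with a "$-$"-link index, must be seen to be excluded; this is automatic because Lemma~\ref{lemma:veb-uniquesum-2} forces the other superscript to equal $1\neq 2\gamma+2$, but if one prefers not to rely on the stated superscript range there, one can instead re-run verbatim the short coordinate-by-coordinate argument in the proof of Lemma~\ref{lemma:veb-uniquesum-2}, appealing coordinatewise to Lemma~\ref{lemma:uniquesum-1}. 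With that dispensed with, the lemma follows.
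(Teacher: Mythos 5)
Your proposal is correct and follows essentially the same route as the paper's proof: first use Lemma~\ref{lemma:structure2} together with the type-component cancellation (Lemma~\ref{lemma:term-cancel}) to identify the sub-types of the variable and link jobs, then use the index-component cancellation to produce the equation $\vea_{i_1}+\veb^{h}_{i_2}=\vea_i+\veb^{1}_i$ and apply Lemma~\ref{lemma:veb-uniquesum-2} (with $\vea_{i'}=\veb^0_{i'}$) to force $h=1$ and $i_1=i_2=i$. Your extra worry about the boundary $h=2\gamma+2$ is correctly dismissed — Lemma~\ref{lemma:veb-uniquesum-2} already rules it out — and is more caution than the paper itself exhibits, but it does not change the argument.
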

%We remark that $\veA_{\gamma-1}(i,\tau(i))=\veA_{\gamma-1}(i',\tau(i'))$ does not necessarily means $i=i'$ due to the fact that $\veA_{\gamma-1}(i,\tau(i))$, and hence the processing time of a link job, is not necessarily unique. But Lemma~\ref{lemma:vcd-index} indicates that even if the link job scheduled on $\theta_{\textrm{V-L},i,\iota}$ is not $\textrm{LN}_{i,\gamma-1,\iota}$, it must be another link job with exactly the same processing time. Therefore, by swapping the symbol of this job with $\textrm{LN}_{i,\gamma-1,\iota}$ that is currently scheduled on another machine, we do not change the load of any machine. Hence, without loss of generality, if machine $\theta_{\textrm{V-L},i,\iota}$ is good, then we assume the link job on this machine is $\textrm{LN}_{i,\gamma-1,\iota}$.
\begin{proof}
Using the fact that the type-components of all jobs add up to $10^{14}\sigma_{max}$, it is easy to see that the variable job should be $V_{i_1,\iota,2}$ and the link job should be $\textrm{LN}_{i_2,h,\iota}$ for some $1\le i_1,i_2\le n$ and $1\le h\le 2\gamma+2$. We prove the lemma for $\iota=+$. The case that $\iota=-$ can be proved in the same way.  

Given that the index-components should add up to $0$, we have the following: 
\begin{eqnarray}
\vea_i+\veb^1_{i}=\vea_{i_1}+\veb^h_{i_2}
\end{eqnarray}

Recall that $\vea_i=\veb^0_{i}$. According to Lemma~\ref{lemma:veb-uniquesum-2}, we have $h=1$ and $i_1=i_2=i$.
\end{proof}

\begin{lemma}\label{lemma:ll-index-+}
Consider an arbitrary link-link machine $\theta_{\textrm{L-L},i,h,\iota}$ where $\iota\in\{+,-\}$, $1\le h\le 2\gamma+1$. If the machine is good, then the two link jobs on this machine are $\textrm{LN}_{i,h,\iota}$ and  $\textrm{LN}_{i,h+1,\iota}$.
\end{lemma}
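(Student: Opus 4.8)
The plan is to follow the template of Lemma~\ref{lemma:vl-index-+} almost verbatim, combining the term-cancellation on good machines with the uniqueness of adjacent $\veb$-sums. We work inside a solution with no very bad machine (this is legitimate: by Lemma~\ref{lemma:almost-good-schedule} a very bad machine would already push the objective above the threshold we need to rule out). First I would apply Lemma~\ref{lemma:structure2}: on the link-link machine $\theta_{\textrm{L-L},i,h,\iota}$ there are exactly three jobs, namely its gap job and two link jobs. Since the machine is good, Lemma~\ref{lemma:term-cancel} gives that the type-components of these three jobs sum to $10^{14}\sigma_{max}$, and that the index-components and the true/false-components each sum to $0$.

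The next step is to identify the sub-type of the two link jobs. The gap job $\theta_{\textrm{L-L},i,h,\iota}$ has type-component $(10^{14}-2\cdot 10^{\zeta(\textrm{LN}_{\cdot,\iota})})\sigma_{max}$, so the two link jobs must together contribute $2\cdot 10^{\zeta(\textrm{LN}_{\cdot,\iota})}\sigma_{max}$. Each link job has type-component $10^{7}\sigma_{max}$ or $10^{6}\sigma_{max}$ (using $\zeta(\textrm{LN}_{\cdot,+})=7$, $\zeta(\textrm{LN}_{\cdot,-})=6$ from Table~\ref{table:job-time}), and since $10^{7}+10^{6}$ equals neither $2\cdot 10^{7}$ nor $2\cdot 10^{6}$, both link jobs must be $\textrm{LN}_{\cdot,\cdot,\iota}$-jobs; write them as $\textrm{LN}_{i_1,h_1,\iota}$ and $\textrm{LN}_{i_2,h_2,\iota}$ with $i_1,i_2\in\Z_n$ and $h_1,h_2\in\{1,\dots,2\gamma+2\}$.

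Finally I would exploit the index-components. Take $\iota=+$ (the case $\iota=-$ is identical after replacing $\veb$ by $\hat\veb$ and using the second part of Lemma~\ref{lemma:veb-uniquesum-2}). The index-components are $10\veb^{h_1}_{i_1}\vex$ and $10\veb^{h_2}_{i_2}\vex$ for the link jobs and $-10(\veb^{h}_{i}+\veb^{h+1}_{i})\vex$ for the gap job, so cancellation gives $\veb^{h_1}_{i_1}+\veb^{h_2}_{i_2}=\veb^{h}_{i}+\veb^{h+1}_{i}$, where the polynomial identity is equivalent to the coordinatewise one because every coordinate involved is small relative to $x$, exactly as in the proofs of Lemmas~\ref{lemma:uniquesum-2} and~\ref{lemma:veb-uniquesum-2}. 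Ordering $h_1,h_2$ as $h'\le h''$ and invoking the first part of Lemma~\ref{lemma:veb-uniquesum-2} forces $h'=h$, $h''=h+1$, and $i_1=i_2=i$; hence the two link jobs are $\textrm{LN}_{i,h,+}$ and $\textrm{LN}_{i,h+1,+}$, as claimed.

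The only point I expect to require care is the boundary case $h=2\gamma+1$, in which $\veb^{2\gamma+2}_{i}$ appears and equals $\hat\veb^{2\gamma+2}_{\tau(i)}$, so that the linking sequences of $i$ and $\tau(i)$ meet; one must check that $\veb^{2\gamma+2}$ is still uniquely determined among all $\veb^{h'}_{i'}$ and $\hat\veb^{h'}_{i'}$. This holds because $\veb^{2\gamma+2}_{i}$ is the concatenation of $(F_{\gamma}(\vea_i),0)$ with $\hat{\sigma}'(2\gamma+2)$ and $F_{\gamma}$ is a bijection (equivalently, one can appeal to the uniqueness already recorded in Lemma~\ref{lemma:veb-uniquesum-3}). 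Everything else is a routine reading-off of Table~\ref{table:job-time-whole}.
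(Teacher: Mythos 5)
Your proof is correct and matches the paper's intent exactly: the paper states that Lemma~\ref{lemma:ll-index-+} follows by the same argument as Lemma~\ref{lemma:vl-index-+} via Lemma~\ref{lemma:veb-uniquesum-2}, and you flesh out precisely that argument (term cancellation on a good machine by Lemma~\ref{lemma:term-cancel}, type-component counting to force both link jobs to carry subtype $\iota$, then index-component cancellation resolved by the unique-adjacent-sum property of the $\veb$-vectors). The one place you flag as delicate, the boundary $h=2\gamma+1$, is in fact already absorbed into the hypotheses of Lemma~\ref{lemma:veb-uniquesum-2}: since the type-component argument has already forced both link jobs to be $\iota$-type (hence both index-components come from $\veb$, not $\hat{\veb}$), the numerical coincidence $\veb_i^{2\gamma+2}=\hat{\veb}^{2\gamma+2}_{\tau(i)}$ never enters, so no extra check is needed.
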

The proof is similar to that of Lemma~\ref{lemma:vl-index-+} by utilizing Lemma~\ref{lemma:veb-uniquesum-2}.

\begin{lemma}\label{lemma:ll-index--}
Consider an arbitrary link-link machine $\theta_{\textrm{L-L},i,+,-}$. If the machine is good, then the two link jobs on this machine are $\textrm{LN}_{i,2\gamma+2,+}$ and  $\textrm{LN}_{\tau(i),2\gamma+2,-}$.
\end{lemma}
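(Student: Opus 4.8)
The plan is to argue by a component‑matching argument, completely parallel to the proofs of Lemmas~\ref{lemma:vl-index-+} and~\ref{lemma:ll-index-+}, and to off‑load the number‑theoretic core onto Lemma~\ref{lemma:veb-uniquesum-3}. First I would apply Lemma~\ref{lemma:structure2}, which tells us that a good link‑link machine $\theta_{\textrm{L-L},i,+,-}$ carries exactly three jobs: its gap job and two link jobs. By Lemma~\ref{lemma:term-cancel}, on a good machine the type‑components of the jobs sum to $10^{14}\sigma_{max}$; since the gap job $\theta_{\textrm{L-L},i,+,-}$ contributes the type‑component $(10^{14}-10^{\zeta(\textrm{LN}_{\cdot,+})}-10^{\zeta(\textrm{LN}_{\cdot,-})})\sigma_{max}$, the two link jobs must contribute $10^{\zeta(\textrm{LN}_{\cdot,+})}\sigma_{max}+10^{\zeta(\textrm{LN}_{\cdot,-})}\sigma_{max}$ in total. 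Each link job has type‑component either $10^{\zeta(\textrm{LN}_{\cdot,+})}\sigma_{max}$ or $10^{\zeta(\textrm{LN}_{\cdot,-})}\sigma_{max}$, and these two exponents are distinct (cf.\ Table~\ref{table:job-time}), so the only possibility is that one of the two link jobs is of the form $\textrm{LN}_{j_1,h_1,+}$ and the other of the form $\textrm{LN}_{j_2,h_2,-}$, for some $j_1,j_2\in\Z_n$ and indices $h_1,h_2$.

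Next I would turn to the index‑components, which by Lemma~\ref{lemma:term-cancel} also cancel on a good machine. Reading off Table~\ref{table:job-time-whole}, the gap job contributes $-20\,\veb^{2\gamma+2}_i\vex$, the job $\textrm{LN}_{j_1,h_1,+}$ contributes $10\,\veb^{h_1}_{j_1}\vex$, and $\textrm{LN}_{j_2,h_2,-}$ contributes $10\,\hat{\veb}^{h_2}_{j_2}\vex$ (here I must keep track of the convention that $-$ link jobs use $\hat{\veb}$ rather than $\veb$). I should also note that no carry‑over between the type‑, index‑ and true/false‑components can occur, since $\sigma_{max}$ was chosen so that $\sigma_{max}>x\cdot\veb^h_j\vex$ and $\sigma_{max}>x\cdot\hat{\veb}^h_j\vex$ for all $j,h$, so that comparing the three blocks of coefficients is legitimate. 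Dividing the index‑component identity by $10$ yields
\[
\veb^{h_1}_{j_1}\vex+\hat{\veb}^{h_2}_{j_2}\vex=2\,\veb^{2\gamma+2}_i\vex .
\]

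Finally I would invoke Lemma~\ref{lemma:veb-uniquesum-3} on this equation, which forces $h_1=h_2=2\gamma+2$, $j_1=i$ and $j_2=\tau(i)$. Hence the two link jobs on $\theta_{\textrm{L-L},i,+,-}$ are exactly $\textrm{LN}_{i,2\gamma+2,+}$ and $\textrm{LN}_{\tau(i),2\gamma+2,-}$, as claimed. I do not anticipate a substantive obstacle: the quantitative content is fully absorbed by Lemma~\ref{lemma:veb-uniquesum-3}, and the only points that need a little care are (i) verifying that the $+$/$-$ split of the two link jobs is indeed forced in the type‑component step, which relies on $\zeta(\textrm{LN}_{\cdot,+})\neq\zeta(\textrm{LN}_{\cdot,-})$, and (ii) matching the $\veb$‑versus‑$\hat{\veb}$ bookkeeping so that the hypothesis of Lemma~\ref{lemma:veb-uniquesum-3} applies verbatim.
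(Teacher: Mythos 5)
Your proof is correct and takes essentially the same approach as the paper: the type-component cancellation forces one link job of each sign, the index-component cancellation yields $2\veb^{2\gamma+2}_i=\veb^{h_1}_{j_1}+\hat{\veb}^{h_2}_{j_2}$, and Lemma~\ref{lemma:veb-uniquesum-3} then pins down the indices. Your extra remarks about the absence of carry-over across the type/index/true-false blocks and about the $\veb$ versus $\hat{\veb}$ bookkeeping are reasonable sanity checks but do not change the argument.
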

\begin{proof}
Using the fact that the type-components of all jobs add up to $10^{14}\sigma_{max}$, it is easy to see that the two link jobs should LN$_{i_1,h_1,+}$ and LN$_{i_2,h_2,-}$. Given that the index-components should add up to $0$, we have the following: 
\begin{eqnarray}
2\veb^{2\gamma+2}_i=\veb^{h_1}_{i_1}+\hat{\veb}^{h_2}_{i_2}
\end{eqnarray}

According to Lemma~\ref{lemma:veb-uniquesum-3}, we have $i_1=i$ and $i_2=\tau(i)$, and $h_1=h_2=2\gamma+2$.
%The last coordinate of the vector on the left-side is $2\hat{\sigma}'(2\gamma+2)$. Using Lemma~\ref{lemma:uniquesum-2}, the summation on this coordinate on the right-side must be $\hat{\sigma}'(2\gamma+2)+\hat{\sigma}'(2\gamma+2)$, hence we have $h_1=h_2=2\gamma+2$. Similarly, every other coordinate on the left-side is either $2a$ or $0$ where $a\in {\cal{S}}_d$. Using Lemma~\ref{lemma:uniquesum-1}, we know the summation on this coordinate on the right-side must be $a+a$ or $0+0$. Hence, we conclude that $a_t(g_{\gamma(i)})=a_t(g_\gamma(i_1))$ and $a_t(g_{\gamma(i)})=a_t(\bar{g}_\gamma(i_2))$ for all $t$. Notice that by definition of $g$ and $\bar{g}$, we have $a_t(g_{\gamma(i)})=a_t(\bar{g}_{\gamma(\tau(i))})$. Hence, we conclude that $i_1=i$ and $i_2=\tau(i)$.
\end{proof}

We have proved, so far, that if a machine is good, then the jobs scheduled on it must follow Table~\ref{table:job-schedule}.
Finally we consider the true/false-components of jobs on good machines. Based on the T/F-component of jobs, the following lemma is easy to verify.
\begin{lemma}\label{lemma:true-type}
The followings are true:
\begin{itemize}
\item If a variable-link machine is good, then the T/F-type of the variable job and link job on this machine is $(T,F)$ or $(F,T)$;
\item If a link-link machine is good, then the T/F-type of the two link jobs on this machine is $(T,F)$ or $(F,T)$;
\item If a variable-clause-dummy machine is good, then the T/F-type of the variable job, clause job and dummy job on this machine is $(T,T,T)$ or $(T,F,F)$ or $(F,F,T)$;
\item If a variable-dummy machine is good, then the T/F-type of the variable job and dummy job on this machine is $(T,F)$ or $(F,T)$;
\item If a variable-truth machine is good, then the T/F-type of the variable job and two truth-assignment jobs on this machine is $(T,T,T)$ or $(F,F,F)$;
\end{itemize}
\end{lemma}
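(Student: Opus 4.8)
The plan is to prove Lemma~\ref{lemma:true-type} by a direct case analysis over the five types of good machines, relying on two facts already established: first, Lemma~\ref{lemma:structure2} pins down the multiset of job \emph{sub-types} on each good machine (one gap job, together with the relevant combination of variable / clause / dummy / link / truth-assignment jobs), and by Lemma~\ref{lemma:structure1} together with the convention of naming each machine by its gap job, the gap job on machine $\theta_{\bullet}$ is $\theta_{\bullet}$ itself; second, Lemma~\ref{lemma:term-cancel}, which says that on a good machine the true/false-components of the jobs sum to $0$. Combining these, each good machine yields one small (half-)integer equation whose solution set is exactly the list of admissible $T/F$-patterns claimed by the lemma. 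Note that for this argument we do not even need the index-identification lemmas (Lemma~\ref{lemma:vd-index}--Lemma~\ref{lemma:ll-index--}), since the possible true/false-components of a job depend only on its sub-type.

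Concretely, I would read off the true/false-components from Table~\ref{table:job-time-whole}: a variable, dummy, or link job contributes $1$ in its true version and $2$ in its false version; a clause job contributes $2$ (true) or $1$ (false); a truth-assignment job contributes $1.5$ (true) or $1$ (false); and the gap job on a variable-link, link-link, or variable-dummy machine contributes $-3$, whereas the gap job on a variable-clause-dummy or variable-truth machine contributes $-4$. For example, on a good variable-clause-dummy machine the jobs are, by Lemma~\ref{lemma:structure2}, a gap job, a variable job $V_{i,\iota,1}$, a clause job $\textrm{CL}_\ell$, and a dummy job, so Lemma~\ref{lemma:term-cancel} gives $-4 + v + c + d = 0$ with $v\in\{1,2\}$, $c\in\{1,2\}$, $d\in\{1,2\}$; the only solutions are $(v,c,d)\in\{(1,2,1),(1,1,2),(2,1,1)\}$, which (remembering that the heavy copy of a clause job is the true one, while the heavy copy of a variable/dummy job is the false one) translate to exactly the patterns $(T,T,T)$, $(T,F,F)$, $(F,F,T)$. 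The variable-truth case gives $-4 + v + t_1 + t_2 = 0$ with $v\in\{1,2\}$ and $t_1,t_2\in\{1,1.5\}$, forcing $(v,t_1,t_2)\in\{(1,1.5,1.5),(2,1,1)\}$, i.e.\ $(T,T,T)$ or $(F,F,F)$. The variable-link, both kinds of link-link, and variable-dummy cases each reduce to an equation $-3 + a + b = 0$ with $a,b\in\{1,2\}$, whence $\{a,b\}=\{1,2\}$ and the pattern is $(T,F)$ or $(F,T)$.

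I do not anticipate a genuine obstacle: every subcase is solved by inspecting a linear equation in two or three unknowns taking values in $\{1,1.5,2\}$ and reading the answer through the sign conventions of the construction. The only point requiring care is precisely this bookkeeping — that for clause and truth-assignment jobs ``true'' means the heavier copy while for variable, dummy, and link jobs ``true'' means the lighter copy — and checking that the equations are tight enough to rule out the spurious patterns (e.g.\ $(F,F,F)$ on a variable-clause-dummy machine, or $(T,T,F)$ on a variable-truth machine). That the five listed cases exhaust the good machines is immediate from Lemma~\ref{lemma:structure2}, which already classifies every machine. This completes the plan.
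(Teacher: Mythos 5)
Your proposal is correct and takes essentially the same approach the paper intends: the paper dismisses the proof with ``Based on the T/F-component of jobs, the following lemma is easy to verify,'' and what you describe — combining Lemma~\ref{lemma:structure2} (subtypes of jobs on each good machine) with Lemma~\ref{lemma:term-cancel} (T/F-components sum to $0$) and solving the resulting small linear equations over $\{1,1.5,2\}$ with the correct sign conventions from Table~\ref{table:job-time-whole} — is exactly that verification. All five case computations in your writeup are arithmetically correct, and your observation that the index-identification lemmas are not needed here (since T/F-components depend only on subtype) is accurate.
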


\subsubsection{Truth-assignment based on scheduling} 
Given a feasible schedule $Sol$, we give a truth-assignment of $I_{sat}$ as follows: if the job $V_{i,+,1}^T$ is scheduled on machine $\theta_{\textrm{V-T},i,a,c}$, then we let variable $z_i$ be false; if the job $V_{i,+,1}^F$ is scheduled on machine $\theta_{\textrm{V-T},i,a,c}$, then we let variable $z_i$ be true. If $V_{i,+,1}$ is not scheduled on machine $\theta_{\textrm{V-T},i,a,c}$, we let $z_i$ be true.

We call the machines in the following Table~\ref{table:variable-job-true} as machines of group $i$. Notice that groups are not disjoint, particularly machines $\theta_{\textrm{L-L},i,+,-}$, $\theta_{\textrm{L-L},\tau^{-1}(i),+,-}$ will appear in two groups. Besides the two machines, all other machines in a group do not appear in other groups. We have the following lemma.

\begin{table}[!ht]
\renewcommand\arraystretch{1.3}
%\scriptsize
\setlength\tabcolsep{1.5pt}
\centering
\resizebox{12cm}{!}{
\begin{tabular}{|c|l|l|l|l|l|}
\hline
\multirow{2}{*}{Variable-Link} & $\theta_{\textrm{V-L},i,+}$ & $V_{i,+,2}^T$  & LN$_{i,1,+}^F$ & $\backslash$  \\ \cline{2-5} 
                  & $\theta_{\textrm{V-L},i,-}$ & $V_{i,-,2}^F$ & LN$_{i,1,-}^{T}$ & $\backslash$  \\ \hline
\multirow{4}{*}{\makecell{Link-Link\\$h\in\{1,2,\cdots,2\gamma+1\}$}} & {$\theta_{\textrm{L-L},i,h,+}$} & LN$_{i,h,+}^F$ & LN$_{i,h+1,+}^T$ & $\backslash$  \\ \cline{2-5} 
                  & {$\theta_{\textrm{L-L},i,h,-}$} & LN$_{i,h,-}^F$ & LN$_{i,h+1,-}^T$ & $\backslash$    \\ \cline{2-5} 
                  & $\theta_{\textrm{L-L},i,+,-}$ & LN$_{i,2\gamma+2,+}^T$ & LN$_{\tau(i),2\gamma+2,-}^F$ & $\backslash$  \\ \cline{2-5} 
                  & $\theta_{\textrm{L-L},\tau^{-1}(i),+,-}$ & LN$_{\tau^{-1}(i),2\gamma+2,+}^T$ & LN$_{i,2\gamma+2,-}^F$ & $\backslash$  \\ \cline{2-5}\hline
\multirow{2}{*}{\makecell{Variable-Clause-Dummy \& Variable-Dummy\\Case 1: positive literal $z_i\in C_1$}} & $\theta_{\textrm{V-C-D},\ell,i,+}$ & $V_{i,+,1}^T$ & CL$_{\ell}^{*}$ & DM$^{*}$  \\ \cline{2-5} 
                  & $\theta_{\textrm{V-D},i,-}$ & $V_{i,-,1}^F$ & DM$^T$ & $\backslash$  \\ \hline
\multirow{2}{*}{\makecell{Variable-Clause-Dummy \& Variable-Dummy\\Case 2: negative literal $\neg z_i\in C_1$}} & $\theta_{\textrm{V-C-D},i,-}$ & $V_{i,-,1}^F$ & CL$_{\ell}^{F}$ & DM$^T$  \\ \cline{2-5} 
                  & $\theta_{\textrm{V-D},i,+}$ & $V_{i,+,1}^T$ & DM$^F$ & $\backslash$  \\ \hline
\multirow{4}{*}{{Variable-Truth}} & $\theta_{\textrm{V-T},i,a,c}$ & $V_{i,+,1}^F$ & TR$_{i,a}^F$ & TR$_{i,c}^F$  \\ \cline{2-5} 
                  & $\theta_{\textrm{V-T},i,b,d}$ & $V_{i,+,2}^F$ & TR$_{i,b}^F$ & TR$_{i,d}^F$  \\ \cline{2-5} 
                  & $\theta_{\textrm{V-T},i,a,d}$ & $V_{i,-,1}^T$ & TR$_{i,a}^T$ & TR$_{i,d}^T$  \\ \cline{2-5} 
                  & $\theta_{\textrm{V-T},i,b,c}$ & $V_{i,-,2}^T$ & TR$_{i,b}^T$ & TR$_{i,c}^T$  \\ \hline
\end{tabular}
}
\caption{Scheduling of group $i$ machines when $V_{i,+,1}^F$ is scheduled on machine $\theta_{\textrm{V-T},i,a,c}$ (and we set variable $z_i$ to be true)}
\label{table:variable-job-true}
\end{table}

\begin{table}[!ht]
\renewcommand\arraystretch{1.3}
%\scriptsize
\setlength\tabcolsep{1.5pt}
\centering
\resizebox{12cm}{!}{
\begin{tabular}{|c|l|l|l|l|l|}
\hline
\multirow{2}{*}{Variable-Link} & $\theta_{\textrm{V-L},i,+}$ & $V_{i,+,2}^F$  & LN$_{i,1,+}^T$ & $\backslash$  \\ \cline{2-5} 
                  & $\theta_{\textrm{V-L},i,-}$ & $V_{i,-,2}^T$ & LN$_{i,1,-}^{F}$ & $\backslash$  \\ \hline
\multirow{4}{*}{\makecell{Link-Link\\$h\in\{1,2,\cdots,2\gamma+1\}$}} & {$\theta_{\textrm{L-L},i,h,+}$} & LN$_{i,h,+}^T$ & LN$_{i,h+1,+}^F$ & $\backslash$  \\ \cline{2-5} 
                  & {$\theta_{\textrm{L-L},i,h,-}$} & LN$_{i,h,-}^T$ & LN$_{i,h+1,-}^F$ & $\backslash$    \\ \cline{2-5} 
                  & $\theta_{\textrm{L-L},i,+,-}$ & LN$_{i,2\gamma+2,+}^F$ & LN$_{\tau(i),2\gamma+2,-}^T$ & $\backslash$  \\ \cline{2-5} 
                  & $\theta_{\textrm{L-L},\tau^{-1}(i),+,-}$ & LN$_{\tau^{-1}(i),2\gamma+2,+}^F$ & LN$_{i,2\gamma+2,-}^T$ & $\backslash$  \\ \cline{2-5}\hline
\multirow{2}{*}{\makecell{Variable-Clause-Dummy \& Variable-Dummy\\Case 1: positive literal $z_i\in C_1$}} & $\theta_{\textrm{V-C-D},\ell,i,+}$ & $V_{i,+,1}^F$ & CL$_{\ell}^{F}$ & DM$^{T}$  \\ \cline{2-5} 
                  & $\theta_{\textrm{V-D},i,-}$ & $V_{i,-,1}^T$ & DM$^F$ & $\backslash$  \\ \hline
\multirow{2}{*}{\makecell{Variable-Clause-Dummy \& Variable-Dummy\\Case 2: negative literal $\neg z_i\in C_1$}} & $\theta_{\textrm{V-C-D},i,-}$ & $V_{i,-,1}^T$ & CL$_{\ell}^{*}$ & DM$^*$  \\ \cline{2-5} 
                  & $\theta_{\textrm{V-D},i,+}$ & $V_{i,+,1}^F$ & DM$^T$ & $\backslash$  \\ \hline
\multirow{4}{*}{{Variable-Truth}} & $\theta_{\textrm{V-T},i,a,c}$ & $V_{i,+,1}^T$ & TR$_{i,a}^T$ & TR$_{i,c}^T$  \\ \cline{2-5} 
                  & $\theta_{\textrm{V-T},i,b,d}$ & $V_{i,+,2}^T$ & TR$_{i,b}^T$ & TR$_{i,d}^T$  \\ \cline{2-5} 
                  & $\theta_{\textrm{V-T},i,a,d}$ & $V_{i,-,1}^F$ & TR$_{i,a}^F$ & TR$_{i,d}^F$  \\ \cline{2-5} 
                  & $\theta_{\textrm{V-T},i,b,c}$ & $V_{i,-,2}^F$ & TR$_{i,b}^F$ & TR$_{i,c}^F$  \\ \hline
\end{tabular}
}
\caption{Scheduling of group $i$ machines when $V_{i,+,1}^T$ is scheduled on machine $\theta_{\textrm{V-T},i,a,c}$ (and we set variable $z_i$ to be false)}
\label{table:variable-job-false}
\end{table}

\begin{lemma}\label{lemma:groupi}
Suppose all machines in group $i$ are good. If $V_{i,+,1}^F$ is scheduled on machine $\theta_{\textrm{V-T},i,a,c}$, then the jobs scheduled on these machines are according to Table~\ref{table:variable-job-true} ; if $V_{i,+,1}^T$ is scheduled on machine $\theta_{\textrm{V-T},i,a,c}$, then the jobs scheduled on these machines are according to Table~\ref{table:variable-job-false}.
\end{lemma}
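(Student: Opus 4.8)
The plan is to argue by a forced-choice cascade through the machines of group~$i$, relying on the structural lemmas already established (Lemmas~\ref{lemma:structure1}--\ref{lemma:true-type} and \ref{lemma:vd-index}--\ref{lemma:ll-index--}). The key point is that once we fix which copy of $V_{i,+,1}$ sits on $\theta_{\textrm{V-T},i,a,c}$, every other job on every good machine of group~$i$ is uniquely determined, first in its symbol (type- and index-component) and then in its true/false version. So first I would invoke Lemmas~\ref{lemma:vd-index}, \ref{lemma:vcd-index}, \ref{lemma:vtd-index}, \ref{lemma:vl-index-+}, \ref{lemma:ll-index-+} and \ref{lemma:ll-index--} to conclude that on each good machine in group~$i$ the \emph{symbols} of the jobs are exactly as displayed in Table~\ref{table:job-schedule} (equivalently, as in the symbol columns of Tables~\ref{table:variable-job-true}/\ref{table:variable-job-false}). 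This reduces the lemma to pinning down, for each such machine, the single undetermined bit: the $T/F$ superscript of each main job.

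Next I would propagate the $T/F$ assignment. Suppose $V_{i,+,1}^F$ is on $\theta_{\textrm{V-T},i,a,c}$ (the other case is symmetric, flipping every superscript). On $\theta_{\textrm{V-T},i,a,c}$ the type- and index-components already cancel, so by Lemma~\ref{lemma:true-type} the $T/F$-components must also cancel; the only way for a $(\textrm{V}_{\cdot,+,1},\textrm{TR}_{\cdot,a},\textrm{TR}_{\cdot,c})$ triple to cancel is $(F,F,F)$, forcing $\textrm{TR}_{i,a}^F$ and $\textrm{TR}_{i,c}^F$ onto this machine. But then the \emph{other} copy of $\textrm{TR}_{i,a}$, which Lemma~\ref{lemma:vtd-index} places on $\theta_{\textrm{V-T},i,a,d}$, must be $\textrm{TR}_{i,a}^T$, so by Lemma~\ref{lemma:true-type} applied to $\theta_{\textrm{V-T},i,a,d}$ that whole machine is $(T,T,T)$; similarly $\theta_{\textrm{V-T},i,b,c}$ is $(T,T,T)$, and then $\theta_{\textrm{V-T},i,b,d}$ is $(F,F,F)$. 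This fixes all four variable-truth machines and in particular fixes $V_{i,+,2}^F$, $V_{i,-,1}^T$, $V_{i,-,2}^T$. Now walk outward: the remaining copy of $V_{i,+,2}$ lies on $\theta_{\textrm{V-L},i,+}$ (Lemma~\ref{lemma:vl-index-+}) and must be $V_{i,+,2}^T$, forcing $\textrm{LN}_{i,1,+}^F$ by Lemma~\ref{lemma:true-type}; chaining through the link-link machines $\theta_{\textrm{L-L},i,h,+}$ for $h=1,\dots,2\gamma+1$ via Lemma~\ref{lemma:true-type} forces the alternating $F,T$ pattern, ending with $\textrm{LN}_{i,2\gamma+2,+}^T$ on $\theta_{\textrm{L-L},i,+,-}$. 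Symmetrically, $V_{i,-,2}^T$ forces $\textrm{LN}_{i,1,-}^F$ on $\theta_{\textrm{V-L},i,-}$, then the chain $\theta_{\textrm{L-L},i,h,-}$ forces $\textrm{LN}_{i,2\gamma+2,-}^F$ on $\theta_{\textrm{L-L},\tau^{-1}(i),+,-}$. Finally the remaining copy of $V_{i,-,1}$ lies on $\theta_{\textrm{V-D},i,-}$ (or $\theta_{\textrm{V-D},i,+}$ depending on which literal of $z_i$ is in $C_1$) and is $V_{i,-,1}^F$, forcing the dummy there to be $\textrm{DM}^T$ by Lemma~\ref{lemma:true-type}; and the remaining copy of $V_{i,+,1}$ on the variable-clause-dummy machine is $V_{i,+,1}^T$. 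Matching these against Tables~\ref{table:variable-job-true}/\ref{table:variable-job-false} completes the identification; entries marked $*$ (the clause job and its paired dummy on $\theta_{\textrm{V-C-D},\ell,i,\cdot}$, and the second link job on $\theta_{\textrm{L-L},\cdot,+,-}$) are precisely those whose $T/F$ version depends on machines outside group~$i$, which is why they are left unspecified.

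The main obstacle I anticipate is bookkeeping the two case-splits cleanly: the outer case split on which copy of $V_{i,+,1}$ is on $\theta_{\textrm{V-T},i,a,c}$ (equivalently, the truth value assigned to $z_i$), and the inner split on whether the positive or negative literal of $z_i$ occurs in $C_1$, which determines whether the pair $(\theta_{\textrm{V-C-D},\ell,i,+},\theta_{\textrm{V-D},i,-})$ or $(\theta_{\textrm{V-C-D},\ell,i,-},\theta_{\textrm{V-D},i,+})$ appears in the group. One must also be careful that the two link-link machines $\theta_{\textrm{L-L},i,+,-}$ and $\theta_{\textrm{L-L},\tau^{-1}(i),+,-}$ belong to group~$i$ only in their $+$-side and $-$-side roles respectively, so the $*$-superscripted link jobs on them are genuinely free at this stage. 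Once the forcing order (variable-truth $\to$ variable-link $\to$ link-link chain; and variable-truth $\to$ variable-clause-dummy/variable-dummy) is laid out, each individual step is an immediate application of Lemma~\ref{lemma:true-type}, so beyond the case management there is no real difficulty.
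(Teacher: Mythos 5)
Your approach matches the paper's: first pin down the job symbols on each good machine of group~$i$ via Lemmas~\ref{lemma:vd-index}--\ref{lemma:ll-index--}, then propagate $T/F$ superscripts outward from $\theta_{\textrm{V-T},i,a,c}$ by repeatedly applying Lemma~\ref{lemma:true-type}, in the same order (variable-truth, then variable-dummy/variable-clause-dummy, then variable-link, then the link-link chain). There are a couple of bookkeeping slips --- the copy on $\theta_{\textrm{V-L},i,-}$ is $V_{i,-,2}^F$ forcing $\textrm{LN}_{i,1,-}^T$ (not $V_{i,-,2}^T$ forcing $\textrm{LN}_{i,1,-}^F$), and $V_{i,-,1}$ never lands on $\theta_{\textrm{V-D},i,+}$ since the $C_1$-literal case split determines which of $V_{i,+,1}$, $V_{i,-,1}$ goes to the V-D versus V-C-D machine --- but these do not affect the forced-cascade logic or the end conclusions.
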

\begin{proof}
We prove the first half of Lemma~\ref{lemma:groupi}, the second half can be proved in the same way. If $V_{i,+,1}^T$ is scheduled on machine $\theta_{\textrm{V-T},i,a,c}$, then by Lemma~\ref{lemma:true-type} we know the other two jobs are TR$_{i,a}^F$ and TR$_{i,c}^F$, consequently, TR$_{i,a}^T$ is scheduled on machine $\theta_{\textrm{V-T},i,a,d}$. Using similar argument it is easy to see the jobs scheduled on the 4 variable-truth machines follow Table~\ref{table:variable-job-true}.

We consider variable-clause-dummy and variable-dummy machines. It follows that the remaining $V_{i,+,1}^T$ and $V_{i,-,1}^F$ are scheduled on these machines. The T/F type of the other jobs on these machines follow from Lemma~\ref{lemma:true-type}.

Next, we consider variable-link machines. Again by the scheduling on variable-truth machines, the remaining $V_{i,+,2}^T$ and $V_{i,-,2}^F$ are scheduled on these machines. The T/F-type of the link jobs are determined by Lemma~\ref{lemma:true-type}.

Finally we consider link-link machines. Based on the link jobs scheduled on variable-link machines and Lemma~\ref{lemma:true-type}, LN$_{i,1,+}^F$ and  LN$_{i,2,+}^T$ must be scheduled on $\theta_{\textrm{L-L},i,1,+}$, consequently the remaining LN$_{i,2,+}^F$ must be scheduled on machine $\theta_{\textrm{L-L},i,2,+}$. Iteratively carrying on the above argument we can show that jobs scheduled on machines $\theta_{\textrm{L-L},i,h,+}$ must follow Table~\ref{table:variable-job-true}. Similar arguments can be applied to machines $\theta_{\textrm{L-L},i,h,-}$, $\theta_{\textrm{L-L},i,+,-}$ and $\theta_{\textrm{L-L},\tau^{-1}(i),+,-}$.
 \end{proof}

The T/F-type of the clause job CL$_{\ell}$ is not determined in Table~\ref{table:variable-job-true}. Recall that among three copies of CL$_{\ell}$ there is one true copy CL$_{\ell}^T$. Suppose CL$_{\ell}^T$ is scheduled on group $i$ machines. If $V_{i,+,1}^T$ is scheduled on machine $\theta_{\textrm{V-T},i,a,c}$ and we set variable $z_i$ to be true, then from Table~\ref{table:variable-job-true} we know case 1 must happen, which implies that clause $cl_\ell$ is satisfied by $z_i$. If $V_{i,+,1}^F$ is scheduled on machine $\theta_{\textrm{V-T},i,a,c}$ and we set variable $z_i$ to be false, then from Table~\ref{table:variable-job-false} we know case 2 must happen, which implies that clause $cl_\ell$ is satisfied by $\neg z_i$. Hence the following lemma is true.
 
 \begin{lemma}\label{lemma:schedule-satisfy-1}
 If all machines in group $i$ are good and CL$_{\ell}^T$ is scheduled on these machines, then the clause $cl_\ell \in C_1$ that contains variable $z_i$ is satisfied by this variable. 
 \end{lemma}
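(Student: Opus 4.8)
The plan is to read off the forced structure of a ``group $i$'' of machines from Lemma~\ref{lemma:groupi} and then combine Lemma~\ref{lemma:true-type} with Lemma~\ref{lemma:vcd-index} to pin down the truth value that our assignment gives to $z_i$. First I would observe that a clause job can only sit on a variable-clause-dummy machine (this is part of the type-component counting carried out in the proof of Lemma~\ref{lemma:structure2}); hence $\textrm{CL}_\ell^T$ lies on some $\theta_{\textrm{V-C-D},\ell,i,\iota}$ with $\iota\in\{+,-\}$, and this machine belongs to group $i$ by hypothesis. By Lemma~\ref{lemma:vcd-index} the companion variable job on it is then $V_{i,\iota,1}$. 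Since this machine is good and carries the true clause job $\textrm{CL}_\ell^T$, Lemma~\ref{lemma:true-type} rules out the patterns $(T,F,F)$ and $(F,F,T)$, leaving only $(T,T,T)$; so in fact $V_{i,\iota,1}^T$ sits on $\theta_{\textrm{V-C-D},\ell,i,\iota}$.

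Next I would split according to which copy of $V_{i,+,1}$ occupies $\theta_{\textrm{V-T},i,a,c}$, which is exactly the case split of Lemma~\ref{lemma:groupi}. If $V_{i,+,1}^F$ is on $\theta_{\textrm{V-T},i,a,c}$, then we have set $z_i$ to true and the group follows Table~\ref{table:variable-job-true}; there the only variable-clause-dummy machine carrying a \emph{true} variable job is the one in Case~1, namely $\theta_{\textrm{V-C-D},\ell,i,+}$, which by the construction exists precisely because the positive literal $z_i$ occurs in $cl_\ell$ --- hence $cl_\ell$ is satisfied by $z_i$. Symmetrically, if $V_{i,+,1}^T$ is on $\theta_{\textrm{V-T},i,a,c}$, then $z_i$ is set to false and the group follows Table~\ref{table:variable-job-false}; now the only variable-clause-dummy machine carrying a true variable job is $\theta_{\textrm{V-C-D},\ell,i,-}$ from Case~2, which exists because the negative literal $\neg z_i$ occurs in $cl_\ell$, so $cl_\ell$ is satisfied by $\neg z_i$. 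In both cases the literal of $z_i$ appearing in $cl_\ell$ evaluates to true, which is the assertion.

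I do not expect a genuine obstacle; the entire content is bookkeeping built on Lemmas~\ref{lemma:groupi}, \ref{lemma:true-type} and~\ref{lemma:vcd-index}. The one point to be careful about is that two ``incidental'' facts line up: that the rule defining the assignment (``$z_i$ is true iff $V_{i,+,1}^F$ is on $\theta_{\textrm{V-T},i,a,c}$'') is exactly the rule under which Tables~\ref{table:variable-job-true} and~\ref{table:variable-job-false} were stated, and that the subscript $\iota\in\{+,-\}$ of $\theta_{\textrm{V-C-D},\ell,i,\iota}$ was, by construction, chosen to encode whether $cl_\ell$ contains $z_i$ or $\neg z_i$. Once those are checked against the reduction, the implication is immediate.
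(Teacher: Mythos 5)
Your proof is correct and follows essentially the same route as the paper's: read off the forced configuration from Lemma~\ref{lemma:groupi} via Tables~\ref{table:variable-job-true}/\ref{table:variable-job-false}, identify which ``case'' of the variable-clause-dummy machine must hold given that $\mathrm{CL}_\ell^T$ is present, and observe that the surviving case is exactly the one where the literal of $z_i$ in $cl_\ell$ evaluates to true under the assignment. One small remark: your case split is phrased consistently with the assignment rule and table captions ($V_{i,+,1}^F$ on $\theta_{\mathrm{V\text{-}T},i,a,c}$ $\Leftrightarrow$ $z_i$ true $\Leftrightarrow$ Table~\ref{table:variable-job-true}), whereas the paper's own paragraph immediately before the lemma has the $T/F$ superscripts swapped in both sentences --- so you have, in effect, corrected a typo. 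The extra invocation of Lemmas~\ref{lemma:vcd-index} and~\ref{lemma:true-type} to pin down that the companion variable job is the true copy is a harmless elaboration of a step the paper leaves to the reader's inspection of the tables.
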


Now consider clauses in $C_2$ and we have the following lemma.
\begin{lemma}\label{lemma:schedule-satisfy-2}
If all machines in group $i$ are good, and all machines in group $\tau(i)$ are also good, then the clause $(z_i\oplus \neg z_{\tau(i)})$ is satisfied.
 \end{lemma}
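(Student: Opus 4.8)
The plan is to derive from the schedule that the truth assignment makes $z_i$ and $z_{\tau(i)}$ take the \emph{same} value, which is exactly the condition for the $C_2$-clause $(z_i\oplus\neg z_{\tau(i)})$ to be satisfied. The crucial object is the single link-link machine $\theta_{\textrm{L-L},i,+,-}$: it carries the two terminal link jobs $\textrm{LN}_{i,2\gamma+2,+}$ and $\textrm{LN}_{\tau(i),2\gamma+2,-}$, and (since $\tau^{-1}(\tau(i))=i$) it lies both in group $i$ and in group $\tau(i)$, so both groups' analyses constrain it simultaneously.

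First I would note a preliminary: because every machine of group $i$ is good, Lemma~\ref{lemma:vtd-index} forces the variable job on $\theta_{\textrm{V-T},i,a,c}$ to be $V_{i,+,1}$ (in its $T$- or $F$-version), so the assignment rule for $z_i$ is genuinely determined by whether $V_{i,+,1}^T$ or $V_{i,+,1}^F$ sits there, and the default ``set $z_i$ true'' branch is irrelevant; the same holds for $z_{\tau(i)}$. Next, apply Lemma~\ref{lemma:groupi} to group $i$: its schedule follows Table~\ref{table:variable-job-true} when $z_i$ is set true and Table~\ref{table:variable-job-false} when $z_i$ is set false. Tracing the ``$+$''-chain $\theta_{\textrm{V-L},i,+}\to\theta_{\textrm{L-L},i,1,+}\to\cdots\to\theta_{\textrm{L-L},i,2\gamma+1,+}$ in these tables (each good link-link machine carrying one $T$ and one $F$ copy by the load cancellation of Lemma~\ref{lemma:true-type}), the copy of $\textrm{LN}_{i,2\gamma+2,+}$ landing on $\theta_{\textrm{L-L},i,+,-}$ is $\textrm{LN}_{i,2\gamma+2,+}^T$ if $z_i$ is true and $\textrm{LN}_{i,2\gamma+2,+}^F$ if $z_i$ is false. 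Then apply Lemma~\ref{lemma:groupi} to group $\tau(i)$; the machine $\theta_{\textrm{L-L},\tau^{-1}(\tau(i)),+,-}$ appearing in the group-$\tau(i)$ tables is precisely $\theta_{\textrm{L-L},i,+,-}$, and tracing the ``$-$''-chain of group $\tau(i)$ shows that the copy of $\textrm{LN}_{\tau(i),2\gamma+2,-}$ on it is $\textrm{LN}_{\tau(i),2\gamma+2,-}^F$ if $z_{\tau(i)}$ is true and $\textrm{LN}_{\tau(i),2\gamma+2,-}^T$ if $z_{\tau(i)}$ is false.

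Finally I would close the argument by consistency. The machine $\theta_{\textrm{L-L},i,+,-}$ is good (it belongs to group $i$), so by Lemma~\ref{lemma:true-type} its two link jobs have opposite $T/F$ tags. If $z_i$ were true while $z_{\tau(i)}$ were false, the two readings above would put $\textrm{LN}_{i,2\gamma+2,+}^T$ and $\textrm{LN}_{\tau(i),2\gamma+2,-}^T$ on this machine simultaneously, contradicting Lemma~\ref{lemma:true-type}; symmetrically, $z_i$ false together with $z_{\tau(i)}$ true is impossible. Hence $z_i$ and $z_{\tau(i)}$ receive the same truth value, so $(z_i\oplus\neg z_{\tau(i)})$ is satisfied, which is the claim.

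I do not expect a genuine obstacle here: the heavy work—that a fully good group rigidly propagates the $T/F$ tags along the link chains into the patterns of Tables~\ref{table:variable-job-true}--\ref{table:variable-job-false}—is already contained in Lemma~\ref{lemma:groupi}. The only delicate point is index bookkeeping: one must verify that the copy of $\textrm{LN}_{i,2\gamma+2,+}$ whose tag group $i$ controls is the \emph{same} physical job whose tag group $\tau(i)$ reads off, which amounts to checking that one copy of each terminal link job sits on the penultimate link-link machine of its own chain and the other on the shared machine $\theta_{\textrm{L-L},i,+,-}$ (respectively $\theta_{\textrm{L-L},\tau^{-1}(i),+,-}$), and that $\tau$ versus $\tau^{-1}$ is used correctly when substituting $i\mapsto\tau(i)$ in the tables. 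I would spell this matching out explicitly before invoking the opposite-tag contradiction.
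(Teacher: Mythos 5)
Your proof is correct and takes essentially the same route as the paper: both invoke Lemma~\ref{lemma:groupi} on groups $i$ and $\tau(i)$ and resolve the $T/F$ tags at the shared link-link machine $\theta_{\textrm{L-L},i,+,-}$. The only cosmetic difference is that you make the opposite-tag constraint of Lemma~\ref{lemma:true-type} on the shared machine an explicit last step, whereas the paper reads both tags on $\theta_{\textrm{L-L},i,+,-}$ directly from the group-$i$ table and then matches the second copy of $\textrm{LN}_{\tau(i),2\gamma+2,-}$ against the group-$\tau(i)$ table.
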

 \begin{proof}
 There are two possibilities. If $V_{i,+,1}^F$ is scheduled on machine $\theta_{\textrm{V-T},i,a,c}$ and we set variable $z_i$ to be true, then LN$_{\tau(i),2\gamma+2,-}^F$, implying that LN$_{\tau(i),2\gamma+2,-}^T$ is scheduled in group $\tau(i)$. By checking Table~\ref{table:variable-job-true} and Table~\ref{table:variable-job-false} for variable $z_{\tau(i)}$, it follows that Table~\ref{table:variable-job-true} is the case when LN$_{\tau(i),2\gamma+2,-}^T$ is scheduled, and consequently variable $z_{\tau(i)}$ is set to be true, whereas $(z_i\oplus \neg z_{\tau(i)})$ is satisfied. The other case when $V_{i,+,1}^T$ is scheduled on machine $\theta_{\textrm{V-T},i,a,c}$ and we set variable $z_i$ to be false can be proved in a similar way. 
 \end{proof}

\begin{lemma}\label{lemma:bound-not-good-group}
In a feasible schedule $Sol$, if there are at most $m'$ machines which are not good, then in the corresponding truth-assignment, there are at most $6m'$ clauses that are not satisfied.
\end{lemma}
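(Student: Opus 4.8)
The plan is to charge each unsatisfied clause to a ``bad'' variable group and to show there are at most $2m'$ bad groups. As a preliminary I would note that we may assume $Sol$ has no very bad machine: this is the only regime in which the lemma is used, and in that regime Lemmas~\ref{lemma:structure1}, \ref{lemma:structure2} and~\ref{lemma:vcd-index} pin down the job pattern of every good machine. Following the grouping of Table~\ref{table:variable-job-true}, call a group $i\in\Z_n$ \emph{good} if all of the variable-link, link-link, variable-clause-dummy, variable-dummy and variable-truth machines attributed to $i$ are good, and \emph{bad} otherwise. The key bookkeeping step is: every machine of the instance belongs to some group, and the only machines lying in two groups are the link-link machines $\theta_{\textrm{L-L},j,+,-}$ (which lie in groups $j$ and $\tau(j)$); all other machines lie in exactly one group. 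Hence each of the $\le m'$ non-good machines makes at most two groups bad, so at most $2m'$ groups are bad.

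Next I would handle $C_2$. By Lemma~\ref{lemma:schedule-satisfy-2}, the clause $(z_i\oplus\neg z_{\tau(i)})$ is satisfied whenever groups $i$ and $\tau(i)$ are both good; so each unsatisfied $C_2$ clause can be charged to a bad group in $\{i,\tau(i)\}$. A fixed bad group $g$ is the target only of the clause indexed by $i=g$ and of the clause indexed by $i=\tau^{-1}(g)$, hence of at most two clauses. This gives at most $2\cdot 2m'=4m'$ unsatisfied clauses in $C_2$.

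The crux is $C_1$. I claim that if $cl_\ell=(w_{\ell-1}\vee w_\ell\vee w_{\ell+1})$ is unsatisfied then at least one of the groups $\ell-1,\ell,\ell+1$ is bad. Arguing by contradiction, suppose all three are good. Each of them owns exactly one variable-clause-dummy machine, and these three machines are exactly $\theta_{\textrm{V-C-D},\ell,\ell-1,\cdot}$, $\theta_{\textrm{V-C-D},\ell,\ell,\cdot}$, $\theta_{\textrm{V-C-D},\ell,\ell+1,\cdot}$; being good, each carries exactly one clause job (Lemma~\ref{lemma:structure2}), which is a copy of $\mathrm{CL}_\ell$ (Lemma~\ref{lemma:vcd-index}). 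Since the instance contains exactly three copies of $\mathrm{CL}_\ell$ --- one $\mathrm{CL}_\ell^T$ and two $\mathrm{CL}_\ell^F$ --- all three copies must sit on these three machines, one per machine, and by pigeonhole $\mathrm{CL}_\ell^T$ sits on a machine of some group $i^\ast\in\{\ell-1,\ell,\ell+1\}$, which is good. Lemma~\ref{lemma:schedule-satisfy-1} then forces $cl_\ell$ to be satisfied by $z_{i^\ast}$, a contradiction. Thus each unsatisfied $C_1$ clause can be charged to a bad group containing one of its variables; since each variable lies in exactly one clause of $C_1$, a fixed bad group $g$ is the target of at most one $C_1$ clause (the one containing $z_g$). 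Hence at most $2m'$ clauses of $C_1$ are unsatisfied, and in total at most $4m'+2m'=6m'$ clauses are unsatisfied.

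I expect the main obstacle to be the $C_1$ step, specifically the pigeonhole argument showing that when the three neighbouring groups are all good the unique true copy $\mathrm{CL}_\ell^T$ is forced onto one of them; this is where the structural lemmas (one clause job per good variable-clause-dummy machine, carrying the matching index) are indispensable, whereas the $C_2$ step and the double counting of bad groups are routine. A secondary point to nail down is the ``no very bad machine'' reduction together with the precise list of machines belonging to each group, so that the ``at most two groups per machine'' claim in the first paragraph is airtight.
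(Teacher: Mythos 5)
Your proof is correct and follows essentially the same approach as the paper's: both bound the number of bad groups by $2m'$ (using that only the link--link machines $\theta_{\textrm{L-L},j,+,-}$ lie in two groups), and both charge each bad group at most one unsatisfied $C_1$ clause and at most two unsatisfied $C_2$ clauses, giving $6m'$ in total. The one place you diverge slightly is the $C_1$ step: the paper takes the contrapositive of Lemma~\ref{lemma:schedule-satisfy-1} to conclude directly that ``the group containing $\mathrm{CL}_\ell^T$ is bad,'' whereas you re-derive the implication with an explicit pigeonhole over the three copies of $\mathrm{CL}_\ell$ and the three variable-clause-dummy machines of groups $\ell-1,\ell,\ell+1$. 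Your version is a touch more explicit -- it makes visible the fact (implicit in the paper, and justified by Lemmas~\ref{lemma:structure2} and~\ref{lemma:vcd-index}) that when all three neighbouring groups are good, $\mathrm{CL}_\ell^T$ is forced onto one of those groups, so Lemma~\ref{lemma:schedule-satisfy-1} applies. Otherwise the two arguments are the same.
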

\begin{proof}
We say a group is good if all machines in this group are good. According to Lemma~\ref{lemma:schedule-satisfy-1}, if a clause $cl_\ell$ in $C_1$ is not satisfied, then the group that contains the job CL$_\ell^T$ is not good, that is, there is at least one machine that is not good in this group. Hence, if there are $m_1$ clauses in $C_1$ not satisfied, then there are at least $m_1$ groups that are not good. According to Lemma~\ref{lemma:schedule-satisfy-2}, if a clause $(z_i\oplus \neg z_{\tau(i)})$ in $C_2$ is not good, then among group $i$ and group $\tau(i)$ there is at least one group which is not good. Given that each group is only involved in two clauses of $C_2$, if there are $m_2$ clauses in $C_2$ not satisfied, then there are at least $m_2/2$ groups which are not good. Hence, there are at least $\max\{m_1,m_2/2\}$ groups which are not good, given $m_1+m_2$ clauses which are not satisfied. Using the fact that $\frac{m_1+m_2}{\max\{m_1,m_2/2\}}\le 3$, we know if there are at most $m'$ machines which are not good, then there are at most $2m'$ groups which are not good, and hence there are at most $6m'$ clauses which are not satisfied.
\end{proof}

\begin{lemma}\label{lemma:calculate}
In a feasible schedule, if there are at least $m'$ machines which are not good, then its objective value is at least $(2\gamma n+8n) \times (10^{14}\sigma_{max})^2+m'/4$.
\end{lemma}
\begin{proof}
Recall that if a machine is not good, then its load is either $\ge 10^{14}\sigma_{max}+0.5$ or $\le 10^{14}\sigma_{max}-0.5$. Suppose there are $m_1'$ machines, with load $10^{14}\sigma_{max}+\mu_1$, $10^{14}\sigma_{max}+\mu_2$, $\cdots$, $10^{14}\sigma_{max}+\mu_{m_1'}$ where $\mu_j\ge 1/2$; there are $m_2'$ machines, with load $10^{14}\sigma_{max}-\nu_1$, $10^{14}\sigma_{max}-\nu_2$, $\cdots$, $10^{14}\sigma_{max}+\nu_{m_2'}$ where $\nu_j\ge 1/2$. It follows that $\sum_j\mu_j=\sum_j\nu_j$ and $m_1'+m_2'=m'$. The objective value of the schedule is
\begin{eqnarray*}
&& \!\!(2\gamma n+8n-m')(10^{14}\sigma_{max})^q+\sum_{j=1}^{m_1'}(10^{14}\sigma_{max}+\mu_j)^q+\sum_{j=1}^{m_2'}(10^{14}\sigma_{max}-\nu_j)^q\\
&=& \!\!(2\gamma n+8n) (10^{14}\sigma_{max})^q+\frac{q(q-1)}{2}[\sum_{j=1}^{m_1'}\mu_j^2\!+\!\sum_{j=1}^{m_2'}\nu_j^2] (10^{14}\sigma_{max})^{q-2}+o([\sum_{j=1}^{m_1'}\mu_j^2\!+\!\sum_{j=1}^{m_2'}\nu_j^2]\sigma^{q-2}_{max})\\
&\ge& \!\!(2\gamma n+8n)(10^{14}\sigma_{max})^q+\frac{q(q-1)}{2}[\frac{m_1'}{4}+\frac{m_2'}{4}] (10^{14}\sigma_{max})^{q-2}+o(m'\sigma_{max}^{q-2})\\
&=& \!\!(2\gamma n+8n)(10^{14}\sigma_{max})^q+\frac{q(q-1)m'}{8} (10^{14}\sigma_{max})^{q-2}+o(m'\sigma_{max}^{q-2}).
\end{eqnarray*}
\end{proof}
Combining the above lemmas, Lemma~\ref{lemma:sche-sat} is proved.

\subsection{Finalizing the Proof of Theorem~\ref{thm:lower-bound}}~\label{subsec:formal-proof}
Suppose on the contrary there exists a PTAS for $P||\sum_i C_i^q$ that runs in time $2^{O((1/\varepsilon)^{1/2-\delta})}+n^{O(1)}$, we show that this algorithm can be used to distinguish between instances of 3SAT$'$ with $4n/3$ clauses where at least $4(1-\epsilon')/3$ clauses are satisfiable from instances where at most $4(\beta+\epsilon')n/3$ clauses are satisfiable in time $2^{O(n^{1-\delta})}$, contradicting Lemma~\ref{lemma:maxsat-eth2}.

Consider the constructed scheduling instance with $2\gamma n+8n=O(\frac{n\log n}{\log\log n})$ machines. Recall $\sigma_{max}=n^{1+O(\frac{1}{\log\log n})}$. If the 3SAT$'$ instance has at most $4\epsilon'n/3$ unsatisfied clauses, then by Lemma~\ref{lemma:sat-sche} (taking $\vartheta=4\epsilon'/3$) the objective value $Obj_1$ of the constructed scheduling instance is at most 
\begin{eqnarray*}
Obj_1&\leq&(2\gamma n+8n)(10^{14}\sigma_{max})^q+4\epsilon'n/3 \cdot \frac{q(q-1)}{2}(10^{14}\sigma_{max})^{q-2}+o(n\sigma_{max}^{q-2})\\
&=& (2\gamma n+8n)(10^{14}\sigma_{max})^q+\frac{2\epsilon' q(q-1)n}{3}\cdot (10^{14}\sigma_{max})^{q-2}+o(n\sigma_{max}^{q-2})%\\
%&=&n^{3+O(\frac{1}{\log\log n})}+8\epsilon' n/3.
\end{eqnarray*}

 If the 3SAT$'$ instance has at least $4(1-\beta-\epsilon')n/3$ unsatisfied clauses, then by Lemma~\ref{lemma:sche-sat} (taking $\vartheta=4(1-\beta-\epsilon')/3$) the objective value $Obj_2$ of any feasible solution for the constructed scheduling instance is at least 
\begin{eqnarray*}
Obj_2&\ge&(2\gamma n+8n) (10^{14}\sigma_{max})^q+\frac{q(q-1)\cdot 4(1-\beta-\epsilon')n/3}{48}\cdot (10^{14}\sigma_{max})^{q-2}+o(n\sigma_{max}^{q-2})\\
&=&(2\gamma n+8n) (10^{14}\sigma_{max})^q+\frac{q(q-1)(1-\beta-\epsilon')n}{36}\cdot (10^{14}\sigma_{max})^{q-2}+o(n\sigma_{max}^{q-2})
\end{eqnarray*}
 for some constant $\beta<1$.

We apply the PTAS for $P||\sum_i C_i^q$ by setting $\varepsilon=\frac{1}{(2\gamma +8) \times (10^{14}\sigma_{max})^2}\cdot \frac{q(q-1)\epsilon'}{36}=\Theta(\gamma^{-1} \sigma_{max}^{-2})=n^{-2-O(\frac{\log\log n}{\log n})}$, then it follows that the PTAS runs in time $2^{O(n^{1-o(1)})}$. %which is bounded by $O(2^{\Lambda^{1-\delta}})$. 
If there exists a feasible schedule with objective value at most $Obj_1$, then the PTAS returns a solution with objective value at most
$$Obj_1\cdot(1+\varepsilon)\le (2\gamma n+8n) (10^{14}\sigma_{max})^q+\frac{q(q-1)\epsilon'n}{36}\cdot (10^{14}\sigma_{max})^{q-2}+o(n\sigma_{max}^{q-2})<Obj_2.$$

Otherwise, any feasible solution has an objective value of at least $Obj_2$. That is, the PTAS can be used to distinguish between scheduling instances that admit a feasible schedule at most $Obj_1$ and scheduling instances that do not admit any feasible schedule of objective value no more than $Obj_2$, and thus can also be used to distinguish 3SAT$'$ where at least $4(1-\epsilon')n/3$ clauses are satisfiable from instances where at most $4(\beta+\epsilon')n/3$ clauses are satisfiable, contradicting Lemma~\ref{lemma:maxsat-eth2}.

\noindent\textbf{Remark.} It is important to observe that our reduction is only valid when the number of machines $m=\tilde{O}({n})=\tilde{O}(\sqrt{1/\epsilon})$. If $m=O((1/\epsilon)^{\kappa})$ for $\kappa<1/2$, then applying the same reduction we have $\epsilon=\tilde{O}(n^{-1/\kappa})$ by using that $m=\tilde{O}(n)$, whereas we have Corollary~\ref{coro:lower-bound} below. On the other hand, if $m=\Omega((1/\epsilon)^{\kappa})$ for $\kappa>1/2$, then we also have $n=\Omega((1/\epsilon)^{\kappa})$. The objective value is $\Theta(mT^2)=\Omega((1/\epsilon)^{3\kappa})$. Therefore, a PTAS brings an error of $O(mT^2\epsilon)=\Omega(n\cdot(1/\epsilon)^{2\kappa-1})=\Omega(n)$, which is large enough to accommodate the gap of $O(n)$ in the reduction, i.e., the reduction does not work any more.

\begin{corollary}\label{coro:lower-bound}
Let $q>1$ be an arbitrary constant. Assuming ETH, for any $\epsilon$ such that $m=O((1/\epsilon)^{\kappa})$ for some $\kappa\le 1/2$, there is no $(1+\epsilon)$-approximation algorithm for $P||\sum_i C_i^q$ that runs in time $2^{O((1/\varepsilon)^{\kappa-\delta})}+n^{O(1)}$ time for any constant $\delta>0$.
\end{corollary}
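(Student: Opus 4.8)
The plan is to re-use, essentially verbatim, the fine-grained reduction from 3SAT$'$ to $P||\sum_iC_i^q$ built in Appendix~\ref{appsec:reduction-construction} and analyzed in Lemma~\ref{lemma:sat-sche} and Lemma~\ref{lemma:sche-sat}, and only to instantiate the target accuracy $\epsilon$ so that the scheduling instances produced lie in the regime $m=O((1/\epsilon)^\kappa)$. Recall that the reduction outputs an instance with $m=2\gamma n+8n=\tilde{\Theta}(n)$ machines, target makespan $Q=10^{14}\sigma_{max}=n^{1+o(1)}$, and, by Lemma~\ref{lemma:sat-sche} and Lemma~\ref{lemma:sche-sat}, a separation of magnitude $\Theta(n)\,\sigma_{max}^{q-2}$ between the two cases: on 3SAT$'$ instances where at least $(1-\epsilon')\cdot 4n/3$ clauses are satisfiable the optimum is at most $(2\gamma n+8n)Q^q+O(\epsilon' n)\sigma_{max}^{q-2}+o(n\sigma_{max}^{q-2})$, whereas on instances where at most $(\beta+\epsilon')\cdot 4n/3$ are satisfiable (for the constant $\beta\in(0,1)$ of Lemma~\ref{lemma:maxsat-eth2}) every feasible objective is at least $(2\gamma n+8n)Q^q+\Omega(n)\sigma_{max}^{q-2}+o(n\sigma_{max}^{q-2})$.

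First I would fix $\epsilon$. Since $m=\tilde{\Theta}(n)$, the requirement $m=O((1/\epsilon)^\kappa)$ is, up to polylogarithmic factors, the same as $1/\epsilon=\Omega(n^{1/\kappa})$, so I set $\epsilon=\tilde{\Theta}(n^{-1/\kappa})$ — the largest accuracy consistent with $m\le(1/\epsilon)^\kappa$ — and it suffices to treat $\kappa<1/2$, the boundary case $\kappa=1/2$ being already contained in Theorem~\ref{thm:lower-bound}. Because $\kappa<1/2$ gives $1/\kappa>2$, we obtain $\epsilon=\tilde{O}(n^{-1/\kappa})=o(n^{-2+o(1)})$, which is exactly the slack needed to preserve the gap: a $(1+\epsilon)$-approximate solution exceeds the optimum by at most $\epsilon\cdot\mathrm{OPT}=\epsilon\cdot\Theta(mQ^q)=\epsilon\cdot n^{q+1+o(1)}$, whereas the separation between the yes- and no-cases has magnitude $\Theta(n)\,\sigma_{max}^{q-2}=n^{q-1+o(1)}$; since $\epsilon\cdot n^{q+1+o(1)}=n^{q-1-c+o(1)}$ with $c=1/\kappa-2>0$ a positive constant, the spurious error is strictly below the gap for large $n$ (and also dominates the $o(\epsilon' n\sigma_{max}^{q-2})$ terms once $\epsilon'$ is small enough). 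Hence on a satisfiable-heavy instance the algorithm returns a value strictly below the lower bound forced on a satisfiable-light instance, so it distinguishes the two cases.

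Then I would derive the contradiction. Assume, towards a contradiction, that for some constant $\delta>0$ there is a $(1+\epsilon)$-approximation running in $2^{O((1/\epsilon)^{\kappa-\delta})}+n^{O(1)}$ on instances with $m=O((1/\epsilon)^\kappa)$. Substituting $\epsilon=\tilde{\Theta}(n^{-1/\kappa})$ yields $(1/\epsilon)^{\kappa-\delta}=\tilde{\Theta}(n^{1-\delta/\kappa})$, so the running time is $2^{\tilde{O}(n^{1-\delta/\kappa})}+n^{O(1)}=2^{O(n^{1-\delta'})}$ with $\delta'=\delta/(2\kappa)>0$, the polylogarithmic factors being absorbed into $\delta'$ for large $n$. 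By the previous paragraph, such an algorithm decides in time $2^{O(n^{1-\delta'})}$ whether a 3SAT$'$ instance with $4n/3$ clauses has at least $(1-\epsilon')\cdot 4n/3$ satisfiable clauses or at most $(\beta+\epsilon')\cdot 4n/3$, contradicting Lemma~\ref{lemma:maxsat-eth2}, hence ETH.

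The one place that genuinely needs care — and the only real obstacle — is the simultaneous feasibility of the two opposing demands on $\epsilon$: the machine-count requirement pushes $\epsilon$ down only to $\tilde{O}(m^{-1/\kappa})=\tilde{O}(n^{-1/\kappa})$, while gap preservation needs $\epsilon\cdot mQ^q=o(n\,\sigma_{max}^{q-2})$, i.e. $\epsilon=o(n^{-2+o(1)})$. These coexist precisely when $1/\kappa\ge 2$, i.e. $\kappa\le 1/2$, which is exactly the range in the statement; for $\kappa>1/2$ the largest admissible $\epsilon$ no longer satisfies the gap bound, and shrinking $\epsilon$ further makes the resulting running time $2^{O((1/\epsilon)^{\kappa-\delta})}$ super-exponential in $n$, so the argument degrades — consistent with the open problem for $\theta\in(1/2,1]$ noted in the introduction. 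The remaining checks are routine: that the factors $\gamma=\Theta(\log n/\log\log n)$ and $\sigma_{max}=n^{1+o(1)}$ affect the exponent $1-\delta/\kappa$ only through lower-order terms, and that the $\sigma_{max}^{q-2}$-scaled separation used above is exactly what Lemma~\ref{lemma:sat-sche} and Lemma~\ref{lemma:sche-sat} already provide for arbitrary $q>1$.
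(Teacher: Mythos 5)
Your proposal is correct and follows the same route the paper itself sketches in the remark preceding the corollary: re-use the reduction that outputs instances with $m=\tilde{\Theta}(n)$ and $Q=n^{1+o(1)}$, instantiate $\epsilon=\tilde{\Theta}(n^{-1/\kappa})$ so that $m=O((1/\epsilon)^\kappa)$ holds while (since $1/\kappa\ge 2$) the error $\epsilon\cdot\mathrm{OPT}$ remains below the $\Theta(n)\sigma_{max}^{q-2}$ gap, and observe that $2^{O((1/\epsilon)^{\kappa-\delta})}=2^{O(n^{1-\delta/\kappa+o(1)})}$ is subexponential in $n$, contradicting Lemma~\ref{lemma:maxsat-eth2}. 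The paper gives this only as a one-line remark; your write-up supplies the (correct) arithmetic that it implicitly relies on, including the feasibility window $1/\kappa\ge2$ which pins down the range $\kappa\le1/2$.
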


\section{Conclusion}
We consider $P||\sum_i C_i^q$, which is identical machine scheduling with the objective of minimizing the $\ell_q$-norm of machine loads for arbitrary constant $q>1$. We establish a PTAS of running time $2^{\tilde{O}(\sqrt{1/\epsilon})}+n^{O(1)}$ and prove that it is essentially the best possible under the exponential time hypothesis. This is the first PTAS that runs in sub-exponential time in $1/\epsilon$ for strongly NP-hard scheduling and other related problems. It is interesting and also important to explore the sub-exponential phenomenon in PTASs for other problems. In particular, it will be interesting to investigate the scheduling problem with the objective of minimizing weighted job completion times, i.e.,  $P||\sum_j w_jC_j$. Another interesting open problem is to show subexponential lower bound or develop FPTAS for $P||\sum_i C_i^q$ where $m=\Theta((1/\epsilon)^\theta)$ for $\theta\in (1/2,1]$.

\bibliography{scheduling}
\bibliographystyle{plainurl}% the mandatory bibstyle

\end{document}